\documentclass[12pt]{article}

\usepackage[margin=1in]{geometry}
\usepackage{amsmath,amssymb,amsthm}
\usepackage{authblk}
\usepackage{bm}
\usepackage{enumitem}
\usepackage{natbib}
\usepackage[colorlinks=true,allcolors=blue]{hyperref}
\usepackage{appendix}
\usepackage{setspace}
\usepackage{booktabs}
\usepackage{multirow}
\usepackage{graphicx}

\newcommand{\E}{\mathbb{E}}

\newcommand{\plim}{\overset{p}{\to}}

\newtheorem{theorem}{Theorem}
\newtheorem{lemma}{Lemma}
\newtheorem{corollary}{Corollary}

\makeatletter
\def\mojiparline#1{
    \newcounter{mpl}
    \setcounter{mpl}{#1}
    \@tempdima=\linewidth
    \advance\@tempdima by-\value{mpl}zw
    \addtocounter{mpl}{-1}
    \divide\@tempdima by \value{mpl}
    \advance\kanjiskip by\@tempdima
    \advance\parindent by\@tempdima
}
\makeatother
\def\linesparpage#1{
    \baselineskip=\textheight
    \divide\baselineskip by #1
}

\title{
\vspace{-10mm}
On the uncertainty from the first-stage estimation of prognostic covariate adjustment in randomized controlled trials}
\author{Nodoka Seya\thanks{Corresponding author: nodoka.seya.bs@gmail.com} }
 \author{Masataka Taguri}
  \affil{Department of Health Data Science, Tokyo Medical University, Tokyo, Japan}
\date{}

\newenvironment{keywords}{\par\noindent\textbf{Keywords:}}

\begin{document}

\onehalfspacing

\maketitle
\begin{abstract}
\noindent Prognostic covariate adjustment (PROCOVA) is a two-sample two-stage estimation method for covariate adjustment in randomized controlled trials.
In the first stage, a prognostic score, defined as the conditional expectation of an outcome given covariates under the control treatment, is estimated using historical data.
In the second stage, analysis of covariance with the estimated prognostic score and treatment assignment as explanatory variables is performed, and the average treatment effect is estimated. 
Although the prognostic score is estimated in this procedure, the variance estimator, which treats the prognostic score as known, has been used. 
Furthermore, the difference in the asymptotic variance between cases where the prognostic score is known versus where it is estimated has not been previously clarified. 
In this study, we derived these two asymptotic variances and showed that they are equal. 
This result also holds when the prognostic score is estimated using machine learning with $L_2$ consistency.
We also constructed two variance estimators: one that treats the prognostic score as known, and another that accounts for its estimation, and compared their performance through simulation studies and data applications.
For PROCOVA, since both variance estimators are asymptotically valid, it is generally recommended to use a variance estimator that treats the prognostic score as known, as it is simpler to derive and implement.
When historical data is small, a variance estimator that explicitly accounts for prognostic score estimation is recommended if conservative inference is preferred.
\end{abstract}

\begin{keywords}
Causal inference; External data; Orthogonality; Prognostic score; Super-covariate; Two-stage estimation.
\end{keywords}

\thispagestyle{empty}
\clearpage
\addtocounter{page}{-1}

\clearpage
\doublespacing
\linesparpage{25}

\section{Introduction}
Randomized controlled trials (RCTs) are designs that randomly assign subjects to treatment or control groups, guaranteeing (baseline) covariate balance between groups at the population level.
In RCTs, the average treatment effect (ATE) can be consistently estimated through simple comparisons between groups without covariate adjustment.
However, even in RCTs, covariates between groups may be imbalanced by chance in a sample.
A chance imbalance in covariates associated with the outcome leads to efficiency loss.

Covariate adjustment in RCTs improves efficiency while ensuring consistency for ATE \citep{tsiatis2008covariate}.
Efficiency gain leads to increased power and a reduced required sample size.
Consequently, covariate adjustment is recommended by guidelines from regulatory agencies such as FDA and EMA \citep{FDA2023covariates, EMA2015baselineCovariates}, and is widely conducted in modern RCTs.
The most popular and widely studied method for adjusting covariates is analysis of covariance (ANCOVA; 
\citealp{leon2003semiparametric, lesaffre2003note}).
ANCOVA regresses the outcome on the treatment and covariates, using the ordinary least squares (OLS) estimator of the treatment coefficient as the estimator of ATE.
If the conditional expectation of the outcome given the treatment and covariates is linear and the assignment design is 1:1 simple randomization, ANCOVA can achieve the minimum asymptotic variance, regardless of whether it includes an interaction term between the treatment and covariates \citep{ye2023toward}.

Recently, \cite{schuler2022increasing} proposed prognostic covariate adjustment (PROCOVA) which uses historical data to create a score that predicts outcomes well and subsequently includes the score in an ANCOVA model.
Historical data is typically previous clinical trial data or real-world data.
PROCOVA is a two-sample two-stage estimation method. 
In the first stage, using historical data, the model for the prognostic score, which is a conditional expectation of the outcome given covariates under the control treatment, is learned. 
In the second stage, prognostic scores are predicted for trial data, and ANCOVA including the predicted prognostic score as an explanatory variable is performed.
EMA has provided a positive opinion on PROCOVA \citep{EMA2022PROCOVA}.
\cite{holzhauer2023super} proposed a covariate adjustment method essentially identical to PROCOVA, referring to the prognostic score as a ``super-covariate.''

Beyond PROCOVA, two-stage estimation is widely used in biostatistics, epidemiology, and econometrics.
Typical examples include inverse probability weighting (IPW; \citealp{horvitz1952generalization}), for addressing missing data \citep{li2013weighting}, for adjusting confounding in observational studies \citep{robins1994estimation}, and for adjusting covariates in RCTs \citep{shen2014inverse}.
Other examples include regression calibration for correcting measurement error \citep{carroll1990approximate} and two-stage least squares (TSLS; \citealp{theil1953repeated}) as an instrumental variable method \citep{imbens2014instrumental}.
In IPW for ATE estimation, using an estimated propensity score model results in smaller asymptotic variance than using the true propensity score model in both observational studies \citep{lunceford2004stratification, henmi2004paradox} and RCTs \citep{shen2014inverse}.
In regression calibration and TSLS, a variance estimator that does not take into account the uncertainty in the first-stage estimation underestimates the true asymptotic variance \citep{carroll1990approximate, murphy2002estimation}. 
Importantly, variance estimators that ignore the first-stage uncertainty generally do not yield valid inferences in two-stage estimation.

In PROCOVA, although the prognostic score model is estimated using historical data, a variance estimator that treats the prognostic score model as known has been used \citep{schuler2022increasing, hojbjerre2025tutorial}. 
To the best of our knowledge, the theoretical justification of this variance estimator has not been discussed, a variance estimator that accounts for the uncertainty in prognostic score model estimation has not been considered, and the difference in variance between cases when the prognostic score model is known and when it is estimated has not been clarified.

To bridge the gap between theory and the current practice of variance estimation for PROCOVA, we derive and compare two types of asymptotic variances for PROCOVA.
One treats the prognostic score model as known, and the other takes into account the uncertainty in prognostic score model estimation.
Our theoretical evaluation reveals that these two asymptotic variances are identical for the PROCOVA-based ATE estimator (see Theorems \ref{thm: asyvar PROCOVA}--\ref{thm: asyvar PROCOVA general}). 

The rest of this article is organized as follows.
In Section 2, we briefly review two-sample two-stage estimation and its special case, PROCOVA.
In Section 3, we derive the two asymptotic variances for two-sample two-stage estimation: one for the case where the first-stage model is known and one for the case where it is estimated, and prove that they are identical for the ATE estimator in PROCOVA and related covariate adjustment methods.
In Section 4, we construct two variance estimators corresponding to the two asymptotic variances derived in Section 3, and discuss their properties and relationships.
In Section 5, we conduct simulation studies to compare the performance of these two variance estimators for PROCOVA. 
In Section 6, we apply PROCOVA with the two variance estimators to actual RCT data.
In Section 7, we summarize this article and provide practical guidance for investigating asymptotic properties, conducting simulation studies, and estimating variance in actual RCTs using PROCOVA.

\section{Two-sample two-stage estimation and PROCOVA}
In Section \ref{subsec: rev two-stage}, we describe a two-sample two-stage estimation framework.
Section \ref{subsec: rev PROCOVA} describes PROCOVA as a special case of this framework. 

\subsection{Two-sample two-stage estimation}
\label{subsec: rev two-stage}
Let $O_1,...,O_n$ be independent and identically distributed copies of the random variable vector $O$ and let $\mathcal{D}:=\{O_i\}_{i=1}^{n}$.
Similarly, let $\tilde O_1,...,\tilde O_{\tilde n}$ be independent and identically distributed copies of the random variable vector $\tilde O$ and let $\mathcal{\tilde D}:=\{\tilde O_i\}_{i=1}^{\tilde n}$.
We assume $\mathcal{D}\perp\mathcal {\tilde D}$.

Then, we describe two-stage estimation using the two samples $\mathcal{D}$ and $\mathcal{\tilde D}$.
Let $\phi(\tilde O; \theta)\in \mathbb{R}^p$ be the first-stage estimating function for a parameter vector $\theta\in \mathbb{R}^p$ and define 
\[
\Phi_{\tilde n}(\theta):=\frac{1}{\tilde n}\sum_{i=1}^{\tilde n}\phi(\tilde O_i;\theta).
\]
Then, we denote the solution of equation $\Phi_{\tilde n}(\theta)=0$ as $\hat{\theta}_{\tilde n}$ and the solution of equation $\mathbb{E}[\phi(\tilde O;\theta)]=0$ as $\theta^*$.
Let $\psi(O; \beta,\theta)\in \mathbb{R}^d$ be the second-stage estimating function for a parameter vector $\beta\in \mathbb{R}^d$ given $\theta\in \mathbb{R}^p$ and define 
\[
\Psi_{n}(\beta,\theta):=\frac{1}{n}\sum_{i=1}^{n}\psi( O_i;\beta,\theta).
\]
Then, we denote the solution of equation $\Psi_{n}(\beta,\theta)=0$ as $\hat{\beta}_n(\theta)$ given $\theta\in \mathbb{R}^p$ and denote the solution of equation $\mathbb{E}[\psi(O;\beta,\theta)]=0$ as $\beta^*(\theta)$.
Especially, we denote $\beta^*:=\beta^*(\theta^*)$.
Then, the two-sample two-stage estimator is written as $\hat{\beta}_n(\hat\theta_{\tilde n})$.

Two-sample two-stage estimation includes several methods of biostatistics and epidemiology.
If $\phi(\tilde O; \theta)$ is an estimating function corresponding to a calibration model using external data $\mathcal{\tilde D}$ and $\psi(O; \beta,\theta)$ is an estimating function corresponding to an analysis model using main data $\mathcal{D}$, this corresponds to regression calibration for correcting measurement error \citep{carroll1990approximate} using external validation data (main/external validation design; \citealp{li2026mainexternal}).
If $\phi(\tilde O; \theta)$ is an estimating function corresponding to a prognostic score model using historical data $\mathcal{\tilde D}$ and $\psi(O; \beta,\theta)$ is an estimating function corresponding to a PROCOVA model using trial data $\mathcal{D}$, this corresponds to PROCOVA for ATE estimation in RCTs \citep{schuler2022increasing}.

\subsection{PROCOVA}
\label{subsec: rev PROCOVA}
Consider $O=(W,A,Y)$, where $W\in \mathbb R^p$ is a baseline covariate vector that includes a constant, $A\in\{0,1\}$ is a treatment assignment and $Y\in \mathbb{R}$ is the outcome.
Then, let trial data be $\mathcal{D}=\{O_i=(W_i,A_i,Y_i)\}_{i=1}^{n}$.
Consider $\tilde O=(\tilde W,\tilde A=0,\tilde Y)$, where $\tilde W\in \mathbb R^p$ is a baseline covariate vector that includes a constant, $\tilde A=0$ is a treatment assignment and $\tilde Y\in\mathbb{R}$ is an outcome. 
Then, let historical data be $\mathcal{\tilde D}=\{\tilde O_i=(\tilde W_i,\tilde A_i=0,\tilde Y_i)\}_{i=1}^{\tilde n}$.

Let $Y^1$ and $Y^0$ be the potential outcomes under treatment and control, respectively.
Then, ATE is defined as $\mathbb{E}[Y^1-Y^0]$. 
We assume $Y=AY^1+(1-A)Y^0$. 
RCTs where the treatment is randomly assigned with a probability $\mathbb{P}[A=1\mid W, Y^1,Y^0]=\mathbb{P}[A=1]=:\pi$ guarantee 
\begin{equation*}
A\perp (W,Y^1,Y^0).    
\end{equation*}
Thus, in RCTs, ATE can be written as $\mathbb{E}[Y\mid A=1]-\mathbb{E}[Y\mid A=0]$.

PROCOVA was proposed by \cite{schuler2022increasing} as a covariate adjustment method for estimating ATE in RCTs and is performed as follows:
\begin{enumerate}
    \item Using historical data $\mathcal{\tilde D}$, learn the model for the prognostic score $\rho(w):=\mathbb{E}[Y\mid A=0, W=w]$, and denote the learned model as $\hat \rho_{\tilde n}(w)$.
    For example, consider the following linear regression model:
    \begin{equation}
    \label{eq: PS model linear}
   \tilde Y_i=\theta^\top \tilde W_i +\tilde\varepsilon_i,
    \end{equation}
    and let $\hat \rho_{\tilde n}(w)=\hat\theta_{\tilde n}^\top w$ be the learned model, where $\hat\theta_{\tilde n}$ is the OLS estimator of $\theta$.
    In this case, the first-stage estimating function can be written as 
    \begin{equation}
    \label{eq: est func prog}
    \phi(\tilde O;\theta)=(\tilde Y-\theta^\top \tilde W)\tilde W.
    \end{equation}
    \item For trial data $\mathcal{D}$, fit the ANCOVA model including the estimated prognostic score $\hat \rho_{\tilde n}(W_i)$. For example, consider the following linear model:
    \begin{equation}
    \label{eq: PROCOVA general}
        Y_i=\beta^\top X_{\hat\rho_{\tilde n},i} + \varepsilon_i,\quad X_{\hat\rho_{\tilde n},i}=(1,A_i,\hat \rho_{\tilde n}(W_i))^\top, \quad \beta=(\beta_0,\beta_A,\beta_1)^\top,
    \end{equation}
    and let the OLS estimator of ${\beta}_{A}$ be the PROCOVA-based ATE estimator.
    If the first-stage estimating function is the equation \eqref{eq: est func prog}, the model \eqref{eq: PROCOVA general} reduces to  
    \begin{equation}
    \label{eq: PROCOVA linear}
        Y_i=\beta^\top X_{\hat\theta_{\tilde n},i} + \varepsilon_i, \quad X_{\hat\theta_{\tilde n},i}=(1,A_i,\hat \theta_{\tilde n}^\top W_i)^\top, \quad \beta=(\beta_0,\beta_A,\beta_1)^\top,
    \end{equation}
    In this case, the second-stage estimating function can be written as
    \begin{equation}
    \label{eq: est func PROCOVA linear}
        \psi(O;\beta,\theta)=(Y-\beta^\top X_{\theta})X_{\theta},\quad X_\theta=(1,A,\theta^\top W)^\top,
    \end{equation}
    and the PROCOVA-based ATE estimator is $e^\top \hat{\beta}_n(\hat{\theta}_{\tilde n})$ with $e=(0,1,0)^\top$.
\end{enumerate}

In RCTs, the PROCOVA-based ATE estimator is consistent and asymptotically normal for ATE, even if the prognostic score and ANCOVA model are misspecified \citep{schuler2022increasing}.
The specification of the prognostic score model affects the asymptotic variance.
If the treatment effect is constant, i.e., $\mathbb{E}[Y^1-Y^0\mid W]= \mathbb{E}[Y^1-Y^0]$ and $\hat \rho_{\tilde n}(W)$ has $L_2$-consistency for $\rho(W)$, then the PROCOVA-based ATE estimator can achieve the minimum asymptotic variance \citep{schuler2022increasing}.

\section{Asymptotic variances}
\label{sec: theory}
In Section \ref{subsec: theory two-stage}, for two-sample two-stage estimation, we derive the difference between the asymptotic variance when the first-stage model is known and that when it is estimated, and provide sufficient conditions for this difference to be zero, i.e., the first-stage uncertainty to be negligible.
The derivations are closely related to the theory of one-sample two-stage estimation developed by \cite{newey1994large} and \cite{lok2024estimating}, and some of the sufficient conditions that we derive are the same as those of the one-sample two-stage estimation case \citep{newey1994large,lok2024estimating}.
In Section \ref{subsec: theory PROCOVA}, we examine whether the condition for the first-stage uncertainty to be negligible holds for PROCOVA and its variant using ANHECOVA or ANCOVA II model \citep{yang2001efficiency, ye2023toward} as the PROCOVA model.
In Section \ref{subsec: theory other}, we also examine whether the condition for the first-stage uncertainty to be negligible holds for one-sample two-stage covariate adjustment methods in RCTs, such as the augmented estimator \citep{tsiatis2008covariate} and the IPW estimator \citep{shen2014inverse}.
In Section \ref{subsec: theory PROCOVA ML}, we extend the result for PROCOVA to cases that do not rely on specific model forms. 
Our theory can be applied when the prognostic score is estimated using machine learning with $L_2$-consistency.

\subsection{Conditions for the first-stage uncertainty to be negligible}
\label{subsec: theory two-stage}
By expanding $\Psi_{n}(\beta,\hat \theta_{\tilde n})$ around $\beta^*$ to solve $\sqrt n(\hat\beta_n(\hat\theta_{\tilde n})-\beta^*)$ and then expanding the result around $\theta^*$, we have
\begin{equation}
\label{eq: expand beta est main}
       \sqrt n(\hat\beta_n(\hat\theta_{\tilde n})-\beta^*)=-Q_0^{-1}\sqrt n\,\Psi_n(\beta^*,\theta^*)+\sqrt{\kappa}Q_0^{-1}Q_1Q_2^{-1}\sqrt{\tilde n}\Phi_{\tilde n}(\theta^*)+o_p(1),
\end{equation}
where $Q_0=\mathbb{E}\!\left[\frac{\partial}{\partial\beta^\top}\psi(O;\beta,\theta^*)\middle|_{\beta=\beta^*}\right]$, $Q_1=\mathbb{E}\!\left[\frac{\partial}{\partial\theta^\top}\psi(O;\beta^*,\theta)\middle|_{\theta=\theta^*}\right]$, $Q_2=\mathbb{E}\!\left[\frac{\partial}{\partial\theta^\top}\phi(\tilde O;\theta)\middle|_{\theta=\theta^*}\right]$ and $\lim_{n,\tilde n\to\infty} {n}/{\tilde n}=:\kappa\in[0,\infty)$. 
Since $\sqrt n(\hat\beta_n(\theta^*)-\beta^*)=-Q_0^{-1}\sqrt n\,\Psi_n(\beta^*,\theta^*)+o_p(1)$, the difference in the asymptotic distributions between the case where $\theta$ is given and the case where $\theta$ is estimated appears in $\sqrt{\kappa}Q_0^{-1}Q_1Q_2^{-1}\sqrt{\tilde n}\Phi_{\tilde n}(\theta^*)$, which is the second term on right-hand side of equation \eqref{eq: expand beta est main}.
For detailed derivations, see Appendix \ref{appendix: thm 1}.
Regularity conditions are given in Appendix \ref{appendix: assumption}.

On the right-hand side of equation \eqref{eq: expand beta est main}, the first term $-Q_0^{-1}\sqrt n\,\Psi_n(\beta^*,\theta^*)$ depends only on $\mathcal{D}$ and the second term $\sqrt{\kappa}Q_0^{-1}Q_1Q_2^{-1}\sqrt{\tilde n}\Phi_{\tilde n}(\theta^*)$ depends only on $\mathcal{\tilde D}$.
Since $\mathcal{D}\perp\mathcal{\tilde D}$, by applying the central limit theorem to equation \eqref{eq: expand beta est main}, $\sqrt n(\hat\beta_n(\hat\theta_{\tilde n})-\beta^*)$ is asymptotically normal with mean zero and variance 
\begin{equation}
\label{eq: asyvar est main}
    V_{\mathrm{est}}=V_{\mathrm{fix}} + \kappa\, Q_0^{-1}Q_1 V_\theta Q_1^\top (Q_0^{-1})^\top,
\end{equation}
where $V_{\mathrm{fix}}=Q_0^{-1}\Omega (Q_0^{-1})^\top$, $\Omega = \E\!\left[\psi(O;\beta^*,\theta^*)\psi(O;\beta^*,\theta^*)^\top\right]$, $V_\theta=Q_2^{-1}Q_3(Q_2^{-1})^{\top}$ and $Q_3=\E\!\left[\phi(\tilde O;\theta^*)\phi(\tilde O;\theta^*)^\top\right]$.
For detailed derivations, see Appendix \ref{appendix: thm 1}.

$V_{\mathrm{fix}}$ represents the asymptotic variance when $\theta$ is fixed at $\theta^*$.
$V_{\mathrm{est}}$ represents the asymptotic variance when $\theta$ is estimated by $\hat\theta_{\tilde n}$.
Equation \eqref{eq: asyvar est main} implies $V_{\mathrm{est}}\succeq V_{\mathrm{fix}}$ because of $\kappa\, Q_0^{-1}Q_1 V_\theta Q_1^\top (Q_0^{-1})^\top\succeq0$.
That is, a variance estimator that ignores uncertainty in the first-stage estimation generally underestimates $V_{\mathrm{est}}$.
If $\kappa=0$ or $Q_0^{-1}Q_1=0$, then $V_{\mathrm{est}}=V_{\mathrm{fix}}$,  i.e., the asymptotic variance of a two-sample two-stage estimator is the same regardless of whether the first-stage parameter is given or estimated. 

The condition $\kappa=0$ means that $\tilde n$ grows faster than $n$.
In practice, $\kappa=0$ would hold when $\tilde n$ is much larger than $n$.
However, this situation may not always be realistic.
The theory of \cite{schuler2022increasing} and simulations of \cite{hojbjerre2025tutorial} do not satisfy $\kappa=0$ because $\tilde n$ grows in tandem with $n$.

Now we consider the condition $Q_0^{-1}Q_1=0$.
This condition is equivalent to $Q_1=0$, which corresponds to Neyman orthogonality \citep{chernozhukov2017double} with $\theta$ as nuisance parameters, because $Q_0$ is nonsingular.
By assuming the exchangeability of differentiation and integration, the implicit function theorem gives 
\begin{equation}
\label{eq: cond Q0Q1}
\begin{split}
    \left.\frac{\partial}{\partial\theta^\top}\beta^*(\theta)\middle |_{\theta=\theta^*}\right.=-Q_0^{-1}Q_1.
\end{split}
\end{equation}
For the detailed derivation, see Appendix \ref{appendix: thm 2}. 
From equation \eqref{eq: cond Q0Q1}, we see that the condition $Q_0^{-1}Q_1=0$ is equivalent to $\left.\frac{\partial}{\partial\theta^\top}\beta^*(\theta)\middle |_{\theta=\theta^*}\right.=0$.
In several cases, $\left.\frac{\partial}{\partial\theta^\top}\beta^*(\theta)\middle |_{\theta=\theta^*}\right.=0$ may be easier to check than $Q_0^{-1}Q_1=0$.
If $\beta^*(\theta)$ is the same regardless of $\theta$, i.e., the probability limit of $\hat\beta_n(\theta)$ is the same regardless of $\theta$, then $\left.\frac{\partial}{\partial\theta^\top}\beta^*(\theta)\middle |_{\theta=\theta^*}\right.=0$ holds.
In terms of causal inference, this means that $\theta$ does not affect the identification of $\beta$.
For the one-sample case, an example of $\theta$ that does not affect the identification of $\beta$ is the parameter in the numerator of stabilized inverse probability weights \citep{lok2024estimating}.

If interest is limited to the specific element of $\beta$, specifically, $e^\top \beta$ with a constant vector $e\in \mathbb{R}^d$, this condition can be relaxed as follows:
\begin{equation}
\label{eq: cond a}
\begin{split}
    \left.\frac{\partial}{\partial\theta^\top}e^\top\beta^*(\theta)\middle |_{\theta=\theta^*}\right.=-e^\top Q_0^{-1}Q_1=0.
\end{split}
\end{equation}
The condition \eqref{eq: cond a} is essentially the same as that for the one-sample case (see equation (6.8) in \cite{newey1994large}).
Sections \ref{subsec: theory PROCOVA} and \ref{subsec: theory other} focus on the condition \eqref{eq: cond a}. 
Specifically, we consider whether the additional variance term due to the first-stage estimation, which corresponds to the second term on the right-hand side in equation \eqref{eq: asyvar est main}, vanishes for the ATE estimator in PROCOVA and related covariate adjustment methods through examining whether the condition \eqref{eq: cond a} holds.

\subsection{Asymptotic variances of PROCOVA and its variant} 
\label{subsec: theory PROCOVA}
Recall that the PROCOVA-based ATE estimator is consistent for ATE regardless of the misspecification of the prognostic score model \citep{schuler2022increasing}.  
This suggests that $e^\top\beta^*(\theta)$ with $e=(0,1,0)^\top$ is the same regardless of $\theta$, i.e., the condition \eqref{eq: cond a} holds.
In fact, the following theorem holds. 
The proof is given in Appendix \ref{appendix: thm 3}.
\begin{theorem}
\label{thm: asyvar PROCOVA}
Consider the two-sample two-stage estimation with the first-stage estimating function \eqref{eq: est func prog} and the second-stage estimating function \eqref{eq: est func PROCOVA linear}.
Assume regularity conditions (A1)--(A4), $\mathbb{E}[Y^2]<\infty$ and $\mathbb{E}[||W||^2]<\infty$.
Additionally, we assume $A\perp W$ and $0<\pi< 1$.
Then, for $e=(0,1,0)^\top$,
\begin{equation*}
   e^\top\beta^*(\theta)=\mathbb{E}[Y\mid A=1]-\mathbb{E}[Y\mid A=0] \quad\text{and}\quad\left.\frac{\partial}{\partial\theta^\top}e^\top\beta^*(\theta)\middle |_{\theta=\theta^*}\right.=-e^\top Q_0^{-1}Q_1 =0.
\end{equation*}
\end{theorem}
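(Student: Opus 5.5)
We plan to compute $\beta^*(\theta)$ explicitly for every fixed $\theta$, read off its $A$-coordinate, and observe that it does not depend on $\theta$. The second claim then follows by differentiating a constant and invoking the implicit-function identity \eqref{eq: cond a}.

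Fix $\theta$ and write $S=\theta^\top W$. By definition, $\beta^*(\theta)$ solves $\mathbb{E}[(Y-\beta^\top X_\theta)X_\theta]=0$, so it is the population OLS projection of $Y$ on $(1,A,S)$. Under the moment assumptions $\mathbb{E}[Y^2]<\infty$ and $\mathbb{E}[\|W\|^2]<\infty$, this is $\beta^*(\theta)=\mathbb{E}[X_\theta X_\theta^\top]^{-1}\mathbb{E}[X_\theta Y]$, with invertibility taken from the regularity conditions (nondegenerate $S$, $0<\pi<1$).

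To isolate $\beta_A$, we use Frisch--Waugh--Lovell at the population level. Since $A\perp W$, we have $A\perp S$. Hence the residual of $A$ after projecting on $(1,S)$ is simply $A-\pi$, because $\mathrm{Cov}(A,S)=0$. Consequently
\[
e^\top\beta^*(\theta)=\frac{\mathrm{Cov}(A,Y)}{\mathrm{Var}(A)}=\frac{\pi(1-\pi)\{\mathbb{E}[Y\mid A=1]-\mathbb{E}[Y\mid A=0]\}}{\pi(1-\pi)}.
\]
If one prefers to avoid FWL, we can instead verify directly that the candidate
\[
\beta_A=\mathbb{E}[Y\mid A=1]-\mathbb{E}[Y\mid A=0],\qquad
\beta_1=\frac{\mathrm{Cov}(S,Y)-\beta_A\mathrm{Cov}(S,A)}{\mathrm{Var}(S)},\qquad \beta_1=\frac{\mathrm{Cov}(S,Y)}{\mathrm{Var}(S)},
\]
together with $\beta_0$ chosen to fix the mean, satisfies all three normal equations. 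This check uses $\mathbb{E}[AS]=\pi\mathbb{E}[S]$. Uniqueness of the solution then follows from nonsingularity of $Q_0$.

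Because $e^\top\beta^*(\theta)$ equals the same constant for all $\theta$ in a neighborhood of $\theta^*$, its derivative at $\theta^*$ is zero. Combining this with \eqref{eq: cond Q0Q1}, which is justified by the assumed exchangeability of differentiation and integration in (A1)--(A4), gives $-e^\top Q_0^{-1}Q_1=0$.

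As a sanity check, one could also compute $Q_1$ directly. Its columns involve $\mathbb{E}[-\beta_1^*X_\theta W^\top+(Y-\beta^{*\top}X_\theta)(0,0,W^\top)^\top]$, and independence between $A$ and $W$ makes the $A$-row of $Q_0^{-1}Q_1$ vanish. We regard this only as a cross-check.

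The main obstacle we anticipate is the degenerate case where $S$ is constant, for example $\theta=0$ or a $\theta$ with only an intercept component. There $\mathbb{E}[X_\theta X_\theta^\top]$ is singular, so we must rely on (A1)--(A4) to ensure $\mathrm{Var}(\theta^\top W)>0$ in a neighborhood of $\theta^*$. This is what makes the "constant in a neighborhood" argument legitimate.
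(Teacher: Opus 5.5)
Your proposal is correct. It differs from the paper's proof in how the second equality is obtained.

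For the first claim you reach the same fact the paper does, that $e^\top\beta^*(\theta)=\mathbb{E}[Y\mid A=1]-\mathbb{E}[Y\mid A=0]$ for every $\theta$. You get it by population Frisch--Waugh--Lovell: the residual of $A$ on $(1,\theta^\top W)$ is $A-\pi$. The paper instead writes out $\mathbb{E}[X_\theta X_\theta^\top]^{-1}$ through cofactors.

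The real difference is how you obtain $e^\top Q_0^{-1}Q_1=0$. You deduce it from two facts: $\beta_A^*(\theta)$ is constant near $\theta^*$, and the implicit-function identity of Lemma~\ref{thm: param and a} holds. The paper instead verifies it directly. It computes $Q_0^{-1}$ and $Q_1$ explicitly, notes that the $A$-row of $Q_0^{-1}$ is $(-\pi\check q_{2,2},\check q_{2,2},0)$, and uses $A\perp W$ to get $q_2=\pi q_1$.

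\textbf{What your route buys.} It is shorter and more conceptual. It applies verbatim to any constructed covariate that is independent of $A$, with no matrix inversion.

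\textbf{What it costs.} It relies entirely on Lemma~\ref{thm: param and a}, i.e.\ differentiability of $\beta^*(\theta)$ and the interchange of derivative and expectation. So the two equalities in the theorem are not checked independently.

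\textbf{What the paper's route buys.} The direct computation avoids that dependence. It also produces explicit, generally nonzero, expressions for $e_0^\top Q_0^{-1}Q_1$ and $e_1^\top Q_0^{-1}Q_1$. The paper uses these to argue that ignoring first-stage uncertainty understates the variance for $\beta_0$ and $\beta_1$.

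Two minor points:
\begin{itemize}
\item Your candidate list states $\beta_1$ twice. The first is the general formula and the second its simplification under $\mathrm{Cov}(S,A)=0$.
\item Uniqueness at a generic $\theta$ should be attributed to nonsingularity of $\mathbb{E}[X_\theta X_\theta^\top]$, not of $Q_0$. Under $A\perp W$ its determinant is $\pi(1-\pi)\,\theta^\top\mathbb{V}[W]\theta$. Positivity near $\theta^*$ then follows from (A4) and continuity, or from the uniqueness in (A1). This also disposes of the degenerate case you flag; the paper divides by $\mathbb{V}[\theta^\top W]$ without comment.
\end{itemize}
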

Note that RCTs guarantee $A\perp W$ and $0<\pi < 1$.
Then, Theorem \ref{thm: asyvar PROCOVA} implies that the PROCOVA-based ATE estimator, i.e., the estimator of $\beta_A$ in the PROCOVA model \eqref{eq: PROCOVA linear}, has the same asymptotic variance regardless of whether the prognostic score model is known or estimated. 
Therefore, the variance estimator considered in previous studies \citep{hojbjerre2025tutorial,schuler2022increasing} is appropriate.
However, for the intercept $\beta_0$ and coefficient of the prognostic score $\beta_1$ in the PROCOVA model \eqref{eq: PROCOVA linear}, the variance estimator that ignores uncertainty in prognostic score estimation generally leads to underestimation (see Appendix \ref{appendix: thm 3}).

Next, we examine a variant of PROCOVA replacing the model \eqref{eq: PROCOVA linear} with
\begin{equation}
\label{eq: PROCOVAII linear emp}
\begin{split}
     &Y_i=\beta^\top X_{\hat\theta_{\tilde n},i}^{\mathrm{emp}} + \varepsilon_i,\quad \beta=(\beta_0,\beta_A,\beta_1, \beta_2)^\top,\\
     &X_{\hat\theta_{\tilde n},i}^{\mathrm{emp}}=\left(1,A_i,\hat\theta_{\tilde n}^\top W_i-\frac{1}{n}\sum_{j=1}^n\hat\theta_{\tilde n}^\top W_j,  A_i\left(\hat\theta_{\tilde n}^\top W_i-\frac{1}{n}\sum_{j=1}^n\hat\theta_{\tilde n}^\top W_j\right)\right)^\top,
\end{split}
\end{equation}
where the product term of the treatment assignment and the prognostic score is added.
While the model \eqref{eq: PROCOVA linear} is a variant of ANCOVA or ANCOVA I, the model \eqref{eq: PROCOVAII linear emp} is a variant of ANHECOVA or ANCOVA II \citep{yang2001efficiency, ye2023toward}.
PROCOVA using model \eqref{eq: PROCOVAII linear emp} does not directly fit into the framework described in Section \ref{subsec: rev two-stage}.
However, as shown in Appendix \ref{appendix: proof pop emp}, the difference between the estimator when $\theta$ is estimated by $\hat \theta_{\tilde{n}}$ and that when $\theta$ is given as $\theta^*$ in the case of model \eqref{eq: PROCOVAII linear emp} is asymptotically equivalent to their difference in the case of model corresponding to the following estimating function 
\begin{equation}
    \label{eq: est func PROCOVAII linear}
        \psi(O;\beta,\theta)=(Y-\beta^\top X_{\theta})X_{\theta},\quad X_\theta=(1,A,\theta^\top W-\mathbb{E}[\theta^\top W],(\theta^\top W-\mathbb{E}[\theta^\top W])A)^\top.
\end{equation}
Thus, it is enough to examine the asymptotic behavior of PROCOVA with the estimating function \eqref{eq: est func PROCOVAII linear}.
Now, the following theorem holds. The proof is given in Appendix \ref{appendix: thm 4}.
\begin{theorem}
\label{thm: asyvar PROCOVAII}
Consider the two-sample two-stage estimation with the first-stage estimating function \eqref{eq: est func prog} and the second-stage estimating function \eqref{eq: est func PROCOVAII linear}.
Assume regularity conditions (A1)--(A4), $\mathbb{E}[Y^2]<\infty$ and $\mathbb{E}[||W||^2]<\infty$.
Additionally, we assume $A\perp W$ and $0<\pi< 1$.
Then, for $e=(0,1,0,0)^\top$, 
\begin{equation*}
    e^\top\beta^*(\theta)=\mathbb{E}[Y\mid A=1]-\mathbb{E}[Y\mid A=0] \quad\text{and}\quad\left.\frac{\partial}{\partial\theta^\top}e^\top\beta^*(\theta)\middle |_{\theta=\theta^*}\right.=-e^\top Q_0^{-1}Q_1 =0.
\end{equation*}
Furthermore, for $e=(1,a,0,0)^\top,\ a=0,1$,
\begin{equation*}
    e^\top\beta^*(\theta)=\mathbb{E}[Y\mid A=a] \quad\text{and}\quad\left.\frac{\partial}{\partial\theta^\top}e^\top\beta^*(\theta)\middle |_{\theta=\theta^*}\right.=-e^\top Q_0^{-1}Q_1 =0.
\end{equation*}
\end{theorem}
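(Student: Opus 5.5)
The plan is to compute $\beta^*(\theta)$ in closed form for every $\theta$ in a neighbourhood of $\theta^*$, read off the relevant linear combinations, and observe that they do not depend on $\theta$. Equation \eqref{eq: cond Q0Q1} then gives $-e^\top Q_0^{-1}Q_1=0$ directly.

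Fix $\theta$ and write $Z_\theta:=\theta^\top W-\mathbb{E}[\theta^\top W]$, so that $\mathbb{E}[Z_\theta]=0$. The moment conditions $\mathbb{E}[\psi(O;\beta,\theta)]=0$ from \eqref{eq: est func PROCOVAII linear} are the normal equations of the population least-squares projection of $Y$ on $X_\theta=(1,A,Z_\theta,AZ_\theta)^\top$. The moment conditions $\mathbb{E}[Y^2]<\infty$ and $\mathbb{E}[\|W\|^2]<\infty$ make all these expectations finite.

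Rather than inverting $Q_0$, I will reparametrize by arm. The span of $(1,A,Z_\theta,AZ_\theta)$ equals the span of $(A,1-A,AZ_\theta,(1-A)Z_\theta)$. In that basis the normal equations decouple into two separate regressions, one for each arm $a$: regress $Y$ on $(1,Z_\theta)$ within $\{A=a\}$.

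Now use $A\perp W$ and $0<\pi<1$. Conditioning on $A=a$ leaves the law of $Z_\theta$ unchanged, so $\mathbb{E}[Z_\theta\mid A=a]=\mathbb{E}[Z_\theta]=0$. The regressor is therefore centred within each arm. Consequently the within-arm intercept equals $\mathbb{E}[Y\mid A=a]$, and the slope is $\mathrm{Cov}(Y,Z_\theta\mid A=a)/\mathrm{Var}(Z_\theta)$.

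Mapping back to the original coordinates gives $\beta_0=\mathbb{E}[Y\mid A=0]$ and $\beta_0+\beta_A=\mathbb{E}[Y\mid A=1]$. Hence $e^\top\beta^*(\theta)=\mathbb{E}[Y\mid A=1]-\mathbb{E}[Y\mid A=0]$ for $e=(0,1,0,0)^\top$, and $e^\top\beta^*(\theta)=\mathbb{E}[Y\mid A=a]$ for $e=(1,a,0,0)^\top$.

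These values are constant in $\theta$, so their derivative at $\theta^*$ is zero. By the implicit-function identity \eqref{eq: cond Q0Q1}, which holds under (A1)--(A4) with interchange of differentiation and integration, this yields $-e^\top Q_0^{-1}Q_1=0$.

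The main obstacle is the degenerate case $\mathrm{Var}(Z_\theta)=0$, for instance when $\theta^\top W$ is constant. There $X_\theta$ is collinear, $Q_0$ is singular, and the slope is undefined. I expect the regularity conditions (A1)--(A4) to guarantee that $Q_0$ is nonsingular at $\theta^*$, and by continuity on a neighbourhood of $\theta^*$; I would restrict the argument to that neighbourhood.

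As a cross-check that avoids the implicit-function route, one can compute $Q_1$ directly. Differentiating $\psi$ in $\theta$ produces terms $-(\beta^{*\top}\partial X_\theta)X_\theta$ and $\varepsilon^*\,\partial X_\theta$, where $\varepsilon^*:=Y-\beta^{*\top}X_{\theta^*}$ is the population residual. Here $\partial X_\theta=(0,0,W-\mathbb{E}W,A(W-\mathbb{E}W))$. One would then verify that $e^\top Q_0^{-1}$ annihilates the result, using the arm-wise orthogonality of $\varepsilon^*$ to $1$ and $Z_{\theta^*}$. I would present the closed-form argument as the main proof.
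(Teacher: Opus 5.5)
Your proposal is correct and reaches the same closed form for $\beta^*(\theta)$ as the paper, but the argument is organised differently.

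The paper proves the two claims independently. First, it writes $Q_0$ under $A\perp W$ as block diagonal and inverts it explicitly. It then shows that the first two rows of $Q_1$ vanish ($q_1=q_2=0$, since $\mathbb{E}[W-\mathbb{E}W]=0$ and $\mathbb{E}[A(W-\mathbb{E}W)]=0$), so $e^\top Q_0^{-1}Q_1=0$ for the three vectors $e$ can be read off directly. Separately, it obtains $\beta^*(\theta)$ by multiplying the explicit $4\times4$ inverse of $\mathbb{E}[X_\theta X_\theta^\top]$ into $\mathbb{E}[X_\theta Y]$.

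You instead get $\beta^*(\theta)$ from the arm-wise change of basis. This avoids the inversion and makes the mechanism transparent: a regressor centred at its marginal mean stays centred within each arm when $A\perp W$, so each arm's intercept is the arm mean whatever $\theta$ is. You then deduce $-e^\top Q_0^{-1}Q_1=0$ from the identity \eqref{eq: cond Q0Q1} (Lemma~\ref{thm: param and a}).

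That step is legitimate here, and your closed form itself supplies the differentiability it needs. The slopes are linear functions of $\theta$ divided by $\theta^\top\mathbb{V}[W]\theta$. This denominator is positive near $\theta^*$ because (A4) forces $\mathbb{V}[\theta^{*\top}W]>0$ (equivalently, (A1) requires uniqueness on $N_\theta$). This also settles your degenerate-case concern.

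The paper's direct route, which your cross-check would recover, has two advantages. It does not depend on the implicit-function step. It also shows explicitly the structure that kills the extra variance term ($q_1=q_2=0$ against a block-diagonal $Q_0^{-1}$), together with the nonzero rows $q_3,q_4$ relevant to $\beta_1$ and $\beta_2$.
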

Theorem \ref{thm: asyvar PROCOVAII} shows that even when using \eqref{eq: PROCOVAII linear emp} as the PROCOVA model, the ATE estimator, i.e., the estimator of $\beta_A$, has the same asymptotic variance regardless of whether the prognostic score model is known or estimated.
Theorem \ref{thm: asyvar PROCOVAII} additionally implies that this equivalence holds not only for the coefficient of the treatment $\beta_A$ but also for the intercept $\beta_0$ and the sum $\beta_0 + \beta_A$ in the PROCOVA model \eqref{eq: PROCOVAII linear emp}. 
This additional property results from centering the prognostic score so that its mean is zero.
Similar results are obtained for the PROCOVA model \eqref{eq: PROCOVA linear} when the prognostic score is centered (see Appendix \ref{appendix: PROCOVAI centering}).
In contrast, for the coefficients $\beta_1$ and $\beta_2$ in model \eqref{eq: PROCOVAII linear emp} and $\beta_1$ in model \eqref{eq: PROCOVA linear} with centering, the variance estimator that ignores uncertainty in prognostic score estimation generally leads to underestimation (see Appendix \ref{appendix: thm 4} and \ref{appendix: PROCOVAI centering}).

\subsection{Asymptotic variances of related covariate adjustment methods}
\label{subsec: theory other}
\cite{tsiatis2008covariate} proposed the following augmented estimator:
\begin{equation}
\label{aug tsiatis}
    \frac{\sum_{i=1}^n A_iY_i}{\sum_{i=1}^n A_i} - \frac{\sum_{i=1}^n (1-A_i)Y_i }{\sum_{i=1}^n (1-A_i)}-\sum_{i=1}^n\left\{A_i-\frac{1}{n}\sum_{i=1}^n A_i\right\}\left\{\frac{{\rho}^{(1)}(W_i;\hat\theta^{(1)}_n)}{\sum_{i=1}^nA_i} + \frac{{\rho}^{(0)}(W_i;\hat\theta^{(0)}_n)}{\sum_{i=1}^n(1-A_i)}\right\},
\end{equation}
where $\rho^{(a)}(W;\theta^{(a)})$ is a model for $\mathbb{E}[Y\mid A=a, W]$ and $\hat\theta^{(a)}_n$ is an estimator of $\theta^{(a)}$ using trial data $\mathcal D$, for $a=0,1$.
\cite{tsiatis2008covariate} showed that the uncertainty of the first-stage estimators, $\hat\theta^{(0)}_n$ and $\hat\theta^{(1)}_n$, is asymptotically negligible if $A\perp W$ holds.

\cite{shen2014inverse} proposed the following IPW estimator:
\begin{equation}
\label{ipw}
    \frac{1}{n}\sum_{i=1}^n\frac{A_iY_i}{e(W_i;\hat \theta_n)} - \frac{1}{n}\sum_{i=1}^n\frac{(1-A_i)Y_i}{1-e(W_i;\hat \theta_n)},
\end{equation}
where $e(W_i; \theta)$ is a model for $\mathbb{P}[A=1\mid W]$ and $\hat \theta_n$ is the maximum likelihood estimator of $\theta$ using trial data $\mathcal D$, which is the first-stage estimator.
In RCTs where the treatment is randomly assigned with a probability $\mathbb{P}[A=1\mid W]=\pi$, the true $e(W_i; \theta^*)$ is known as $\pi$.
\cite{shen2014inverse} showed that (deliberately) using $e(W_i; \hat\theta_{n})$ results in a smaller asymptotic variance than using $e(W_i; \theta^*)$, specifically, it becomes the same as ANHECOVA, when $e(W_i; \theta)$ is a logistic model.
This result implies that ignoring the uncertainty of $\hat{\theta}_n$ leads to an overestimation of the variance.

These two apparently contradictory results are easier to understand when considering how consistency for ATE is ensured for each estimator.
The consistency of the estimator \eqref{aug tsiatis} is ensured regardless of the correct specification of $\rho^{(0)}(W;\theta^{(0)})$ and $\rho^{(1)}(W;\theta^{(1)})$.
That is, even if $\theta^{(0)}$ and $\theta^{(1)}$ are fixed at specific values other than $\theta^{(0)*}$ and $\theta^{(1)*}$, the estimator \eqref{aug tsiatis} still converges to ATE.
This suggests that the condition \eqref{eq: cond a} holds.
In contrast, the consistency of the estimator \eqref{ipw} follows from the property that $e(W_i;\hat \theta_n)$ converges to the true value $e(W_i; \theta^*)=\pi$ due to $A\perp W$.
That is, when $\theta$ is fixed at a specific value other than the true value $\theta^*$, the estimator \eqref{ipw} converges to a value different from ATE.
This suggests that the condition \eqref{eq: cond a} does not hold.

\subsection{Extension to nonparametric prognostic score models}
\label{subsec: theory PROCOVA ML}
In this section, we return our focus to PROCOVA.
In the previous sections, we assumed a specific linear model \eqref{eq: PS model linear} for the prognostic score model.
However, prognostic scores can also be estimated using models other than linear regression, such as machine learning.
For example, \cite{schuler2022increasing} and \cite{EMA2022PROCOVA} recommend machine learning such as random forests or deep learning for large historical data, and \cite{hojbjerre2025tutorial} used a super-learner.
As shown by the following theorem, even when machine learning or parametric models beyond the linear regression model are used to estimate prognostic scores, as long as $L_2$-consistency holds, the same statement holds.
The proof is given in Appendix \ref{appendix: for thm3} and \ref{appendix: thm 5}.
\begin{theorem}
\label{thm: asyvar PROCOVA general}
Let $\hat\rho_{\tilde n}$ be a prognostic score function estimated using historical data $\tilde D$.
Additionally, let $\psi(O;\beta,\rho)=(Y-\beta^\top X_\rho)X_\rho$ where $X_\rho=(1, A, \rho(W))^\top$ and $\beta=(\beta_0,\beta_A,\beta_1)^\top$, and let $\hat\beta_n(\rho)$ be the corresponding OLS estimator using the trial data $\mathcal{D}$.
Assume $D\perp \tilde D$, $A\perp W$, $0<\pi<1$, $\mathbb{E}[Y^2]<\infty$, $\mathbb{E}[\{\rho^*(W)\}^2]<\infty$, $\mathbb{V}[\rho^*(W)]>0$ and $\mathbb{E}[\{\hat\rho_{\tilde n}(W)-\rho^*(W) \}^2\mid \mathcal{\tilde D}]=o_p(1)$.
Then, for $e=(0,1,0)^\top$,
\begin{equation*}
    \sqrt{n}\{e^\top\hat\beta_n(\hat\rho_{\tilde n})-e^\top\beta^*\}
=
\sqrt{n}\{e^\top\hat\beta_n(\rho^*)-e^\top\beta^*\}+o_p(1).
\end{equation*}
\end{theorem}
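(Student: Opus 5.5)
The approach is to use an exact algebraic representation of the OLS treatment coefficient, so that no Taylor expansion in the infinite-dimensional argument $\rho$ is needed. For any fixed function $\rho$, regress $Y$ on $X_\rho=(1,A,\rho(W))^\top$. Because the model contains an intercept and a binary $A$, the Frisch--Waugh--Lovell theorem gives the identity
\[
e^\top\hat\beta_n(\rho)=(\bar Y_1-\bar Y_0)-\hat\beta_{1,n}(\rho)\,(\bar\rho_1-\bar\rho_0).
\]
Here $\bar Y_a$ and $\bar\rho_a$ are the arm-$a$ sample means of $Y_i$ and $\rho(W_i)$. The quantity $\hat\beta_{1,n}(\rho)$ is the pooled within-arm slope, namely the ratio of the pooled within-arm sample covariance of $(\rho(W),Y)$ to the pooled within-arm sample variance of $\rho(W)$. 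Applying this identity with $\rho=\hat\rho_{\tilde n}$ (which is legitimate conditionally on $\mathcal{\tilde D}$, because $\mathcal D\perp\mathcal{\tilde D}$) and with $\rho=\rho^*$, and writing $\Delta:=\hat\rho_{\tilde n}-\rho^*$, I obtain
\[
e^\top\hat\beta_n(\hat\rho_{\tilde n})-e^\top\hat\beta_n(\rho^*)
=-\hat\beta_{1,n}(\hat\rho_{\tilde n})\,(\bar\Delta_1-\bar\Delta_0)-\{\hat\beta_{1,n}(\hat\rho_{\tilde n})-\hat\beta_{1,n}(\rho^*)\}(\bar\rho^*_1-\bar\rho^*_0).
\]
It therefore suffices to show that each of the two terms is $o_p(n^{-1/2})$.

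\emph{Second term.} Since $A\perp W$, $0<\pi<1$ and $\mathbb{E}[\{\rho^*(W)\}^2]<\infty$, the central limit theorem gives $\sqrt n(\bar\rho^*_1-\bar\rho^*_0)=O_p(1)$. I will show that $\hat\beta_{1,n}(\hat\rho_{\tilde n})-\hat\beta_{1,n}(\rho^*)=o_p(1)$ by comparing numerators and denominators separately.
\begin{enumerate}[label=(\roman*)]
\item The pooled within-arm variances of $\hat\rho_{\tilde n}(W)$ and $\rho^*(W)$ differ by at most a constant times
\[
\bigl(n^{-1}\textstyle\sum_i\Delta(W_i)^2\bigr)^{1/2}\bigl(n^{-1}\textstyle\sum_i\{\rho^*(W_i)^2+\Delta(W_i)^2\}\bigr)^{1/2},
\]
by Cauchy--Schwarz.
\item The analogous bound holds for the covariances with $Y$, with $n^{-1}\sum_i Y_i^2=O_p(1)$ in the second factor.
\item Conditionally on $\mathcal{\tilde D}$, Markov's inequality gives $n^{-1}\sum_i\Delta(W_i)^2=O_p\bigl(\mathbb{E}[\Delta(W)^2\mid\mathcal{\tilde D}]\bigr)=o_p(1)$, and this transfers to an unconditional statement by dominated convergence applied to the conditional probabilities.
\item The denominator for $\rho^*$ converges in probability to $\mathbb{V}[\rho^*(W)]>0$, using $A\perp W$ so that both arms share this variance.
\end{enumerate}
Together these show the difference of slopes is $o_p(1)$. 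They also show $\hat\beta_{1,n}(\hat\rho_{\tilde n})=O_p(1)$, which the first term needs.

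\emph{First term.} This is the key step and the one I expect to be the main obstacle, because $\Delta$ is a random function. I will condition on $\mathcal{\tilde D}$ and on the assignment vector $(A_1,\dots,A_n)$. Since $\mathcal D\perp\mathcal{\tilde D}$ and $A\perp W$ with i.i.d.\ sampling, under this conditioning the $W_i$ remain i.i.d.\ from the marginal law of $W$, and $\Delta$ is a fixed function. Consequently $\bar\Delta_1-\bar\Delta_0$ has conditional mean exactly zero, and its conditional variance is at most
\[
\Bigl(\tfrac1{n_1}+\tfrac1{n_0}\Bigr)\mathbb{E}[\Delta(W)^2\mid\mathcal{\tilde D}].
\]
Multiplying by $n$ gives $\bigl(n/n_1+n/n_0\bigr)\,o_p(1)=o_p(1)$, since $n_a/n\to\pi$ or $1-\pi$, both of which are bounded away from zero. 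A conditional Chebyshev inequality, followed by the same dominated-convergence step used in (iii), then yields $\sqrt n(\bar\Delta_1-\bar\Delta_0)=o_p(1)$. Combined with $\hat\beta_{1,n}(\hat\rho_{\tilde n})=O_p(1)$, the first term is $o_p(n^{-1/2})$.

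Finally, subtracting $e^\top\beta^*$ from both sides and multiplying by $\sqrt n$ gives the claim. The care needed is mainly in the measure-theoretic passage from conditional to unconditional $o_p$ statements, and in ensuring that $\mathbb{E}[\hat\rho_{\tilde n}(W)^2\mid\mathcal{\tilde D}]<\infty$ with probability tending to one. The latter follows from the $L_2$-consistency assumption together with $\mathbb{E}[\{\rho^*(W)\}^2]<\infty$.
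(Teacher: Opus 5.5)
Your proof is correct, and it takes a genuinely different route from the paper's.

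\textbf{The paper's route.} The paper works through the conditional population projection $\beta^*(\hat\rho_{\tilde n})$ and its residual $\varepsilon_{\hat\rho_{\tilde n}}$. It proceeds in four steps:
\begin{enumerate}
\item It solves the sample normal equations in terms of sample covariances.
\item It uses a five-part auxiliary lemma, including the conditional orthogonality of $\varepsilon_{\hat\rho_{\tilde n}}$ to $A$ and to $\hat\rho_{\tilde n}(W)$, to linearize the estimator as $\sqrt n\{\hat\beta_{A,n}(\hat\rho_{\tilde n})-\beta_A^*(\hat\rho_{\tilde n})\}=T_n-b(\hat\rho_{\tilde n})S_n(\hat\rho_{\tilde n})+o_p(1)$.
\item It separately checks that $\beta_A^*(\hat\rho_{\tilde n})$ equals the ATE whatever $\hat\rho_{\tilde n}$ is.
\item It shows $b(\hat\rho_{\tilde n})-b(\rho^*)=o_p(1)$ and $S_n(\hat\rho_{\tilde n})-S_n(\rho^*)=o_p(1)$.
\end{enumerate}

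\textbf{What your identity replaces.} Your exact Frisch--Waugh--Lovell identity $e^\top\hat\beta_n(\rho)=(\bar Y_1-\bar Y_0)-\hat\beta_{1,n}(\rho)(\bar\rho_1-\bar\rho_0)$ removes the whole linearization step. You need no population projection, no residual orthogonality relations, and no separate invariance argument. The $\rho$-dependence of the treatment coefficient appears exactly, as a sample slope times the between-arm imbalance of $\rho(W)$. You also never need to identify $\beta^*$, since the claim is equivalent to the difference of the two estimators being $o_p(n^{-1/2})$.

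\textbf{Shared analytic core.} The underlying estimates are the same in both proofs:
\begin{itemize}
\item Your $\sqrt n(\bar\Delta_1-\bar\Delta_0)=o_p(1)$ is the sample-mean version of the paper's $S_n(\hat\rho_{\tilde n})-S_n(\rho^*)=o_p(1)$. Both follow from conditional Chebyshev, with $A\perp W$ giving conditional mean zero.
\item Your slope comparison plays the role of the paper's $b(\hat\rho_{\tilde n})-b(\rho^*)=o_p(1)$.
\end{itemize}
The well-posedness conditions, $n_0,n_1>0$ and a positive pooled within-arm variance of $\hat\rho_{\tilde n}(W)$, hold with probability tending to one by your steps (i)--(iv). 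These mirror the paper's full-rank caveat.

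\textbf{What each route buys.} Your argument is shorter and more elementary. It also makes the mechanism transparent: randomization makes the between-arm imbalance of the first-stage error negligible at the $\sqrt n$ scale, and the slope error only multiplies an $O_p(n^{-1/2})$ imbalance. The paper's route instead yields an explicit asymptotically linear representation $T_n-b(\rho^*)S_n(\rho^*)$. That representation gives the asymptotic variance directly. It also ties the result to the invariance of $\beta_A^*(\rho)$ in $\rho$, the nonparametric counterpart of the condition $e^\top Q_0^{-1}Q_1=0$ from Section 3. Your proof establishes the stated equivalence but leaves the limiting law of $\hat\beta_{A,n}(\rho^*)$ to standard OLS theory.
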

Although theoretical frameworks in the context of observational studies typically assume that nuisance functions such as conditional mean functions are $L_2$-consistent for the target (oracle) functions at a sufficiently fast rate, Theorem \ref{thm: asyvar PROCOVA general} holds as long as $\hat\rho_{\tilde n}$ is $L_2$-consistent for some function $\rho^*$, which does not necessarily need to be the oracle function.
Theorem \ref{thm: asyvar PROCOVA general} shows that the asymptotic distributions of $e^\top\hat{\beta}_n(\hat\rho_{\tilde n})$ using the estimated prognostic score and $e^\top\hat{\beta}_n(\rho^*)$ using the known prognostic score are identical for $e=(0,1,0)^\top$, provided that $L_2$-consistency holds.
Therefore, the statement that the uncertainty in the estimation of the prognostic score model is negligible for ATE inference would hold regardless of whether a parametric or nonparametric model with $L_2$-consistency is used to estimate the prognostic scores.
When estimating prognostic scores using nonparametric models, it is difficult to explicitly account for that uncertainty.
Thus, this result is important in practice. 

\section{Variance estimation}
In Section \ref{subsec: var est general}, we construct two variance estimators corresponding to the two asymptotic variances derived in Section 3 for the two-sample two-stage estimator, and discuss their properties.
In Section \ref{subsec: var est PROCOVA}, we provide the specific forms of these variance estimators for PROCOVA, and discuss their properties.

\subsection{Variance estimation in two-sample two-stage estimation}
\label{subsec: var est general}
The naive variance estimator for $\hat\beta_n(\hat\theta_{\tilde n})$ is $1/n$ times $\hat V_{\mathrm{fix}}:= \hat Q_0^{-1}\hat \Omega (\hat Q_0^{-1})^\top$, where
\begin{equation*}
    \hat Q_0=\frac{1}{n}\sum_{i=1}^n\left\{\frac{\partial}{\partial\beta^\top}\psi\left(O_i;\beta,\hat \theta_{\tilde n}\right)\middle|_{\beta=\hat\beta_n(\hat\theta_{\tilde n})}\right\}
\end{equation*}
and
\begin{equation*}
    \hat \Omega=\frac{1}{n}\sum_{i=1}^n \left\{\psi\left(O_i;\hat\beta_n(\hat\theta_{\tilde n}),\hat\theta_{\tilde n}\right)\psi\left(O_i;\hat\beta_n(\hat\theta_{\tilde n}),\hat\theta_{\tilde n}\right)^\top\right\}.
\end{equation*}
$\hat V_{\mathrm{fix}}$ is an empirical plug-in estimator of $V_{\mathrm{fix}}$ and can be consistent for $V_{\mathrm{fix}}$ under suitable regularity conditions. 
For any constant vector $e\in \mathbb{R}^d$, we can also construct the variance estimator for $e^\top \hat\beta_n(\hat\theta_{\tilde n})$ as $e^\top\hat V_{\mathrm{fix}}e/n$.
However, $\hat V_{\mathrm{fix}}$ is generally not a consistent estimator of $V_{\mathrm{est}}$, which is the asymptotic variance of the two-sample two-stage estimator $\hat\beta_n(\hat\theta_{\tilde n})$, due to ignoring the first-stage uncertainty.

The correct variance estimator for $\hat\beta_n(\hat\theta_{\tilde n})$ is $1/n$ times $\hat V_{\mathrm{est}}:=\hat V_{\mathrm{fix}} + \frac{n}{\tilde n}\hat Q_0^{-1}\hat Q_1\hat V_\theta \hat Q_1^\top(\hat Q_0^{-1})^\top$, where
\begin{equation*}
\hat Q_1=\frac{1}{n}\sum_{i=1}^n \left\{\frac{\partial}{\partial\theta^\top}\psi\left(O_i;\hat\beta_n(\hat\theta_{\tilde n}),\theta\right)\middle|_{\theta=\hat\theta_{\tilde n}}\right\}  
\end{equation*}
and
\begin{equation*}
\hat V_\theta=\hat Q_2^{-1}\hat Q_3(\hat Q_2^{-1})^{\top},\quad\hat Q_2=\frac{1}{\tilde n}\sum_{i=1}^{\tilde n}\left\{\frac{\partial}{\partial\theta^\top}\phi\left(\tilde O_i;\theta\right)\middle|_{\theta=\hat\theta_{\tilde n}}\right\},\quad \hat Q_3=\frac{1}{\tilde n}\sum_{i=1}^{\tilde n}\left\{\phi\left (\tilde O_i;\hat\theta_{\tilde n}\right)\phi\left(\tilde O_i;\hat\theta_{\tilde n}\right)^\top\right\}.   
\end{equation*}
$\hat V_{\mathrm{est}}$ is an empirical plug-in estimator of $V_{\mathrm{est}}$ and can be consistent for $V_{\mathrm{est}}$ under suitable regularity conditions. 
For any constant vector $e\in \mathbb{R}^d$, we can also construct the variance estimator for $e^\top \hat\beta_n(\hat\theta_{\tilde n})$ as $e^\top\hat V_{\mathrm{est}}e/n$.

\subsection{Variance estimation in PROCOVA}
\label{subsec: var est PROCOVA}
We provide specific forms of $\hat V_{\text{fix}}$ and $\hat V_{\text{est}}$ for the case of PROCOVA with the first-stage estimating function \eqref{eq: est func prog} and the second-stage estimating function \eqref{eq: est func PROCOVA linear} in Appendix \ref{appendix: PROCOVA varest}.
For the inference for ATE using PROCOVA, both variance estimators, $e^\top\hat V_{\mathrm{fix}}e/n$ and $e^\top\hat V_{\mathrm{est}}e/n$, are asymptotically valid. 
As estimates, $e^\top\hat V_{\mathrm{est}}e/n$ always takes on values greater than or equal to $e^\top\hat V_{\mathrm{fix}}e/n$ because the additional term $\frac{1}{\tilde n}e^\top\hat Q_0^{-1}\hat Q_1\hat V_\theta \hat Q_1^\top(\hat Q_0^{-1})^\top e$ is non-negative.
If $n/\tilde n$ is large, then the additional term is large.

The heteroskedasticity-consistent covariance matrix estimator corresponding to $e^\top\hat V_{\mathrm{fix}}e/n$ is known to have a downward bias with finite samples in some realistic situations \citep{chesher1987bias}.
If more conservative inference in finite samples is preferred, it may be better to use $e^\top\hat V_{\mathrm{est}}e/n$ rather than $e^\top\hat V_{\mathrm{fix}}e/n$.

However, for the one-sample two-stage estimator in equation \eqref{aug tsiatis} discussed in Section \ref{subsec: theory other}, no general ordering exists between the variance estimator based on equation (A.6) in \cite{tsiatis2008covariate}, which corresponds to ignoring the first-stage uncertainty, and the variance estimator based on equation (A.6)+(A.7) in \cite{tsiatis2008covariate}, which corresponds to taking the first-stage uncertainty into account.
Note that both variance estimators are asymptotically valid.

\section{Simulation studies}
In this section, we conduct simulation studies to compare the performance of two types of variance estimators for PROCOVA.
Section \ref{subsec: sim set} describes our simulation settings.
Section \ref{subsec: sim result} presents the simulation results.

\subsection{Simulation settings}
\label{subsec: sim set}
We generate simulation data following \cite{hojbjerre2025tutorial}.
For each subject of trial data $i = 1, \dots, n$, observed covariates $W_i = (1, W_{1,i}, W_{2,i}, \dots, W_{7,i})^\top$ are generated as $W_{1,i} \sim \mathrm{Unif}(-2,\, 1)$, $W_{2,i} \sim \mathrm{Unif}(-2,\, 1)$, $W_{3,i} \sim {N}(0,\, 3^2)$, $W_{4,i} \sim \mathrm{Exp}(0.8)$, $W_{5,i} \sim \mathrm{Gamma}(5,\, 10)$, $W_{6,i}, W_{7,i} \sim \mathrm{Unif}(1,\, 2)$, an unobserved covariate $U_i$ is generated as $U_i \sim \mathrm{Unif}(0,\, 1)$, a treatment assignment $A_i$ is generated as $ A_i \sim \mathrm{Bernoulli}(0.5)$ and an outcome $Y_i$ is generated as $Y_i\mid A_i, W_i,U_i \sim N(A_im_{1}(W_i, U_i) + (1-A_i)m_{0}(W_i, U_i),\, 1)$.
The model form of $m_{1}(W_i, U_i),\ m_{0}(W_i, U_i)$ differs depending on the scenario.
For each subject of historical data $i = 1, \dots, \tilde n$, observed covariates $\tilde W_i = (1, \tilde W_{1,i}, \tilde W_{2,i}, \dots, \tilde W_{7,i})^\top$ are generated as $\tilde W_{1,i} \sim \mathrm{Unif}(-2+b,\, 1+b)$, $\tilde W_{2,i} \sim \mathrm{Unif}(-2,\, 1)$, $\tilde W_{3,i} \sim {N}(0,\, 3^2)$, $\tilde W_{4,i} \sim \mathrm{Exp}(0.8)$, $\tilde W_{5,i} \sim \mathrm{Gamma}(5,\, 10)$, $\tilde W_{6,i}, \tilde W_{7,i} \sim \mathrm{Unif}(1,\, 2)$, an unobserved covariate $\tilde U_i$ is generated as $\tilde U_i \sim \mathrm{Unif}(c,\, 1+c)$, a treatment assignment $\tilde A_i$ is generated as $\tilde A_i\equiv0$ and an outcome $\tilde Y_i$ is generated as $\tilde Y_i \mid \tilde A_i, \tilde W_i,\tilde U_i \sim N(\tilde A_im_{1}(\tilde W_i, \tilde U_i) + (1-\tilde A_i)m_{0}(\tilde W_i, \tilde U_i),\, 1)$. 
The parameters $b,\ c$ represent the difference in the covariate distribution between trial and historical data and differ depending on the scenario.

We examined 36 scenarios by combining the four patterns of the model form of $m_{1}(W_i, U_i),\ m_{0}(W_i, U_i)$ and nine patterns of the covariate shift parameter $b,\ c$.
Each scenario is denoted as Scenario X-x. X before the hyphen corresponds to the pattern of the model form of $m_{1}(W_i, U_i),\ m_{0}(W_i, U_i)$, and x after the hyphen corresponds to the pattern of the covariate shift parameter $b,\ c$.

In Scenarios A-x and B-x, a linear regression model with the dependent variable $Y$ and independent variables $W$ holds; however, it does not hold in Scenarios C-x and D-x.
Scenarios A-x and C-x have homogeneous treatment effect, whereas Scenarios B-x and D-x have heterogeneous treatment effect.
In Scenario A-x, we set
\begin{equation}
\label{eq: dgp correct m0}
m_0(W_i, U_i) = W_{1,i} + 4.1 W_{2,i} + 1.4 W_{3,i} - 1.5 W_{4,i} + 1.5 W_{5,i} - W_{6,i} + W_{7,i}
\end{equation}
and $m_1(W_i, U_i) = m_0(W_i, U_i) + 0.835$.
In Scenario B-x, we set \eqref{eq: dgp correct m0}, and 
\begin{equation*}
 m_1(W_i, U_i)= -4.184 + 0.1 W_{1,i}^2 + 0.41W_{2,i}^2 + 0.14 W_{3,i}^2 - 0.15 W_{4,i}^2 + 0.15 W_{5,i}^2 - 0.1W_{6,i}^2 + 0.1W_{7,i}^2.
\end{equation*}
In Scenario C-x, we set 
\begin{equation}
\label{eq: dgp miss m0}
\begin{split}
m_0(W_i, U_i) =\;&
4.1 \sin(|W_{2,i}|)
+ 1.4 \mathbb{I}(|W_{3,i}| > 2.5) + 1.5 \mathbb{I}(|W_{4,i}| > 0.25)+ 1.5 \sin(|W_{5,i}|) \\
&- 4.1 \mathbb{I}(W_{1,i} < -4.1) \sin(|W_{2,i}|)
- 4.1 \mathbb{I}(W_{1,i} < -6.1) \sin(|W_{2,i}|) \\
& - 4.1 \sin(|W_{2,i}|)\mathbb{I}(U_i > 1.55) - 4.1 \sin(|W_{2,i}|)\mathbb{I}(U_i > 1.1).
\end{split}
\end{equation}
and $m_1(W_i, U_i) = m_0(W_i, U_i) + 0.835$. 
In Scenario D-x, we set \eqref{eq: dgp miss m0}, and 
\begin{equation*}
\begin{split}
m_1(W_i, U_i) =\;&
4.3 \sin^2(|W_{2,i}|)
+ 1.4 \mathbb{I}(|W_{3,i}| > 2.5)
+ 1.3 \mathbb{I}(|W_{4,i}| > 0.25) \\
&+ 4.1 \mathbb{I}(W_{2,i} > 0)\sin(|W_{5,i}|)
+ 1.6 \sin(|W_{6,i}|) \\
&- 4.1 \sin(|W_{2,i}|)\mathbb{I}(W_{1,i} < -4.1)
- 4.1 \sin(|W_{2,i}|)\mathbb{I}(W_{1,i} < -6.1) \\
&- 4.1 \sin(|W_{2,i}|)\mathbb{I}(U_i > 1.1)
- 4.1 \sin(|W_{2,i}|)\mathbb{I}(U_i > 1.55).  
\end{split}    
\end{equation*}
We set $b=c=0$ in Scenario X-1, $b=0,c=0.5$ in Scenario X-2, $b=0,c=1.5$ in Scenario X-3, $b=-2, c=0$ in Scenario X-4, $b=-2,c=0.5$ in Scenario X-5, $b=-2,c=1.5$ in Scenario X-6, $b=-5, c=0$ in Scenario X-7, $b=-5,c=0.5$ in Scenario X-8, $b=-5,c=1.5$ in Scenario X-9.
In Scenarios A-x and C-x, the true value of ATE is $0.835$.
In Scenarios B-x and D-x, the true value of ATE is calculated as $0.835$ by generating one dataset with the sample size $n=\tilde n=10^7$ for the corresponding scenario and calculating the point estimate using that dataset.

For each scenario, sample sizes are varied as $n=40, 60, 80, 100, 200, 400, 600, 800, 1000$ and $\tilde n = n/4, n/2,n,2n, 4n, 10n$.
For each combination of the scenario and sample sizes $n,\tilde n$, we repeat the simulation 1000 times and evaluate the performance of the two types of variance estimators with the first-stage estimating function \eqref{eq: est func prog} and the second-stage estimating function \eqref{eq: est func PROCOVA linear}.
To calculate the confidence intervals (CIs), we use $t$-distribution approximation with $n-3$ degrees of freedom.

\subsection{Simulation results}
\label{subsec: sim result}
Due to page constraints, we present only the results for Scenario D-5, which may be the most realistic scenario. 
However, the trends in the results described below were generally similar for the other scenarios.
The results for other scenarios are presented in Appendix \ref{appendix: sim}.

Figures \ref{fig: coverage_nboth} and \ref{fig: varianceratio_nboth} show the results when the sample sizes are increased while maintaining the relationship $\tilde n=10n$.
Figure \ref{fig: coverage_nboth} shows the coverage probability which is the proportion of 1000 simulations in which the 95\% CI includes the true value.
For $\beta_0$ and $\beta_1$, the 95\% CI based on $\hat V_{\text{est}}$ achieves the nominal coverage probability as the sample size increases, whereas the 95\% CI based on $\hat V_{\text{fix}}$ remains under coverage.
However, the 95\% CI for $\beta_A$ calculated using either of the variance estimators mostly achieves nominal coverage probability.
Figure \ref{fig: varianceratio_nboth} shows the mean of the ratio of two variance estimators, i.e., $e^\top\hat V_{\text{est}}e/e^\top\hat V_{\text{fix}}e$ with $e=(1,0,0)^\top$ for $\beta_0$, with $e=(0,1,0)^\top$ for $\beta_A$ and $e=(0,0,1)^\top$ for $\beta_1$, over 1000 simulations.
The ratio of the two variance estimators for $\beta_A$ converges to one as the sample size increases, whereas the ratios for $\beta_0$ and $\beta_1$ converge to values greater than one.
These results are consistent with the theory in Section \ref{sec: theory}.
\begin{center}
    [Figures \ref{fig: coverage_nboth} and \ref{fig: varianceratio_nboth} about here]
\end{center}

Note that even for $\beta_A$, because $e^\top\hat V_{\text{est}}e$ always takes a larger value than $e^\top\hat V_{\text{fix}}e$, this difference can affect performance when sample sizes are small.
Figure \ref{fig: coverage_nhist_ntrial100} shows the coverage probability for $\beta_A$ when the sample size of trial data is $n=100$ and that of historical data is varied.
The 95\% CI based on $\hat V_{\text{fix}}$ is under coverage, whereas that based on $\hat V_{\text{est}}$ is closer to the nominal level.
\begin{center}
    [Figure \ref{fig: coverage_nhist_ntrial100} about here]
\end{center}

Figure \ref{fig: coverage_nhist_ntrial1000} shows the coverage probability when the sample size of trial data is $n=1000$ and that of historical data is varied.
Even for $\beta_0$ and $\beta_1$, the 95\% CI based on $\hat V_{\text{fix}}$ achieves the nominal level, as $\tilde n$ increases.
This result corresponds to the condition $\kappa =0$.
\begin{center}
    [Figure \ref{fig: coverage_nhist_ntrial1000} about here]
\end{center}

\section{Data application}
In this section, we apply PROCOVA with two variance estimators to ACTG 175 data from the BART package in R.
ACTG 175 is a randomized, double-blind, placebo-controlled trial that compares monotherapy with zidovudine or didanosine with combination therapy with zidovudine and didanosine or zidovudine and zalcitabine in HIV-1-infected subjects with CD4 T cell counts between 200 and 500 per cubic millimeter \citep{hammer1996trial}. 
We use CD4 T cell count ($\text{cells}/\text{mm}^3$) at 20 weeks as the outcome.
We use the monotherapy with zidovudine arm as the control group and the combination therapy with zidovudine and didanosine arm as the treatment group, and randomly split them into historical data and trial data artificially as follows:
\begin{enumerate}
    \item Randomly sample 100 subjects each from 532 control subjects and 522 treatment subjects, and use these 200 subjects as trial data (i.e., $n=200$).
    \item Randomly sample $\tilde n$ subjects from the remaining control group data, and use them as historical data ($\tilde n= 100,200,400$).  
\end{enumerate}

Then, we apply PROCOVA to the data constructed through the above procedures.
First, we learn the prognostic score model \eqref{eq: PS model linear} including CD4 T cell count at baseline, age, Karnofsky score and past treatment stratification as covariates.
Second, we fit the PROCOVA model \eqref{eq: PROCOVA linear} using two variance estimators: $\hat V_{\mathrm{fix}}$ and $\hat V_{\mathrm{est}}$.

Table \ref{tb: app} and Figure \ref{fig: app} show application results. 
For $\beta_A$, the standard errors and 95\% CIs using $t$-distribution approximation with $n-3$ degrees of freedom are almost identical for the two variance estimators across the three historical sample sizes.
For $\beta_0$, while variance estimates treating the prognostic score model as known are almost identical across three historical sample sizes, variance estimates accounting for prognostic score model estimation are smaller when the historical sample size is larger.
Additionally, the ratio of the two variance estimates is close to one for $\beta_A$ for each historical sample size.
For $\beta_0$ and $\beta_1$, the ratio deviates from one, and this degree is smaller if historical sample size is larger.
\begin{center}
    [Table  \ref{tb: app} and Figure \ref{fig: app} about here]
\end{center}

\section{Conclusion}
In this study, we derived and compared two asymptotic variances for PROCOVA: one for the case where the first-stage model is known and one for the case where it is estimated.
As discussed in Section \ref{subsec: theory two-stage}, in the two-sample two-stage estimation framework, the variance estimation that ignores the uncertainty in the first-stage estimation generally leads to underestimation.
However, as shown in Theorems \ref{thm: asyvar PROCOVA}--\ref{thm: asyvar PROCOVA general}, in the special case of estimating ATE using PROCOVA in RCTs, a variance estimator that ignores the uncertainty in the prognostic score estimation is asymptotically valid.
Furthermore, we constructed two variance estimators for PROCOVA, each corresponding to cases for which the uncertainty in prognostic score estimation is accounted for or not, and compared their performances via simulations and data application.
For the coefficient of the treatment, which represents ATE, the two variance estimators performed comparably, except when both sample sizes were small.
For the intercept and coefficient of the prognostic score, there was a non-negligible difference between the two variance estimators.

According to our theory, the asymptotic variance of the PROCOVA-based ATE estimator can be treated as if prognostic scores are known.
This simplifies the asymptotic theory, which is beneficial for theoretical researchers.
Our theory also confirmed that the current practice of variance estimation in PROCOVA \citep{schuler2022increasing, hojbjerre2025tutorial} is appropriate.
When estimating ATE using PROCOVA in actual RCTs, it is asymptotically valid to use a variance estimator that ignores the uncertainty in the prognostic score estimation.
In particular, when machine learning is used to estimate prognostic scores, the fact that there is no need to explicitly account for that uncertainty is advantageous for data analysts.
When the sample size of historical data is small, it is preferable to use a linear regression model to estimate prognostic scores and a variance estimator based on $\hat{V}_{\mathrm{est}}$, which yields larger variance estimates if one prefers more conservative inference.

\bibliographystyle{apalike}
\bibliography{PROCOVAvar}

\clearpage

\begin{figure}[p]
    \centering
    \includegraphics[width=\linewidth]{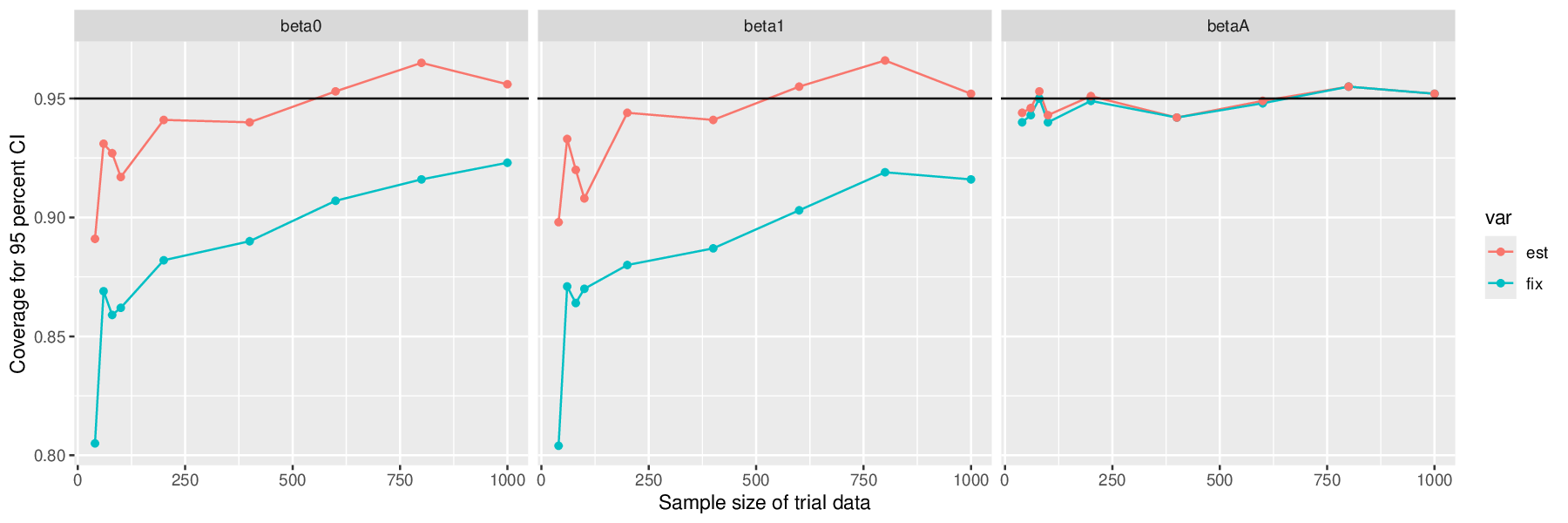}
    \caption{Plots of the coverage probability of 95\% CI over 1000 simulations for Scenario D-5.
    ``beta0'', ``betaA'', ``beta1'' represent the intercept $\beta_0$, the coefficient for the treatment assignment $\beta_A$, the coefficient for the prognostic score $\beta_1$ in the PROCOVA model \eqref{eq: PROCOVA linear}, respectively.
     ``fix'' represents $e^\top\hat V_{\text{fix}}e$ with $e=(1,0,0)^\top$ for $\beta_0$, with $e=(0,1,0)^\top$ for $\beta_A$ and $e=(0,0,1)^\top$ for $\beta_1$.
    ``est'' represents $e^\top\hat V_{\text{est}}e$ with $e=(1,0,0)^\top$ for $\beta_0$, with $e=(0,1,0)^\top$ for $\beta_A$ and $e=(0,0,1)^\top$ for $\beta_1$.
    The x-axis represents the sample size of trial data $n$.
    The sample size of historical data is $\tilde n = 10n$. 
    The y-axis represents the coverage probability which is the proportion of 1000 simulations in which the 95\% CI using each variance estimator includes the true value.
    }
    \label{fig: coverage_nboth}
\end{figure}

\clearpage
\begin{figure}[p]
    \centering
    \includegraphics[width=0.4\linewidth]{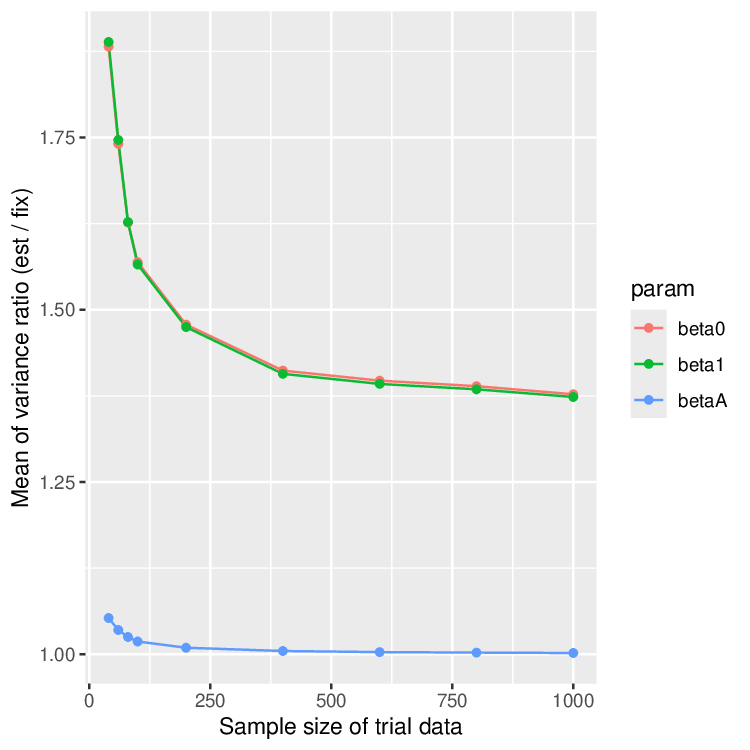}
    \caption{Plots of the mean of the ratio of two variance estimators over 1000 simulations for Scenario D-5.
    ``beta0'', ``betaA'', ``beta1'' represent the intercept $\beta_0$, the coefficient for the treatment assignment $\beta_A$, the coefficient for the prognostic score $\beta_1$ in the PROCOVA model \eqref{eq: PROCOVA linear}, respectively.
    The x-axis represents the sample size of trial data $n$.
    The sample size of historical data is $\tilde n = 10n$. 
    The y-axis represents the mean of the ratio of two variance estimators, i.e., $e^\top\hat V_{\text{est}}e/e^\top\hat V_{\text{fix}}e$ with $e=(1,0,0)^\top$ for $\beta_0$, with $e=(0,1,0)^\top$ for $\beta_A$ and $e=(0,0,1)^\top$ for $\beta_1$, over 1000 simulations.}
    \label{fig: varianceratio_nboth}
\end{figure}
\clearpage
\begin{figure}[p]
    \centering
    \includegraphics[width=0.4\linewidth]{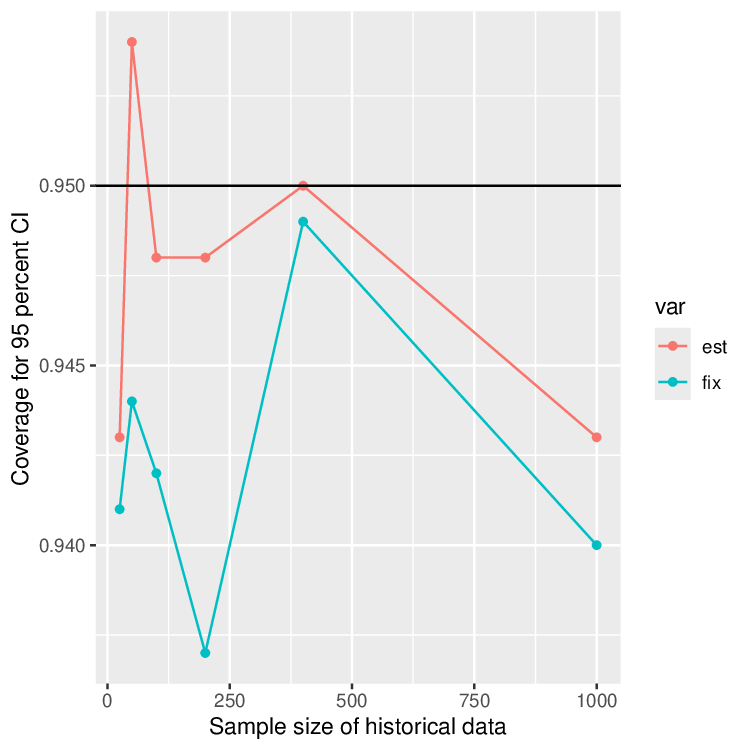}
    \caption{Plots of the coverage probability of the 95\% CI for $\beta_A$ in the PROCOVA model \eqref{eq: PROCOVA linear} over 1000 simulations for Scenario D-5.
     ``fix'' represents $e^\top\hat V_{\text{fix}}e$ with $e=(1,0,0)^\top$ for $\beta_0$, with $e=(0,1,0)^\top$ for $\beta_A$ and $e=(0,0,1)^\top$ for $\beta_1$.
    ``est'' represents $e^\top\hat V_{\text{est}}e$ with $e=(1,0,0)^\top$ for $\beta_0$, with $e=(0,1,0)^\top$ for $\beta_A$ and $e=(0,0,1)^\top$ for $\beta_1$.
    The sample size of trial data is $n=100$. 
    The x-axis represents the sample size of historical data $\tilde n$.
    The y-axis represents the coverage probability which is the proportion of 1000 simulations in which the 95\% CI using each variance estimator includes the true value.}
    \label{fig: coverage_nhist_ntrial100}
\end{figure}
\clearpage
\begin{figure}[p]
    \centering
    \includegraphics[width=\linewidth]{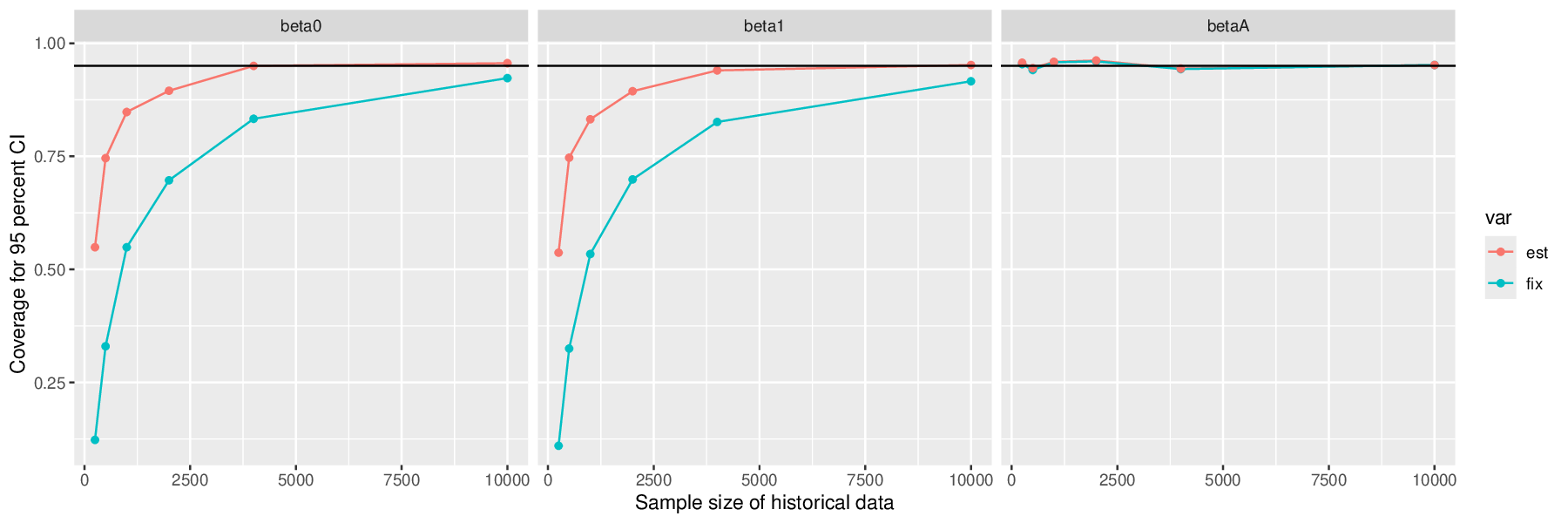}
    \caption{Plots of the coverage probability of 95\% CI over 1000 simulations for Scenario D-5.
    ``beta0'', ``betaA'', ``beta1'' represent the intercept $\beta_0$, the coefficient for the treatment assignment $\beta_A$, the coefficient for the prognostic score $\beta_1$ in the PROCOVA model \eqref{eq: PROCOVA linear}, respectively.
     ``fix'' represents $e^\top\hat V_{\text{fix}}e$ with $e=(1,0,0)^\top$ for $\beta_0$, with $e=(0,1,0)^\top$ for $\beta_A$ and $e=(0,0,1)^\top$ for $\beta_1$.
    ``est'' represents $e^\top\hat V_{\text{est}}e$ with $e=(1,0,0)^\top$ for $\beta_0$, with $e=(0,1,0)^\top$ for $\beta_A$ and $e=(0,0,1)^\top$ for $\beta_1$.
    The sample size of trial data is $n=1000$. 
    The x-axis represents the sample size of historical data $\tilde n$.
    The y-axis represents the coverage probability which is the proportion of 1000 simulations in which the 95\% CI using each variance estimator includes the true value.}
    \label{fig: coverage_nhist_ntrial1000}
\end{figure}
\clearpage
\begin{table}[p]
\centering
\caption{Results of applying PROCOVA with two variance estimators to ACTG 175 data. 
$\beta_0$, $\beta_A$, $\beta_1$ represent the intercept, the coefficient for the treatment assignment, the coefficient for the prognostic score in the PROCOVA model \eqref{eq: PROCOVA linear}, respectively. 
The sample size of trial data is $n=200$. 
The 2nd column represents the sample size of historical data $\tilde n$.
The 3rd column represents the point estimate.
The 4th column represents the standard error calculated by $\hat V_{\text{fix}}$ and the 5th column represents the standard error calculated by $\hat V_{\text{est}}$. 
The 6th column represents 95\% CI calculated by $\hat V_{\text{fix}}$. 
The 7th column represents 95\% CI calculated by $\hat V_{\text{est}}$.
95\% CIs use $t$-distribution approximation with $n-3$ degrees of freedom.
\\}
\label{tb: app}
\begin{tabular}{llcccccc}
\toprule
 &  &  &  & \multicolumn{2}{c}{Standard error} & \multicolumn{2}{c}{95\% CI} \\
\cmidrule(lr){5-6} \cmidrule(lr){7-8}
 & $\tilde{n}$ & Estimate &  & Fix & Est & Fix & Est \\
\midrule
\multirow{3}{*}{$\beta_0$}
 & 400 & 60.05 &  & 35.68 & 39.88 & [-10.30, 130.41] & [-18.47, 138.58] \\
 & 200 & 58.73 &  & 36.02 & 42.38 & [-12.31, 129.77] & [-24.84, 142.31] \\
 & 100 & 47.52 &  & 36.76 & 46.00 & [-24.98, 120.02] & [-43.17, 138.22] \\
\midrule
\multirow{3}{*}{$\beta_A$}
 & 400 & 71.92 &  & 17.34 & 17.35 & [37.73, 106.11] & [37.71, 106.13] \\
 & 200 & 70.87 &  & 17.21 & 17.23 & [36.92, 104.81] & [36.88, 104.85] \\
 & 100 & 71.11 &  & 17.17 & 17.21 & [37.25, 104.96] & [37.17, 105.04] \\
\midrule
\multirow{3}{*}{$\beta_1$}
 & 400 & 0.79 &  & 0.11 & 0.12 & [0.58, 1.00] & [0.56, 1.02] \\
 & 200 & 0.81 &  & 0.11 & 0.13 & [0.60, 1.03] & [0.57, 1.06] \\
 & 100 & 0.83 &  & 0.11 & 0.13 & [0.62, 1.04] & [0.57, 1.09] \\
\bottomrule
\end{tabular}
\end{table}
\clearpage
\begin{figure}[p]
    \centering
    \includegraphics[width=0.5\linewidth]{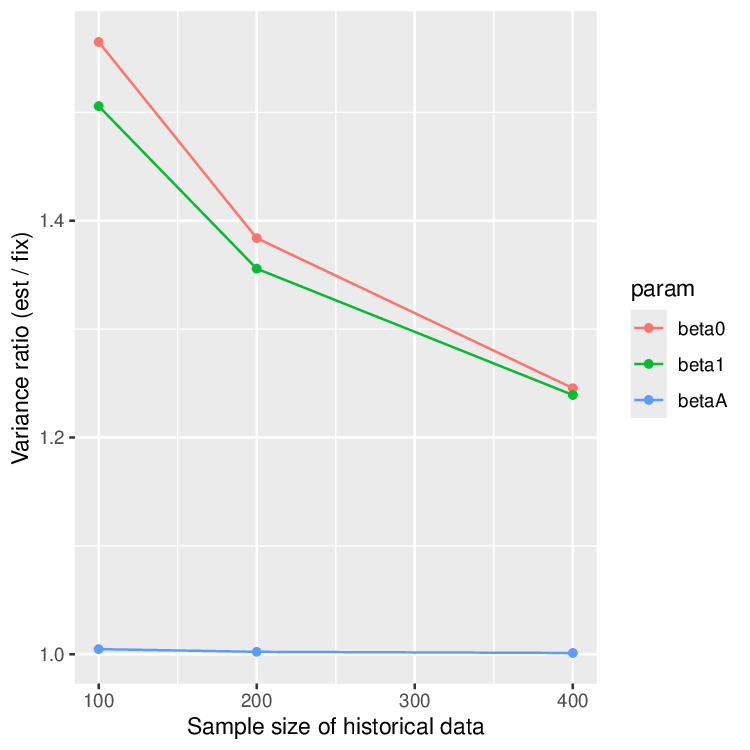}
    \caption{Plots of the ratio of two variance estimators when PROCOVA is applied to the ACTG 175 data with varying the sample size of historical data.
    ``beta0'', ``betaA'', ``beta1'' represent the intercept $\beta_0$, the coefficient for the treatment assignment $\beta_A$, the coefficient for the prognostic score $\beta_1$ in the PROCOVA model \eqref{eq: PROCOVA linear}, respectively.
    The sample size of trial data is $n=200$.
    The x-axis represents the sample size of historical data $\tilde n$.
    The y-axis represents the ratio of two variance estimators, i.e., $e^\top\hat V_{\text{est}}e/e^\top\hat V_{\text{fix}}e$ with $e=(1,0,0)^\top$ for $\beta_0$, with $e=(0,1,0)^\top$ for $\beta_A$ and $e=(0,0,1)^\top$ for $\beta_1$.}
    \label{fig: app}
\end{figure}

\clearpage
\begin{appendix}

\section{Appendix: Proofs}

\subsection{Regularity conditions}
\label{appendix: assumption}
Let $\Theta \subset \mathbb{R}^p$ and $\mathcal{B} \subset \mathbb{R}^d$ be open parameter spaces.
Let $N_{\theta}$ be a neighborhood of $\theta^*$. Let $N_{\beta}$ be a neighborhood of $\beta^*$.
\begin{enumerate}
\item[(A1)] There exists a unique parameter $\theta^* \in \Theta$ such that $\mathbb E[\phi(\tilde O;\theta^*)] = 0$.
For every $\theta\in N_\theta$, there exists a unique $\beta^*(\theta) \in \mathcal{B}$ satisfying $\mathbb E[\psi(O;\beta^*(\theta),\theta)]=0$.
\item[(A2)] The function $\phi(\tilde O;\theta)$ is continuously differentiable in $\theta\in N_\theta$.
The function $\psi(O;\beta,\theta)$ is continuously differentiable in $(\beta,\theta)\in N_\beta\times N_\theta$.

\item[(A3)] There exist integrable random variables $M_\beta(O)$, $M_\theta(O)$ and
$\tilde M(\tilde O)$ such that
\[
\sup_{(\beta,\theta)\in N_\beta\times N_\theta}
\left\|
\frac{\partial}{\partial \beta^\top}\psi(O;\beta,\theta)
\right\|
\le M_\beta(O),
\]
\[
\sup_{(\beta,\theta)\in N_\beta\times N_\theta}
\left\|
\frac{\partial}{\partial \theta^\top}\psi(O;\beta,\theta)
\right\|
\le M_\theta(O),
\]
\[
\sup_{\theta\in N_\theta}
\left\|
\frac{\partial}{\partial \theta^\top}\phi(\tilde O;\theta)
\right\|
\le \tilde M(\tilde O).
\]
Additionally,
\[
\mathbb{E}[\|\psi(O;\beta^*,\theta^*)\|^2]<\infty,
\quad
\mathbb{E}[\|\phi(\tilde O;\theta^*)\|^2]<\infty.
\]
\item[(A4)] $Q_0$ and $Q_2$ are nonsingular.
\item[(A5)] $\hat\theta_{\tilde n} \plim \theta^*$. Additionally, $\sup_{\theta\in N_\theta}
\|\hat\beta_n(\theta)-\beta^*(\theta)\| \plim 0$.
\end{enumerate}

\subsection{Asymptotic variances in two-sample two-stage estimation}
\label{appendix: thm 1}

\begin{lemma}
\label{thm: asyvar mat}
Assume the regularity conditions (A1)--(A5), $\mathcal{D}\perp\mathcal{\tilde D}$ and $\lim_{n,\tilde n\to\infty} {n}/{\tilde n}=:\kappa\in[0,\infty)$. 
Then, the following hold:
\begin{equation*}
    \sqrt n\{\hat\beta_n(\theta^*)-\beta^*\}
=-Q_0^{-1}\sqrt n\,\Psi_n(\beta^*,\theta^*)+o_p(1),
\end{equation*}
\begin{equation*}
         \sqrt n(\hat\beta_n(\hat\theta_{\tilde n})-\beta^*)=-Q_0^{-1}\sqrt n\,\Psi_n(\beta^*,\theta^*)+\sqrt{\kappa}Q_0^{-1}Q_1Q_2^{-1}\sqrt{\tilde n}\Phi_{\tilde n}(\theta^*)+o_p(1),
\end{equation*}
and
\begin{equation*}
    \sqrt n\{\hat\beta_n(\theta^*)-\beta^*\}\xrightarrow[]{d} N\!\left(0,\ V_{\mathrm{fix}}\right), \text{ with }V_{\mathrm{fix}}=Q_0^{-1}\Omega (Q_0^{-1})^\top,
\end{equation*}
\begin{equation*}
    \sqrt n\bigl(\hat\beta_n(\hat\theta_{\tilde n})-\beta^*\bigr)\xrightarrow[]{d}N\!\left(0,\ V_{\mathrm{est}}\right),\text{ with }V_{\mathrm{est}}=V_{\mathrm{fix}} + \kappa\, Q_0^{-1}Q_1 V_\theta Q_1^\top (Q_0^{-1})^\top,
\end{equation*}
where, $\Omega = \E\!\left[\psi(O;\beta^*,\theta^*)\psi(O;\beta^*,\theta^*)^\top\right]$, $Q_0=\mathbb{E}\!\left[\frac{\partial}{\partial\beta^\top}\psi(O;\beta,\theta^*)\middle|_{\beta=\beta^*}\right]$, $Q_1=\mathbb{E}\!\left[\frac{\partial}{\partial\theta^\top}\psi(O;\beta^*,\theta)\middle|_{\theta=\theta^*}\right]$, $Q_2=\mathbb{E}\!\left[\frac{\partial}{\partial\theta^\top}\phi(\tilde O;\theta)\middle|_{\theta=\theta^*}\right]$, $Q_3=\E\!\left[\phi(\tilde O;\theta^*)\phi(\tilde O;\theta^*)^\top\right]$ and $V_\theta=Q_2^{-1}Q_3(Q_2^{-1})^{\top}$.
\end{lemma}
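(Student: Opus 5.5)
We follow the standard M-estimation argument in two layers: first the second-stage equation in $\beta$, then the first-stage equation in $\theta$.

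\textbf{Step 1: $\sqrt{\tilde n}$-rate for the first stage.} Since $\Phi_{\tilde n}(\hat\theta_{\tilde n})=0$, a mean-value expansion around $\theta^*$ gives
\[
0=\Phi_{\tilde n}(\theta^*)+\bar Q_{2,\tilde n}(\hat\theta_{\tilde n}-\theta^*).
\]
Here $\bar Q_{2,\tilde n}$ is the Jacobian of $\Phi_{\tilde n}$ evaluated row-wise at intermediate points. By (A5) these points lie in $N_\theta$ with probability tending to one. The dominating function in (A3), together with continuity (A2), yields a uniform law of large numbers on a shrinking neighbourhood, so $\bar Q_{2,\tilde n}\plim Q_2$. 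Nonsingularity (A4) then gives
\[
\sqrt{\tilde n}(\hat\theta_{\tilde n}-\theta^*)=-Q_2^{-1}\sqrt{\tilde n}\,\Phi_{\tilde n}(\theta^*)+o_p(1).
\]
By the CLT (using $\mathbb{E}\|\phi\|^2<\infty$) the right-hand side is $O_p(1)$. Multiplying by $\sqrt{n/\tilde n}\to\sqrt\kappa$ shows $\sqrt n(\hat\theta_{\tilde n}-\theta^*)=O_p(1)$ as well.

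\textbf{Step 2: second-stage expansion in both arguments.} From $\Psi_n(\hat\beta_n(\hat\theta_{\tilde n}),\hat\theta_{\tilde n})=0$, we expand jointly in $(\beta,\theta)$ around $(\beta^*,\theta^*)$. Consistency of $\hat\beta_n(\hat\theta_{\tilde n})$ for $\beta^*$ follows from the uniform part of (A5) and continuity of $\beta^*(\theta)$. Continuity of $\beta^*(\theta)$ in turn follows from the implicit function theorem applied to $\mathbb{E}\psi$, whose differentiability under the integral is justified by (A3). The same ULLN argument gives the joint expansion
\[
0=\Psi_n(\beta^*,\theta^*)+(Q_0+o_p(1))(\hat\beta-\beta^*)+(Q_1+o_p(1))(\hat\theta_{\tilde n}-\theta^*).
\]
Solving with $Q_0$ invertible and using $\sqrt n(\hat\theta_{\tilde n}-\theta^*)=O_p(1)$ gives
\[
\sqrt n(\hat\beta-\beta^*)=-Q_0^{-1}\sqrt n\,\Psi_n(\beta^*,\theta^*)-Q_0^{-1}Q_1\sqrt n(\hat\theta_{\tilde n}-\theta^*)+o_p(1).
\]
Substituting Step 1 into the last term, and replacing $\sqrt{n/\tilde n}$ by $\sqrt\kappa$ at cost $o_p(1)$ (it multiplies an $O_p(1)$ term), yields the second display of the lemma. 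Setting $\theta=\theta^*$ exactly removes the second term and yields the first display.

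\textbf{Step 3: limiting distributions.} The term $\sqrt n\,\Psi_n(\beta^*,\theta^*)$ is a normalized sum of i.i.d.\ mean-zero vectors with covariance $\Omega$, so it converges to $N(0,\Omega)$. Likewise $\sqrt{\tilde n}\,\Phi_{\tilde n}(\theta^*)$ converges to $N(0,Q_3)$. Because $\mathcal D\perp\mathcal{\tilde D}$, the two converge jointly to independent normals; this follows from characteristic functions factorizing. The continuous mapping theorem and Slutsky's lemma then give normal limits with variances $Q_0^{-1}\Omega(Q_0^{-1})^\top$ and $V_{\mathrm{fix}}+\kappa Q_0^{-1}Q_1Q_2^{-1}Q_3(Q_2^{-1})^\top Q_1^\top(Q_0^{-1})^\top$, which is the stated $V_{\mathrm{est}}$.

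\textbf{Main obstacle.} The delicate step is the consistency of $\hat\beta_n(\hat\theta_{\tilde n})$ and the control of the Jacobians evaluated at random intermediate points. The plan is to combine the uniform convergence in (A5) with the dominated Jacobians in (A3), which give a ULLN for $\partial\psi$ over $N_\beta\times N_\theta$. Evaluating that ULLN at random points converging in probability then requires only continuity of the limiting mean Jacobians, which holds by dominated convergence. A secondary point is that the mean-value theorem applies row-wise to vector functions; this is handled by letting each row of $\bar Q$ carry its own intermediate point.
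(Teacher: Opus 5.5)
Your argument is correct. For the fixed-$\theta$ display it coincides with the paper's: a mean-value expansion in $\beta$ alone. For the estimated-$\theta$ display you take a different route.

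\textbf{The paper's route.} It writes $\hat\beta_n(\hat\theta_{\tilde n})-\beta^*=\{\hat\beta_n(\theta^*)-\beta^*\}+\{\hat\beta_n(\hat\theta_{\tilde n})-\hat\beta_n(\theta^*)\}$. It differentiates the random map $\theta\mapsto\hat\beta_n(\theta)$ by applying the implicit function theorem to the sample equation $\Psi_n(\hat\beta_n(\theta),\theta)=0$. It then uses a mean-value expansion in $\theta$. The first-stage linearization $\sqrt{\tilde n}(\hat\theta_{\tilde n}-\theta^*)=-Q_2^{-1}\sqrt{\tilde n}\,\Phi_{\tilde n}(\theta^*)+o_p(1)$ is taken as known.

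\textbf{Your route.} You expand $\Psi_n(\hat\beta_n(\hat\theta_{\tilde n}),\hat\theta_{\tilde n})=0$ jointly in $(\beta,\theta)$ about $(\beta^*,\theta^*)$, and you derive the first-stage linearization yourself.

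\textbf{What your version buys.} It uses the estimating equation only at its root. So it does not need $\hat\beta_n(\cdot)$ to be a differentiable branch of roots along the whole segment from $\theta^*$ to $\hat\theta_{\tilde n}$, which the paper's sample-level implicit function theorem tacitly requires. It also makes explicit what the paper compresses into ``by (A2) and (A3)'':
\begin{itemize}
\item consistency of the second-stage estimator at a random $\theta$, which you correctly obtain from the uniform part of (A5) plus continuity of $\beta^*(\cdot)$ via the population implicit function theorem;
\item the row-wise intermediate points.
\end{itemize}

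\textbf{What the paper's version buys.} Its decomposition keeps $\hat\beta_n(\theta^*)$ as an explicit building block. The first-stage effect then appears literally as the increment $\hat\beta_n(\hat\theta_{\tilde n})-\hat\beta_n(\theta^*)=O_p(\|\hat\theta_{\tilde n}-\theta^*\|)$. This mirrors the $V_{\mathrm{fix}}$ versus $V_{\mathrm{est}}$ comparison and is the form the paper reuses later. Your joint mean-value step, applied to $\Psi_n(\hat\beta_n(\hat\theta_{\tilde n}),\hat\theta_{\tilde n})-\Psi_n(\hat\beta_n(\theta^*),\theta^*)=0$, would give that increment bound just as easily.
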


\begin{proof}
By the mean value theorem to $\beta$, there exists $\bar\beta_n$ between $\hat\beta_n(\theta)$ and $\beta^*$ such that
\begin{equation*}
\Psi_n(\hat\beta_n(\theta^*), \theta^*) = \Psi_n(\beta^*,\theta^*) + \left\{\frac{\partial}{\partial\beta^\top}\Psi_n(\beta,\theta^*)\ \middle|_{\beta=\bar\beta_n}\right\}\bigl(\hat\beta_n(\theta^*)-\beta^*\bigr).
\end{equation*}
Since $\Psi_n(\hat\beta_n(\theta^*),\theta^*)=0$, the following holds:
\begin{equation*}
\hat\beta_n(\theta^*)-\beta^* = -\left\{\frac{\partial}{\partial\beta^\top}\Psi_n(\beta,\theta^*)\ \middle|_{\beta=\bar\beta_n}\right\}^{-1}\Psi_n(\beta^*,\theta^*).
\end{equation*}
Thus, by (A2) and (A3),
\begin{equation*}
    \left\{\frac{\partial}{\partial\beta^\top}\Psi_n(\beta,\theta^*)\ \middle|_{\beta=\bar\beta_n}\right\}\plim Q_0,
\end{equation*}
and then
\begin{equation*}
\label{eq: conv Qinv}
    \left\{\frac{\partial}{\partial\beta^\top}\Psi_n(\beta,\theta^*)\ \middle|_{\beta=\bar\beta_n}\right\}^{-1}
\plim Q_0^{-1}.
\end{equation*}
Therefore, the following holds:
\begin{equation}
\label{eq: expand beta fix}
\begin{split}
    \sqrt n\bigl\{\hat\beta_n(\theta^*)-\beta^*\bigr\}&=-\left\{\frac{\partial}{\partial\beta^\top}\Psi_n(\beta,\theta^*)\ \middle|_{\beta=\bar\beta_n}\right\}^{-1}
\sqrt n\,\Psi_n(\beta^*,\theta^*)\\
&=-Q_0^{-1}\sqrt n\,\Psi_n(\beta^*,\theta^*)+o_p(1).
\end{split}
\end{equation}
Additionally, by (A2) and (A3), the following holds:
\begin{equation}
\label{eq: psi CLT}
    \sqrt n\,\Psi_n(\beta^*,\theta^*)\xrightarrow[]{d} N(0,\Omega).
\end{equation}
By equations \eqref{eq: expand beta fix} and \eqref{eq: psi CLT}, the following holds:
\begin{equation}
\label{eq: asydist beta fix}
\sqrt n\{\hat\beta_n(\theta^*)-\beta^*\}\xrightarrow[]{d} N\!\left(0,\ Q_0^{-1}\Omega (Q_0^{-1})^\top\right).    
\end{equation}

Next, we decompose
\begin{equation}
\label{eq: decom}
    \hat\beta_n(\hat\theta_{\tilde n})-\beta^*
=\bigl\{\hat\beta_n(\theta^*)-\beta^*\bigr\} + \bigl\{\hat\beta_n(\hat\theta_{\tilde n})-\hat\beta_n(\theta^*)\bigr\}.
\end{equation}
The asymptotic distribution of the first term has already been derived as \eqref{eq: asydist beta fix}. 
We now derive the asymptotic distribution of the second term.
By the implicit function theorem, the following holds:
\begin{equation}
\label{eq: implicit thm}
\frac{\partial}{\partial\theta^\top}\hat\beta_n(\theta)=-\left\{\frac{\partial}{\partial\beta^\top}\Psi_n(\beta,\theta)\ \middle |_{\beta=\hat\beta_n(\theta)}\right\}^{-1}\left\{\frac{\partial}{\partial\theta^\top}\Psi_n(\hat\beta_n(\theta),\theta)\right\}.
\end{equation}
By the mean value theorem to $\hat\beta_n(\theta)$, there exists $\bar\theta_{\tilde n}$ between $\hat\theta_{\tilde n}$ and $\theta^*$ such that
\begin{equation}
\label{eq: mean value thm}
\hat\beta_n(\hat\theta_{\tilde n})-\hat\beta_n(\theta^*)
=
\left\{\frac{\partial}{\partial\theta^\top}\hat\beta_n(\theta)\ \middle|_{\theta=\bar\theta_{\tilde n}}\right\}
(\hat\theta_{\tilde n}-\theta^*).
\end{equation}
Substituting \eqref{eq: implicit thm} into \eqref{eq: mean value thm}:
\begin{equation*}
    \hat\beta_n(\hat\theta_{\tilde n})-\hat\beta_n(\theta^*)=-\left\{\frac{\partial}{\partial\beta^\top}\Psi_n(\beta,\bar\theta_{\tilde n})\ \middle |_{\beta=\hat\beta(\bar\theta_{\tilde n})}\right\}^{-1}\left\{\frac{\partial}{\partial\theta^\top}\Psi_n(\hat\beta_n(\theta),\theta)\ \middle |_{\theta=\bar\theta_{\tilde n}}\right\}
(\hat\theta_{\tilde n}-\theta^*).
\end{equation*}
By (A2) and (A3), 
\begin{equation*}
    \left\{\frac{\partial}{\partial\beta^\top}\Psi_n(\beta,\bar\theta_{\tilde n})\ \middle |_{\beta=\hat\beta(\bar\theta_{\tilde n})}\right\}^{-1}\plim Q_0^{-1},
\end{equation*}
and 
\begin{equation*}
    \left\{\frac{\partial}{\partial\theta^\top}\Psi_n(\hat\beta_n(\theta),\theta)\ \middle |_{\theta=\bar\theta_{\tilde n}}\right\}\plim Q_1.
\end{equation*}
Thus, the following holds:
\begin{equation*}
\hat\beta_n(\hat\theta_{\tilde n})-\hat\beta_n(\theta^*)
=
- Q_0^{-1}Q_1(\hat\theta_{\tilde n}-\theta^*) + o_p(\|\hat\theta_{\tilde n}-\theta^*\|).    
\end{equation*}
Multiplying by $\sqrt n$, 
\begin{equation*}
    \sqrt n\{\hat\beta_n(\hat\theta_{\tilde n})-\hat\beta_n(\theta^*)\}=- Q_0^{-1}Q_1\,\sqrt n(\hat\theta_{\tilde n}-\theta^*) + o_p\!\left(\sqrt n\,\|\hat\theta_{\tilde n}-\theta^*\|\right).
\end{equation*}
Since $\hat\theta_{\tilde n}-\theta^*=O_p(1/\sqrt {\tilde n})$, we have $\sqrt n\,\|\hat\theta_{\tilde n}-\theta^*\|=O_p(\sqrt{n/\tilde n})$.
Since $\lim_{n,\tilde n\to\infty} {n}/{\tilde n}=:\kappa\in[0,\infty)$, we have $o_p\!\left(\sqrt n\,\|\hat\theta_{\tilde n}-\theta^*\|\right)=o_p(1)$.
Thus, the following holds:
\begin{equation}
\label{eq: expand second term}
    \sqrt n\{\hat\beta_n(\hat\theta_{\tilde n})-\hat\beta_n(\theta^*)\}=- Q_0^{-1}Q_1\,\sqrt n(\hat\theta_{\tilde n}-\theta^*) + o_p(1).
\end{equation}
By \eqref{eq: expand beta fix}, \eqref{eq: decom} and \eqref{eq: expand second term}, the following holds:
\begin{equation*}
       \sqrt n(\hat\beta_n(\hat\theta_{\tilde n})-\beta^*)=-Q_0^{-1}\sqrt n\,\Psi_n(\beta^*,\theta^*)-Q_0^{-1}Q_1\,\sqrt n(\hat\theta_{\tilde n}-\theta^*)+o_p(1).
\end{equation*}
Since $\sqrt{\tilde n}(\hat\theta_{\tilde n}-\theta^*)=-Q_2^{-1}\sqrt{\tilde n}\Phi_{\tilde n}(\theta^*)+o_p(1)$ and $\lim_{n,\tilde n\to\infty} {n}/{\tilde n}=:\kappa\in[0,\infty)$, we have
\begin{equation}
\label{eq: expand beta est}
       \sqrt n(\hat\beta_n(\hat\theta_{\tilde n})-\beta^*)=-Q_0^{-1}\sqrt n\,\Psi_n(\beta^*,\theta^*)+\sqrt{\kappa}Q_0^{-1}Q_1Q_2^{-1}\sqrt{\tilde n}\Phi_{\tilde n}(\theta^*)+o_p(1).
\end{equation}
On the right-hand side of \eqref{eq: expand beta est}, the first term $-Q_0^{-1}\sqrt n\,\Psi_n(\beta^*,\theta^*)$ depends only on dataset $\mathcal{D}$ and the second term $\sqrt{\kappa}Q_0^{-1}Q_1Q_2^{-1}\sqrt{\tilde n}\Phi_{\tilde n}(\theta^*)$ depends only on dataset $\mathcal{\tilde D}$.
Since $\mathcal{D}\perp\mathcal{\tilde D}$, the following holds:
\begin{equation*}
    \sqrt n\bigl(\hat\beta_n(\hat\theta_{\tilde n})-\beta^*\bigr)\xrightarrow[]{d}N\!\left(0,\ Q_0^{-1}\Omega (Q_0^{-1})^\top + \kappa\, Q_0^{-1}Q_1 V_\theta Q_1^\top (Q_0^{-1})^\top\right).
\end{equation*}
\end{proof}

\begin{corollary}
\label{cor: asyvar scl}
Under the same assumptions as in Lemma \ref{thm: asyvar mat}, for any constant vector $e\in \mathbb{R}^d$, the following statements hold:
\begin{equation*}
    \sqrt{n}\,(e^\top\hat\beta_n(\theta^*)-e^\top\beta^*)\ \xrightarrow[]{d}\ N\!\left(0,\ e^\top V_{\mathrm{fix}} e\right),
\end{equation*}
\begin{equation*}
    \sqrt{n}\,(e^\top\hat\beta_n(\hat \theta_{\tilde n})-e^\top\beta^*)\ \xrightarrow[]{d}\ N\!\left(0,\ e^\top V_{\mathrm{est}} e\right),
\end{equation*}
and 
\begin{equation*}
    e^\top V_{\mathrm{est}} e-e^\top V_{\mathrm{fix}} e=\kappa\, b^\top V_\theta b \geq 0,
\end{equation*}
where $b^\top=e^\top Q_0^{-1}Q_1$.
Thus, $e^\top V_{\mathrm{est}} e=e^\top V_{\mathrm{fix}} e$ holds if $V_\theta=0$, $\kappa=0$ or $b=0$.
\end{corollary}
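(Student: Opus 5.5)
The corollary follows from Lemma \ref{thm: asyvar mat} by projecting onto a fixed direction $e$. The plan is to push the joint asymptotic normality of the vectors through the linear map $x\mapsto e^\top x$, and then read off the variance difference from the explicit form of $V_{\mathrm{est}}$.

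First, Lemma \ref{thm: asyvar mat} gives $\sqrt n\{\hat\beta_n(\theta^*)-\beta^*\}\xrightarrow{d}N(0,V_{\mathrm{fix}})$ and $\sqrt n\{\hat\beta_n(\hat\theta_{\tilde n})-\beta^*\}\xrightarrow{d}N(0,V_{\mathrm{est}})$. The map $x\mapsto e^\top x$ is continuous and linear, so the continuous mapping theorem applies. Combined with the fact that a linear image of $N(0,V)$ is $N(0,e^\top V e)$, this yields the first two displays directly. The degenerate case $e^\top Ve=0$ should be read as convergence to a point mass at zero, which is consistent with the normal-law convention.

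Second, for the variance difference I substitute the formula for $V_{\mathrm{est}}$ from the lemma:
\[
e^\top V_{\mathrm{est}}e-e^\top V_{\mathrm{fix}}e=\kappa\, e^\top Q_0^{-1}Q_1V_\theta Q_1^\top(Q_0^{-1})^\top e=\kappa\, b^\top V_\theta b,
\]
where $b^\top=e^\top Q_0^{-1}Q_1$, so that $b=Q_1^\top(Q_0^{-1})^\top e$. Nonnegativity comes from two facts. We have $\kappa\ge0$ by assumption, and $V_\theta=Q_2^{-1}Q_3(Q_2^{-1})^\top$ is positive semidefinite. The latter holds because $Q_3=\mathbb E[\phi\phi^\top]$ satisfies $x^\top Q_3x=\mathbb E[(x^\top\phi)^2]\ge0$, and the congruence by $Q_2^{-1}$ preserves this: $b^\top V_\theta b=\mathbb E[(b^\top Q_2^{-1}\phi)^2]\ge0$.

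Third, the final claim is immediate: if $V_\theta=0$, $\kappa=0$, or $b=0$, then the product $\kappa b^\top V_\theta b$ vanishes.

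There is no real obstacle here. The only points needing care are that $Q_2^{-1}$ exists, which is (A4), and that $b$ is correctly identified as a transpose, so that the quadratic form is symmetric.
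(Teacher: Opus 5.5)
Your proposal is correct and follows essentially the same route as the paper. The paper gives no separate proof: the corollary is an immediate consequence of Lemma \ref{thm: asyvar mat}, obtained by projecting the two limiting normals onto $e$ and substituting the formula for $V_{\mathrm{est}}$, with nonnegativity coming from $\kappa\ge 0$ and $V_\theta\succeq 0$, as in the paper's remark that $\kappa Q_0^{-1}Q_1V_\theta Q_1^\top(Q_0^{-1})^\top\succeq 0$.
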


\subsection{Derivation of equation \eqref{eq: cond Q0Q1}}
\label{appendix: thm 2}
\begin{lemma}
\label{thm: param and a}
Under regularity conditions (A1)--(A4), the following holds:
\begin{equation*}
\begin{split}
    \left.\frac{\partial}{\partial\theta^\top}\beta^*(\theta)\middle |_{\theta=\theta^*}\right.=-Q_0^{-1}Q_1.
\end{split}
\end{equation*}
\end{lemma}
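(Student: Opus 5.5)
The plan is to apply the implicit function theorem to the population estimating equation
\[
G(\beta,\theta):=\mathbb{E}[\psi(O;\beta,\theta)],
\]
viewed as a map from $\mathcal{B}\times N_\theta$ to $\mathbb{R}^d$. By (A1), for every $\theta\in N_\theta$ the value $\beta^*(\theta)$ is the unique root of $G(\cdot,\theta)=0$, and $\beta^*(\theta^*)=\beta^*$. The work splits into two parts: first show that $G$ is continuously differentiable near $(\beta^*,\theta^*)$ with the derivatives computed under the expectation, then differentiate the identity $G(\beta^*(\theta),\theta)\equiv 0$.

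\textbf{Step 1 (differentiability of $G$).} By (A2), $\psi(O;\beta,\theta)$ is continuously differentiable in $(\beta,\theta)$ on $N_\beta\times N_\theta$. By (A3), its partial derivatives are dominated by the integrable envelopes $M_\beta(O)$ and $M_\theta(O)$ uniformly on that neighborhood. A mean value argument along coordinate directions, combined with dominated convergence, lets us interchange differentiation and expectation. This gives
\[
\frac{\partial}{\partial\beta^\top}G(\beta,\theta)=\mathbb{E}\!\left[\frac{\partial}{\partial\beta^\top}\psi(O;\beta,\theta)\right],\qquad \frac{\partial}{\partial\theta^\top}G(\beta,\theta)=\mathbb{E}\!\left[\frac{\partial}{\partial\theta^\top}\psi(O;\beta,\theta)\right].
\]
Dominated convergence applied once more shows that both maps are continuous in $(\beta,\theta)$, so $G$ is $C^1$ near $(\beta^*,\theta^*)$. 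Evaluated at that point, the two derivatives are $Q_0$ and $Q_1$ respectively. This interchange step is the main technical point; the domination in (A3) is stated over the product neighborhood exactly so that it goes through.

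\textbf{Step 2 (implicit function theorem and identification).} Since $G(\beta^*,\theta^*)=0$ and $Q_0$ is nonsingular by (A4), the implicit function theorem yields a neighborhood $U\subset N_\theta$ of $\theta^*$ and a unique $C^1$ map $g:U\to N_\beta$ with $g(\theta^*)=\beta^*$ and $G(g(\theta),\theta)=0$.

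We then need $g=\beta^*(\cdot)$ on $U$. This follows from the uniqueness clause in (A1): for each $\theta\in U$, $g(\theta)$ is a root of $G(\cdot,\theta)=0$ lying in $\mathcal{B}$, and $\beta^*(\theta)$ is the unique such root. Hence $\beta^*(\theta)$ is differentiable at $\theta^*$.

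\textbf{Step 3 (differentiate the identity).} Applying the chain rule to $G(\beta^*(\theta),\theta)=0$ at $\theta=\theta^*$ gives
\[
Q_0\left.\frac{\partial}{\partial\theta^\top}\beta^*(\theta)\right|_{\theta=\theta^*}+Q_1=0.
\]
Multiplying on the left by $Q_0^{-1}$ yields the stated formula
\[
\left.\frac{\partial}{\partial\theta^\top}\beta^*(\theta)\right|_{\theta=\theta^*}=-Q_0^{-1}Q_1.
\]
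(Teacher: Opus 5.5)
Your proposal is correct and follows the same route as the paper: differentiate the population identity $\mathbb{E}[\psi(O;\beta^*(\theta),\theta)]=0$, interchange differentiation and expectation using (A2)--(A3), apply the chain rule, and invert $Q_0$. Your version is more careful than the paper's appendix proof in two respects. The paper applies the chain rule to $\beta^*(\theta)$ without first showing it is differentiable, whereas you establish this through the implicit function theorem together with the uniqueness clause in (A1). You also interchange derivative and expectation only for the partial derivatives of $\psi$, which is exactly what the envelopes in (A3) dominate.
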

\begin{proof}
By the definition of $\beta^*(\theta)$,
\begin{equation*}
    \mathbb{E}[\psi(O;\beta^*(\theta),\theta)]=0.
\end{equation*}
Thus, since the exchangeability of differentiation and integration by (A2) and (A3),
\begin{equation*}
    \mathbb{E}\left[\frac{\partial}{\partial\theta^\top}\psi(O;\beta^*(\theta),\theta)\right]=0.
\end{equation*}
By the chain rule, 
\begin{equation*}
\begin{split}
    \frac{\partial}{\partial\theta^\top}\psi(O;\beta^*(\theta),\theta)=\left\{\frac{\partial}{\partial\beta^\top}\psi(O;\beta,\theta)\middle |_{\beta= \beta^*(\theta)}\right\}\frac{\partial}{\partial\theta^\top}\beta^*(\theta) + \left\{\frac{\partial}{\partial\theta^\top}\psi(O;\beta,\theta)\middle |_{\beta= \beta^*(\theta)}\right\}.
\end{split}
\end{equation*}
Thus, by taking expectation, 
\begin{equation*}
\begin{split}
    0=\mathbb{E}\left[\frac{\partial}{\partial\beta^\top}\psi(O;\beta,\theta)\middle |_{\beta= \beta^*(\theta)}\right]\frac{\partial}{\partial\theta^\top}\beta^*(\theta) + \mathbb{E}\left[\frac{\partial}{\partial\theta^\top}\psi(O;\beta,\theta)\middle |_{\beta= \beta^*(\theta)}\right],
\end{split}
\end{equation*}
that is, 
\begin{equation*}
\begin{split}
    \frac{\partial}{\partial\theta^\top}\beta^*(\theta)=-\mathbb{E}\left[\frac{\partial}{\partial\beta^\top}\psi(O;\beta,\theta)\middle |_{\beta= \beta^*(\theta)}\right]^{-1} \mathbb{E}\left[\frac{\partial}{\partial\theta^\top}\psi(O;\beta,\theta)\middle |_{\beta= \beta^*(\theta)}\right].
\end{split}
\end{equation*}
Then, by evaluating at $\theta=\theta^*$,
\begin{equation*}
\begin{split}
    \left.\frac{\partial}{\partial\theta^\top}\beta^*(\theta)\middle |_{\theta=\theta^*}\right.=-Q_0^{-1}Q_1.
\end{split}
\end{equation*}
\end{proof}

\subsection{Proof of Theorem \ref{thm: asyvar PROCOVA}}
\label{appendix: thm 3}
\begin{proof}
First, we prove $e_{A}^\top Q_0^{-1}Q_1=0$ for $e_{A}=(0,1,0)^\top$.
The second-stage estimating function \eqref{eq: est func PROCOVA linear} can be expressed as follows:
\begin{equation*}
\begin{split}
    \psi(O;\beta,\theta)&=(Y-\beta^\top X_{\theta})X_\theta=\begin{pmatrix}
        Y-\beta^\top X_{\theta}\\
       (Y-\beta^\top X_{\theta})A\\
        (Y-\beta^\top X_{\theta})\theta^\top W
    \end{pmatrix}=\begin{pmatrix}
        Y-\beta_0-\beta_AA-\beta_1\theta^\top W\\
        (Y-\beta_0-\beta_AA-\beta_1\theta^\top W)A\\
        (Y-\beta_0-\beta_AA-\beta_1\theta^\top W)\theta^\top W
    \end{pmatrix}.
\end{split}
\end{equation*}
Thus, $Q_0$ can be expressed as follows:
\begin{equation*}
\begin{split}
        Q_0&= \mathbb{E}\!\left[\frac{\partial}{\partial\beta^\top}\psi(O;\beta,\theta^*)\Big|_{\beta=\beta^*}\right]=\mathbb{E}\!\left[\frac{\partial}{\partial\beta^\top}(Y-\beta^\top X_{\theta^*})X_{\theta^*}\Big|_{\beta=\beta^*}\right]=-\mathbb{E}[X_{\theta^*}X_{\theta^*}^\top]\\
    &=-\mathbb{E}\left[\begin{pmatrix}
   1 & A & \theta^{*\top} W \\
   A & A & \theta^{*\top} WA \\
   \theta^{*\top} W & \theta^{*\top}WA & (\theta^{*\top} W)^2
\end{pmatrix}\right]=-\begin{pmatrix}
   1 & \pi & \mathbb{E}[\theta^{*\top} W] \\
   \pi & \pi & \mathbb{E}[\theta^{*\top} W]\pi \\
   \mathbb{E}[\theta^{*\top} W] & \mathbb{E}[\theta^{*\top} W]\pi & \mathbb{E}[(\theta^{*\top} W)^2]
\end{pmatrix}.
\end{split}
\end{equation*}
The last equation holds because of $A\perp W$.
By denoting $(i,j)$ component of $Q_0^{-1}$ as $\check q_{i,j}$, following holds:
\begin{equation*}
\begin{split}
    \check q_{1,1}&=(-1)^{1+1}\begin{vmatrix}
   \pi  & \mathbb{E}[\theta^{*\top} W]\pi \\
   \mathbb{E}[\theta^{*\top} W]\pi  & \mathbb{E}[(\theta^{*\top} W)^2]
\end{vmatrix}/|Q_0|=\frac{\pi\mathbb{V}[\theta^{*\top} W]+\pi(1-\pi)\mathbb{E}[\theta^{*\top} W]^2}{|Q_0|},\\
\check  q_{1,2}&=\check q_{2,1}=(-1)^{1+2}\begin{vmatrix}
   \pi  & \mathbb{E}[\theta^{*\top} W]\pi \\
   \mathbb{E}[\theta^{*\top} W] & \mathbb{E}[(\theta^{*\top} W)^2]
\end{vmatrix}/|Q_0|=-\frac{\pi\mathbb{V}[\theta^{*\top} W]}{|Q_0|},\\
 \check q_{1,3}&=\check q_{3,1}=(-1)^{1+3}\begin{vmatrix}
   \pi  & \pi \\ 
   \mathbb{E}[\theta^{*\top} W]  & \mathbb{E}[\theta^{*\top} W]\pi
\end{vmatrix}/|Q_0|=-\frac{\pi(1-\pi)\mathbb{E}[\theta^{*\top} W]}{|Q_0|},\\
 \check q_{2,2}&=(-1)^{2+2}\begin{vmatrix}
   1  & \mathbb{E}[\theta^{*\top} W] \\
   \mathbb{E}[\theta^{*\top} W] & \mathbb{E}[(\theta^{*\top} W)^2]
\end{vmatrix}/|Q_0|=\frac{\mathbb{V}[\theta^{*\top} W]}{|Q_0|},\\
 \check q_{2,3}&=\check q_{3,2}=(-1)^{2+3}\begin{vmatrix}
   1  &  \pi\\
   \mathbb{E}[\theta^{*\top} W] & \mathbb{E}[\theta^{*\top} W]\pi
\end{vmatrix}/|Q_0|=0,\\
\check q_{3,3}&=(-1)^{3+3}\begin{vmatrix}
   1  & \pi\\
   \pi & \pi
\end{vmatrix}/|Q_0|=\frac{\pi(1-\pi)}{|Q_0|}.
\end{split}
\end{equation*}
Thus, the following holds:
\begin{equation*}
    \check q_{1,2}=\check q_{2,1}=-\pi\check q_{2,2},\quad \check q_{1,3}=\check q_{3,1}=-\check q_{3,3}\mathbb{E}[\theta^{*\top} W],\quad \text{and} \quad\check q_{1,1}= \pi \check q_{2,2} + \check q_{3,3}\mathbb{E}[\theta^{*\top} W]^2.
\end{equation*}
Then, $Q_0^{-1}$ can be expressed as follows:
\begin{equation*}
    Q_0^{-1}=\begin{pmatrix}
  \pi \check q_{2,2} + \check q_{3,3}\mathbb{E}[\theta^{*\top} W]^2 & -\pi \check q_{2,2} & -\check q_{3,3}\mathbb{E}[\theta^{*\top} W] \\
    -\pi \check q_{2,2} & \check q_{2,2} & 0 \\
  -\check q_{3,3}\mathbb{E}[\theta^{*\top} W] & 0 & \check q_{3,3}
\end{pmatrix}.
\end{equation*}
Additionally, $Q_1$ can be expressed as follows:
\begin{equation*}
    Q_1=\mathbb{E}\!\left[\frac{\partial}{\partial\theta^\top}\psi(O;\beta^*,\theta)\Big|_{\theta=\theta^*}\right]=\mathbb{E}\!\left[\frac{\partial}{\partial\theta^\top}\begin{pmatrix}
        Y-\beta_0^*-\beta_A^*A-\beta_1^*\theta^\top W\\
        (Y-\beta_0^*-\beta_A^*A-\beta_1^*\theta^\top W)A\\
        (Y-\beta_0^*-\beta_A^*A-\beta_1^*\theta^\top W)\theta^\top W
    \end{pmatrix}\middle |_{\theta=\theta^*}\right]=\begin{pmatrix}
        q_1\\
        q_2\\
        q_3
    \end{pmatrix},
\end{equation*}
where 
\begin{equation*}
\begin{split}
    \beta_0^*&=e_0^\top \beta^*,\quad \text{with}\quad e_0=(1,0,0)^\top,\\
    \beta_A^*&=e_A^\top \beta^*,\quad \text{with}\quad e_A=(0,1,0)^\top,\\
    \beta_1^*&=e_1^\top \beta^*,\quad \text{with}\quad e_1=(0,0,1)^\top,
\end{split}
\end{equation*}
and 
\begin{equation*}
\begin{split}
    q_1&=-\beta_{1}^* \mathbb{E}[W^\top],\\
    q_2&=-\beta_{1}^* \mathbb{E}[AW^\top],\\
    q_3&=\mathbb{E}[YW^\top]-\beta_0^*\mathbb{E}[W^\top]-\beta_A^*\mathbb{E}[AW^\top]-2\beta_1^*\mathbb{E}[\theta^{*\top}WW^\top].
\end{split}
\end{equation*}
Since $A\perp W$, the following hold:
\begin{equation*}
    q_2=-\beta_{1}^*\pi\mathbb{E}[W^\top]=\pi q_1\quad\text{and}\quad q_3 =\mathbb{E}[YW^\top]-\beta_0^*\mathbb{E}[W^\top]-\beta_A^*\pi\mathbb{E}[W^\top]-2\beta_1^*\mathbb{E}[\theta^{*\top}WW^\top].
\end{equation*}
Since 
\begin{equation*}
    |Q_0|=-\pi(1-\pi)\mathbb{V}[\theta^{*\top}W]
\end{equation*}
and
\begin{equation*}
    \beta^*=\mathbb{E}[X_{\theta^*}X_{\theta^*}^\top]^{-1}\mathbb{E}[X_{\theta^*}Y]=-\begin{pmatrix}
  \pi \check q_{2,2} + \check q_{3,3}\mathbb{E}[\theta^{*\top} W]^2 & -\pi \check q_{2,2} & -\check q_{3,3}\mathbb{E}[\theta^{*\top} W] \\
    -\pi \check q_{2,2} & \check q_{2,2} & 0 \\
  -\check q_{3,3}\mathbb{E}[\theta^{*\top} W] & 0 & \check q_{3,3}
\end{pmatrix}\begin{pmatrix}
        \mathbb{E}[Y]\\
        \mathbb{E}[AY]\\
        \mathbb{E}[\theta^{*\top} W Y]
    \end{pmatrix},
\end{equation*}
the components of $\beta^*=(\beta^*_0,\beta^*_A,\beta_1^*)^\top$ can be expressed as follows:
\begin{equation*}
\begin{split}
    \beta_A^*&=\check q_{2,2}(\pi\mathbb{E}[Y]-\mathbb{E}[AY])=-\frac{\mathbb{V}[\theta^{*\top}W]}{\pi(1-\pi)\mathbb{V}[\theta^{*\top}W]}(-1)\pi(1-\pi)\{\mathbb{E}[Y\mid A=1]-\mathbb{E}[Y\mid A=0]\}\\&=\mathbb{E}[Y\mid A=1]-\mathbb{E}[Y\mid A=0],\\
    \beta_1^*&=\check q_{3,3}(\mathbb{E}[\theta^{*\top}W]\mathbb{E}[Y]-\mathbb{E}[\theta^{*\top}WY])=-\frac{\pi(1-\pi)}{\pi(1-\pi)\mathbb{V}[\theta^{*\top}W]}(-1)\mathbb{COV}[\theta^{*\top}W, Y]\\
    &=\frac{\mathbb{COV}[\theta^{*\top}W, Y]}{\mathbb{V}[\theta^{*\top}W]},\\
    \beta_0^*&=-\pi\check q_{2,2}(\mathbb{E}[Y]-\mathbb{E}[AY])-\check q_{3,3}\mathbb{E}[\theta^{*\top}W](\mathbb{E}[\theta^{*\top}W]\mathbb{E}[Y]-\mathbb{E}[\theta^{*\top}WY])\\
    &=\pi\frac{\mathbb{V}[\theta^{*\top}W]}{\pi(1-\pi)\mathbb{V}[\theta^{*\top}W]}(1-\pi)\mathbb{E}[Y\mid A=0]+\frac{\pi(1-\pi)}{\pi(1-\pi)\mathbb{V}[\theta^{*\top}W]}\mathbb{E}[\theta^{*\top}W](-1)\mathbb{COV}[\theta^{*\top}W, Y]\\
    &=\mathbb{E}[Y\mid A=0]-\frac{\mathbb{E}[\theta^{*\top}W]}{\mathbb{V}[\theta^{*\top}W]}\mathbb{COV}[\theta^{*\top}W, Y].
\end{split}
\end{equation*}
Then, $q_1$ and $q_3$ can be expressed as follows:
\begin{equation*}
\begin{split}
    q_1&=-\frac{\mathbb{COV}[\theta^{*\top}W, Y]}{\mathbb{V}[\theta^{*\top}W]} \mathbb{E}[W^\top],\\
    q_3 &=\mathbb{E}[YW^\top]-\mathbb{E}[Y]\mathbb{E}[W^\top]-\frac{\mathbb{COV}[\theta^{*\top}W,Y]}{\mathbb{V}[\theta^{*\top}W]}(2\mathbb{E}[\theta^{*\top}WW^\top]-\mathbb{E}[\theta^{*\top}W]\mathbb{E}[W^\top]).
\end{split}   
\end{equation*}
Then, for $e_A=(0,1,0)^\top$, 
\begin{equation*}
     e_{A}^\top Q_0^{-1}Q_1=-\pi\check q_{2,2}q_1+\pi\check q_{2,2}q_1 +0=0.
\end{equation*}
For $e_0=(1,0,0)^\top$ and $e_1=(0,0,1)^\top$, 
\begin{equation*}
\begin{split}
  e_{1}^\top Q_0^{-1}Q_1&= -\check q_{3,3}\mathbb{E}[\theta^{*\top} W]q_1 + \check q_{3,3}q_3\\
    &=-\frac{1}{\mathbb{V}[\theta^{*\top}W]}\left\{\mathbb{COV}[Y,W^\top]-2\frac{\mathbb{COV}[\theta^{*\top}W, Y]}{\mathbb{V}[\theta^{*\top}W]}\mathbb{COV}[\theta^{*\top}W,W^\top]\right\},\\
     e_0^\top Q_0^{-1}Q_1&=(\pi \check q_{2,2} + \check q_{3,3}\mathbb{E}[\theta^{*\top} W]^2)q_1 -\pi^2 \check q_{2,2}q_1 - \check q_{3,3}\mathbb{E}[\theta^{*\top} W]q_3\\
    &=-\mathbb{E}[\theta^{*\top} W]e_{1}^\top Q_0^{-1}Q_1+\pi(1-\pi) \check q_{2,2}q_1\\
    &=\frac{1}{\mathbb{V}[\theta^{*\top}W]}\left\{\mathbb{E}[\theta^{*\top} W]\left(\mathbb{COV}[Y,W^\top]-2\frac{\mathbb{COV}[\theta^{*\top}W, Y]}{\mathbb{V}[\theta^{*\top}W]}\mathbb{COV}[\theta^{*\top}W,W^\top]\right)\right.\\
    &\left.\qquad\qquad\qquad+\mathbb{COV}[\theta^{*\top}W, Y]\mathbb{E}[W^\top]\right\},
\end{split}
\end{equation*}
where 
\begin{equation*}
    \mathbb{COV}[Y,W^\top]=\mathbb{E}[YW^\top]-\mathbb{E}[Y]\mathbb{E}[W^\top]
\end{equation*}
and
\begin{equation*}
    \mathbb{COV}[\theta^{*\top}W,W^\top]=\mathbb{E}[\theta^{*\top}WW^\top]-\mathbb{E}[\theta^{*\top}W]\mathbb{E}[W^\top].
\end{equation*}

Next, we prove $\left.\frac{\partial}{\partial\theta^\top}e_A^\top\beta^*(\theta)\middle |_{\theta=\theta^*}\right.=0$ for $e_A=(0,1,0)^\top$.
As with the derivation of the components of $\beta^*$, the components of $\beta^*(\theta)=(\beta^*_0(\theta),\beta^*_A(\theta),\beta_1^*(\theta))^\top$ can be expressed as follows:
\begin{equation*}
\begin{split}
    \beta_A^*(\theta)&=-\frac{\mathbb{V}[\theta^{\top}W]}{\pi(1-\pi)\mathbb{V}[\theta^{\top}W]}(-1)\pi(1-\pi)\{\mathbb{E}[Y\mid A=1]-\mathbb{E}[Y\mid A=0]\}=\mathbb{E}[Y\mid A=1]-\mathbb{E}[Y\mid A=0],\\
    \beta_1^*(\theta)&=-\frac{\pi(1-\pi)}{\pi(1-\pi)\mathbb{V}[\theta^{\top}W]}(-1)\mathbb{COV}[\theta^{\top}W, Y]=\frac{\mathbb{COV}[\theta^{\top}W, Y]}{\mathbb{V}[\theta^{\top}W]},\\
    \beta_0^*(\theta)&=\pi\frac{\mathbb{V}[\theta^{\top}W]}{\pi(1-\pi)\mathbb{V}[\theta^{\top}W]}(1-\pi)\mathbb{E}[Y\mid A=0]+\frac{\pi(1-\pi)}{\pi(1-\pi)\mathbb{V}[\theta^{\top}W]}\mathbb{E}[\theta^{\top}W](-1)\mathbb{COV}[\theta^{\top}W, Y]\\
    &=\mathbb{E}[Y\mid A=0]-\frac{\mathbb{E}[\theta^{\top}W]}{\mathbb{V}[\theta^{\top}W]}\mathbb{COV}[\theta^{\top}W, Y].
\end{split}
\end{equation*}
That is, 
\begin{equation*}
    \left.\frac{\partial}{\partial\theta^\top}e_A^\top\beta^*(\theta)\middle |_{\theta=\theta^*}\right.=\left.\frac{\partial}{\partial\theta^\top}\beta^*_A(\theta)\middle |_{\theta=\theta^*}\right.=0, \quad \text{with}\quad e_A=(0,1,0)^\top.
\end{equation*}
\end{proof}

\subsection{Connection between model \eqref{eq: PROCOVAII linear emp} and estimating function \eqref{eq: est func PROCOVAII linear}}

\begin{lemma}
\label{thm: pop emp}
Let $\hat\beta_n^{\mathrm{emp}}(\hat \theta_{\tilde n})$ be the OLS estimator of the model \eqref{eq: PROCOVAII linear emp}. Let $\hat\beta_n(\hat \theta_{\tilde n})$ be the OLS estimator based on the estimating function \eqref{eq: est func PROCOVAII linear}.
Assume the same assumptions as in Lemma \ref{thm: asyvar mat}.
Additionally, assume $\mathbb{E}[Y^2]<\infty$ and $\mathbb{E}[||W||^2]<\infty$.
Then, the following holds:
\[
\sqrt n\,
\Bigl\{
\hat\beta_n^{\mathrm{emp}}(\hat\theta_{\tilde n})
-
\hat\beta_n^{\mathrm{emp}}(\theta^*)
\Bigr\}
=
\sqrt n\,
\Bigl\{
\hat\beta_n(\hat\theta_{\tilde n})
-
\hat\beta_n(\theta^*)
\Bigr\}
+
o_p(1).
\]
\end{lemma}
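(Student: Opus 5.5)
The plan is to exploit the fact that the empirical-centered design differs from the population-centered design only by a reparametrization. Write $\bar W_n=\frac1n\sum_{j=1}^n W_j$ and $\mu=\mathbb{E}[W]$. For any $\theta$, the regressor vector in \eqref{eq: PROCOVAII linear emp} is
\[
X^{\mathrm{emp}}_{\theta,i}=\bigl(1,A_i,\theta^\top(W_i-\bar W_n),A_i\theta^\top(W_i-\bar W_n)\bigr)^\top .
\]
Since $\theta^\top(W_i-\bar W_n)=\theta^\top(W_i-\mu)-\theta^\top(\bar W_n-\mu)$, we have $X^{\mathrm{emp}}_{\theta,i}=M_n(\theta)X_{\theta,i}$ for the invertible unit upper-triangular matrix
\[
M_n(\theta)=\begin{pmatrix}1&0&0&0\\0&1&0&0\\-\delta_n(\theta)&0&1&0\\0&-\delta_n(\theta)&0&1\end{pmatrix}^{\!\top}\!,\qquad \delta_n(\theta)=\theta^\top(\bar W_n-\mu),
\]
up to the precise arrangement of entries. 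Here $X_{\theta,i}$ is the population-centered design of \eqref{eq: est func PROCOVAII linear}. OLS is equivariant under such linear reparametrizations of the design, so $\hat\beta_n^{\mathrm{emp}}(\theta)=(M_n(\theta)^\top)^{-1}\hat\beta_n(\theta)$ exactly, for every $\theta$. Concretely, $\hat\beta^{\mathrm{emp}}_2=\hat\beta_2$ and $\hat\beta^{\mathrm{emp}}_3=\hat\beta_3$, while $\hat\beta^{\mathrm{emp}}_0=\hat\beta_0-\delta_n(\theta)\hat\beta_2$ and $\hat\beta^{\mathrm{emp}}_A=\hat\beta_A-\delta_n(\theta)\hat\beta_3$, again modulo sign and index conventions, which I will fix carefully.

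Using this identity, I subtract the expressions at $\hat\theta_{\tilde n}$ and at $\theta^*$. The slope components contribute no extra terms. For the intercept and treatment components, the extra term is, for instance,
\[
-\delta_n(\hat\theta_{\tilde n})\hat\beta_{3,n}(\hat\theta_{\tilde n})+\delta_n(\theta^*)\hat\beta_{3,n}(\theta^*).
\]
I split this as
\[
-(\hat\theta_{\tilde n}-\theta^*)^\top(\bar W_n-\mu)\,\hat\beta_{3,n}(\hat\theta_{\tilde n})-\delta_n(\theta^*)\bigl\{\hat\beta_{3,n}(\hat\theta_{\tilde n})-\hat\beta_{3,n}(\theta^*)\bigr\}.
\]
After multiplying by $\sqrt n$, I bound the two pieces separately. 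By the CLT, using $\mathbb{E}\|W\|^2<\infty$, we have $\bar W_n-\mu=O_p(n^{-1/2})$. The first-stage expansion in Lemma \ref{thm: asyvar mat} gives $\hat\theta_{\tilde n}-\theta^*=O_p(\tilde n^{-1/2})$. Assumption (A5) gives $\hat\beta_{3,n}(\hat\theta_{\tilde n})=O_p(1)$. The first piece is therefore $\sqrt n\,O_p(\tilde n^{-1/2})O_p(n^{-1/2})=O_p(\tilde n^{-1/2})=o_p(1)$. For the second piece, $\sqrt n\,\delta_n(\theta^*)=O_p(1)$, and by the proof of Lemma \ref{thm: asyvar mat}, namely \eqref{eq: expand second term}, $\hat\beta_n(\hat\theta_{\tilde n})-\hat\beta_n(\theta^*)=O_p(\tilde n^{-1/2})=o_p(1)$. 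So this piece is also $o_p(1)$.

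I expect the main obstacle to be justifying the exact OLS equivariance cleanly, since the matrix $M_n(\theta)$ depends on the data. Because the identity is purely algebraic and holds sample by sample for any fixed realization, randomness causes no difficulty. What needs care is ensuring that the design matrices are of full rank with probability tending to one. This follows from $Q_0$ being nonsingular, via (A4), together with the law of large numbers. The remaining bookkeeping is to keep track of which coefficient multiplies $\delta_n$ in each component.
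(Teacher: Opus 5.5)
This is correct and is essentially the paper's proof. The paper likewise writes $X^{\mathrm{emp}}_{\theta,i}=B_n(\theta)X_{\theta,i}$, where $B_n(\theta)$ is unit lower triangular with entries $-\bar D_\theta$, $\bar D_\theta=\theta^\top(\bar W-\mathbb{E}[W])$, in positions $(3,1)$ and $(4,2)$. It gets $\hat\beta_n^{\mathrm{emp}}(\theta)=B_n(\theta)^{-\top}\hat\beta_n(\theta)$ exactly and splits the extra term as $(\bar D_{\hat\theta_{\tilde n}}-\bar D_{\theta^*})\hat\beta_{2,n}(\hat\theta_{\tilde n})+\bar D_{\theta^*}\{\hat\beta_{2,n}(\hat\theta_{\tilde n})-\hat\beta_{2,n}(\theta^*)\}$, with the same rate bounds you use.

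When you fix conventions, the matrix is lower rather than upper triangular, and the relations carry a plus sign. For example, $\hat\beta^{\mathrm{emp}}_{A,n}(\theta)=\hat\beta_{A,n}(\theta)+\bar D_\theta\hat\beta_{2,n}(\theta)$, where $\hat\beta_{2,n}$ is the interaction coefficient. This does not affect the argument.
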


\label{appendix: proof pop emp}

\begin{proof}
By denoting 
\[
 \bar D_\theta:= \frac{1}{n}\sum_{j=1}^n\theta^\top W_j- \mathbb{E}[\theta^\top W],
\]
the following holds:
\[
\theta^\top W_i- \frac{1}{n}\sum_{j=1}^n\theta^\top W_j
=
\theta^\top W_i-\mathbb{E}[\theta^\top W]-\bar D_\theta.
\]
Then, the following holds:
\[
X_{\theta,i}^{\mathrm{emp}}
= B_n(\theta) X_{\theta,i},
\]
where
\[
B_n(\theta)
=
\begin{pmatrix}
1 & 0 & 0 & 0\\
0 & 1 & 0 & 0\\
-\bar D_\theta & 0 & 1 & 0\\
0 & -\bar D_\theta & 0 & 1
\end{pmatrix}.
\]
Let ${\mathbf X}_\theta^{\mathrm{emp}}$ and ${\mathbf X}_\theta$  be the $n\times 4$ matrix whose $i$-th rows are $X_{\theta,i}^{\mathrm{emp}}$ and $X_{\theta,i}$, respectively. 
Then,
\[
{\mathbf X}_\theta^{\mathrm{emp}}
=
{\mathbf X}_\theta B_n(\theta)^\top.
\]
Therefore,
\begin{align*}
\hat\beta_{n}^{\mathrm{emp}}(\theta)
&=
\{({\mathbf X}_\theta^{\mathrm{emp}})^{\top}{\mathbf X}_\theta^{\mathrm{emp}}\}^{-1}
({\mathbf X}_\theta^{\mathrm{emp}})^{\top} Y \\
&=
\{B_n(\theta)({\mathbf X}_\theta)^{\top}{\mathbf X}_\theta B_n(\theta)^\top\}^{-1}
B_n(\theta)({\mathbf X}_\theta)^{\top} Y \\
&=
B_n(\theta)^{-\top}
\{({\mathbf X}_\theta)^{\top}{\mathbf X}_\theta\}^{-1}
({\mathbf X}_\theta)^{\top} Y \\
&=
B_n(\theta)^{-\top}\hat\beta_n(\theta).
\end{align*}
Since
\[
B_n(\theta)^{-1}=\begin{pmatrix}
1 & 0 & 0 & 0\\
0 & 1 & 0 & 0\\
\bar D_\theta & 0 & 1 & 0\\
0 & \bar D_\theta & 0 & 1
\end{pmatrix},
\]
the components of $\hat\beta_{n}^{\mathrm{emp}}(\theta)=( \hat\beta_{0,n}^{\mathrm{emp}}(\theta), \hat\beta_{A,n}^{\mathrm{emp}}(\theta), \hat\beta_{1,n}^{\mathrm{emp}}(\theta), \hat\beta_{2,n}^{\mathrm{emp}}(\theta))^\top$ can be expressed as 
\begin{equation*}
\label{eq: emp pop diff 0}
    \hat\beta_{0,n}^{\mathrm{emp}}(\theta)
=
\hat\beta_{0,n}(\theta)
+
\bar D_{\theta}\hat\beta_{1,n}(\theta),
\end{equation*}
\begin{equation}
\label{eq: emp pop diff A}
    \hat\beta_{A,n}^{\mathrm{emp}}(\theta)
=
\hat\beta_{A,n}(\theta)
+
\bar D_{\theta}\hat\beta_{2,n}(\theta),
\end{equation}
and
\begin{equation}
\label{eq: pop emp equal 1 2}
    \hat\beta_{1,n}^{\mathrm{emp}}(\theta)
=
\hat\beta_{1,n}(\theta),\quad \hat\beta_{2,n}^{\mathrm{emp}}(\theta)
=
\hat\beta_{2,n}(\theta).
\end{equation}

By $\mathbb{E}[\|W\|^2]<\infty$,
\[
\bar W - \mathbb{E}[W] = O_p(n^{-1/2}).
\]
Then, 
\[
\bar D_{\theta^*}
=
\theta^{*\top}\{\bar W-\mathbb{E}[W]\}
=
O_p(n^{-1/2}),
\]
and
\begin{equation}
\label{eq: D diff}
    \bar D_{\hat \theta_{\tilde n}}-\bar D_{\theta^*}
=
(\hat\theta_{\tilde n}-\theta^\ast)^\top\{\bar W-\mathbb{E}[W]\}
=
O_p(n^{-1/2}\tilde n^{-1/2}).
\end{equation}
Additionally, by the same argument as in the proof of Theorem \ref{thm: asyvar mat},
\[
\hat\beta_n(\hat\theta_{\tilde n})
-
\hat\beta_n(\theta^\ast)
=
O_p(\|\hat\theta_{\tilde n}-\theta^\ast\|)
=
O_p(\tilde n^{-1/2}).
\]
Thus,
\begin{equation*}
\label{eq: pop diff 1}
    \hat\beta_{1,n}(\hat\theta_{\tilde n})
-
\hat\beta_{1,n}(\theta^\ast)
=
O_p(\tilde n^{-1/2}),\quad \hat\beta_{1,n}(\hat\theta_{\tilde n})=O_p(1),
\end{equation*}
and
\begin{equation}
\label{eq: pop diff 2}
\hat\beta_{2,n}(\hat\theta_{\tilde n})
-
\hat\beta_{2,n}(\theta^\ast)
=
O_p(\tilde n^{-1/2}),\quad \hat\beta_{2,n}(\hat\theta_{\tilde n})=O_p(1).
\end{equation}

For $e_A=(0,1,0,0)^\top$, by \eqref{eq: emp pop diff A}, \eqref{eq: D diff} and \eqref{eq: pop diff 2}, 
\begin{align*}
&\Bigl\{
\hat\beta_{A,n}^{\mathrm{emp}}(\hat\theta_{\tilde n})
-
\hat\beta_{A,n}^{\mathrm{emp}}(\theta^\ast)
\Bigr\}
-
\Bigl\{
\hat\beta_{A,n}(\hat\theta_{\tilde n})
-
\hat\beta_{A,n}(\theta^\ast)
\Bigr\} \\
& =
\bar D_{\hat \theta_{\tilde n}}\hat\beta_{2,n}(\hat\theta_{\tilde n})
-
\bar D_{\theta^*}\hat\beta_{2,n}(\theta^\ast) \\
& =
\bigl\{
\bar D_{\hat \theta_{\tilde n}}-\bar D_{\theta^*}
\bigr\}
\hat\beta_{2,n}(\hat\theta_{\tilde n})
+
\bar D_{\theta^*}
\bigl\{
\hat\beta_{2,n}(\hat\theta_{\tilde n})
-
\hat\beta_{2,n}(\theta^\ast)
\bigr\}\\
&= O_p(n^{-1/2}\tilde n^{-1/2}),
\end{align*}
and then
\[
\sqrt n\,
\Bigl\{
\hat\beta_{A,n}^{\mathrm{emp}}(\hat\theta_{\tilde n})
-
\hat\beta_{A,n}^{\mathrm{emp}}(\theta^\ast)
\Bigr\}
=
\sqrt n\,
\Bigl\{
\hat\beta_{A,n}(\hat\theta_{\tilde n})
-
\hat\beta_{A,n}(\theta^\ast)
\Bigr\}
+
o_p(1).
\]
The same argument holds for $e_0=(1,0,0,0)^\top$:
\[
\sqrt n\,
\Bigl\{
\hat\beta_{0,n}^{\mathrm{emp}}(\hat\theta_{\tilde n})
-
\hat\beta_{0,n}^{\mathrm{emp}}(\theta^\ast)
\Bigr\}
=
\sqrt n\,
\Bigl\{
\hat\beta_{0,n}(\hat\theta_{\tilde n})
-
\hat\beta_{0,n}(\theta^\ast)
\Bigr\}
+
o_p(1).
\]
By combining \eqref{eq: pop emp equal 1 2}, the following holds:
\[
\sqrt n\,
\Bigl\{
\hat\beta_n^{\mathrm{emp}}(\hat\theta_{\tilde n})
-
\hat\beta_n^{\mathrm{emp}}(\theta^*)
\Bigr\}
=
\sqrt n\,
\Bigl\{
\hat\beta_n(\hat\theta_{\tilde n})
-
\hat\beta_n(\theta^*)
\Bigr\}
+
o_p(1).
\]
\end{proof}

\subsection{Proof of Theorem \ref{thm: asyvar PROCOVAII}}
\label{appendix: thm 4}
\begin{proof}
First, we prove $e^\top Q_0^{-1}Q_1=0$ for $e=(0,1,0,0)^\top$, $e=(1,0,0,0)^\top$ and $e=(1,1,0,0)^\top$.
The second-stage estimating function \eqref{eq: est func PROCOVAII linear} can be expressed as follows:
\begin{equation*}
\begin{split}
    \psi(O;\beta,\theta)&=(Y-\beta^\top X_{\theta})X_\theta\\
    &=\begin{pmatrix}
        Y-\beta^\top X_{\theta}\\
       (Y-\beta^\top X_{\theta})A\\
        (Y-\beta^\top X_{\theta})(\theta^\top W- \mathbb{E}[\theta^\top W])\\
        (Y-\beta^\top X_{\theta})(\theta^\top W- \mathbb{E}[\theta^\top W])A
    \end{pmatrix}\\
    &=\begin{pmatrix}
        Y-\beta_0-\beta_AA-\beta_1(\theta^\top W- \mathbb{E}[\theta^\top W]) -\beta_2(\theta^\top W- \mathbb{E}[\theta^\top W])A\\
        \{Y-\beta_0-\beta_AA-\beta_1(\theta^\top W- \mathbb{E}[\theta^\top W]) -\beta_2(\theta^\top W- \mathbb{E}[\theta^\top W])A\}A\\
         \{Y-\beta_0-\beta_AA-\beta_1(\theta^\top W- \mathbb{E}[\theta^\top W]) -\beta_2(\theta^\top W- \mathbb{E}[\theta^\top W])A\}(\theta^\top W- \mathbb{E}[\theta^\top W])\\
         \{Y-\beta_0-\beta_AA-\beta_1(\theta^\top W- \mathbb{E}[\theta^\top W]) -\beta_2(\theta^\top W- \mathbb{E}[\theta^\top W])A\}(\theta^\top W- \mathbb{E}[\theta^\top W])A
    \end{pmatrix}.
\end{split}
\end{equation*}
Thus, $Q_0$ can be expressed as follows:
\begin{equation*}
\begin{split}
        Q_0&= \mathbb{E}\!\left[\frac{\partial}{\partial\beta^\top}\psi(O;\beta,\theta^*)\Big|_{\beta=\beta^*}\right]=\mathbb{E}\!\left[\frac{\partial}{\partial\beta^\top}(Y-\beta^\top X_{\theta^*})X_{\theta^*}\Big|_{\beta=\beta^*}\right]=-\mathbb{E}[X_{\theta^*}X_{\theta^*}^\top]\\
    &=-\mathbb{E}\left[\begin{pmatrix}
   1 & A & \theta^\top W- \mathbb{E}[\theta^\top W] & (\theta^\top W- \mathbb{E}[\theta^\top W])A \\
   A & A & (\theta^\top W- \mathbb{E}[\theta^\top W])A &  (\theta^\top W- \mathbb{E}[\theta^\top W])A \\
   \theta^\top W- \mathbb{E}[\theta^\top W]  & (\theta^\top W- \mathbb{E}[\theta^\top W])A & (\theta^\top W- \mathbb{E}[\theta^\top W])^2 & (\theta^\top W- \mathbb{E}[\theta^\top W])^2 A\\
     (\theta^\top W- \mathbb{E}[\theta^\top W]) A & (\theta^\top W- \mathbb{E}[\theta^\top W]) A & (\theta^\top W- \mathbb{E}[\theta^\top W])^2 A & (\theta^\top W- \mathbb{E}[\theta^\top W])^2 A
\end{pmatrix}\right]\\
&=-\begin{pmatrix}
   1 & \pi & 0 &  0 \\
   \pi & \pi & 0 & 0 \\
  0 & 0 & \mathbb{V}[\theta^{*\top} W] & \mathbb{V}[\theta^{*\top} W]\pi \\
    0 & 0  & \mathbb{V}[\theta^{*\top} W]\pi  & \mathbb{V}[\theta^{*\top} W]\pi \\
\end{pmatrix}.
\end{split}
\end{equation*}
The last equation holds by $A\perp W$.
Thus, $Q_0^{-1}$ can be expressed as follows:
\begin{equation*}
    Q_0^{-1}=-\begin{pmatrix}
   \frac{1}{1-\pi} & -\frac{1}{1-\pi} & 0 &  0 \\
    -\frac{1}{1-\pi} & \frac{1}{\pi(1-\pi)} & 0 & 0 \\
  0 & 0 & \frac{1}{(1-\pi)\mathbb{V}[\theta^{*\top} W]} & -\frac{1}{(1-\pi)\mathbb{V}[\theta^{*\top} W]} \\
    0 & 0  & -\frac{1}{(1-\pi)\mathbb{V}[\theta^{*\top} W]}  & \frac{1}{\pi(1-\pi)\mathbb{V}[\theta^{*\top} W]} \\
\end{pmatrix}
\end{equation*}
Additionally, $Q_1$ can be expressed as follows:
\begin{equation*}
\begin{split}
     &Q_1\\
     &=\mathbb{E}\!\left[\frac{\partial}{\partial\theta^\top}\psi(O;\beta^*,\theta)\Big|_{\theta=\theta^*}\right]\\
     &=\mathbb{E}\!\left[\frac{\partial}{\partial\theta^\top}\begin{pmatrix}
        Y-\beta_0^*-\beta_A^*A-\beta_1^*(\theta^\top W- \mathbb{E}[\theta^\top W])-\beta_2^*(\theta^\top W- \mathbb{E}[\theta^\top W])A\\
        \{Y-\beta_0^*-\beta_A^*A-\beta_1^*(\theta^\top W- \mathbb{E}[\theta^\top W])-\beta_2^*(\theta^\top W- \mathbb{E}[\theta^\top W])A\}A\\
        \{Y-\beta_0^*-\beta_A^*A-\beta_1^*(\theta^\top W- \mathbb{E}[\theta^\top W])-\beta_2^*(\theta^\top W- \mathbb{E}[\theta^\top W])A\}(\theta^\top W- \mathbb{E}[\theta^\top W])\\
         \{Y-\beta_0^*-\beta_A^*A-\beta_1^*(\theta^\top W- \mathbb{E}[\theta^\top W])-\beta_2^*(\theta^\top W- \mathbb{E}[\theta^\top W])A\}(\theta^\top W- \mathbb{E}[\theta^\top W])A
    \end{pmatrix}\middle |_{\theta=\theta^*}\right],
\end{split}
\end{equation*} 
where
\begin{equation*}
\begin{split}
    \beta_0^*&=e_0^\top \beta^*,\quad \text{with}\quad e_0=(1,0,0,0)^\top,\\
    \beta_A^*&=e_A^\top \beta^*,\quad \text{with}\quad e_A=(0,1,0,0)^\top,\\
    \beta_1^*&=e_1^\top \beta^*,\quad \text{with}\quad e_1=(0,0,1,0)^\top,\\
    \beta_2^*&=e_2^\top \beta^*,\quad \text{with}\quad e_2=(0,0,0,1)^\top.
\end{split}
\end{equation*}
By denoting $j$ th row component of $Q_1$ as $q_j$, 
\begin{equation*}
    \begin{split}
        q_1&=-\beta_1^*(\mathbb{E}[W^\top]-\mathbb{E}[W^\top]) -\beta_2^*(\mathbb{E}[A
        W^\top]-\pi\mathbb{E}[W^\top])=-\beta_2^*(\mathbb{E}[A
        W^\top]-\pi\mathbb{E}[W^\top]),\\
        q_2 &=-\beta_1^*(\mathbb{E}[AW^\top]-\pi\mathbb{E}[W^\top]) -\beta_2^*(\mathbb{E}[AW^\top]-\pi\mathbb{E}[W^\top])=-(\beta_1^*+\beta_2^*)(\mathbb{E}[AW^\top]-\pi\mathbb{E}[W^\top]),\\
        q_3&=\mathbb{E}[YW^\top]-\mathbb{E}[Y]\mathbb{E}[W^\top]-\beta_0^*(\mathbb{E}[W^\top]-\mathbb{E}[W^\top])-\beta_A^*(\mathbb{E}[AW^\top]-\pi\mathbb{E}[W^\top])\\
        &\quad - 2\beta_1^*\theta ^\top \mathbb{V}[W] - 2\beta_2^*\theta^\top\pi\mathbb{V}[W\mid A=1]\\
        &=\mathbb{E}[YW^\top]-\mathbb{E}[Y]\mathbb{E}[W^\top]-\beta_A^*(\mathbb{E}[AW^\top]-\pi\mathbb{E}[W^\top])- 2\beta_1^*\theta ^\top \mathbb{V}[W] - 2\beta_2^*\theta^\top\pi\mathbb{V}[W\mid A=1],\\
        q_4&=\mathbb{E}[AYW^\top]-\mathbb{E}[AY]\mathbb{E}[W^\top]-\beta_0^*(\mathbb{E}[AW^\top]-\pi\mathbb{E}[W^\top])-\beta_A^*(\mathbb{E}[AW^\top]-\pi\mathbb{E}[W^\top])\\
        &\quad - 2\beta_1^*\theta ^\top \pi\mathbb{V}[W\mid A=1] - 2\beta_2^*\theta^\top\pi\mathbb{V}[W\mid A=1]\\
        &=\mathbb{E}[AYW^\top]-\mathbb{E}[AY]\mathbb{E}[W^\top]-(\beta_0^*+\beta_A^*)(\mathbb{E}[AW^\top]-\pi\mathbb{E}[W^\top])- 2(\beta_1^*+\beta_2^*)\theta^\top\pi\mathbb{V}[W\mid A=1].
    \end{split}
\end{equation*}
Since $A\perp W$, 
\begin{equation*}
    \begin{split}
        q_1&=q_2=0,\\
        q_3&=\mathbb{E}[YW^\top]-\mathbb{E}[Y]\mathbb{E}[W^\top]- 2(\beta_1^*+\beta_2^*\pi)\theta ^\top \mathbb{V}[W],\\
        q_4
        &=\mathbb{E}[AYW^\top]-\mathbb{E}[AY]\mathbb{E}[W^\top]- 2(\beta_1^*+\beta_2^*)\pi\theta^\top\mathbb{V}[W],
    \end{split}
\end{equation*}
that is, $Q_1$ can be expressed as follows:
\begin{equation*}
    Q_1=\begin{pmatrix}
        0\\
        0\\
        \mathbb{E}[YW^\top]-\mathbb{E}[Y]\mathbb{E}[W^\top]- 2(\beta_1^*+\beta_2^*\pi)\theta ^\top \mathbb{V}[W]\\
        \mathbb{E}[AYW^\top]-\mathbb{E}[AY]\mathbb{E}[W^\top]- 2(\beta_1^*+\beta_2^*)\pi\theta^\top\mathbb{V}[W]
    \end{pmatrix}.
\end{equation*}
Then, 
\begin{equation*}
\begin{split}
 e_A^\top Q_0^{-1}Q_1&=\frac{1}{1-\pi}\cdot0 - \frac{1}{\pi(1-\pi)}\cdot0 + 0\cdot q_3 + 0\cdot q_4 =0 \quad \text{with}\quad e_A=(0,1,0,0)^\top,\\
  e_{0}^\top Q_0^{-1}Q_1&=-\frac{1}{1-\pi}\cdot0 + \frac{1}{1-\pi}\cdot0 + 0\cdot q_3 + 0\cdot q_4 =0 \quad \text{with}\quad e_0=(1,0,0,0)^\top,\\
   e_{0A}^\top Q_0^{-1}Q_1&=0\cdot0 - \frac{1}{\pi}\cdot0 + 0\cdot q_3 + 0\cdot q_4 =0 \quad \text{with}\quad e_{0A}=(1,1,0,0)^\top.
\end{split}
\end{equation*}

Next, we prove $\left.\frac{\partial}{\partial\theta^\top}e^\top\beta^*(\theta)\middle |_{\theta=\theta^*}\right.=0$ for $e=(0,1,0,0)^\top$, $e=(1,0,0,0)^\top$ and $e=(1,1,0,0)^\top$.
As with the derivation of $Q_0^{-1}$, under $A\perp W$,
\begin{equation*}
 \mathbb{E}[X_{\theta}X_\theta^\top]^{-1}=\begin{pmatrix}
   \frac{1}{1-\pi} & -\frac{1}{1-\pi} & 0 &  0 \\
    -\frac{1}{1-\pi} & \frac{1}{\pi(1-\pi)} & 0 & 0 \\
  0 & 0 & \frac{1}{(1-\pi)\mathbb{V}[\theta^{\top} W]} & -\frac{1}{(1-\pi)\mathbb{V}[\theta^{\top} W]} \\
    0 & 0  & -\frac{1}{(1-\pi)\mathbb{V}[\theta^{\top} W]}  & \frac{1}{\pi(1-\pi)\mathbb{V}[\theta^{\top} W]} \\
\end{pmatrix}.
\end{equation*}
Additionally,
\begin{equation*}
\begin{split}
    \mathbb{E}[X_\theta Y]&=
    \begin{pmatrix}
        \mathbb{E}[Y]\\
        \mathbb{E}[AY]\\
        \mathbb{E}[\theta^\top WY]-\mathbb{E}[\theta^\top W]\mathbb{E}[Y]\\
        \mathbb{E}[\theta^\top W AY] - \mathbb{E}[\theta^\top W]\mathbb{E}[AY]
    \end{pmatrix}
    =\begin{pmatrix}
         \pi\mathbb{E}[Y\mid A=1] +(1-\pi)\mathbb{E}[Y\mid A=0]\\
        \pi\mathbb{E}[Y\mid A=1]\\
       \pi g_\theta(1) + (1-\pi)g_\theta(0)\\
       \pi g_\theta(1)
    \end{pmatrix},
\end{split}
\end{equation*}
where $g_\theta(a)=\mathbb{E}[\theta^\top W Y\mid A=a] - \mathbb{E}[\theta^\top W]\mathbb{E}[Y\mid A=a]$ for $a=0,1$.
Then, 
\begin{equation*}
\begin{split}
    \beta^*(\theta)=
\begin{pmatrix}
\frac{\pi}{1-\pi}\mathbb{E}[Y\mid A=1] + \mathbb{E}[Y\mid A=0]  -     \frac{\pi}{1-\pi}\mathbb{E}[Y\mid A=1]\\
-\frac{\pi}{1-\pi}\mathbb{E}[Y\mid A=1] - \mathbb{E}[Y\mid A=0]  +     \frac{1}{1-\pi}\mathbb{E}[Y\mid A=1]\\
\frac{\pi g_\theta(1)}{(1-\pi)\mathbb{V}[\theta^{\top} W]} + \frac{g_\theta(0)}{\mathbb{V}[\theta^{\top} W]} -\frac{\pi g_\theta(1)}{(1-\pi)\mathbb{V}[\theta^{\top} W]}\\
-\frac{\pi g_\theta(1)}{(1-\pi)\mathbb{V}[\theta^{\top} W]} - \frac{g_\theta(0)}{\mathbb{V}[\theta^{\top} W]} +\frac{ g_\theta(1)}{(1-\pi)\mathbb{V}[\theta^{\top} W]}
\end{pmatrix}=
\begin{pmatrix}
  \mathbb{E}[Y\mid A=0]\\
  \mathbb{E}[Y\mid A=1]-\mathbb{E}[Y\mid A=0]\\
  \frac{g_\theta(0)}{\mathbb{V}[\theta^{\top} W]}\\
  \frac{g_\theta(1)}{\mathbb{V}[\theta^{\top} W]}-\frac{g_\theta(0)}{\mathbb{V}[\theta^{\top} W]}
\end{pmatrix}.
\end{split}
\end{equation*}
That is, 
\begin{equation*}
\begin{split}
     \left.\frac{\partial}{\partial\theta^\top}e_A^\top\beta^*(\theta)\middle |_{\theta=\theta^*}\right.&=\left.\frac{\partial}{\partial\theta^\top}\beta^*_A(\theta)\middle |_{\theta=\theta^*}\right.=0, \quad \text{with}\quad e_A=(0,1,0,0)^\top,\\
      \left.\frac{\partial}{\partial\theta^\top}e_0^\top\beta^*(\theta)\middle |_{\theta=\theta^*}\right.&=\left.\frac{\partial}{\partial\theta^\top}\beta^*_0(\theta)\middle |_{\theta=\theta^*}\right.=0, \quad \text{with}\quad e_0=(1,0,0,0)^\top,\\
      \left.\frac{\partial}{\partial\theta^\top}e_{0A}^\top\beta^*(\theta)\middle |_{\theta=\theta^*}\right.&=\left.\frac{\partial}{\partial\theta^\top}\beta^*_0(\theta)\middle |_{\theta=\theta^*}\right.+\left.\frac{\partial}{\partial\theta^\top}\beta^*_A(\theta)\middle |_{\theta=\theta^*}\right.=0, \quad \text{with}\quad e_{0A}=(1,1,0,0)^\top.
\end{split}
\end{equation*}
\end{proof}

\subsection{PROCOVA model \eqref{eq: PROCOVA linear} with centering}
\label{appendix: PROCOVAI centering}

\begin{theorem}
\label{thm: asyvar PROCOVAII centering}
Consider the two-sample two-stage estimation with the first-stage estimating function \eqref{eq: est func prog} and the second-stage estimating function
\begin{equation}
    \label{eq: est func PROCOVAII linear centering}
        \psi(O;\beta,\theta)=(Y-\beta^\top X_{\theta})X_{\theta},\quad X_\theta=(1,A,\theta^\top W-\mathbb{E}[\theta^\top W])^\top.
\end{equation}
Assume regularity conditions (A1)--(A4), $\mathbb{E}[Y^2]<\infty$ and $\mathbb{E}[||W||^2]<\infty$.
Additionally, assume $A\perp W$ and $0<\pi< 1$.
Then, for $e=(0,1,0)^\top$, 
\begin{equation*}
    e^\top\beta^*(\theta)=\mathbb{E}[Y\mid A=1]-\mathbb{E}[Y\mid A=0] \quad\text{and}\quad\left.\frac{\partial}{\partial\theta^\top}e^\top\beta^*(\theta)\middle |_{\theta=\theta^*}\right.=-e^\top Q_0^{-1}Q_1 =0.
\end{equation*}
Additionally, for $e=(1,0,0)^\top$,
\begin{equation*}
    e^\top\beta^*(\theta)=\mathbb{E}[Y\mid A=0] \quad\text{and}\quad\left.\frac{\partial}{\partial\theta^\top}e^\top\beta^*(\theta)\middle |_{\theta=\theta^*}\right.=-e^\top Q_0^{-1}Q_1 =0.
\end{equation*}
For $e=(1,1,0)^\top$,
\begin{equation*}
    e^\top\beta^*(\theta)=\mathbb{E}[Y\mid A=1]\quad\text{and}\quad\left.\frac{\partial}{\partial\theta^\top}e^\top\beta^*(\theta)\middle |_{\theta=\theta^*}\right.=-e^\top Q_0^{-1}Q_1 =0.
\end{equation*}
\end{theorem}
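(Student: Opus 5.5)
The proof follows the template of the proofs of Theorems \ref{thm: asyvar PROCOVA} and \ref{thm: asyvar PROCOVAII}. First I compute $\mathbb{E}[X_\theta X_\theta^\top]$ for the centered regressor $X_\theta=(1,A,\theta^\top W-\mathbb{E}[\theta^\top W])^\top$. Write $D_\theta:=\theta^\top W-\mathbb{E}[\theta^\top W]$. Then $\mathbb{E}[D_\theta]=0$, and $A\perp W$ gives $\mathbb{E}[AD_\theta]=\pi\mathbb{E}[D_\theta]=0$. Hence
\[
\mathbb{E}[X_\theta X_\theta^\top]=\begin{pmatrix}1&\pi&0\\ \pi&\pi&0\\ 0&0&\mathbb{V}[\theta^\top W]\end{pmatrix},
\]
which is block diagonal. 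Its inverse has upper-left block
\[
\frac{1}{\pi(1-\pi)}\begin{pmatrix}\pi&-\pi\\-\pi&1\end{pmatrix}
\]
and lower-right entry $1/\mathbb{V}[\theta^\top W]$. Invertibility near $\theta^*$ follows from (A4), or from $\mathbb{V}[\theta^{*\top}W]>0$ and continuity.

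Next I compute $\mathbb{E}[X_\theta Y]=(\mathbb{E}[Y],\ \pi\mathbb{E}[Y\mid A=1],\ \mathbb{COV}[\theta^\top W,Y])^\top$. Multiplying by the inverse gives
\[
\beta^*(\theta)=\Bigl(\mathbb{E}[Y\mid A=0],\ \mathbb{E}[Y\mid A=1]-\mathbb{E}[Y\mid A=0],\ \mathbb{COV}[\theta^\top W,Y]/\mathbb{V}[\theta^\top W]\Bigr)^\top.
\]
The first two components do not involve $\theta$. Therefore $e^\top\beta^*(\theta)$ equals $\mathbb{E}[Y\mid A=0]$, $\mathbb{E}[Y\mid A=1]-\mathbb{E}[Y\mid A=0]$, and $\mathbb{E}[Y\mid A=1]$ for $e=(1,0,0)^\top$, $(0,1,0)^\top$, and $(1,1,0)^\top$ respectively. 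These values are constant in $\theta$ on $N_\theta$, so their $\theta$-derivatives at $\theta^*$ vanish. Lemma \ref{thm: param and a} then gives $-e^\top Q_0^{-1}Q_1=\partial e^\top\beta^*(\theta)/\partial\theta^\top|_{\theta=\theta^*}=0$. The moment conditions $\mathbb{E}[Y^2]<\infty$ and $\mathbb{E}[\|W\|^2]<\infty$ ensure all these expectations are finite.

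As a direct check, independent of Lemma \ref{thm: param and a}, I also compute $Q_1$ row by row. The first row is $-\beta_1^*\,\partial\mathbb{E}[D_\theta]/\partial\theta^\top=-\beta_1^*(\mathbb{E}[W^\top]-\mathbb{E}[W^\top])=0$. The second row is $-\beta_1^*(\mathbb{E}[AW^\top]-\pi\mathbb{E}[W^\top])$, which is $0$ by $A\perp W$. Only the third row is nonzero. Because $Q_0^{-1}$ is block diagonal, $e^\top Q_0^{-1}$ has zero third entry for each of the three choices of $e$, so $e^\top Q_0^{-1}Q_1=0$.

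The one delicate point is that $X_\theta$ contains $\mathbb{E}[\theta^\top W]$. Differentiating in $\theta$ must therefore include the derivative of the centering term, $-\mathbb{E}[W]$. This term is exactly what makes the first row of $Q_1$ vanish. Differentiation under the expectation is justified by (A2)--(A3) together with the second-moment assumptions. Everything else is routine linear algebra.
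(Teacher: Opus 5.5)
Your proof is correct and follows essentially the same route as the paper's: the block-diagonal form of $\mathbb{E}[X_\theta X_\theta^\top]$ that comes from centering together with $A\perp W$, the explicit $\beta^*(\theta)$ whose first two components do not depend on $\theta$, and the row-by-row computation of $Q_1$ giving $q_1=q_2=0$ once the derivative of the centering term is included. The only difference is ordering: you obtain $e^\top Q_0^{-1}Q_1=0$ from Lemma~\ref{thm: param and a} and then verify it directly, whereas the paper computes the two identities separately.
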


\begin{proof}

First, we prove $e^\top Q_0^{-1}Q_1=0$ for $e=(0,1,0)^\top$, $e=(1,0,0)^\top$ and $e=(1,1,0)^\top$.
The second-stage estimating function \eqref{eq: est func PROCOVAII linear centering} can be expressed as follows:
\begin{equation*}
\begin{split}
    \psi(O;\beta,\theta)&=(Y-\beta^\top X_{\theta})X_\theta=\begin{pmatrix}
        Y-\beta^\top X_{\theta}\\
       (Y-\beta^\top X_{\theta})A\\
        (Y-\beta^\top X_{\theta})(\theta^\top W-\mathbb{E}[\theta^\top W])
    \end{pmatrix}\\
    &=\begin{pmatrix}
        Y-\beta_0-\beta_AA-\beta_1(\theta^\top W-\mathbb{E}[\theta^\top W])\\
        \{Y-\beta_0-\beta_AA-\beta_1(\theta^\top W-\mathbb{E}[\theta^\top W])\}A\\
        \{Y-\beta_0-\beta_AA-\beta_1(\theta^\top W-\mathbb{E}[\theta^\top W])\}(\theta^\top W-\mathbb{E}[\theta^\top W])
    \end{pmatrix}.
\end{split}
\end{equation*}
Thus, $Q_0$ can be expressed as follows:
\begin{equation*}
\begin{split}
        Q_0&= \mathbb{E}\!\left[\frac{\partial}{\partial\beta^\top}\psi(O;\beta,\theta^*)\Big|_{\beta=\beta^*}\right]=\mathbb{E}\!\left[\frac{\partial}{\partial\beta^\top}(Y-\beta^\top X_{\theta^*})X_{\theta^*}\Big|_{\beta=\beta^*}\right]=-\mathbb{E}[X_{\theta^*}X_{\theta^*}^\top]\\
    &=-\mathbb{E}\left[\begin{pmatrix}
   1 & A & \theta^{*\top} W-\mathbb{E}[\theta^{*\top} W] \\
   A & A & (\theta^{*\top} W-\mathbb{E}[\theta^{*\top} W])A \\
   \theta^{*\top}W-\mathbb{E}[\theta^{*\top} W] & (\theta^{*\top} W-\mathbb{E}[\theta^{*\top} W])A & (\theta^{*\top} W-\mathbb{E}[\theta^{*\top} W])^2
\end{pmatrix}\right]\\
&=-\begin{pmatrix}
   1 & \pi & 0 \\
   \pi & \pi & 0 \\
   0 & 0 & \mathbb{V}[\theta^{*\top} W]
\end{pmatrix}.
\end{split}
\end{equation*}
The last equation holds because of $A\perp W$.
By denoting $(i,j)$ component of $Q_0^{-1}$ as $\check q_{i,j}$, following holds:
\begin{equation*}
\begin{split}
    \check q_{1,1}&=(-1)^{1+1}\begin{vmatrix}
   \pi  & 0 \\
   0  & \mathbb{V}[\theta^{*\top} W]
\end{vmatrix}/|Q_0|=\frac{\pi\mathbb{V}[\theta^{*\top} W]}{|Q_0|},\\
\check  q_{1,2}&=\check q_{2,1}=(-1)^{1+2}\begin{vmatrix}
   \pi  & 0 \\
   0 & \mathbb{V}[\theta^{*\top} W]
\end{vmatrix}/|Q_0|=-\frac{\pi\mathbb{V}[\theta^{*\top} W]}{|Q_0|},\\
 \check q_{1,3}&=\check q_{3,1}=(-1)^{1+3}\begin{vmatrix}
   \pi  & \pi \\ 
   0  & 0
\end{vmatrix}/|Q_0|=0,\\
 \check q_{2,2}&=(-1)^{2+2}\begin{vmatrix}
   1  & 0 \\
   0 & \mathbb{V}[\theta^{*\top} W]
\end{vmatrix}/|Q_0|=\frac{\mathbb{V}[\theta^{*\top} W]}{|Q_0|},\\
 \check q_{2,3}&=\check q_{3,2}=(-1)^{2+3}\begin{vmatrix}
   1  &  \pi\\
   0 & 0
\end{vmatrix}/|Q_0|=0,\\
\check q_{3,3}&=(-1)^{3+3}\begin{vmatrix}
   1  & \pi\\
   \pi & \pi
\end{vmatrix}/|Q_0|=\frac{\pi(1-\pi)}{|Q_0|}.
\end{split}
\end{equation*}
Thus, $\check q_{1,1}=\pi\check q_{2,2}$ and $\check q_{1,2}=\check q_{2,1}=-\pi\check q_{2,2}$.
Then, $Q_0^{-1}$ can be expressed as follows:
\begin{equation*}
    Q_0^{-1}=\begin{pmatrix}
  \pi \check q_{2,2} & -\pi \check q_{2,2} &0 \\
    -\pi \check q_{2,2} & \check q_{2,2} & 0 \\
  0 & 0 & \check q_{3,3}
\end{pmatrix}.
\end{equation*}
Additionally, $Q_1$ can be expressed as follows:
\begin{equation*}
\begin{split}
      Q_1&=\mathbb{E}\!\left[\frac{\partial}{\partial\theta^\top}\psi(O;\beta^*,\theta)\Big|_{\theta=\theta^*}\right]\\
      &=\mathbb{E}\!\left[\frac{\partial}{\partial\theta^\top}\begin{pmatrix}
        Y-\beta_0^*-\beta_A^*A-\beta_1^*(\theta^\top W-\mathbb{E}[\theta^\top W])\\
        \{Y-\beta_0^*-\beta_A^*A-\beta_1^*(\theta^\top W-\mathbb{E}[\theta^\top W])\}A\\
        \{Y-\beta_0^*-\beta_A^*A-\beta_1^*(\theta^\top W-\mathbb{E}[\theta^\top W])\}(\theta^\top W-\mathbb{E}[\theta^\top W])
    \end{pmatrix}\middle |_{\theta=\theta^*}\right]=\begin{pmatrix}
        q_1\\
        q_2\\
        q_3
    \end{pmatrix},
\end{split}
\end{equation*}
where 
\begin{equation*}
\begin{split}
    \beta_0^*&=e_0^\top \beta^*,\quad \text{with}\quad e_0=(1,0,0)^\top,\\
    \beta_A^*&=e_A^\top \beta^*,\quad \text{with}\quad e_A=(0,1,0)^\top,\\
    \beta_1^*&=e_1^\top \beta^*,\quad \text{with}\quad e_1=(0,0,1)^\top,
\end{split}
\end{equation*}
and $q_1=q_2=0$ and $q_3=\mathbb{COV}[Y,W^\top ]-2\beta_1^*\theta^{*\top}\mathbb{V}[W]$.
Since 
\begin{equation*}
    |Q_0|=-\pi(1-\pi)\mathbb{V}[\theta^{*\top}W]
\end{equation*}
and
\begin{equation*}
    \beta^*=\mathbb{E}[X_{\theta^*}X_{\theta^*}^\top]^{-1}\mathbb{E}[X_{\theta^*}Y]=-\begin{pmatrix}
  \pi \check q_{2,2} & -\pi \check q_{2,2} &0 \\
    -\pi \check q_{2,2} & \check q_{2,2} & 0 \\
  0 & 0 & \check q_{3,3}
\end{pmatrix}\begin{pmatrix}
        \mathbb{E}[Y]\\
        \mathbb{E}[AY]\\
        \mathbb{E}[(\theta^{*\top} W -\mathbb{E}[\theta^{*\top} W ])Y]
    \end{pmatrix},
\end{equation*}
the components of $\beta^*=(\beta^*_0,\beta^*_A,\beta_1^*)^\top$ can be expressed as follows:
\begin{equation*}
\begin{split}
    \beta_A^*&=\check q_{2,2}(\pi\mathbb{E}[Y]-\mathbb{E}[AY])=-\frac{\mathbb{V}[\theta^{*\top}W]}{\pi(1-\pi)\mathbb{V}[\theta^{*\top}W]}(-1)\pi(1-\pi)\{\mathbb{E}[Y\mid A=1]-\mathbb{E}[Y\mid A=0]\}\\&=\mathbb{E}[Y\mid A=1]-\mathbb{E}[Y\mid A=0],\\
    \beta_1^*&=\check q_{3,3}(\mathbb{E}[\theta^{*\top}W]\mathbb{E}[Y]-\mathbb{E}[\theta^{*\top}WY])=-\frac{\pi(1-\pi)}{\pi(1-\pi)\mathbb{V}[\theta^{*\top}W]}(-1)\mathbb{COV}[\theta^{*\top}W, Y]\\
    &=\frac{\mathbb{COV}[\theta^{*\top}W, Y]}{\mathbb{V}[\theta^{*\top}W]},\\
    \beta_0^*&=-\pi\check q_{2,2}(\mathbb{E}[Y]-\mathbb{E}[AY])=\pi\frac{\mathbb{V}[\theta^{*\top}W]}{\pi(1-\pi)\mathbb{V}[\theta^{*\top}W]}(1-\pi)\mathbb{E}[Y\mid A=0]\\
    &=\mathbb{E}[Y\mid A=0].
\end{split}
\end{equation*}
Then, $q_3$ can be expressed as follows:
\begin{equation*}
\begin{split}
    q_3 &=\mathbb{COV}[Y,W^\top]-2\frac{\mathbb{COV}[\theta^{*\top}W,Y]}{\mathbb{V}[\theta^{*\top}W]}\theta^{*\top} \mathbb{V}[W].
\end{split}   
\end{equation*}
Then,  
\begin{equation*}
\begin{split}
    e_{A}^\top Q_0^{-1}Q_1&=-\pi\check q_{2,2}\cdot0+\check q_{2,2}\cdot0 +0\cdot q_3=0, \quad \text{with} \quad e_A=(0,1,0)^\top,\\
     e_0^\top Q_0^{-1}Q_1&=\pi \check q_{2,2}\cdot 0 -\pi \check q_{2,2}\cdot 0 + 0\cdot q_3 =0, \quad \text{with} \quad e_0=(1,0,0)^\top,\\
     e_{0A}^\top Q_0^{-1}Q_1&=0\cdot 0 +(1-\pi) \check q_{2,2}\cdot 0 + 0\cdot q_3 =0, \quad \text{with} \quad e_{0A}=(1,1,0)^\top,\\
      e_{1}^\top Q_0^{-1}Q_1&= 0\cdot 0+ 0\cdot 0+\check q_{3,3}q_3=-\frac{\mathbb{COV}[Y,W^\top ]}{\mathbb{V}[\theta^{*\top}W]}+2\frac{\mathbb{COV}[\theta^{*\top}W,Y]}{\mathbb{V}[\theta^{*\top}W]^2}\theta^{*\top} \mathbb{V}[W], \quad \text{with} \quad e_1=(0,0,1)^\top.
\end{split}
\end{equation*}

Next, we prove $\left.\frac{\partial}{\partial\theta^\top}e^\top\beta^*(\theta)\middle |_{\theta=\theta^*}\right.=0$ for $e=(0,1,0)^\top$, $e=(1,0,0)^\top$ and $e=(1,1,0)^\top$.
As with the derivation of the components of $\beta^*$, the components of $\beta^*(\theta)=(\beta^*_0(\theta),\beta^*_A(\theta),\beta_1^*(\theta))^\top$ can be expressed as follows:
\begin{equation*}
\begin{split}
    \beta_A^*(\theta)&=-\frac{\mathbb{V}[\theta^{\top}W]}{\pi(1-\pi)\mathbb{V}[\theta^{\top}W]}(-1)\pi(1-\pi)\{\mathbb{E}[Y\mid A=1]-\mathbb{E}[Y\mid A=0]\}=\mathbb{E}[Y\mid A=1]-\mathbb{E}[Y\mid A=0],\\
    \beta_1^*(\theta)&=-\frac{\pi(1-\pi)}{\pi(1-\pi)\mathbb{V}[\theta^{\top}W]}(-1)\mathbb{COV}[\theta^{\top}W, Y]=\frac{\mathbb{COV}[\theta^{\top}W, Y]}{\mathbb{V}[\theta^{\top}W]},\\
    \beta_0^*(\theta)&=\pi\frac{\mathbb{V}[\theta^{\top}W]}{\pi(1-\pi)\mathbb{V}[\theta^{\top}W]}(1-\pi)\mathbb{E}[Y\mid A=0]=\mathbb{E}[Y\mid A=0].
\end{split}
\end{equation*}
That is, 
\begin{equation*}
\begin{split}
      \left.\frac{\partial}{\partial\theta^\top}e_A^\top\beta^*(\theta)\middle |_{\theta=\theta^*}\right.&=\left.\frac{\partial}{\partial\theta^\top}\beta^*_A(\theta)\middle |_{\theta=\theta^*}\right.=0, \quad \text{with}\quad e_A=(0,1,0)^\top.\\
       \left.\frac{\partial}{\partial\theta^\top}e_0^\top\beta^*(\theta)\middle |_{\theta=\theta^*}\right.&=\left.\frac{\partial}{\partial\theta^\top}\beta^*_0(\theta)\middle |_{\theta=\theta^*}\right.=0, \quad \text{with}\quad e_0=(1,0,0)^\top,\\
        \left.\frac{\partial}{\partial\theta^\top}e_{0A}^\top\beta^*(\theta)\middle |_{\theta=\theta^*}\right.&=\left.\frac{\partial}{\partial\theta^\top}\beta^*_0(\theta)\middle |_{\theta=\theta^*}\right.+\left.\frac{\partial}{\partial\theta^\top}\beta^*_A(\theta)\middle |_{\theta=\theta^*}\right.=0, \quad \text{with}\quad e_{0A}=(1,1,0)^\top.
\end{split}
\end{equation*}
\end{proof}

\newpage
\subsection{Preparation for proof of Theorem \ref{thm: asyvar PROCOVA general}}
\label{appendix: for thm3}
\begin{lemma}
\label{lemma: for thm3}
For any scalar random variables \(Z\) and \(V\), denote the sample
covariance by
\[
\hat c(Z,V)
=
\frac{1}{n}
\sum_{i=1}^n
(Z_i-\overline Z)(V_i-\overline V),
\]
where 
\[\overline Z=\frac{1}{n}\sum_{i=1}^n Z_i\quad\text{ and }\quad\overline V=\frac{1}{n}\sum_{i=1}^n V_i.\]
The population OLS coefficient given $\mathcal{\tilde D}$, $\beta^*(\hat \rho_{\tilde n})=(\beta^*_0(\hat \rho_{\tilde n}), \beta_A^*(\hat \rho_{\tilde n}), \beta^*_1(\hat \rho_{\tilde n}))^\top$, is defined by
\[
\beta^*(\hat \rho_{\tilde n})
=
\operatorname*{argmin}_{\beta\in\mathbb R^3}
\mathbb{E}\left[
\left(Y-\beta^\top X_{\hat \rho_{\tilde n}}\right)^2
\mid\tilde{\mathcal D}
\right],
\]
where $X_{\hat \rho_{\tilde n}}=(1,A,\hat \rho_{\tilde n}(W))^\top$.
Furthermore, define $\varepsilon_{\hat \rho_{\tilde n}}=Y-\beta^*(\hat \rho_{\tilde n})^\top X_{\hat \rho_{\tilde n}}$.
Then, under the same assumptions as in Theorem \ref{thm: asyvar PROCOVA general}, the following (i)--(v) hold:
\[
\begin{aligned}
\text{(i)}\quad&
\hat{c}(A,A)
=
\pi(1-\pi)+o_p(1)
\\
\text{(ii)}\quad&
\hat{c}
\bigl(
\hat{\rho}_{\tilde n}(W),
\hat{\rho}_{\tilde n}(W)
\bigr)
=
\mathbb{V}[
\hat{\rho}_{\tilde n}(W)
\mid
\mathcal{\tilde D}
]
+o_p(1)
=
\mathbb{V}[\rho^*(W)]+o_p(1), \quad\mathbb{V}[
\hat{\rho}_{\tilde n}(W)
\mid
\mathcal{\tilde D}
]=O_p(1)
\\
\text{(iii)}\quad&
\hat{c}
\bigl(
A,\hat{\rho}_{\tilde n}(W)
\bigr)
=
O_p(n^{-1/2})
\\
\text{(iv)}\quad&
\sqrt{n}\,
\hat{c}
\bigl(
A,\varepsilon_{\hat{\rho}_{\tilde n}}
\bigr)
=
\frac{1}{\sqrt{n}}
\sum_{i=1}^n
(A_i-\pi)
\varepsilon_{\hat{\rho}_{\tilde n},i}
+o_p(1),\quad
\frac{1}{\sqrt{n}}
\sum_{i=1}^n
(A_i-\pi)
\varepsilon_{\hat{\rho}_{\tilde n},i}
=
O_p(1)
\\
\text{(v)}\quad&
\hat{c}
\bigl(
\hat{\rho}_{\tilde n}(W),
\varepsilon_{\hat{\rho}_{\tilde n}}
\bigr)
=
o_p(1).
\end{aligned}
\] 
\end{lemma}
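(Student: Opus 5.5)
The proof works conditionally on $\tilde{\mathcal D}$ throughout. Since $\mathcal D\perp\tilde{\mathcal D}$, given $\tilde{\mathcal D}$ the function $\hat\rho_{\tilde n}$ is fixed. The trial observations $(W_i,A_i,Y_i)$ remain i.i.d. with their original law, so $\hat\rho_{\tilde n}(W_i)$ and $\varepsilon_{\hat\rho_{\tilde n},i}$ are conditionally i.i.d. Each claim will be reduced to a statement of the form ``conditional mean is zero (or is the target), conditional variance is $O_p(1)$.'' Chebyshev's inequality applied conditionally turns such a statement into an $o_p(1)$ or $O_p(n^{-1/2})$ bound. The bound then passes to the unconditional law because $P(|T_n|>\epsilon\mid\tilde{\mathcal D})\to 0$ in probability implies $P(|T_n|>\epsilon)\to0$ by bounded convergence. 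The basic ingredient is that $L_2$-consistency gives $\mathbb E[\hat\rho_{\tilde n}(W)^2\mid\tilde{\mathcal D}]\le 2\mathbb E[\rho^*(W)^2]+o_p(1)=O_p(1)$, and also $\mathbb V[\hat\rho_{\tilde n}(W)\mid\tilde{\mathcal D}]\to\mathbb V[\rho^*(W)]$. The latter follows by writing $\hat\rho_{\tilde n}=\rho^*+\delta$ with $\|\delta\|_{L_2}=o_p(1)$ and using Cauchy--Schwarz on the cross terms. This immediately gives the second and third parts of (ii).

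For (i) I apply the ordinary LLN to $A_i$ and $A_i^2=A_i$. For (ii) I write $\hat c(Z,Z)=n^{-1}\sum Z_i^2-\overline Z^2$ with $Z=\hat\rho_{\tilde n}(W)$ and use a conditional weak LLN for a triangular array. Only a second moment is available, so the conditional-Chebyshev route would need fourth moments. I would instead truncate, or use the decomposition $Z_i=\rho^*(W_i)+\delta(W_i)$. Then $n^{-1}\sum\rho^*(W_i)^2$ obeys the usual LLN. For the remainder, $n^{-1}\sum\delta(W_i)^2$ has conditional mean $\|\delta\|^2_{L_2}=o_p(1)$, so Markov's inequality makes it $o_p(1)$, and the cross term is handled by Cauchy--Schwarz. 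This decomposition trick is what I will reuse everywhere.

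For (iii) the key is $A\perp W$. I write $\hat c(A,Z)=n^{-1}\sum(A_i-\pi)(Z_i-\mu_Z)-(\overline A-\pi)(\overline Z-\mu_Z)$, where $\mu_Z=\mathbb E[Z\mid\tilde{\mathcal D}]$. The first summand has conditional mean zero by independence, and its conditional variance is $\pi(1-\pi)\mathbb V[Z\mid\tilde{\mathcal D}]/n=O_p(1/n)$. In the product term, each factor is $O_p(n^{-1/2})$ by the same variance computation.

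For (iv) and (v) I first characterize $\beta^*(\hat\rho_{\tilde n})$ exactly as in the proof of Theorem \ref{thm: asyvar PROCOVA}, with $\theta^\top W$ replaced by $\hat\rho_{\tilde n}(W)$. The population normal equations give $\mathbb E[\varepsilon_{\hat\rho_{\tilde n}}\mid\tilde{\mathcal D}]=0$, $\mathbb E[A\varepsilon_{\hat\rho_{\tilde n}}\mid\tilde{\mathcal D}]=0$ and $\mathbb E[\hat\rho_{\tilde n}(W)\varepsilon_{\hat\rho_{\tilde n}}\mid\tilde{\mathcal D}]=0$. The coefficients are $\beta_A^*=\mathbb E[Y\mid A=1]-\mathbb E[Y\mid A=0]$ and $\beta_1^*(\hat\rho_{\tilde n})=\mathrm{Cov}(\hat\rho_{\tilde n},Y\mid\tilde{\mathcal D})/\mathbb V[\hat\rho_{\tilde n}\mid\tilde{\mathcal D}]$, which is $O_p(1)$ because the denominator tends to $\mathbb V[\rho^*(W)]>0$. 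The intercept $\beta_0^*(\hat\rho_{\tilde n})$ is likewise $O_p(1)$. Hence $\mathbb E[\varepsilon_{\hat\rho_{\tilde n}}^2\mid\tilde{\mathcal D}]=O_p(1)$. For (iv) I expand $\hat c(A,\varepsilon)=n^{-1}\sum(A_i-\pi)\varepsilon_i-(\overline A-\pi)\overline\varepsilon$. The main term has conditional mean $\mathbb E[A\varepsilon\mid\tilde{\mathcal D}]-\pi\mathbb E[\varepsilon\mid\tilde{\mathcal D}]=0$ and conditional variance $O_p(1/n)$, so after multiplying by $\sqrt n$ it is $O_p(1)$. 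The product term is $O_p(n^{-1/2})\cdot O_p(n^{-1/2})$, so $\sqrt n$ times it is $o_p(1)$. For (v), $\hat c(Z,\varepsilon)=n^{-1}\sum(Z_i-\mu_Z)\varepsilon_i-(\overline Z-\mu_Z)\overline\varepsilon$, and the first term has conditional mean $\mathrm{Cov}(Z,\varepsilon\mid\tilde{\mathcal D})=0$.

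The main obstacle is the moment condition in this last step. The conditional variance of $(Z_i-\mu_Z)\varepsilon_i$ requires fourth moments, which are not assumed. I would again split $Z=\rho^*+\delta$ and $\varepsilon=Y-\beta_0^*-\beta_A^*A-\beta_1^*\rho^*-\beta_1^*\delta$. The products of fixed square-integrable variables ($Y,A,\rho^*(W)$) then obey the ordinary weak LLN, since their products are integrable by Cauchy--Schwarz. The $\delta$-terms are bounded by $(n^{-1}\sum\delta_i^2)^{1/2}(n^{-1}\sum V_i^2)^{1/2}=o_p(1)\cdot O_p(1)$. This gives $\hat c(Z,\varepsilon)=\mathrm{Cov}(Z,\varepsilon\mid\tilde{\mathcal D})+o_p(1)=o_p(1)$, and the convergent random coefficients $\beta^*(\hat\rho_{\tilde n})$ are carried through by Slutsky's lemma. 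The same device, if needed, justifies the $O_p(1)$ claim in (iv). There it suffices that the conditional second moment of $(A-\pi)\varepsilon$ is finite, and this holds because $|A-\pi|\le1$.
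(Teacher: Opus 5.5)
Your proposal is correct and follows essentially the same route as the paper. Both work conditionally on $\tilde{\mathcal D}$ with Markov/Chebyshev bounds, use the centering identity $\hat c(A,Z)=n^{-1}\sum_i(A_i-\pi)(Z_i-\mu_Z)-(\overline A-\pi)(\overline Z-\mu_Z)$ for (iii)--(iv), and split $\hat\rho_{\tilde n}=\rho^*+\delta$, applying the ordinary LLN to the fixed part and Cauchy--Schwarz to the $\delta$-part so that no fourth moments are needed. The only minor deviation is in (v): you expand $\varepsilon_{\hat\rho_{\tilde n}}$ directly with $O_p(1)$ coefficients and use the exact conditional normal equation $\mathbb{COV}[\hat\rho_{\tilde n}(W),\varepsilon_{\hat\rho_{\tilde n}}\mid\tilde{\mathcal D}]=0$, whereas the paper first shows $\beta^*(\hat\rho_{\tilde n})=\beta^*(\rho^*)+o_p(1)$ and compares with $\varepsilon_{\rho^*}$; the paper also gets $\mathbb E[\varepsilon^2_{\hat\rho_{\tilde n}}\mid\tilde{\mathcal D}]\le\mathbb E[Y^2]$ more cheaply from the minimization property than from bounded coefficients.
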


\begin{proof}[Proof of (i)]
Since \(A_i\in\{0,1\}\),
\[
\hat{c}(A,A)
=
\frac{1}{n}\sum_{i=1}^n(A_i-\overline{A})^2
=
\overline{A}(1-\overline{A}),
\]
where 
\[
\overline A
=
\frac{1}{n}\sum_{i=1}^n A_i.
\]
Since \(\overline{A}=\pi+o_p(1)\), it follows that
\begin{equation*}
\hat{c}(A,A)
=
\pi(1-\pi)+o_p(1).
\end{equation*}
\end{proof}

\begin{proof}[Proof of (ii)]
Define
\[
e_{\tilde n}(w)
=
\hat{\rho}_{\tilde n}(w)-\rho^*(w)
\]
and
\[
R_{\tilde n}
=
\mathbb{E}[
e_{\tilde n}(W)^2
\mid
\mathcal{\tilde D}
].
\]
By assumption $\mathbb{E}[\{\hat\rho_{\tilde n}(W)-\rho^*(W) \}^2\mid \mathcal{\tilde D}]=o_p(1)$,
\[
R_{\tilde n}=o_p(1).
\]
Let
\[
T_{n,\tilde n}
=
\frac{1}{n}
\sum_{i=1}^n
e_{\tilde n}(W_i)^2.
\]
Then,
\[
\mathbb{E}[
T_{n,\tilde n}
\mid
\mathcal{\tilde D}
]
=
\frac{1}{n}
\sum_{i=1}^n
\mathbb{E}[
e_{\tilde n}(W_i)^2
\mid
\mathcal{\tilde D}
]
=
R_{\tilde n}.
\]
For any constants \(\eta>0\) and \(\delta>0\),
\begin{align*}
\mathbb{P}(
T_{n,\tilde n}>\eta
)
&=
\mathbb{E}\left[
\mathbb{P}(
T_{n,\tilde n}>\eta
\mid
\mathcal{\tilde D}
)
\right]
\\
&\leq
\mathbb{P}(
R_{\tilde n}>\delta
)
+
\mathbb{E}\left[
\mathbb{I}(R_{\tilde n}\leq\delta)
\mathbb{P}(
T_{n,\tilde n}>\eta
\mid
\mathcal{\tilde D}
)
\right]
\\
&\leq
\mathbb{P}(
R_{\tilde n}>\delta
)
+
\mathbb{E}\left[
\mathbb{I}(R_{\tilde n}\leq\delta)
\frac{
\mathbb{E}[T_{n,\tilde n}\mathcal{\mid\tilde D}]
}{\eta}
\right]
\\
&=
\mathbb{P}(
R_{\tilde n}>\delta
)
+
\mathbb{E}\left[
\mathbb{I}(R_{\tilde n}\leq\delta)
\frac{R_{\tilde n}}{\eta}
\right]
\\
&\leq
\mathbb{P}(
R_{\tilde n}>\delta
)
+
\frac{\delta}{\eta}.
\end{align*}
where the second inequality follows from Markov's inequality.
Since \(R_{\tilde n}=o_p(1)\), for any $\delta>0$,
\[
\lim_{n,\tilde n\rightarrow \infty}\sup \mathbb{P}(
T_{n,\tilde n}>\eta
)\le\frac{\delta}{\eta}.
\]
Since $\delta>0$ is arbitrary, 
\[
\lim_{n,\tilde n\rightarrow \infty}\sup \mathbb{P}(
T_{n,\tilde n}>\eta
)\le \inf_{\delta>0}\frac{\delta}{\eta}=0.
\]
Thus, 
\begin{equation*}
T_{n,\tilde n}
=
o_p(1).
\end{equation*}
Let
\[
\overline e_{\tilde n}
=
\frac{1}{n}
\sum_{i=1}^n
e_{\tilde n}(W_i).
\]
Since 
\[
\left|
\overline e_{\tilde n}
\right|^2
\leq
T_{n,\tilde n}
=
o_p(1),
\]
the following holds:
\[
\overline e_{\tilde n}=o_p(1).
\]

We now consider the sample variance of
\(\hat{\rho}_{\tilde n}(W)\). 
\begin{align}
&
\hat{c}
\bigl(
\hat{\rho}_{\tilde n}(W),
\hat{\rho}_{\tilde n}(W)
\bigr)
\nonumber\\
&=
\hat{c}
\bigl(
\rho^*(W)+e_{\tilde n}(W),
\rho^*(W)+e_{\tilde n}(W)
\bigr)
\nonumber\\
&=
\hat{c}
\bigl(
\rho^*(W),\rho^*(W)
\bigr)
+
2\hat{c}
\bigl(
\rho^*(W),e_{\tilde n}(W)
\bigr)
+
\hat{c}
\bigl(
e_{\tilde n}(W),e_{\tilde n}(W)
\bigr).
\label{eq:thm3-sample-var-decomposition}
\end{align}
Since \(\mathbb{E}[\{\rho^*(W)\}^2]<\infty\), the law of large numbers gives
\[
\frac{1}{n}
\sum_{i=1}^n
\{\rho^*(W_i)\}^2
=
\mathbb{E}[\{\rho^*(W)\}^2]+o_p(1)
\]
and
\[
\frac{1}{n}
\sum_{i=1}^n
\rho^*(W_i)
=
\mathbb{E}[\rho^*(W)]+o_p(1).
\]
Therefore,
\begin{align}
\hat{c}
\bigl(
\rho^*(W),\rho^*(W)
\bigr)
&=
\frac{1}{n}
\sum_{i=1}^n
\{\rho^*(W_i)\}^2
-
\left\{
\frac{1}{n}
\sum_{i=1}^n
\rho^*(W_i)
\right\}^2
\nonumber\\
&=
\mathbb{E}[\{\rho^*(W)\}^2]
-
\mathbb{E}[\rho^*(W)]^2
+
o_p(1)
\nonumber\\
&=
\mathbb{V}[\rho^*(W)]+o_p(1).
\label{eq:thm3-rhostar-sample-var}
\end{align}
Furthermore,
\begin{equation}
\hat{c}
\bigl(
e_{\tilde n}(W),
e_{\tilde n}(W)
\bigr)\leq
T_{n,\tilde n}=
o_p(1).
\label{eq:thm3-e-sample-var}
\end{equation}
By Cauchy--Schwarz inequality,
\begin{align}
&
\left|
\hat{c}
\bigl(
\rho^*(W),e_{\tilde n}(W)
\bigr)
\right|
\nonumber\\
&\leq
\left\{
\hat{c}
\bigl(
\rho^*(W),\rho^*(W)
\bigr)
\right\}^{1/2}
\left\{
\hat{c}
\bigl(
e_{\tilde n}(W),e_{\tilde n}(W)
\bigr)
\right\}^{1/2}
\nonumber\\
&=
O_p(1)o_p(1)
=
o_p(1).
\label{eq:thm3-rho-e-sample-var}
\end{align}
Substituting \eqref{eq:thm3-rhostar-sample-var}--\eqref{eq:thm3-rho-e-sample-var} into
\eqref{eq:thm3-sample-var-decomposition} yields
\begin{equation}
\hat{c}
\bigl(
\hat{\rho}_{\tilde n}(W),
\hat{\rho}_{\tilde n}(W)
\bigr)
=
\mathbb{V}[\rho^*(W)]+o_p(1).
\label{eq:thm3-rhohat-sample-var}
\end{equation}

Next, we examine the conditional population variance of ${\hat \rho}_{\tilde n}(W)$.
Since \(\mathcal{D}\perp\mathcal{\tilde D}\) and \(\rho^*(W)\) does not depend on
\(\mathcal{\tilde D}\),
\[
\mathbb{E}[
\rho^*(W)
\mid
\mathcal{\tilde D}
]
=
\mathbb{E}[\rho^*(W)]
\]
and
\[
\mathbb{V}[
\rho^*(W)
\mid
\tilde D
]
=
\mathbb{V}[\rho^*(W)].
\]
By Cauchy--Schwarz inequality,
\[
\left|
\mathbb{E}[
e_{\tilde n}(W)
\mid
\tilde D
]
\right|^2
\leq
R_{\tilde n}
=
o_p(1).
\]
Furthermore, 
\begin{align*}
\mathbb{V}[
e_{\tilde n}(W)
\mid
\tilde D
]
&\leq
R_{\widetilde n}
=
o_p(1).
\end{align*}
By Cauchy--Schwarz inequality,
\begin{align*}
&
\left|
\mathbb{COV}
\bigl(
\rho^*(W),e_{\tilde n}(W)
\mid
\mathcal{\tilde D}
\bigr)
\right|
\\
&\leq
\left\{
\mathbb{V}[
\rho^*(W)
\mid
\mathcal{\tilde D}
]
\right\}^{1/2}
\left\{
\mathbb{V}[
e_{\tilde n}(W)
\mid
\mathcal{\tilde D}
]
\right\}^{1/2}
\\
&=
\left\{
\mathbb{V}[\rho^*(W)]
\right\}^{1/2}
o_p(1)
\\
&=
o_p(1).
\end{align*}
Consequently,
\begin{align}
\mathbb{V}[
\hat{\rho}_{\tilde n}(W)
\mid
\mathcal{\tilde D}
]
&=
\mathbb{V}[
\rho^*(W)+e_{\tilde n}(W)
\mid
\mathcal{\tilde D}
]
\nonumber\\
&=
\mathbb{V}[\rho^*(W)]
+
2\mathbb{COV}
\bigl(
\rho^*(W),e_{\tilde n}(W)
\mid
\mathcal{\tilde D}
\bigr)+
\mathbb{V}[
e_{\tilde n}(W)
\mid
\mathcal{\tilde D}
]
\nonumber\\
&=
\mathbb{V}[\rho^*(W)]+o_p(1).
\label{eq:thm3-rhohat-conditional-var}
\end{align}

Combining
\eqref{eq:thm3-rhohat-sample-var} and
\eqref{eq:thm3-rhohat-conditional-var}, we obtain
\begin{equation*}
\hat{c}
\bigl(
\hat{\rho}_{\tilde n}(W),
\hat{\rho}_{\tilde n}(W)
\bigr)
=
\mathbb{V}[
\hat{\rho}_{\tilde n}(W)
\mid
\mathcal{\tilde D}
]
+o_p(1)
=
V[\rho^*(W)]+o_p(1).
\end{equation*}
Thus, 
\begin{equation*}
    \mathbb{V}[
\hat{\rho}_{\tilde n}(W)
\mid
\mathcal{\tilde D}
]=O_p(1).
\end{equation*}
\end{proof}

\begin{proof}[Proof of (iii)]
Define
\[
m_{\tilde n}
=
\mathbb{E}[\hat{\rho}_{\tilde n}(W)\mid\mathcal{\tilde D}],
\]
and
\[
\overline{\hat{\rho}}_{\tilde n}
=
\frac{1}{n}\sum_{i=1}^n
\hat{\rho}_{\tilde n}(W_i).
\]
We have
\begin{align}
\sqrt{n}\,
\hat{c}
\bigl(
A,\hat{\rho}_{\tilde n}(W)
\bigr)
={}&
\frac{1}{\sqrt{n}}
\sum_{i=1}^n
(A_i-\pi)
\{
\hat{\rho}_{\tilde n}(W_i)-m_{\tilde n}
\}
-
\sqrt{n}
(\overline{A}-\pi)
(
\overline{\widehat{\rho}}_{\widetilde n}-m_{\widetilde n}
).
\label{eq:thm3-cArho-decomposition}
\end{align}
Since $A\perp W$ and ${\mathcal{D}}\perp\tilde{\mathcal{D}}$, conditional on \(\tilde{\mathcal D}\), the first term on the right-hand side in equation \eqref{eq:thm3-cArho-decomposition} has mean
\begin{equation*}
 \mathbb{E}[(A-\pi)(\hat \rho_{\tilde n}(W)-m_{\tilde n})
  \mid\tilde{\mathcal D}]=0 
\end{equation*}
and variance
\begin{equation*}
\begin{split}
    \mathbb{V}[(A-\pi)(\hat \rho_{\tilde n}(W)-m_{\tilde n})
  \mid\tilde{\mathcal D}]
  &=\mathbb{E}[(A-\pi)^2(\hat \rho_{\tilde n}(W)-m_{\tilde n})^2
  \mid\tilde{\mathcal D}]\\
  &=\mathbb{E}[(A-\pi)^2]\mathbb{E}[(\hat \rho_{\tilde n}(W)-m_{\tilde n})^2
  \mid\tilde{\mathcal D}]\\
  &=\pi(1-\pi)\mathbb{V}[\hat \rho_{\tilde n}(W)\mid \tilde{\mathcal{D}}]\\
  &=O_p(1),
\end{split}
\end{equation*}
where the last equality follows from (ii).
By Chebyshev's inequality,
\[
\frac{1}{\sqrt{n}}
\sum_{i=1}^n
(A_i-\pi)
\{
\hat{\rho}_{\tilde n}(W_i)-m_{\tilde n}
\}
=
O_p(1).
\]

Furthermore, since $\overline A-\pi=O_p(n^{-1/2})$ and $\overline {\hat \rho_{\tilde n}}-m_{\tilde n}=O_p(n^{-1/2})$, the second term on the right-hand side in equation \eqref{eq:thm3-cArho-decomposition} is 
\begin{equation*}
\sqrt n(\overline A-\pi)(\overline {\hat \rho_{\tilde n}}-m_{\tilde n})
=
o_p(1).
\end{equation*}
Thus,
\begin{equation*}
\sqrt n\hat  c({A,\hat \rho_{\tilde n}(W)})
=
O_p(1),
\end{equation*}
equivalently,
\begin{equation*}
\hat  c({A,\hat \rho_{\tilde n}(W)})
=
O_p(n^{-1/2}).
\end{equation*}
\end{proof}

\begin{proof}[Proof of (iv)]
Recall that the population OLS coefficient given $\mathcal{\tilde D}$, $\beta^*(\hat \rho_{\tilde n})=(\beta^*_0(\hat \rho_{\tilde n}), \beta_A^*(\hat \rho_{\tilde n}), \beta^*_1(\hat \rho_{\tilde n}))^\top$, is defined by
\[
\beta^*(\hat \rho_{\tilde n})
=
\operatorname*{argmin}_{\beta\in\mathbb R^3}
\mathbb{E}\left[
\left(Y-\beta^\top X_{\hat \rho_{\tilde n}}\right)^2
\mid\tilde{\mathcal D}
\right],
\]
where $X_{\hat \rho_{\tilde n}}=(1,A,\hat \rho_{\tilde n}(W))^\top$.
Differentiating the objective function with respect to \(\beta\)
and setting the derivative equal to zero yields
\[
\mathbb{E}[X_{\hat \rho_{\tilde n}}\varepsilon_{\hat \rho_{\tilde n}}
\mid\tilde{\mathcal D}
]=0,
\]
where $\varepsilon_{\hat \rho_{\tilde n}}=Y-\beta^*(\hat \rho_{\tilde n})^\top X_{\hat \rho_{\tilde n}}$.
That is, 
\begin{equation}
\label{eq:OLS residual}
\mathbb{E}[\varepsilon_{\hat \rho_{\tilde n}}
\mid\tilde{\mathcal D}]=0,
\qquad
\mathbb{E}[A\varepsilon_{\hat \rho_{\tilde n}}
\mid\tilde{\mathcal D}]=0,
\qquad
E[\hat \rho_{\tilde n}(W)\varepsilon_{\hat \rho_{\tilde n}}\mid\tilde{\mathcal D}]=0.
\end{equation}

Since
\(\sum_{i=1}^n(A_i-\overline A)=0\),
\begin{equation*}
\sqrt n\hat c({A,\varepsilon_{\hat \rho_{\tilde n}}})
=
\frac{1}{\sqrt n}
\sum_{i=1}^n
(A_i-\overline A)\varepsilon_{\hat \rho_{\tilde n},i}
=
\frac{1}{\sqrt n}
\sum_{i=1}^n
(A_i-\pi)\varepsilon_{\hat \rho_{\tilde n},i}
-
\sqrt n(\overline A-\pi)\overline{\varepsilon_{\hat \rho_{\tilde n}}},
\end{equation*}
where
\[
\overline{\varepsilon_{\hat \rho_{\tilde n}}}
=
\frac{1}{n}\sum_{i=1}^n\varepsilon_{\hat \rho_{\tilde n},i}.
\]
By equation \eqref{eq:OLS residual}, $\mathbb{E}[\varepsilon_{\hat \rho_{\tilde n}}\mid\tilde{\mathcal D}]=0$.
Furthermore, $\mathbb{E}[\varepsilon_{\hat \rho_{\tilde n}}^2\mid\tilde{\mathcal D}]=O_p(1)$ since
\[
\mathbb{E}[\varepsilon_{\hat \rho_{\tilde n}}^2\mid\tilde{\mathcal D}]
=
\min_{\beta\in\mathbb R^3}
\mathbb{E}[
\left(Y-\beta^\top X_{\hat \rho_{\tilde n}}\right)^2
\mid\tilde{\mathcal D}
]
\leq
\mathbb{E}[Y^2\mid\tilde{\mathcal D}]
=
\mathbb{E}[Y^2]
<\infty.
\]
By Chebyshev's inequality, $\overline{\varepsilon_{\hat \rho_{\tilde n}}}=O_p(n^{-1/2})$.
Since \(\overline A-\pi=O_p(n^{-1/2})\),
\[
\sqrt n
(\overline A-\pi)\overline{\varepsilon_{\hat \rho_{\tilde n}}}
=
o_p(1).
\]
By equation \eqref{eq:OLS residual}, $\mathbb{E}[(A-\pi)\varepsilon_{\hat \rho_{\tilde n}}\mid\tilde{\mathcal D}]=0$.
Furthermore, 
\[
\mathbb{V}[(A-\pi)\varepsilon_{\hat \rho_{\tilde n}}
  \mid\tilde{\mathcal D}]=\mathbb{E}[(A-\pi)^2\varepsilon_{\hat \rho_{\tilde n}}^2
  \mid\tilde{\mathcal D}]
\leq
\mathbb{E}[\varepsilon_{\hat \rho_{\tilde n}}^2
  \mid\tilde{\mathcal D}]<\infty.
\]
Thus, 
\[
\frac{1}{\sqrt n}
\sum_{i=1}^n
(A_i-\pi)\varepsilon_{\hat \rho_{\tilde n},i}=O_p(1).
\]
Consequently,
\begin{equation*}
\sqrt n\hat c({A,\varepsilon_{\hat \rho_{\tilde n}}})
=
\frac{1}{\sqrt n}
\sum_{i=1}^n
(A_i-\pi)\varepsilon_{\hat \rho_{\tilde n},i}
+
o_p(1).
\end{equation*}
\end{proof}

\begin{proof}[Proof of (v)]
Let
\[
X_{\rho^*}
=
(1,A,\rho^*(W))^\top,
\qquad
\varepsilon_{\rho^*}
=
Y-\beta^*(\rho^*)^\top X_{\rho^*}.
\]
Define
\[
M_{\tilde n}
=
\mathbb{E}[
X_{\hat{\rho}_{\tilde n}}
X_{\hat{\rho}_{\tilde n}}^\top
\mid\tilde D
],
\qquad
r_{\tilde n}
=
\mathbb{E}[
X_{\hat{\rho}_{\tilde n}}Y
\mid\tilde D
],
\]
and
\[
M_*
=
E[X_{\rho^*}X_{\rho^*}^\top],
\qquad
r_*
=
E[X_{\rho^*}Y].
\]
By assumptions $\mathbb{E}[Y^2]<\infty$ and $\mathbb{E}[\{\hat\rho_{\tilde n}(W)-\rho^*(W) \}^2\mid \mathcal{\tilde D}]=o_p(1)$ and Cauchy--Schwarz inequality,
\[
M_{\widetilde n}
=
M_*+o_p(1),
\qquad
r_{\widetilde n}
=
r_*+o_p(1).
\]
Since \(A\perp W\),
\[
|M_*|
=
\pi(1-\pi)\mathbb{V}[\rho^*(W)]
>
0.
\]
Thus, 
\[
\beta^*(\hat{\rho}_{\tilde n})
=
M_{\tilde n}^{-1}r_{\tilde n}
=
\beta^*(\rho^*)+o_p(1).
\]
Consequently,
\[
\mathbb{E}[
\{
\varepsilon_{\hat{\rho}_{\tilde n}}
-
\varepsilon_{\rho^*}
\}^2
\mid\mathcal{\tilde D}
]
=
o_p(1).
\]
Then, as with the proof of (ii), Markov's inequality gives
\[
\frac{1}{n}\sum_{i=1}^n
\{
\varepsilon_{\hat{\rho}_{\tilde n},i}
-
\varepsilon_{\rho^*,i}
\}^2
=
o_p(1).
\]
Since \(\mathbb{E}[\varepsilon_{\hat{\rho}_{\tilde n}}^2\mid \mathcal{\tilde D}]\leq\mathbb{E}[Y^2]<\infty\),
\[
\frac{1}{n}\sum_{i=1}^n
\varepsilon_{\hat{\rho}_{\tilde n},i}^2
=
O_p(1).
\]

Let
\[
\overline{\rho^*}
=
\frac{1}{n}\sum_{i=1}^n\rho^*(W_i).
\]
Then,
\begin{align*}
&\hat{c}
\bigl(
\hat{\rho}_{\tilde n}(W),
\varepsilon_{\hat{\rho}_{\tilde n}}
\bigr)\\
&=
\hat{c}
\bigl(
\rho^*(W),\varepsilon_{\rho^*}
\bigr)
+
\frac{1}{n}\sum_{i=1}^n
\{
e_{\tilde n}(W_i)-\overline e_{\tilde n}
\}
\varepsilon_{\hat{\rho}_{\tilde n},i}
+
\frac{1}{n}\sum_{i=1}^n
\{
\rho^*(W_i)-\overline{\rho^*}
\}
\{
\varepsilon_{\hat{\rho}_{\tilde n},i}
-
\varepsilon_{\rho^*,i}
\}.
\end{align*}
The second and third terms on the right-hand side are \(o_p(1)\)
by the Cauchy--Schwarz inequality.
By the law of large numbers and the population
normal equations for \(\beta^*(\rho^*)\),
\[
\hat{c}
\bigl(
\rho^*(W),\varepsilon_{\rho^*}
\bigr)
\xrightarrow{p}
\mathbb{COV}
\bigl(
\rho^*(W),\varepsilon_{\rho^*}
\bigr)
=
0.
\]
Therefore,
\begin{equation*}
\hat{c}
\bigl(
\hat{\rho}_{\tilde n}(W),
\varepsilon_{\hat{\rho}_{\tilde n}}
\bigr)
=
o_p(1).
\end{equation*}    
\end{proof}

\subsection{Proof of Theorem \ref{thm: asyvar PROCOVA general}}
\label{appendix: thm 5}
\begin{proof}
The estimating function for PROCOVA can be expressed as follows:
\begin{equation*}
\begin{split}
    \psi(O;\beta,\rho)&=(Y-\beta^\top X_{\rho})X_\rho=\begin{pmatrix}
        Y-\beta^\top X_{\rho}\\
       (Y-\beta^\top X_{\rho})A\\
        (Y-\beta^\top X_{\rho})\rho(W)
    \end{pmatrix}=\begin{pmatrix}
        Y-\beta_0-\beta_AA-\beta_1\rho(W)\\
        \{Y-\beta_0-\beta_AA-\beta_1\rho(W)\}A\\
        \{Y-\beta_0-\beta_AA-\beta_1\rho(W)\}\rho(W)
    \end{pmatrix}.
\end{split}
\end{equation*}
The OLS estimator, $\hat\beta_n(\hat \rho_{\tilde n})=(\hat\beta_{0,n}(\hat \rho_{\tilde n}),\hat\beta_{A,n}(\hat \rho_{\tilde n}), \hat\beta_{1,n}(\hat \rho_{\tilde n}))^\top$, is defined by
\[
\hat\beta_n(\hat \rho_{\tilde n})
=
\operatorname*{argmin}_{\beta\in\mathbb R^3}
\frac{1}{n}
\sum_{i=1}^n
\left(Y_i-\beta^\top X_{\hat \rho_{\tilde n},i}\right)^2,
\]
where $X_{\hat \rho_{\tilde n},i}=(1, A_i, \hat \rho_{\tilde n}(W_i))^\top$.
Differentiating the objective function with respect to \(\beta\) and setting the derivative equal to zero yields
\begin{equation}
\label{eq:sample-normal-equations}
\frac{1}{n}
\sum_{i=1}^n
X_{\hat \rho_{\tilde n},i}
\left\{
Y_i-\hat\beta_n(\hat \rho_{\tilde n})^\top X_{\hat \rho_{\tilde n},i}
\right\}
=0.
\end{equation}
We denote $\hat\beta_n(\hat \rho_{\tilde n})-\beta^*(\hat \rho_{\tilde n})=(\delta_{0,n}(\hat \rho_{\tilde n}), \delta_{A,n}(\hat \rho_{\tilde n}), \delta_{1,n}(\hat \rho_{\tilde n}))^\top$.
Since $\varepsilon_{\hat \rho_{\tilde n},i}=Y_i-\beta^*(\hat \rho_{\tilde n})^\top X_{\hat \rho_{\tilde n},i}$, the OLS residual can be written as
\begin{equation*}
\begin{split}
    Y_i-\hat\beta_n(\hat \rho_{\tilde n})^\top X_{\hat \rho_{\tilde n},i}
&=\{Y_i-\beta^*(\hat \rho_{\tilde n})^\top X_{\hat \rho_{\tilde n},i}\}-\{\hat\beta_n(\hat \rho_{\tilde n})-\beta^*(\hat \rho_{\tilde n})\}^\top X_{\hat \rho_{\tilde n},i} \\
&=
\varepsilon_{\hat \rho_{\tilde n},i}
-\delta_{0,n}(\hat \rho_{\tilde n})
-\delta_{A,n}(\hat \rho_{\tilde n})A_i
-\delta_{1,n}(\hat \rho_{\tilde n})\hat \rho_{\tilde n}(W_i).
\end{split}    
\end{equation*}
Thus, equation \eqref{eq:sample-normal-equations} can be written as 
\[
\frac{1}{n}
\sum_{i=1}^n
X_{\hat \rho_{\tilde n},i}
\left\{
\varepsilon_{\hat \rho_{\tilde n},i}
-\delta_{0,n}(\hat \rho_{\tilde n})
-\delta_{A,n}(\hat \rho_{\tilde n})A_i
-\delta_{1,n}(\hat \rho_{\tilde n})\hat \rho_{\tilde n}(W_i)
\right\}
=0.
\]
Equivalently,
\begin{equation}
\label{eq:intercept-normal-equation0}
    \frac{1}{n}
\sum_{i=1}^n
\left\{
\varepsilon_{\hat \rho_{\tilde n},i}
-\delta_{0,n}(\hat \rho_{\tilde n})
-\delta_{A,n}(\hat \rho_{\tilde n})A_i
-\delta_{1,n}(\hat \rho_{\tilde n})\hat \rho_{\tilde n}(W_i)
\right\}
=0,    
\end{equation}
\begin{equation}
\label{eq:treatment-normal-equation0}
\frac{1}{n}
\sum_{i=1}^n
A_{i}
\left\{
\varepsilon_{\hat \rho_{\tilde n},i}
-\delta_{0,n}(\hat \rho_{\tilde n})
-\delta_{A,n}(\hat \rho_{\tilde n})A_i
-\delta_{1,n}(\hat \rho_{\tilde n})\hat \rho_{\tilde n}(W_i)
\right\}
=0,
\end{equation}
\begin{equation}
\label{eq:prog-normal-equation0}
\frac{1}{n}
\sum_{i=1}^n
\hat \rho_{\tilde n}(W_i)
\left\{
\varepsilon_{\hat \rho_{\tilde n},i}
-\delta_{0,n}(\hat \rho_{\tilde n})
-\delta_{A,n}(\hat \rho_{\tilde n})A_i
-\delta_{1,n}(\hat \rho_{\tilde n})\hat \rho_{\tilde n}(W_i)
\right\}
=0.
\end{equation}
By equation \eqref{eq:intercept-normal-equation0},
\begin{equation}
\label{eq:intercept-normal-equation}
\delta_{0,n}(\hat \rho_{\tilde n})
=
\overline{\varepsilon_{\hat \rho_{\tilde n}}}
-\delta_{A,n}(\hat \rho_{\tilde n})\overline A
-\delta_{1,n}(\hat \rho_{\tilde n})\overline {\hat \rho_{\tilde n}}.
\end{equation}
Substituting \eqref{eq:intercept-normal-equation} into equations \eqref{eq:treatment-normal-equation0} and \eqref{eq:prog-normal-equation0} gives
\[
\hat c(A,A)\delta_{A,n}(\hat \rho_{\tilde n})
+
\hat c({A,\hat \rho_{\tilde n}(W)})\delta_{1,n}(\hat \rho_{\tilde n})
=
\hat c({A,\varepsilon_{\hat \rho_{\tilde n}}}),
\]
\[
\hat c(A,\hat \rho_{\tilde n}(W))\delta_{A,n}(\hat \rho_{\tilde n})
+
\hat c({\hat \rho_{\tilde n}(W),\hat \rho_{\tilde n}(W)})\delta_{1,n}(\hat \rho_{\tilde n})
=
\hat c({\hat \rho_{\tilde n}(W),\varepsilon_{\hat \rho_{\tilde n}}}).
\]
If the sample design matrix is full column rank, 
\[
\hat c(A,A)\hat c({\hat \rho_{\tilde n}(W),\hat \rho_{\tilde n}(W)})
-\hat  c({A,\hat \rho_{\tilde n}(W)})^2
\neq 0,
\]
and then, solving the above equations for
\(\delta_{A,n}(\hat \rho_{\tilde n})\) yields 
\begin{equation}
\label{eq:exact-treatment-error}
\begin{split}
    &\sqrt n\{\hat\beta_{A,n}(\hat \rho_{\tilde n})-\beta_A^*(\hat \rho_{\tilde n})\}\\
&=
\frac{
\hat c({\hat \rho_{\tilde n}(W),\hat \rho_{\tilde n}(W)})\sqrt n\hat c({A,\varepsilon_{\hat \rho_{\tilde n}}})
-
\sqrt n\hat  c({A,\hat \rho_{\tilde n}(W)})\hat c({\hat \rho_{\tilde n}(W),\varepsilon_{\hat \rho_{\tilde n}}})
}{
\hat c(A,A)\hat c({\hat \rho_{\tilde n}(W),\hat \rho_{\tilde n}(W)})
-\hat  c({A,\hat \rho_{\tilde n}(W)})^2
}.
\end{split}
\end{equation}
By Lemma \ref{lemma: for thm3} (i)--(v), equation \eqref{eq:exact-treatment-error} reduces to
\begin{equation}
\label{eq: thm3 step1}
    \sqrt n\{\hat\beta_{A,n}(\hat \rho_{\tilde n})-\beta_A^*(\hat \rho_{\tilde n})\}=\frac{1}{\pi(1-\pi)}\frac{1}{\sqrt n}
\sum_{i=1}^n
(A_i-\pi)\varepsilon_{\hat \rho_{\tilde n},i}+o_p(1).
\end{equation}

Since 
\begin{equation}
\label{eq: beta star est}
\begin{split}
      \beta^*(\hat{\rho}_{\tilde n})&=\begin{pmatrix}
  \frac{1}{1-\pi} + \frac{\mathbb{E}[\hat{\rho}_{\tilde n}(W)\mid\mathcal{\tilde D}]^2}{\mathbb{V}[\hat{\rho}_{\tilde n}(W)\mid\mathcal{\tilde D}]} & -\frac{1}{1-\pi} & -\frac{\mathbb{E}[\hat{\rho}_{\tilde n}(W)\mid\mathcal{\tilde D}]}{\mathbb{V}[\hat{\rho}_{\tilde n}(W)\mid\mathcal{\tilde D}]} \\
   -\frac{1}{1-\pi} &  \frac{1}{\pi(1-\pi)} & 0 \\
  -\frac{\mathbb{E}[\hat{\rho}_{\tilde n}(W)\mid\mathcal{\tilde D}]}{\mathbb{V}[\hat{\rho}_{\tilde n}(W)\mid\mathcal{\tilde D}]} & 0 & \frac{1}{\mathbb{V}[\hat{\rho}_{\tilde n}(W)\mid\mathcal{\tilde D}]} 
\end{pmatrix}\begin{pmatrix}
        \mathbb{E}[Y]\\
        \mathbb{E}[AY]\\
        \mathbb{E}[\hat{\rho}_{\tilde n}(W)Y\mid\mathcal{\tilde D}]
    \end{pmatrix}\\
    &=\begin{pmatrix}
        \mathbb{E}[Y\mid A=0]-\frac{\mathbb{E}[\hat{\rho}_{\tilde n}(W)\mid\mathcal{\tilde D}]\mathbb{COV}[\hat{\rho}_{\tilde n}(W),Y\mid\mathcal{\tilde D}]}{\mathbb{V}[\hat{\rho}_{\tilde n}(W)\mid\mathcal{\tilde D}]}\\
        \mathbb{E}[Y\mid A=1]-\mathbb{E}[Y\mid A=0]\\
        \frac{\mathbb{COV}[\hat{\rho}_{\tilde n}(W),Y\mid\mathcal{\tilde D}]}{\mathbb{V}[\hat{\rho}_{\tilde n}(W)\mid\mathcal{\tilde D}]}
    \end{pmatrix},
\end{split} 
\end{equation}
the following equation holds:
\begin{equation}
\label{eq: thm3 step2}
\begin{split}
    \frac{1}{\pi(1-\pi)}\frac{1}{\sqrt n}
\sum_{i=1}^n
(A_i-\pi)\varepsilon_{\hat \rho_{\tilde n},i}=T_n-b(\hat\rho_{\tilde n})S_n(\hat\rho_{\tilde n}),
\end{split}
\end{equation}
where 
\begin{equation*}
    T_n=\frac{1}{\pi(1-\pi)}\frac{1}{\sqrt n}\sum_{i=1}^n  (A_i-\pi)\left\{Y_i - A_i\mathbb{E}[Y\mid A=1] - (1-A_i)\mathbb{E}[Y\mid A=0]\right\},
\end{equation*}
\begin{equation*}
    b(\hat\rho_{\tilde n})=\frac{\mathbb{COV}[\hat{\rho}_{\tilde n}(W),Y\mid\mathcal{\tilde D}]}{\mathbb{V}[\hat{\rho}_{\tilde n}(W)\mid\mathcal{\tilde D}]},
\end{equation*}
and
\begin{equation*}
    S_n(\hat\rho_{\tilde n})=\frac{1}{\pi(1-\pi)}\frac{1}{\sqrt n}\sum_{i=1}^n  (A_i-\pi)\{\hat{\rho}_{\tilde n}(W_i)-\mathbb{E}[\hat{\rho}_{\tilde n}(W)\mid\mathcal{\tilde D}]\}.
\end{equation*}
Additionally, by the equation \eqref{eq: beta star est}, 
\begin{equation}
\label{eq: beta star est equal}
    \beta_A^*(\hat{\rho}_{\tilde n})=\beta_A^*,
\end{equation}
where $\beta_A^*=\mathbb{E}[Y\mid A=1]-\mathbb{E}[Y\mid A=0]$.
By equations \eqref{eq: thm3 step1}, \eqref{eq: thm3 step2} and \eqref{eq: beta star est equal},
\begin{equation}
\label{eq: thm3 target2}
\begin{split}
    \sqrt n\{\hat\beta_{A,n}(\hat \rho_{\tilde n})-\beta_A^*\}=T_n-b(\hat\rho_{\tilde n})S_n(\hat\rho_{\tilde n})+ o_p(1).
\end{split}
\end{equation}
Note that 
\begin{equation}
\label{eq: thm3 asydist star}
\begin{split}
    \sqrt n\{\hat\beta_{A,n}(\rho^*)-\beta_A^*\}=T_n-b(\rho^*)S_n(\rho^*)+ o_p(1),
\end{split}
\end{equation}
where 
\begin{equation*}
    b({\rho}^*)=\frac{\mathbb{COV}[{\rho}^*(W),Y]}{\mathbb{V}[{\rho}^*(W)]},
\end{equation*}
and
\begin{equation*}
    S_n({\rho}^*)=\frac{1}{\pi(1-\pi)}\frac{1}{\sqrt n}\sum_{i=1}^n  (A_i-\pi)\{{\rho}^*(W_i)-\mathbb{E}[{\rho}^*(W)]\}.
\end{equation*}

We now investigate the asymptotic behavior of $S_n(\hat{\rho}_{\tilde n})$ and 
\begin{equation*}
    S_n(\hat{\rho}_{\tilde n})-S_n({\rho}^*)=\frac{1}{\pi(1-\pi)\sqrt n}\sum_{i=1}^n (A_i-\pi)
\Bigl\{e_{\tilde n}(W_i)-\mathbb{E}[e_{\tilde n}(W)\mid \tilde D]\Bigr\}.
\end{equation*}
Since $A\perp W$ and $\mathcal D \perp  \mathcal{\tilde D}$,
\begin{equation*}
    \mathbb{E}[S_n(\hat{\rho}_{\tilde n})-S_n({\rho}^*)\mid \mathcal{\tilde D}]=\frac{\sqrt{n}\mathbb{E}[A-\pi](\mathbb{E}[e_{\tilde n}(W)\mid \tilde D]-\mathbb{E}[e_{\tilde n}(W)\mid \tilde D])}{\pi(1-\pi)}=0,
\end{equation*}
and 
\begin{align*}
\mathbb{V}[ S_n(\hat{\rho}_{\tilde n})-S_n({\rho}^*)\mid\mathcal{\tilde D}]
&=
\frac{1}{\pi^2(1-\pi)^2}
\mathbb{V}\!\left[(A-\pi)(e_{\tilde n}(W)-\mathbb{E}[e_{\tilde n}(W)\mid \mathcal{\tilde D}])\mid \tilde D\right]\\
&=
\frac{1}{\pi^2(1-\pi)^2}
\mathbb{E}[(A-\pi)^2]\,
\mathbb{V}[e_{\tilde n}(W)\mid \mathcal{\tilde D}]\\
&=
\frac{1}{\pi(1-\pi)}\mathbb{V}[e_{\tilde n}(W)\mid \mathcal{\tilde D}]\\
&\;\le\;
\frac{1}{\pi(1-\pi)}\mathbb{E}[\{e_{\tilde n}(W)\}^2\mid \mathcal{\tilde D}].
\end{align*}
The assumption $\mathbb{E}[\{\hat\rho_{\tilde n}(W)-\rho^*(W) \}^2\mid \mathcal{\tilde D}]=o_p(1)$ implies $\mathbb{E}[\{e_{\tilde n}(W)\}^2\mid \mathcal{\tilde D}]=o_p(1)$.
Therefore, by Chebyshev's inequality,
\begin{equation}
\label{eq: Sdiff op(1)}
    S_n(\hat{\rho}_{\tilde n})-S_n({\rho}^*)= o_p(1).
\end{equation}
Moreover, by $S_n({\rho}^*)=O_p(1)$,
\begin{equation}
\label{eq: Shat Op(1)}
    S_n(\hat{\rho}_{\tilde n})=O_p(1).
\end{equation}

We then investigate the asymptotic behavior of $b(\hat{\rho}_{\tilde n})-b(\rho^*)$.
We decompose the numerator of $b(\hat{\rho}_{\tilde n})$:
\[
\mathbb{COV}[\hat\rho_{\tilde n}(W),Y\mid \mathcal{\tilde D}]
=
\mathbb{COV}[\rho^*(W),Y]
+
\mathbb{COV}[e_{\tilde n}(W),Y\mid \mathcal{\tilde D}].
\]
By Cauchy--Schwarz inequality, $\mathbb{E}[Y^2]<\infty$ and $\mathbb{E}[\{\hat\rho_{\tilde n}(W)-\rho^*(W) \}^2\mid \mathcal{\tilde D}]=o_p(1)$,
\[
\bigl|
\mathbb{COV}[e_{\tilde n}(W),Y\mid \mathcal{\tilde D}]\bigr|
\le
\sqrt{\mathbb{E}[\{e_{\tilde n}(W)\}^2\mid \tilde D]}\,\sqrt{\mathbb E[Y^2]}
=
o_p(1).
\]
Hence,
\begin{equation}
\label{eq: b numdiff op(1)}
\mathbb{COV}[\hat\rho_{\tilde n}(W),Y\mid \mathcal{\tilde D}]
=
\mathbb{COV}[\rho^*(W),Y]
+
o_p(1).    
\end{equation}
By equation \eqref{eq: b numdiff op(1)} and Lemma \ref{lemma: for thm3} (ii),
\begin{equation}
\label{eq: bdiff op(1)}
    b(\hat\rho_{\tilde n})
    =
    \frac{\mathbb{COV}[\rho^*(W),Y]+o_p(1)}{\mathbb{V}[\rho^*(W)]+o_p(1)}
    =
    b(\rho^*)+o_p(1).
\end{equation}

Finally, we decompose
\[
b(\hat\rho_{\tilde n})S_n(\hat\rho_{\tilde n})-b(\rho^*)S_n(\rho^*)
=
\{b(\hat\rho_{\tilde n})-b(\rho^*)\}S_n(\hat\rho_{\tilde n}) + b(\rho^*)\{S_n(\hat\rho_{\tilde n})-S_n(\rho^*)\}.
\]
By equations \eqref{eq: Sdiff op(1)}, \eqref{eq: Shat Op(1)} and \eqref{eq: bdiff op(1)},
\begin{equation*}
    b(\hat\rho_{\tilde n})S_n(\hat\rho_{\tilde n})-b(\rho^*)S_n(\rho^*)=o_p(1).
\end{equation*}
Thus, by equations \eqref{eq: thm3 target2} and \eqref{eq: thm3 asydist star},
\begin{equation*}
\begin{split}
    &\sqrt n\{\hat\beta_{A,n}(\hat \rho_{\tilde n})-\beta_A^*\}-\sqrt n\{\hat\beta_{A,n}( \rho^*)-\beta_A^*\}\\
    &=-\{b(\hat\rho_{\tilde n})S_n(\hat\rho_{\tilde n})-b(\rho^*)S_n(\rho^*)\}+ o_p(1)=o_p(1).
\end{split}
\end{equation*}
\end{proof}

\subsection{Variance estimators for PROCOVA with estimating functions \eqref{eq: est func prog} and \eqref{eq: est func PROCOVA linear}}
\label{appendix: PROCOVA varest}
The variance estimator corresponding to $ V_{\mathrm{fix}}$ is $1/n$ times 
\begin{equation*}
    \hat V_{\mathrm{fix}}= \hat Q_0^{-1}\hat \Omega (\hat Q_0^{-1})^\top,
\end{equation*}
where
\begin{equation*}
    \hat Q_0=  -\frac{1}{n}\sum_{i=1}^n X_{\hat \theta_{\tilde n},i}X_{\hat \theta_{\tilde n},i}^\top =-\frac{1}{n}\sum_{i=1}^n \begin{pmatrix}
   1 & A_i &  \hat\theta_{\tilde n}^{\top} W_i \\
   A_i & A_i &  \hat\theta_{\tilde n}^{\top} W_iA_i \\
   \hat\theta_{\tilde n}^{\top} W_i &   \hat\theta_{\tilde n}^{\top} W_iA_i & ( \hat\theta_{\tilde n}^{\top} W_i)^2
\end{pmatrix}
\end{equation*}
and
\begin{equation*}
   \hat \Omega = \frac{1}{n}\sum_{i=1}^n e_i^2  X_{\hat \theta_{\tilde n},i}X_{\hat \theta_{\tilde n},i}^\top=\frac{1}{n}\sum_{i=1}^n
   \begin{pmatrix}
        e_i^2 & e_i^2A_i &  e_i^2 \hat \theta_{\tilde n}^\top W_i\\
       e_i^2 A_i & e_i^2 A_i & e_i^2 \hat \theta_{\tilde n}^\top W_i A_i\\
        e_i^2 \hat \theta_{\tilde n}^\top W_i & e_i^2\hat \theta_{\tilde n}^\top W_i A_i & e_i^2 (\hat \theta_{\tilde n}^\top W_i )^2
    \end{pmatrix}
\end{equation*}
with $X_{\hat \theta_{\tilde n},i}=(1, A_i, \hat \theta_{\tilde n}^\top W_i)^\top$ and $\hat e_i= Y_i-\hat \beta_n(\hat \theta_{\tilde n})^\top X_{\hat\theta_{\tilde n},i}$.

The variance estimator corresponding to $ V_{\mathrm{est
}}$ is $1/n$ times 
\begin{equation*}
    \hat V_{\mathrm{est}}=\hat V_{\mathrm{fix}} + \frac{n}{\tilde n}\hat Q_0^{-1}\hat Q_1\hat V_\theta \hat Q_1^\top(\hat Q_0^{-1})^\top,
\end{equation*}
where
\begin{equation*}
\begin{split}
    \hat Q_1&= \frac{1}{n}\sum_{i=1}^n\begin{pmatrix}
      -\hat \beta_1 W_i^\top\\
      -\hat \beta_1 A_iW_i^\top\\
     Y_iW_i^\top  - \hat \beta_n(\hat \theta_{\tilde n})^\top 
      \begin{pmatrix}
          W_i^\top & A_iW_i^\top & 2\hat \theta_{\tilde n}^\top W_iW_i^\top 
      \end{pmatrix}^\top
    \end{pmatrix}\\
     &=\frac{1}{n}\sum_{i=1}^n\begin{pmatrix}
      -\hat \beta_1 W_i^\top\\
      -\hat \beta_1 A_iW_i^\top\\
     Y_iW_i^\top  - \hat \beta_0 W_i^\top -\hat \beta_A A_iW_i^\top -2\hat \beta_1 \theta_{\tilde n}^\top W_iW_i^\top
    \end{pmatrix}
\end{split}
\end{equation*}
with $\hat \beta_0 =(1,0,0) \hat \beta_n(\hat \theta_{\tilde n})$, $\hat \beta_A =(0,1,0) \hat \beta_n(\hat \theta_{\tilde n})$ and $\hat \beta_1 =(0,0,1) \hat \beta_n(\hat \theta_{\tilde n})$,
and
\begin{equation*}
\hat V_\theta=\hat Q_2^{-1}\hat Q_3(\hat Q_2^{-1})^{\top},\quad\hat Q_2=- \frac{1}{\tilde n}\sum_{i=1}^{\tilde n}\tilde W_i\tilde W_i^\top,\quad   \hat Q_3= \frac{1}{\tilde n}\sum_{i=1}^{\tilde n}\tilde e_i ^2 \tilde W_i \tilde W_i^\top
\end{equation*}
with $\tilde e_i =\tilde  Y_i-\hat \theta_{\tilde n}^\top \tilde  W_i$.

\clearpage
\section{Appendix: Additional simulation results}
\label{appendix: sim}

\subsection{Scenario A-1}

\begin{figure}[h]
    \centering
    \includegraphics[width=\linewidth]{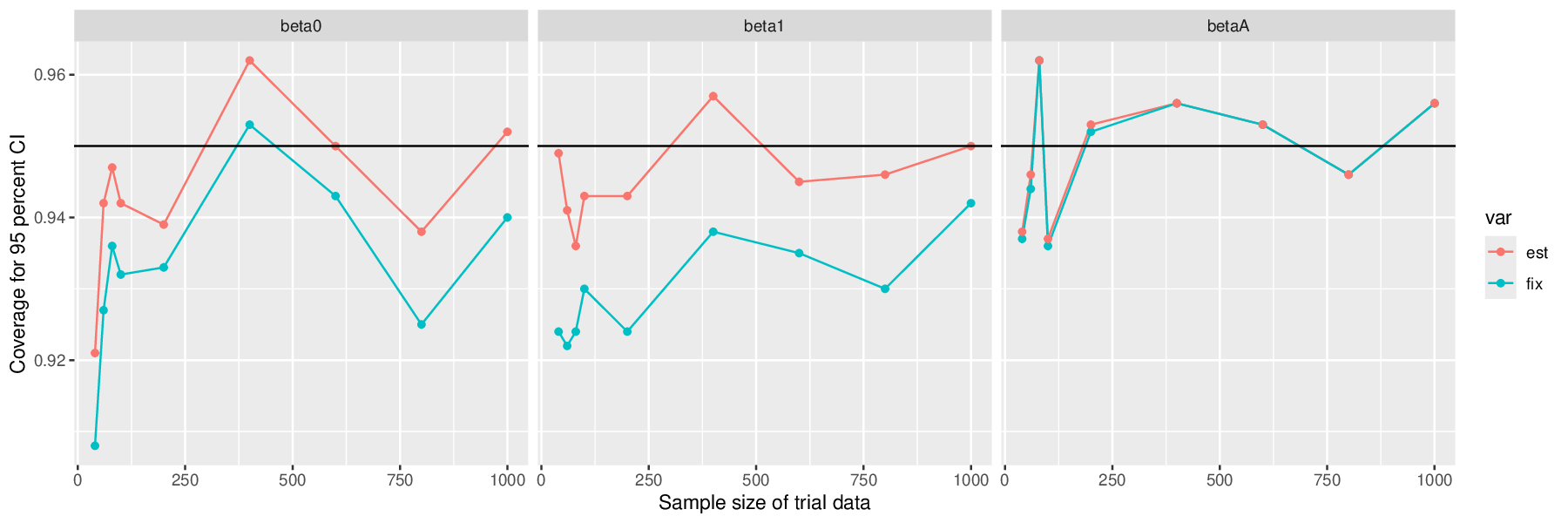}
    \caption{Plots of the coverage probability of 95\% CI over 1000 simulations for Scenario A-1.
    ``beta0'', ``betaA'', ``beta1'' represent the intercept $\beta_0$, the coefficient for the treatment assignment $\beta_A$, the coefficient for the prognostic score $\beta_1$ in the PROCOVA model \eqref{eq: PROCOVA linear}, respectively.
     ``fix'' represents $e^\top\hat V_{\text{fix}}e$ with $e=(1,0,0)^\top$ for $\beta_0$, with $e=(0,1,0)^\top$ for $\beta_A$ and $e=(0,0,1)^\top$ for $\beta_1$.
    ``est'' represents $e^\top\hat V_{\text{est}}e$ with $e=(1,0,0)^\top$ for $\beta_0$, with $e=(0,1,0)^\top$ for $\beta_A$ and $e=(0,0,1)^\top$ for $\beta_1$.
    The x-axis represents the sample size of trial data $n$.
    The sample size of historical data is $\tilde n = 10n$. 
    The y-axis represents the coverage probability which is the proportion of 1000 simulations in which the 95\% CI using each variance estimator includes the true value.
    }
\end{figure}

\clearpage
\begin{figure}[h]
    \centering
    \includegraphics[width=0.4\linewidth]{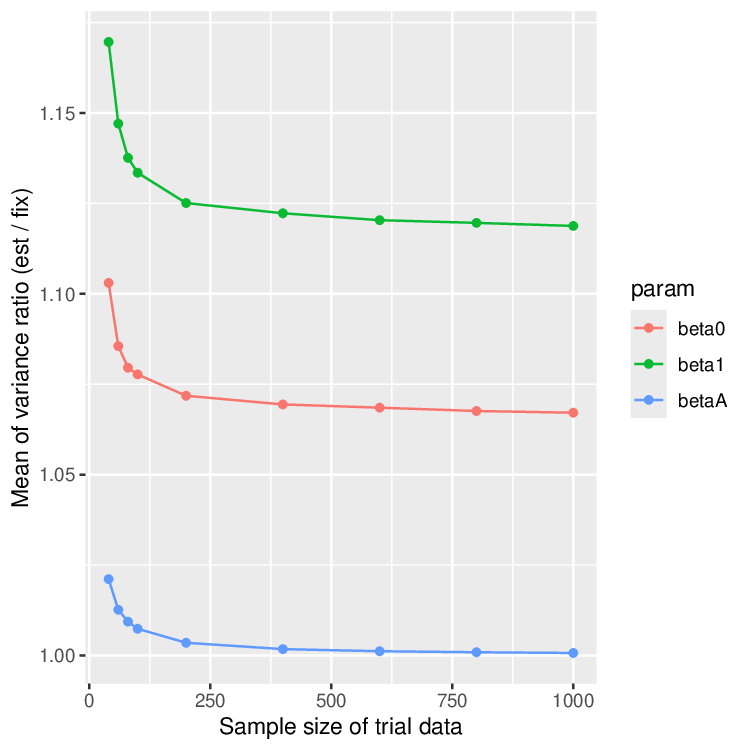}
    \caption{Plots of the mean of the ratio of two variance estimators over 1000 simulations for Scenario A-1.
    ``beta0'', ``betaA'', ``beta1'' represent the intercept $\beta_0$, the coefficient for the treatment assignment $\beta_A$, the coefficient for the prognostic score $\beta_1$ in the PROCOVA model \eqref{eq: PROCOVA linear}, respectively.
    The x-axis represents the sample size of trial data $n$.
    The sample size of historical data is $\tilde n = 10n$. 
    The y-axis represents the mean of the ratio of two variance estimators, i.e., $e^\top\hat V_{\text{est}}e/e^\top\hat V_{\text{fix}}e$ with $e=(1,0,0)^\top$ for $\beta_0$, with $e=(0,1,0)^\top$ for $\beta_A$ and $e=(0,0,1)^\top$ for $\beta_1$, over 1000 simulations.}
\end{figure}
\clearpage
\begin{figure}[h]
    \centering
    \includegraphics[width=0.4\linewidth]{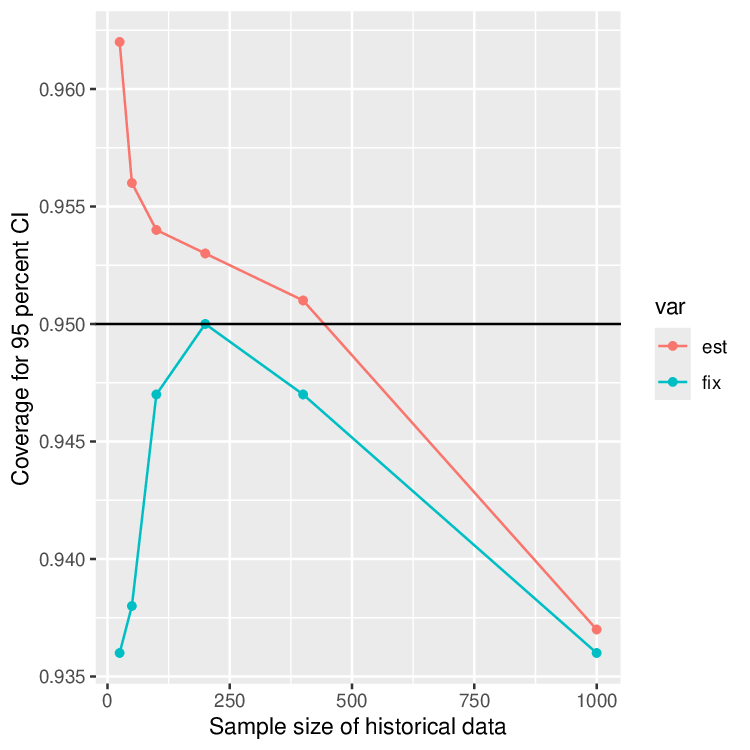}
    \caption{Plots of the coverage probability of the 95\% CI for $\beta_A$ in the PROCOVA model \eqref{eq: PROCOVA linear} over 1000 simulations for Scenario A-1.
     ``fix'' represents $e^\top\hat V_{\text{fix}}e$ with $e=(1,0,0)^\top$ for $\beta_0$, with $e=(0,1,0)^\top$ for $\beta_A$ and $e=(0,0,1)^\top$ for $\beta_1$.
    ``est'' represents $e^\top\hat V_{\text{est}}e$ with $e=(1,0,0)^\top$ for $\beta_0$, with $e=(0,1,0)^\top$ for $\beta_A$ and $e=(0,0,1)^\top$ for $\beta_1$.
    The sample size of trial data is $n=100$. 
    The x-axis represents the sample size of historical data $\tilde n$.
    The y-axis represents the coverage probability which is the proportion of 1000 simulations in which the 95\% CI using each variance estimator includes the true value.}
\end{figure}
\clearpage
\begin{figure}[h]
    \centering
    \includegraphics[width=\linewidth]{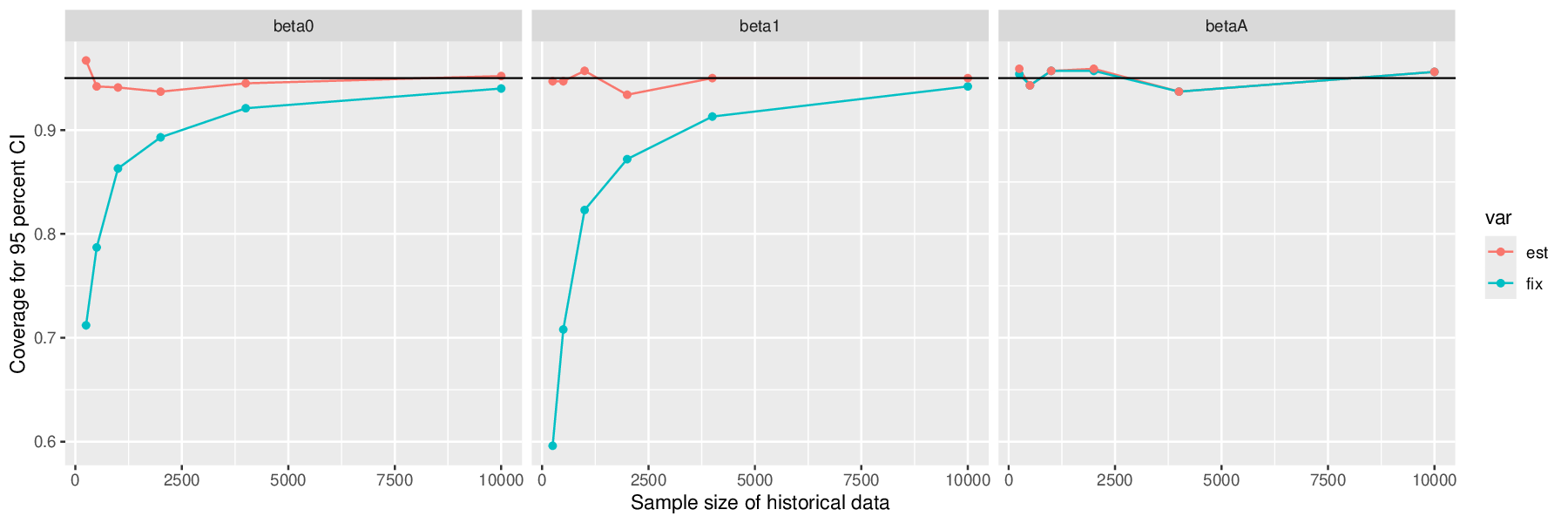}
    \caption{Plots of the coverage probability of 95\% CI over 1000 simulations for Scenario A-1.
    ``beta0'', ``betaA'', ``beta1'' represent the intercept $\beta_0$, the coefficient for the treatment assignment $\beta_A$, the coefficient for the prognostic score $\beta_1$ in the PROCOVA model \eqref{eq: PROCOVA linear}, respectively.
     ``fix'' represents $e^\top\hat V_{\text{fix}}e$ with $e=(1,0,0)^\top$ for $\beta_0$, with $e=(0,1,0)^\top$ for $\beta_A$ and $e=(0,0,1)^\top$ for $\beta_1$.
    ``est'' represents $e^\top\hat V_{\text{est}}e$ with $e=(1,0,0)^\top$ for $\beta_0$, with $e=(0,1,0)^\top$ for $\beta_A$ and $e=(0,0,1)^\top$ for $\beta_1$.
    The sample size of trial data is $n=1000$. 
    The x-axis represents the sample size of historical data $\tilde n$.
    The y-axis represents the coverage probability which is the proportion of 1000 simulations in which the 95\% CI using each variance estimator includes the true value.}
\end{figure}

\clearpage
\subsection{Scenario A-2}

\begin{figure}[h]
    \centering
    \includegraphics[width=\linewidth]{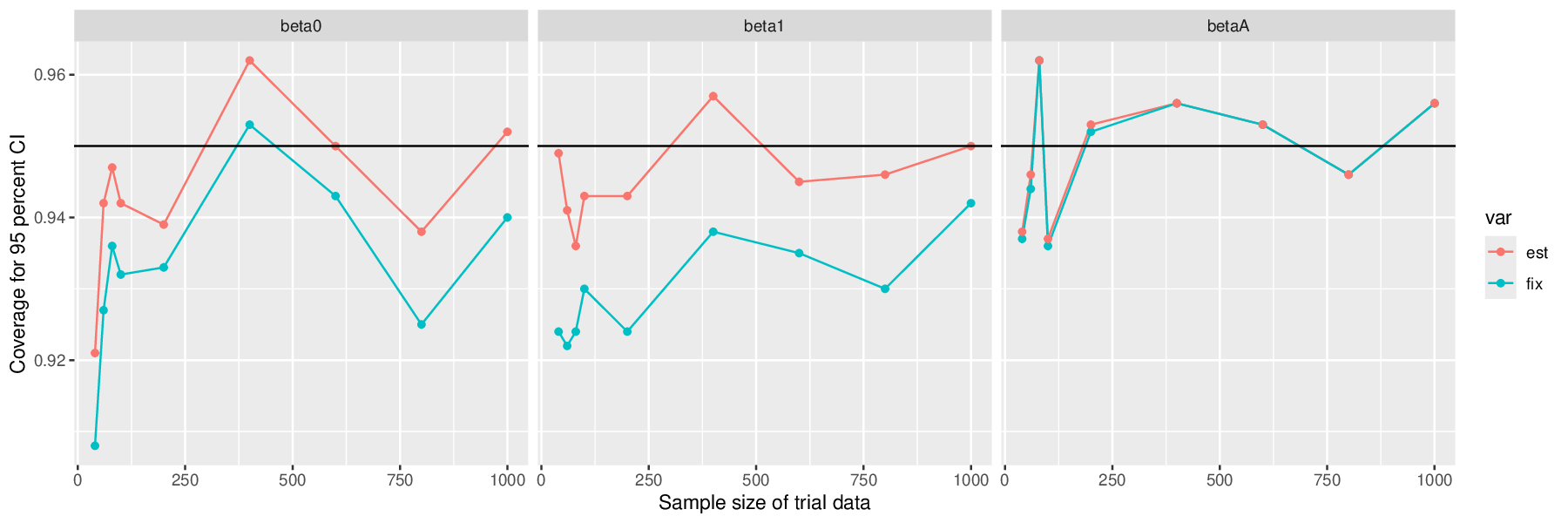}
    \caption{Plots of the coverage probability of 95\% CI over 1000 simulations for Scenario A-2.
    ``beta0'', ``betaA'', ``beta1'' represent the intercept $\beta_0$, the coefficient for the treatment assignment $\beta_A$, the coefficient for the prognostic score $\beta_1$ in the PROCOVA model \eqref{eq: PROCOVA linear}, respectively.
     ``fix'' represents $e^\top\hat V_{\text{fix}}e$ with $e=(1,0,0)^\top$ for $\beta_0$, with $e=(0,1,0)^\top$ for $\beta_A$ and $e=(0,0,1)^\top$ for $\beta_1$.
    ``est'' represents $e^\top\hat V_{\text{est}}e$ with $e=(1,0,0)^\top$ for $\beta_0$, with $e=(0,1,0)^\top$ for $\beta_A$ and $e=(0,0,1)^\top$ for $\beta_1$.
    The x-axis represents the sample size of trial data $n$.
    The sample size of historical data is $\tilde n = 10n$. 
    The y-axis represents the coverage probability which is the proportion of 1000 simulations in which the 95\% CI using each variance estimator includes the true value.
    }
\end{figure}

\clearpage
\begin{figure}[h]
    \centering
    \includegraphics[width=0.4\linewidth]{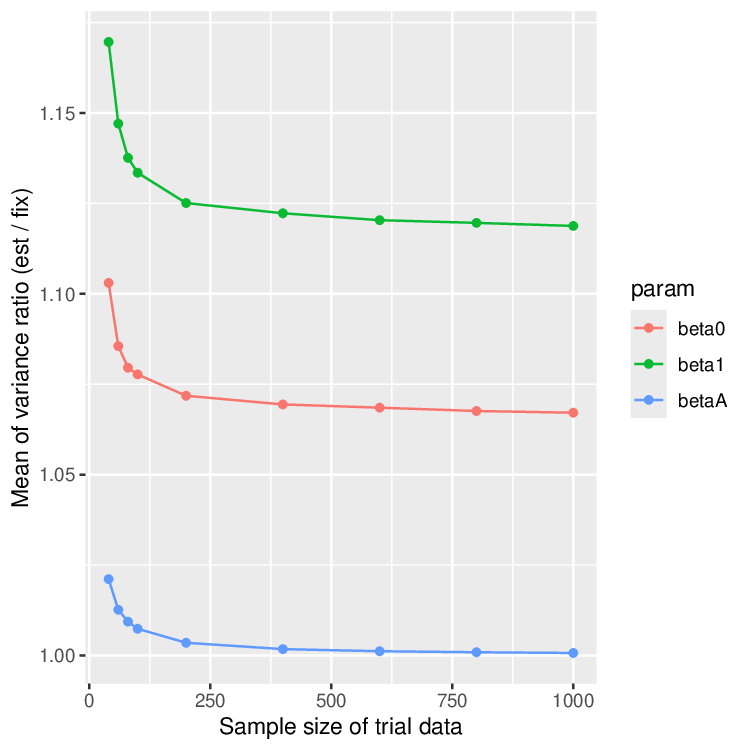}
    \caption{Plots of the mean of the ratio of two variance estimators over 1000 simulations for Scenario A-2.
    ``beta0'', ``betaA'', ``beta1'' represent the intercept $\beta_0$, the coefficient for the treatment assignment $\beta_A$, the coefficient for the prognostic score $\beta_1$ in the PROCOVA model \eqref{eq: PROCOVA linear}, respectively.
    The x-axis represents the sample size of trial data $n$.
    The sample size of historical data is $\tilde n = 10n$. 
    The y-axis represents the mean of the ratio of two variance estimators, i.e., $e^\top\hat V_{\text{est}}e/e^\top\hat V_{\text{fix}}e$ with $e=(1,0,0)^\top$ for $\beta_0$, with $e=(0,1,0)^\top$ for $\beta_A$ and $e=(0,0,1)^\top$ for $\beta_1$, over 1000 simulations.}
\end{figure}
\clearpage
\begin{figure}[h]
    \centering
    \includegraphics[width=0.4\linewidth]{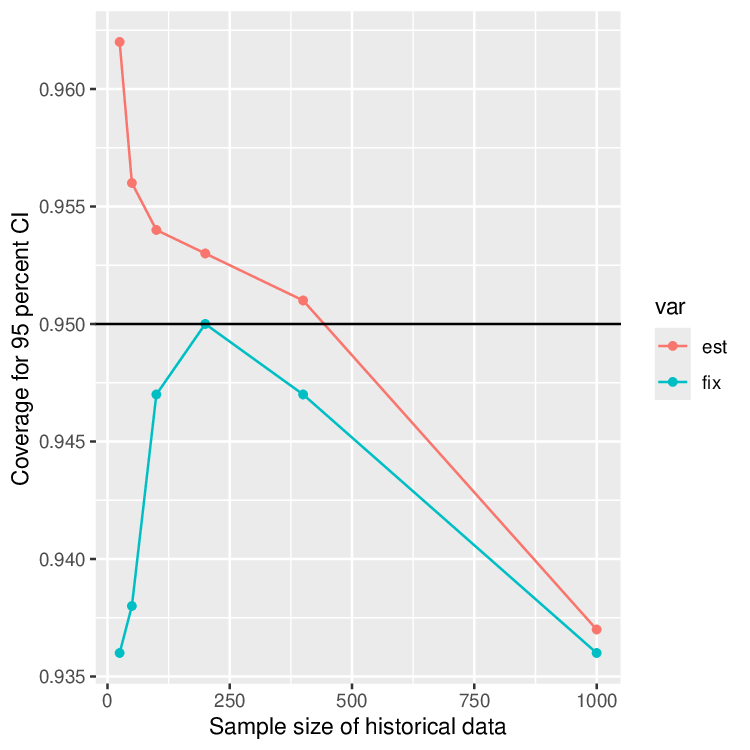}
    \caption{Plots of the coverage probability of the 95\% CI for $\beta_A$ in the PROCOVA model \eqref{eq: PROCOVA linear} over 1000 simulations for Scenario A-2.
     ``fix'' represents $e^\top\hat V_{\text{fix}}e$ with $e=(1,0,0)^\top$ for $\beta_0$, with $e=(0,1,0)^\top$ for $\beta_A$ and $e=(0,0,1)^\top$ for $\beta_1$.
    ``est'' represents $e^\top\hat V_{\text{est}}e$ with $e=(1,0,0)^\top$ for $\beta_0$, with $e=(0,1,0)^\top$ for $\beta_A$ and $e=(0,0,1)^\top$ for $\beta_1$.
    The sample size of trial data is $n=100$. 
    The x-axis represents the sample size of historical data $\tilde n$.
    The y-axis represents the coverage probability which is the proportion of 1000 simulations in which the 95\% CI using each variance estimator includes the true value.}
\end{figure}
\clearpage
\begin{figure}[h]
    \centering
    \includegraphics[width=\linewidth]{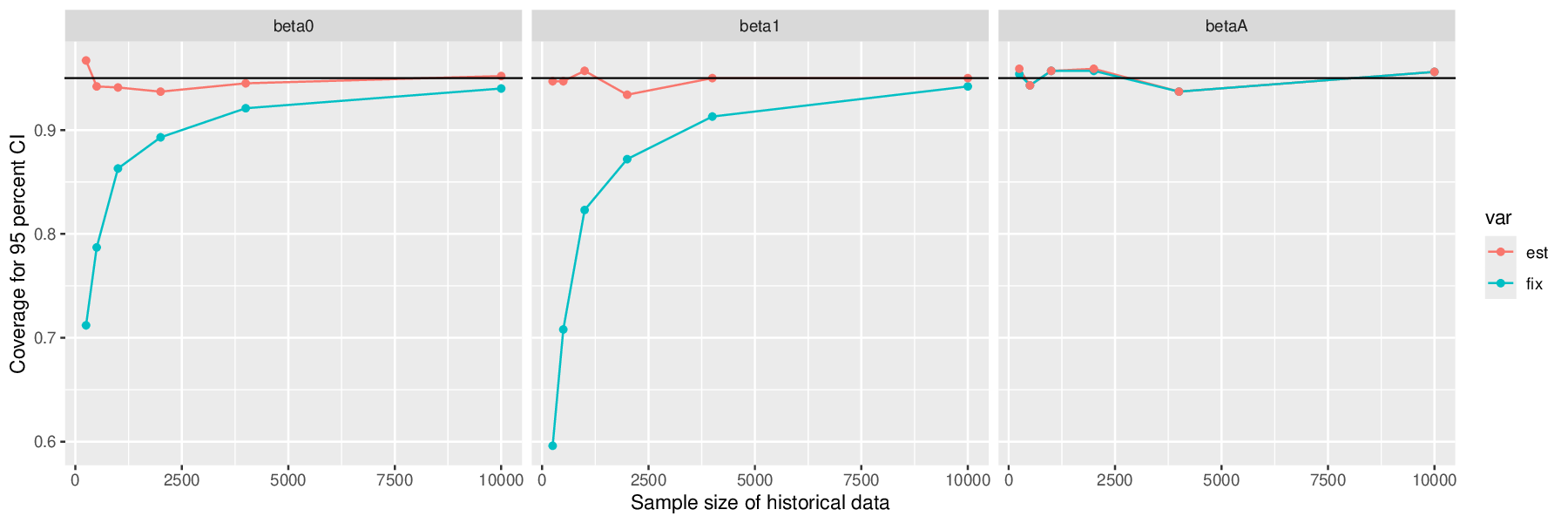}
    \caption{Plots of the coverage probability of 95\% CI over 1000 simulations for Scenario A-2.
    ``beta0'', ``betaA'', ``beta1'' represent the intercept $\beta_0$, the coefficient for the treatment assignment $\beta_A$, the coefficient for the prognostic score $\beta_1$ in the PROCOVA model \eqref{eq: PROCOVA linear}, respectively.
     ``fix'' represents $e^\top\hat V_{\text{fix}}e$ with $e=(1,0,0)^\top$ for $\beta_0$, with $e=(0,1,0)^\top$ for $\beta_A$ and $e=(0,0,1)^\top$ for $\beta_1$.
    ``est'' represents $e^\top\hat V_{\text{est}}e$ with $e=(1,0,0)^\top$ for $\beta_0$, with $e=(0,1,0)^\top$ for $\beta_A$ and $e=(0,0,1)^\top$ for $\beta_1$.
    The sample size of trial data is $n=1000$. 
    The x-axis represents the sample size of historical data $\tilde n$.
    The y-axis represents the coverage probability which is the proportion of 1000 simulations in which the 95\% CI using each variance estimator includes the true value.}
\end{figure}

\clearpage
\subsection{Scenario A-3}

\begin{figure}[h]
    \centering
    \includegraphics[width=\linewidth]{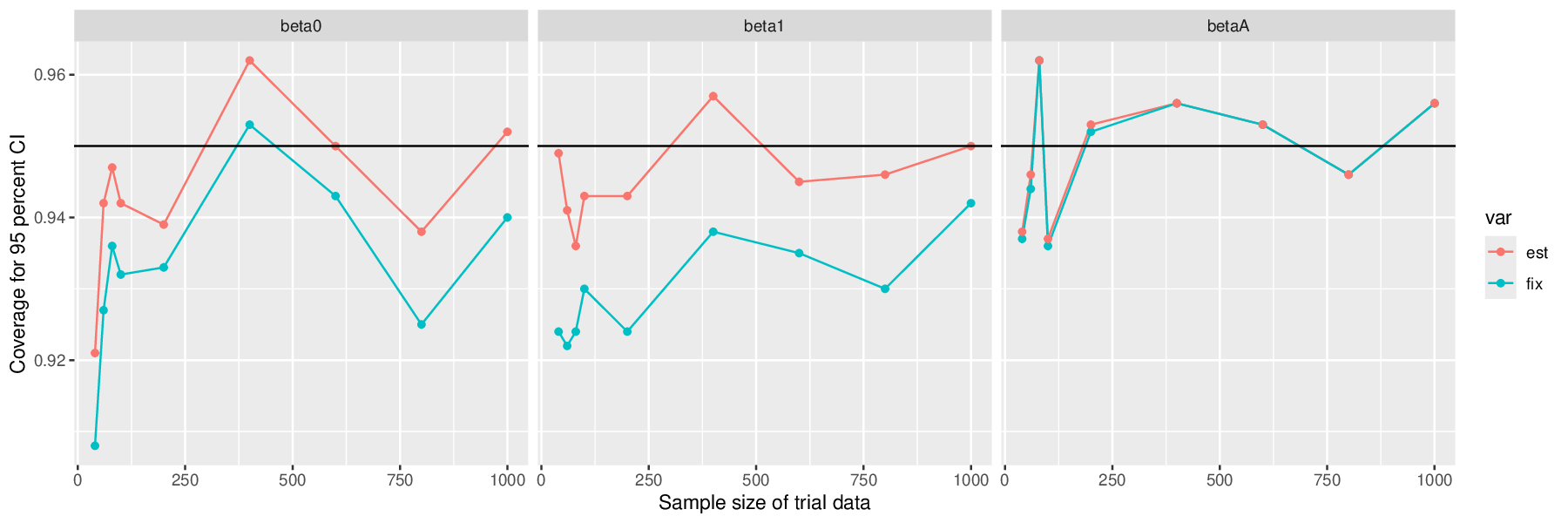}
    \caption{Plots of the coverage probability of 95\% CI over 1000 simulations for Scenario A-3.
    ``beta0'', ``betaA'', ``beta1'' represent the intercept $\beta_0$, the coefficient for the treatment assignment $\beta_A$, the coefficient for the prognostic score $\beta_1$ in the PROCOVA model \eqref{eq: PROCOVA linear}, respectively.
     ``fix'' represents $e^\top\hat V_{\text{fix}}e$ with $e=(1,0,0)^\top$ for $\beta_0$, with $e=(0,1,0)^\top$ for $\beta_A$ and $e=(0,0,1)^\top$ for $\beta_1$.
    ``est'' represents $e^\top\hat V_{\text{est}}e$ with $e=(1,0,0)^\top$ for $\beta_0$, with $e=(0,1,0)^\top$ for $\beta_A$ and $e=(0,0,1)^\top$ for $\beta_1$.
    The x-axis represents the sample size of trial data $n$.
    The sample size of historical data is $\tilde n = 10n$. 
    The y-axis represents the coverage probability which is the proportion of 1000 simulations in which the 95\% CI using each variance estimator includes the true value.
    }
\end{figure}

\clearpage
\begin{figure}[h]
    \centering
    \includegraphics[width=0.4\linewidth]{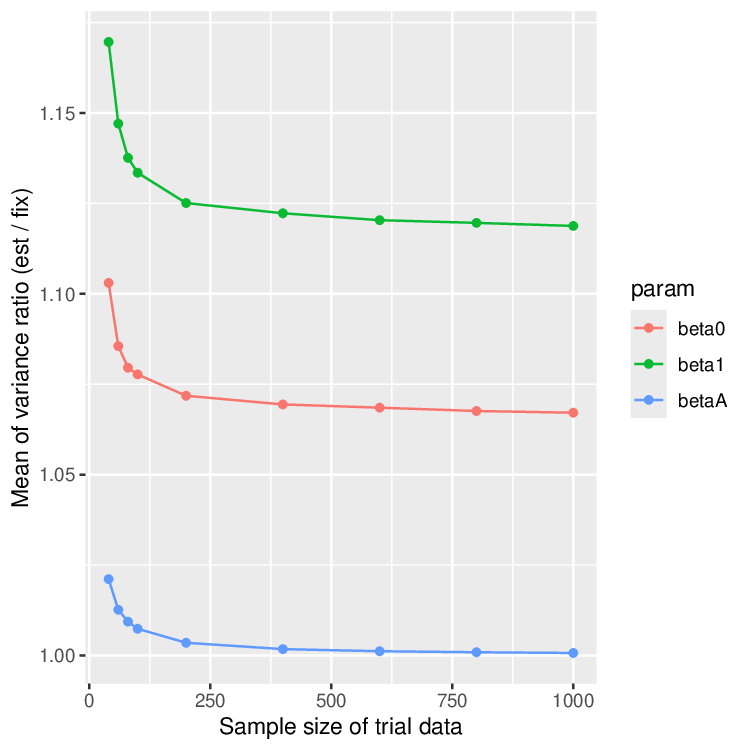}
    \caption{Plots of the mean of the ratio of two variance estimators over 1000 simulations for Scenario A-3.
    ``beta0'', ``betaA'', ``beta1'' represent the intercept $\beta_0$, the coefficient for the treatment assignment $\beta_A$, the coefficient for the prognostic score $\beta_1$ in the PROCOVA model \eqref{eq: PROCOVA linear}, respectively.
    The x-axis represents the sample size of trial data $n$.
    The sample size of historical data is $\tilde n = 10n$. 
    The y-axis represents the mean of the ratio of two variance estimators, i.e., $e^\top\hat V_{\text{est}}e/e^\top\hat V_{\text{fix}}e$ with $e=(1,0,0)^\top$ for $\beta_0$, with $e=(0,1,0)^\top$ for $\beta_A$ and $e=(0,0,1)^\top$ for $\beta_1$, over 1000 simulations.}
\end{figure}
\clearpage
\begin{figure}[h]
    \centering
    \includegraphics[width=0.4\linewidth]{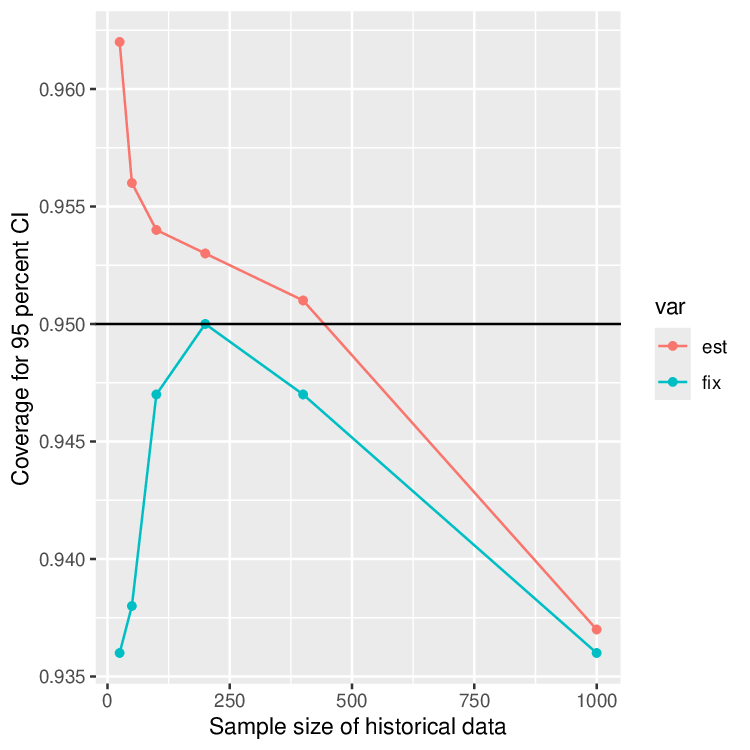}
    \caption{Plots of the coverage probability of the 95\% CI for $\beta_A$ in the PROCOVA model \eqref{eq: PROCOVA linear} over 1000 simulations for Scenario A-3.
     ``fix'' represents $e^\top\hat V_{\text{fix}}e$ with $e=(1,0,0)^\top$ for $\beta_0$, with $e=(0,1,0)^\top$ for $\beta_A$ and $e=(0,0,1)^\top$ for $\beta_1$.
    ``est'' represents $e^\top\hat V_{\text{est}}e$ with $e=(1,0,0)^\top$ for $\beta_0$, with $e=(0,1,0)^\top$ for $\beta_A$ and $e=(0,0,1)^\top$ for $\beta_1$.
    The sample size of trial data is $n=100$. 
    The x-axis represents the sample size of historical data $\tilde n$.
    The y-axis represents the coverage probability which is the proportion of 1000 simulations in which the 95\% CI using each variance estimator includes the true value.}
\end{figure}
\clearpage
\begin{figure}[h]
    \centering
    \includegraphics[width=\linewidth]{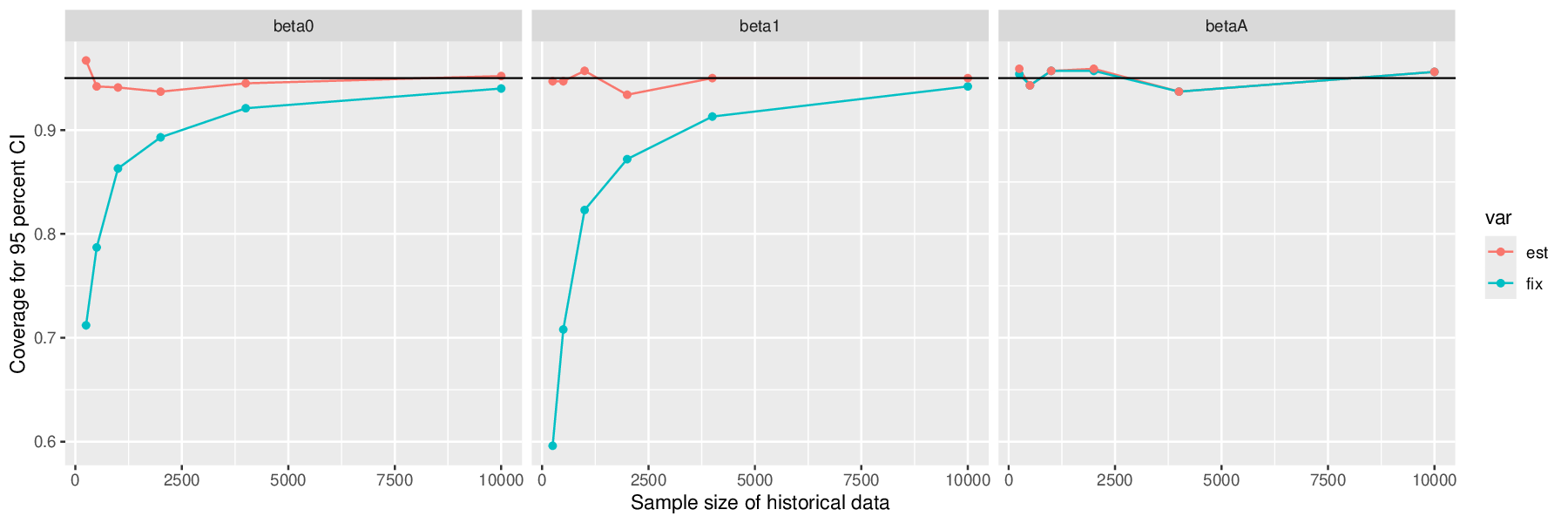}
    \caption{Plots of the coverage probability of 95\% CI over 1000 simulations for Scenario A-3.
    ``beta0'', ``betaA'', ``beta1'' represent the intercept $\beta_0$, the coefficient for the treatment assignment $\beta_A$, the coefficient for the prognostic score $\beta_1$ in the PROCOVA model \eqref{eq: PROCOVA linear}, respectively.
     ``fix'' represents $e^\top\hat V_{\text{fix}}e$ with $e=(1,0,0)^\top$ for $\beta_0$, with $e=(0,1,0)^\top$ for $\beta_A$ and $e=(0,0,1)^\top$ for $\beta_1$.
    ``est'' represents $e^\top\hat V_{\text{est}}e$ with $e=(1,0,0)^\top$ for $\beta_0$, with $e=(0,1,0)^\top$ for $\beta_A$ and $e=(0,0,1)^\top$ for $\beta_1$.
    The sample size of trial data is $n=1000$. 
    The x-axis represents the sample size of historical data $\tilde n$.
    The y-axis represents the coverage probability which is the proportion of 1000 simulations in which the 95\% CI using each variance estimator includes the true value.}
\end{figure}

\clearpage
\subsection{Scenario A-4}

\begin{figure}[h]
    \centering
    \includegraphics[width=\linewidth]{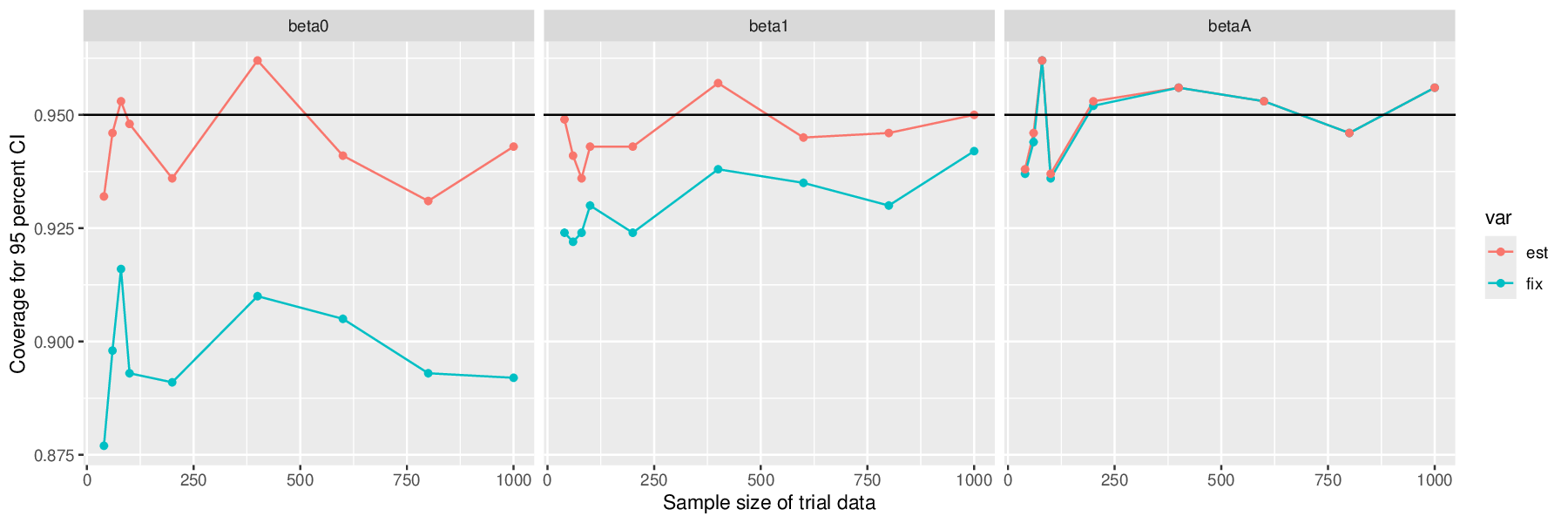}
    \caption{Plots of the coverage probability of 95\% CI over 1000 simulations for Scenario A-4.
    ``beta0'', ``betaA'', ``beta1'' represent the intercept $\beta_0$, the coefficient for the treatment assignment $\beta_A$, the coefficient for the prognostic score $\beta_1$ in the PROCOVA model \eqref{eq: PROCOVA linear}, respectively.
     ``fix'' represents $e^\top\hat V_{\text{fix}}e$ with $e=(1,0,0)^\top$ for $\beta_0$, with $e=(0,1,0)^\top$ for $\beta_A$ and $e=(0,0,1)^\top$ for $\beta_1$.
    ``est'' represents $e^\top\hat V_{\text{est}}e$ with $e=(1,0,0)^\top$ for $\beta_0$, with $e=(0,1,0)^\top$ for $\beta_A$ and $e=(0,0,1)^\top$ for $\beta_1$.
    The x-axis represents the sample size of trial data $n$.
    The sample size of historical data is $\tilde n = 10n$. 
    The y-axis represents the coverage probability which is the proportion of 1000 simulations in which the 95\% CI using each variance estimator includes the true value.
    }
\end{figure}

\clearpage
\begin{figure}[h]
    \centering
    \includegraphics[width=0.4\linewidth]{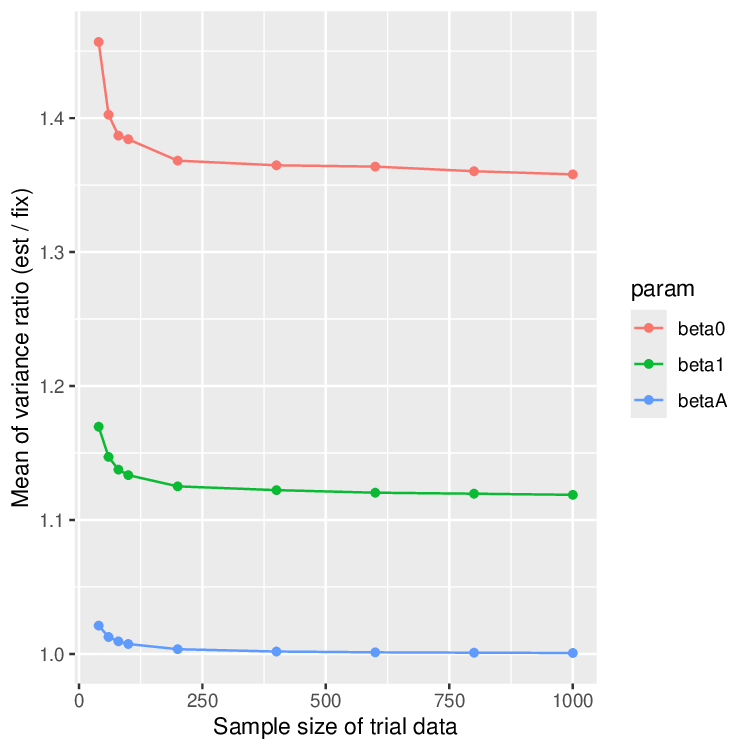}
    \caption{Plots of the mean of the ratio of two variance estimators over 1000 simulations for Scenario A-4.
    ``beta0'', ``betaA'', ``beta1'' represent the intercept $\beta_0$, the coefficient for the treatment assignment $\beta_A$, the coefficient for the prognostic score $\beta_1$ in the PROCOVA model \eqref{eq: PROCOVA linear}, respectively.
    The x-axis represents the sample size of trial data $n$.
    The sample size of historical data is $\tilde n = 10n$. 
    The y-axis represents the mean of the ratio of two variance estimators, i.e., $e^\top\hat V_{\text{est}}e/e^\top\hat V_{\text{fix}}e$ with $e=(1,0,0)^\top$ for $\beta_0$, with $e=(0,1,0)^\top$ for $\beta_A$ and $e=(0,0,1)^\top$ for $\beta_1$, over 1000 simulations.}
\end{figure}

\clearpage
\begin{figure}[h]
    \centering
    \includegraphics[width=0.4\linewidth]{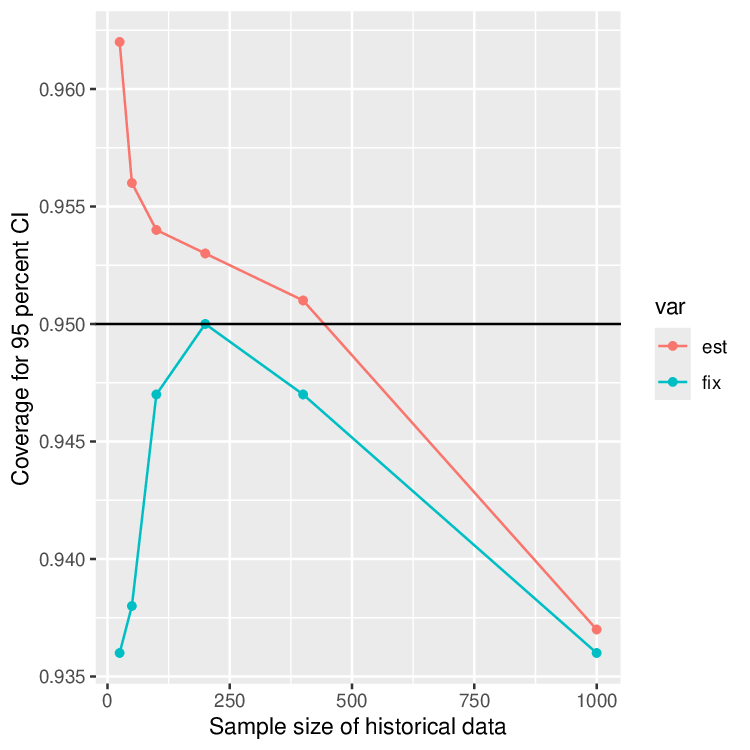}
    \caption{Plots of the coverage probability of the 95\% CI for $\beta_A$ in the PROCOVA model \eqref{eq: PROCOVA linear} over 1000 simulations for Scenario A-4.
     ``fix'' represents $e^\top\hat V_{\text{fix}}e$ with $e=(1,0,0)^\top$ for $\beta_0$, with $e=(0,1,0)^\top$ for $\beta_A$ and $e=(0,0,1)^\top$ for $\beta_1$.
    ``est'' represents $e^\top\hat V_{\text{est}}e$ with $e=(1,0,0)^\top$ for $\beta_0$, with $e=(0,1,0)^\top$ for $\beta_A$ and $e=(0,0,1)^\top$ for $\beta_1$.
    The sample size of trial data is $n=100$. 
    The x-axis represents the sample size of historical data $\tilde n$.
    The y-axis represents the coverage probability which is the proportion of 1000 simulations in which the 95\% CI using each variance estimator includes the true value.}
\end{figure}

\clearpage
\begin{figure}[h]
    \centering
    \includegraphics[width=\linewidth]{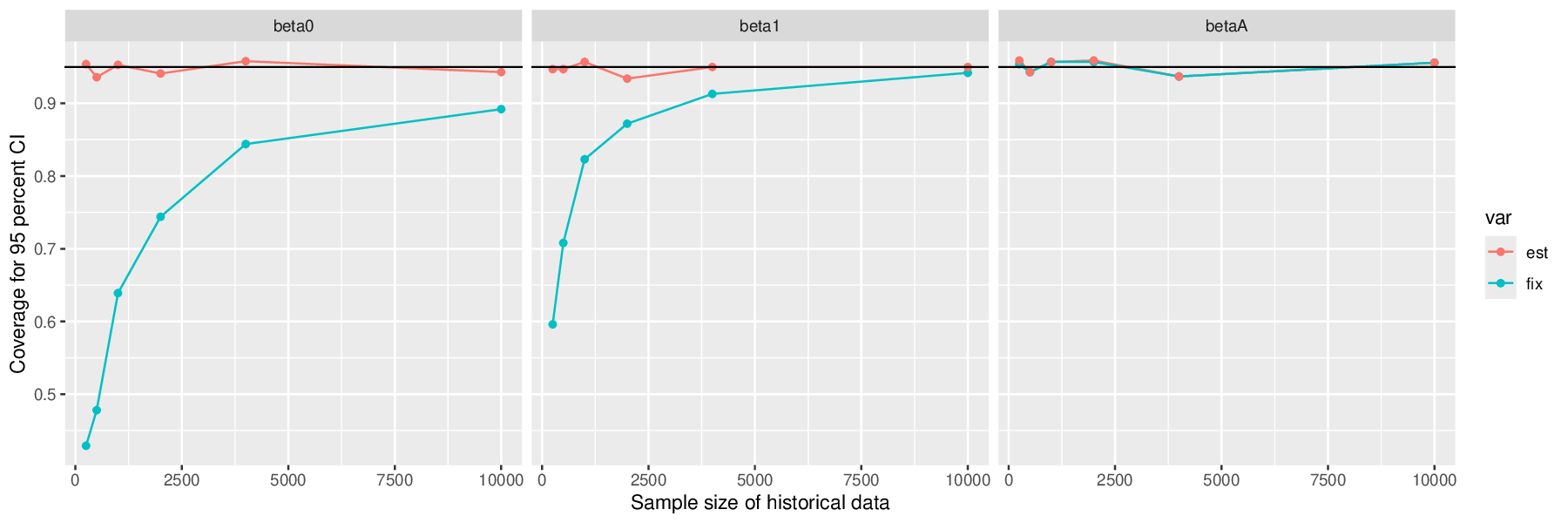}
    \caption{Plots of the coverage probability of 95\% CI over 1000 simulations for Scenario A-4.
    ``beta0'', ``betaA'', ``beta1'' represent the intercept $\beta_0$, the coefficient for the treatment assignment $\beta_A$, the coefficient for the prognostic score $\beta_1$ in the PROCOVA model \eqref{eq: PROCOVA linear}, respectively.
     ``fix'' represents $e^\top\hat V_{\text{fix}}e$ with $e=(1,0,0)^\top$ for $\beta_0$, with $e=(0,1,0)^\top$ for $\beta_A$ and $e=(0,0,1)^\top$ for $\beta_1$.
    ``est'' represents $e^\top\hat V_{\text{est}}e$ with $e=(1,0,0)^\top$ for $\beta_0$, with $e=(0,1,0)^\top$ for $\beta_A$ and $e=(0,0,1)^\top$ for $\beta_1$.
    The sample size of trial data is $n=1000$. 
    The x-axis represents the sample size of historical data $\tilde n$.
    The y-axis represents the coverage probability which is the proportion of 1000 simulations in which the 95\% CI using each variance estimator includes the true value.}
\end{figure}

\clearpage
\subsection{Scenario A-5}

\begin{figure}[h]
    \centering
    \includegraphics[width=\linewidth]{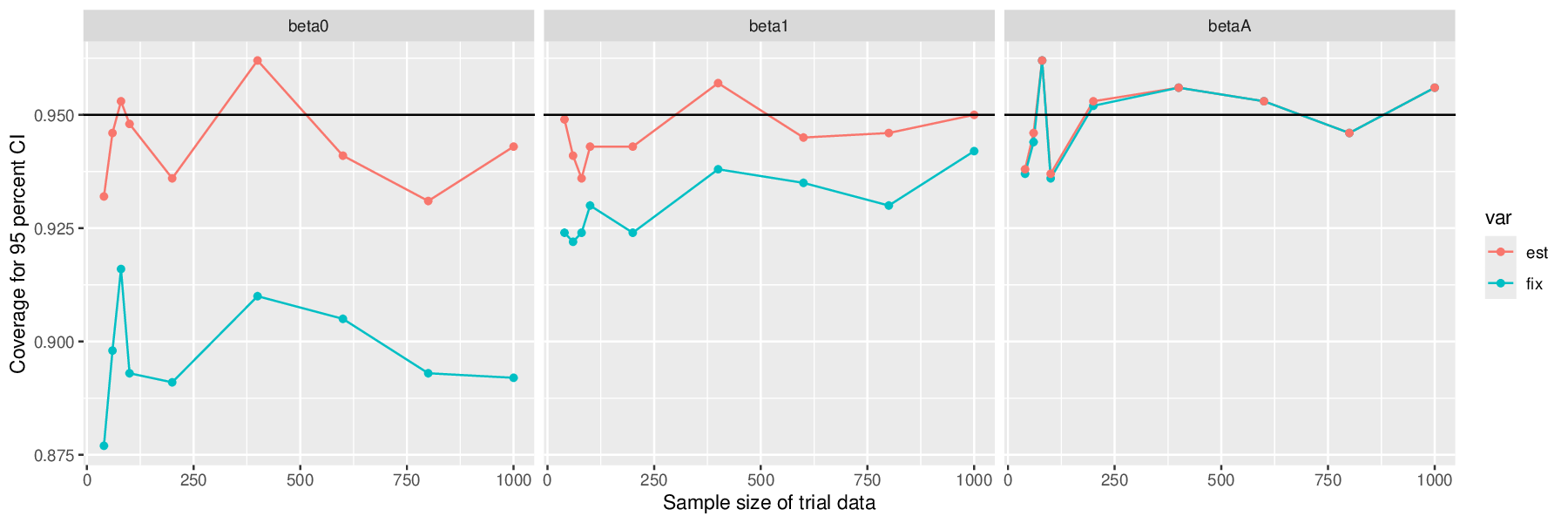}
    \caption{Plots of the coverage probability of 95\% CI over 1000 simulations for Scenario A-5.
    ``beta0'', ``betaA'', ``beta1'' represent the intercept $\beta_0$, the coefficient for the treatment assignment $\beta_A$, the coefficient for the prognostic score $\beta_1$ in the PROCOVA model \eqref{eq: PROCOVA linear}, respectively.
     ``fix'' represents $e^\top\hat V_{\text{fix}}e$ with $e=(1,0,0)^\top$ for $\beta_0$, with $e=(0,1,0)^\top$ for $\beta_A$ and $e=(0,0,1)^\top$ for $\beta_1$.
    ``est'' represents $e^\top\hat V_{\text{est}}e$ with $e=(1,0,0)^\top$ for $\beta_0$, with $e=(0,1,0)^\top$ for $\beta_A$ and $e=(0,0,1)^\top$ for $\beta_1$.
    The x-axis represents the sample size of trial data $n$.
    The sample size of historical data is $\tilde n = 10n$. 
    The y-axis represents the coverage probability which is the proportion of 1000 simulations in which the 95\% CI using each variance estimator includes the true value.
    }
\end{figure}

\clearpage
\begin{figure}[h]
    \centering
    \includegraphics[width=0.4\linewidth]{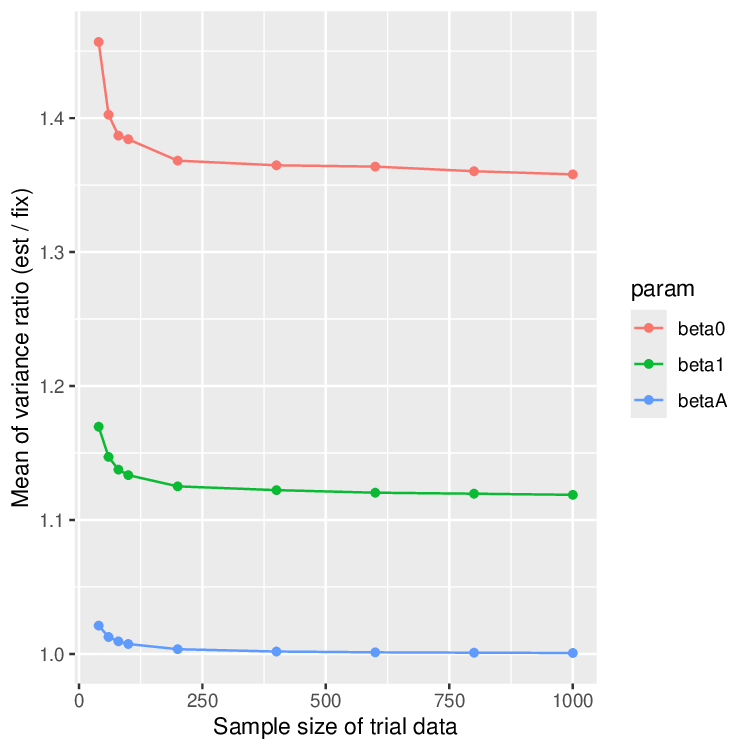}
    \caption{Plots of the mean of the ratio of two variance estimators over 1000 simulations for Scenario A-5.
    ``beta0'', ``betaA'', ``beta1'' represent the intercept $\beta_0$, the coefficient for the treatment assignment $\beta_A$, the coefficient for the prognostic score $\beta_1$ in the PROCOVA model \eqref{eq: PROCOVA linear}, respectively.
    The x-axis represents the sample size of trial data $n$.
    The sample size of historical data is $\tilde n = 10n$. 
    The y-axis represents the mean of the ratio of two variance estimators, i.e., $e^\top\hat V_{\text{est}}e/e^\top\hat V_{\text{fix}}e$ with $e=(1,0,0)^\top$ for $\beta_0$, with $e=(0,1,0)^\top$ for $\beta_A$ and $e=(0,0,1)^\top$ for $\beta_1$, over 1000 simulations.}
\end{figure}

\clearpage
\begin{figure}[h]
    \centering
    \includegraphics[width=0.4\linewidth]{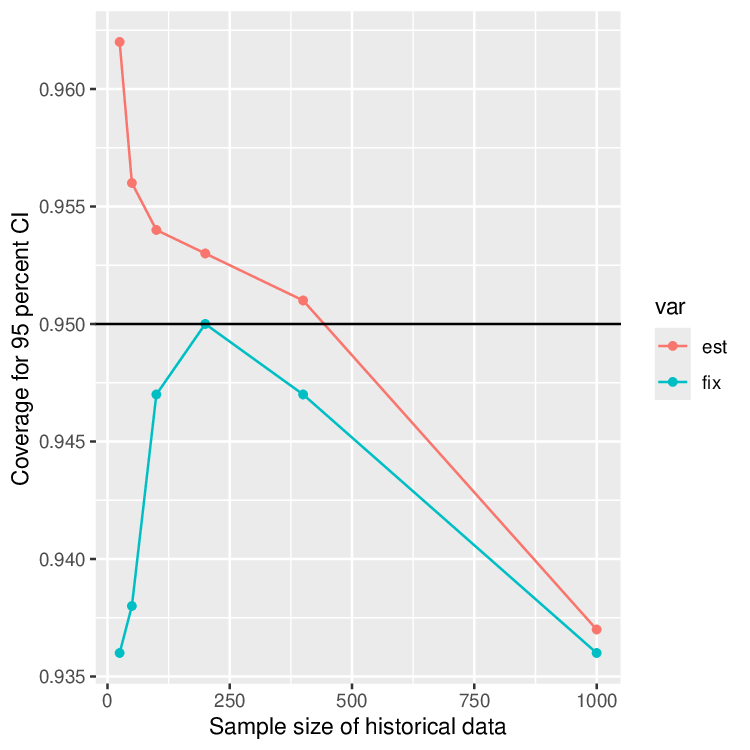}
    \caption{Plots of the coverage probability of the 95\% CI for $\beta_A$ in the PROCOVA model \eqref{eq: PROCOVA linear} over 1000 simulations for Scenario A-5.
     ``fix'' represents $e^\top\hat V_{\text{fix}}e$ with $e=(1,0,0)^\top$ for $\beta_0$, with $e=(0,1,0)^\top$ for $\beta_A$ and $e=(0,0,1)^\top$ for $\beta_1$.
    ``est'' represents $e^\top\hat V_{\text{est}}e$ with $e=(1,0,0)^\top$ for $\beta_0$, with $e=(0,1,0)^\top$ for $\beta_A$ and $e=(0,0,1)^\top$ for $\beta_1$.
    The sample size of trial data is $n=100$. 
    The x-axis represents the sample size of historical data $\tilde n$.
    The y-axis represents the coverage probability which is the proportion of 1000 simulations in which the 95\% CI using each variance estimator includes the true value.}
\end{figure}

\clearpage
\begin{figure}[h]
    \centering
    \includegraphics[width=\linewidth]{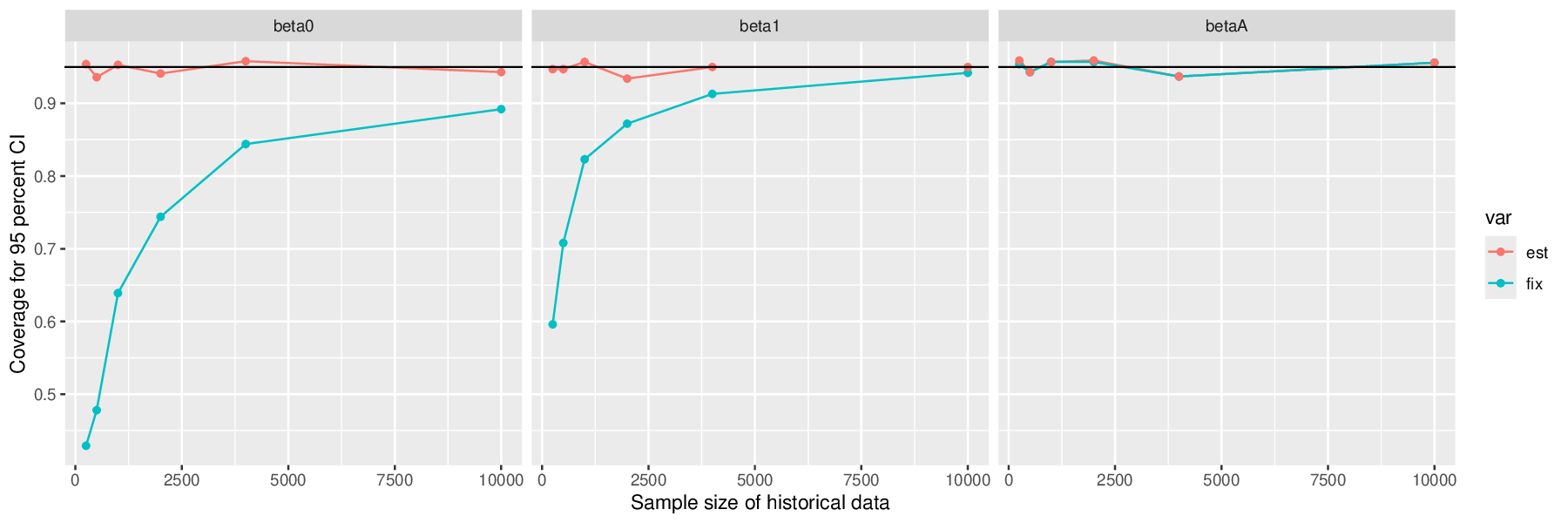}
    \caption{Plots of the coverage probability of 95\% CI over 1000 simulations for Scenario A-5.
    ``beta0'', ``betaA'', ``beta1'' represent the intercept $\beta_0$, the coefficient for the treatment assignment $\beta_A$, the coefficient for the prognostic score $\beta_1$ in the PROCOVA model \eqref{eq: PROCOVA linear}, respectively.
     ``fix'' represents $e^\top\hat V_{\text{fix}}e$ with $e=(1,0,0)^\top$ for $\beta_0$, with $e=(0,1,0)^\top$ for $\beta_A$ and $e=(0,0,1)^\top$ for $\beta_1$.
    ``est'' represents $e^\top\hat V_{\text{est}}e$ with $e=(1,0,0)^\top$ for $\beta_0$, with $e=(0,1,0)^\top$ for $\beta_A$ and $e=(0,0,1)^\top$ for $\beta_1$.
    The sample size of trial data is $n=1000$. 
    The x-axis represents the sample size of historical data $\tilde n$.
    The y-axis represents the coverage probability which is the proportion of 1000 simulations in which the 95\% CI using each variance estimator includes the true value.}
\end{figure}

\clearpage
\subsection{Scenario A-6}

\begin{figure}[h]
    \centering
    \includegraphics[width=\linewidth]{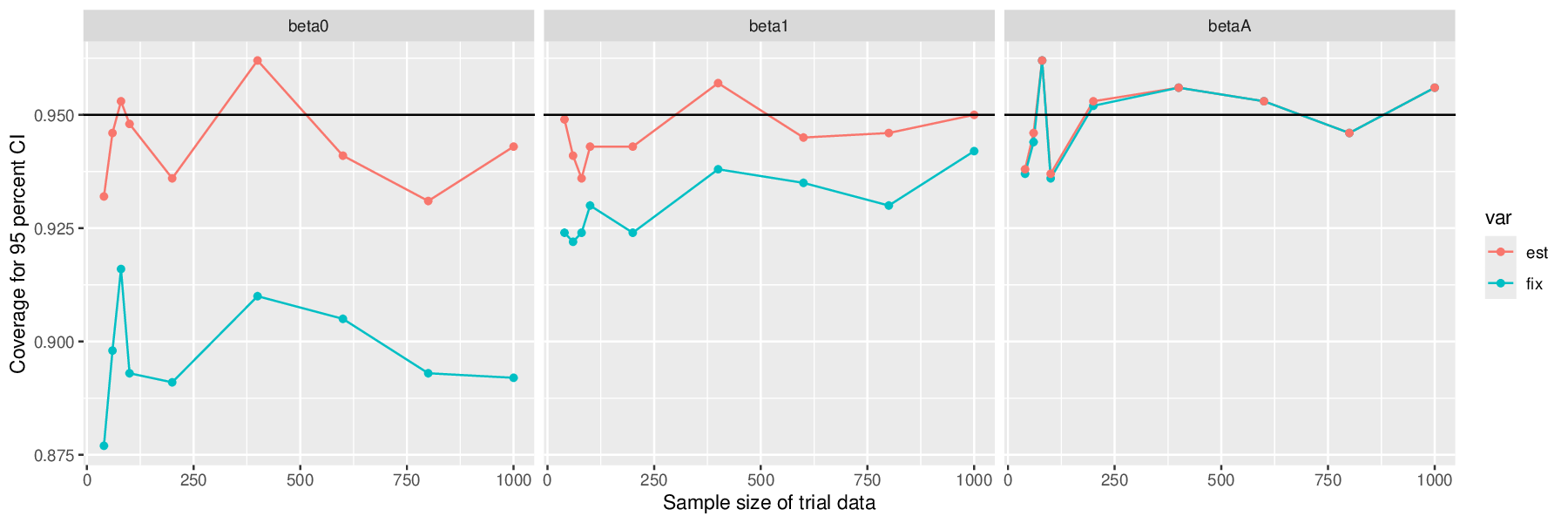}
    \caption{Plots of the coverage probability of 95\% CI over 1000 simulations for Scenario A-6.
    ``beta0'', ``betaA'', ``beta1'' represent the intercept $\beta_0$, the coefficient for the treatment assignment $\beta_A$, the coefficient for the prognostic score $\beta_1$ in the PROCOVA model \eqref{eq: PROCOVA linear}, respectively.
     ``fix'' represents $e^\top\hat V_{\text{fix}}e$ with $e=(1,0,0)^\top$ for $\beta_0$, with $e=(0,1,0)^\top$ for $\beta_A$ and $e=(0,0,1)^\top$ for $\beta_1$.
    ``est'' represents $e^\top\hat V_{\text{est}}e$ with $e=(1,0,0)^\top$ for $\beta_0$, with $e=(0,1,0)^\top$ for $\beta_A$ and $e=(0,0,1)^\top$ for $\beta_1$.
    The x-axis represents the sample size of trial data $n$.
    The sample size of historical data is $\tilde n = 10n$. 
    The y-axis represents the coverage probability which is the proportion of 1000 simulations in which the 95\% CI using each variance estimator includes the true value.
    }
\end{figure}

\clearpage
\begin{figure}[h]
    \centering
    \includegraphics[width=0.4\linewidth]{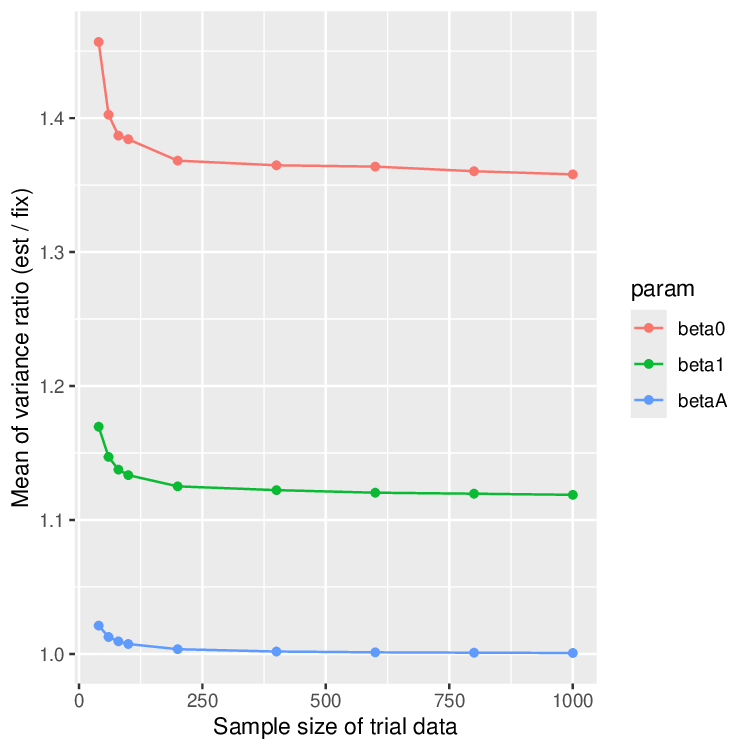}
    \caption{Plots of the mean of the ratio of two variance estimators over 1000 simulations for Scenario A-6.
    ``beta0'', ``betaA'', ``beta1'' represent the intercept $\beta_0$, the coefficient for the treatment assignment $\beta_A$, the coefficient for the prognostic score $\beta_1$ in the PROCOVA model \eqref{eq: PROCOVA linear}, respectively.
    The x-axis represents the sample size of trial data $n$.
    The sample size of historical data is $\tilde n = 10n$. 
    The y-axis represents the mean of the ratio of two variance estimators, i.e., $e^\top\hat V_{\text{est}}e/e^\top\hat V_{\text{fix}}e$ with $e=(1,0,0)^\top$ for $\beta_0$, with $e=(0,1,0)^\top$ for $\beta_A$ and $e=(0,0,1)^\top$ for $\beta_1$, over 1000 simulations.}
\end{figure}

\clearpage
\begin{figure}[h]
    \centering
    \includegraphics[width=0.4\linewidth]{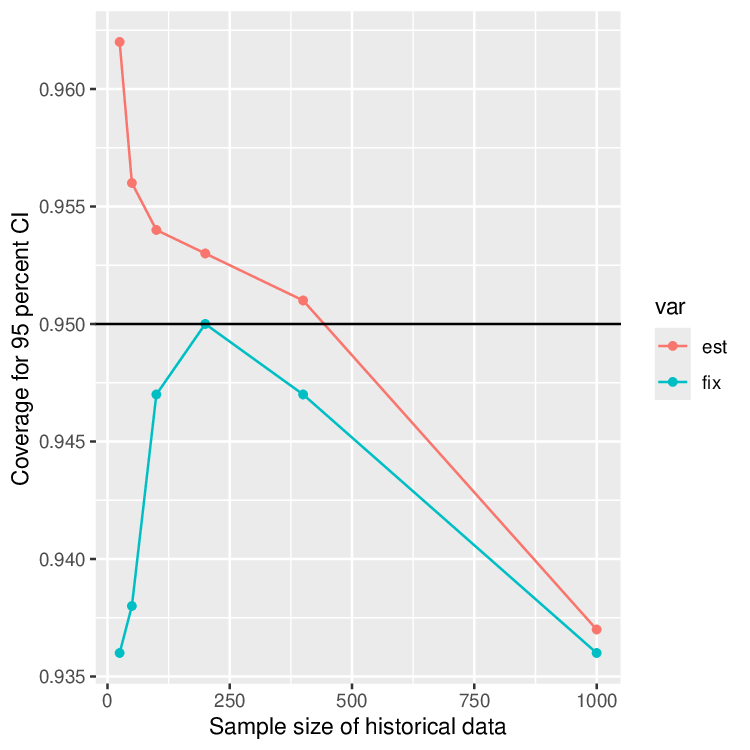}
    \caption{Plots of the coverage probability of the 95\% CI for $\beta_A$ in the PROCOVA model \eqref{eq: PROCOVA linear} over 1000 simulations for Scenario A-6.
     ``fix'' represents $e^\top\hat V_{\text{fix}}e$ with $e=(1,0,0)^\top$ for $\beta_0$, with $e=(0,1,0)^\top$ for $\beta_A$ and $e=(0,0,1)^\top$ for $\beta_1$.
    ``est'' represents $e^\top\hat V_{\text{est}}e$ with $e=(1,0,0)^\top$ for $\beta_0$, with $e=(0,1,0)^\top$ for $\beta_A$ and $e=(0,0,1)^\top$ for $\beta_1$.
    The sample size of trial data is $n=100$. 
    The x-axis represents the sample size of historical data $\tilde n$.
    The y-axis represents the coverage probability which is the proportion of 1000 simulations in which the 95\% CI using each variance estimator includes the true value.}
\end{figure}

\clearpage
\begin{figure}[h]
    \centering
    \includegraphics[width=\linewidth]{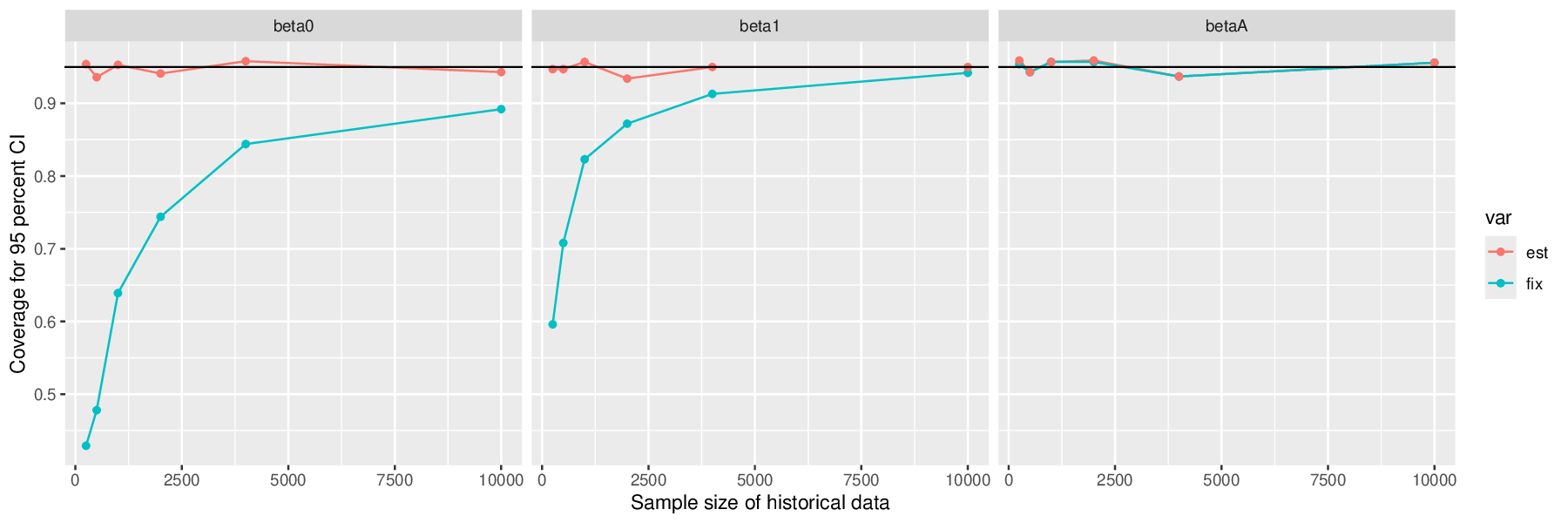}
    \caption{Plots of the coverage probability of 95\% CI over 1000 simulations for Scenario A-6.
    ``beta0'', ``betaA'', ``beta1'' represent the intercept $\beta_0$, the coefficient for the treatment assignment $\beta_A$, the coefficient for the prognostic score $\beta_1$ in the PROCOVA model \eqref{eq: PROCOVA linear}, respectively.
     ``fix'' represents $e^\top\hat V_{\text{fix}}e$ with $e=(1,0,0)^\top$ for $\beta_0$, with $e=(0,1,0)^\top$ for $\beta_A$ and $e=(0,0,1)^\top$ for $\beta_1$.
    ``est'' represents $e^\top\hat V_{\text{est}}e$ with $e=(1,0,0)^\top$ for $\beta_0$, with $e=(0,1,0)^\top$ for $\beta_A$ and $e=(0,0,1)^\top$ for $\beta_1$.
    The sample size of trial data is $n=1000$. 
    The x-axis represents the sample size of historical data $\tilde n$.
    The y-axis represents the coverage probability which is the proportion of 1000 simulations in which the 95\% CI using each variance estimator includes the true value.}
\end{figure}

\clearpage
\subsection{Scenario A-7}

\begin{figure}[h]
    \centering
    \includegraphics[width=\linewidth]{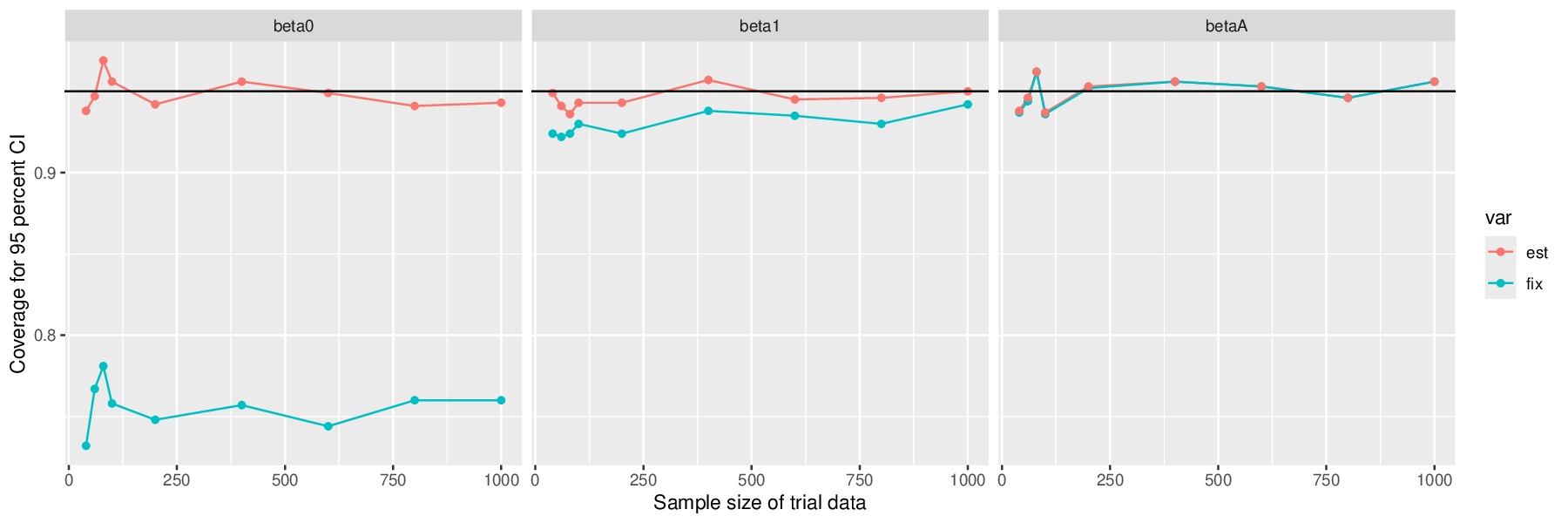}
    \caption{Plots of the coverage probability of 95\% CI over 1000 simulations for Scenario A-7.
    ``beta0'', ``betaA'', ``beta1'' represent the intercept $\beta_0$, the coefficient for the treatment assignment $\beta_A$, the coefficient for the prognostic score $\beta_1$ in the PROCOVA model \eqref{eq: PROCOVA linear}, respectively.
     ``fix'' represents $e^\top\hat V_{\text{fix}}e$ with $e=(1,0,0)^\top$ for $\beta_0$, with $e=(0,1,0)^\top$ for $\beta_A$ and $e=(0,0,1)^\top$ for $\beta_1$.
    ``est'' represents $e^\top\hat V_{\text{est}}e$ with $e=(1,0,0)^\top$ for $\beta_0$, with $e=(0,1,0)^\top$ for $\beta_A$ and $e=(0,0,1)^\top$ for $\beta_1$.
    The x-axis represents the sample size of trial data $n$.
    The sample size of historical data is $\tilde n = 10n$. 
    The y-axis represents the coverage probability which is the proportion of 1000 simulations in which the 95\% CI using each variance estimator includes the true value.
    }
\end{figure}

\clearpage
\begin{figure}[h]
    \centering
    \includegraphics[width=0.4\linewidth]{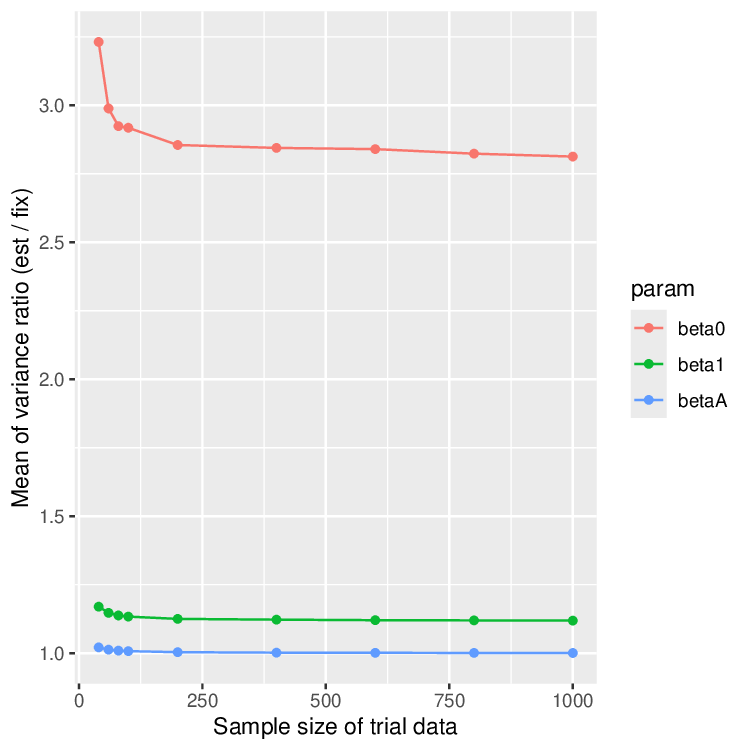}
    \caption{Plots of the mean of the ratio of two variance estimators over 1000 simulations for Scenario A-7.
    ``beta0'', ``betaA'', ``beta1'' represent the intercept $\beta_0$, the coefficient for the treatment assignment $\beta_A$, the coefficient for the prognostic score $\beta_1$ in the PROCOVA model \eqref{eq: PROCOVA linear}, respectively.
    The x-axis represents the sample size of trial data $n$.
    The sample size of historical data is $\tilde n = 10n$. 
    The y-axis represents the mean of the ratio of two variance estimators, i.e., $e^\top\hat V_{\text{est}}e/e^\top\hat V_{\text{fix}}e$ with $e=(1,0,0)^\top$ for $\beta_0$, with $e=(0,1,0)^\top$ for $\beta_A$ and $e=(0,0,1)^\top$ for $\beta_1$, over 1000 simulations.}
\end{figure}
\clearpage
\begin{figure}[h]
    \centering
    \includegraphics[width=0.4\linewidth]{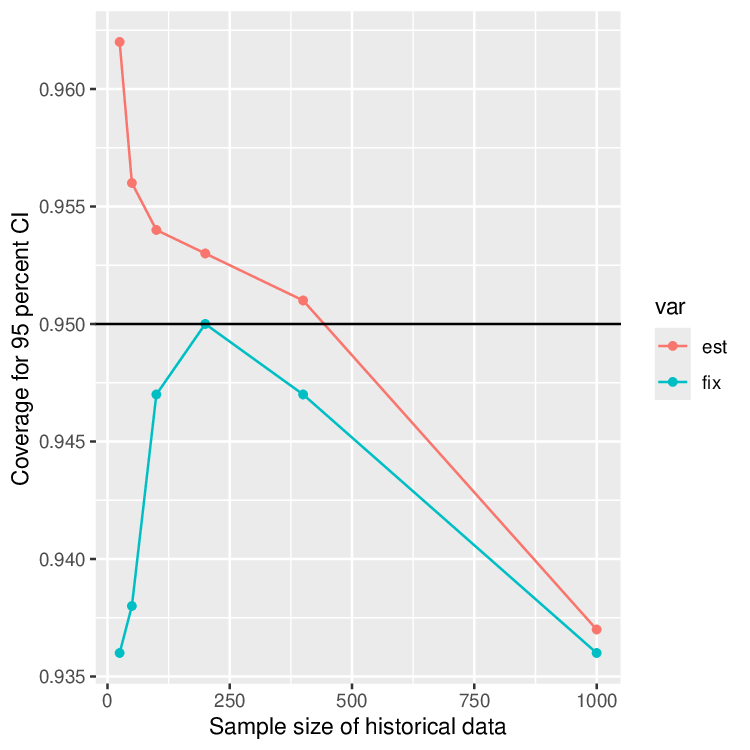}
    \caption{Plots of the coverage probability of the 95\% CI for $\beta_A$ in the PROCOVA model \eqref{eq: PROCOVA linear} over 1000 simulations for Scenario A-7.
     ``fix'' represents $e^\top\hat V_{\text{fix}}e$ with $e=(1,0,0)^\top$ for $\beta_0$, with $e=(0,1,0)^\top$ for $\beta_A$ and $e=(0,0,1)^\top$ for $\beta_1$.
    ``est'' represents $e^\top\hat V_{\text{est}}e$ with $e=(1,0,0)^\top$ for $\beta_0$, with $e=(0,1,0)^\top$ for $\beta_A$ and $e=(0,0,1)^\top$ for $\beta_1$.
    The sample size of trial data is $n=100$. 
    The x-axis represents the sample size of historical data $\tilde n$.
    The y-axis represents the coverage probability which is the proportion of 1000 simulations in which the 95\% CI using each variance estimator includes the true value.}
\end{figure}
\clearpage
\begin{figure}[h]
    \centering
    \includegraphics[width=\linewidth]{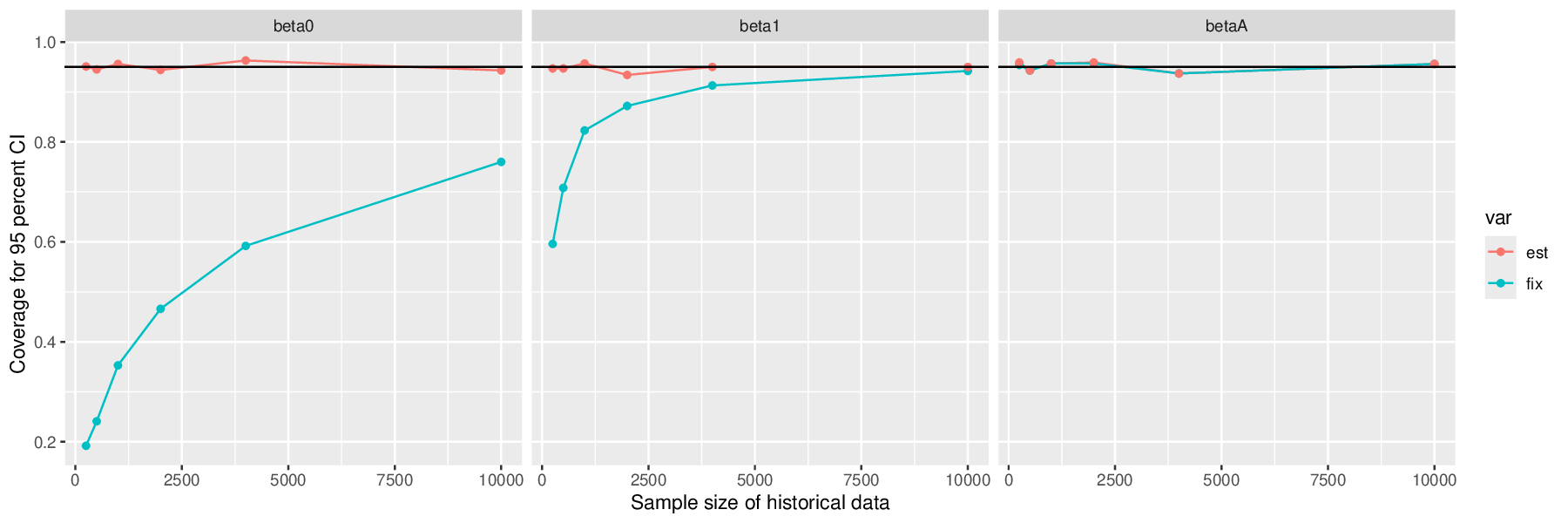}
    \caption{Plots of the coverage probability of 95\% CI over 1000 simulations for Scenario A-7.
    ``beta0'', ``betaA'', ``beta1'' represent the intercept $\beta_0$, the coefficient for the treatment assignment $\beta_A$, the coefficient for the prognostic score $\beta_1$ in the PROCOVA model \eqref{eq: PROCOVA linear}, respectively.
     ``fix'' represents $e^\top\hat V_{\text{fix}}e$ with $e=(1,0,0)^\top$ for $\beta_0$, with $e=(0,1,0)^\top$ for $\beta_A$ and $e=(0,0,1)^\top$ for $\beta_1$.
    ``est'' represents $e^\top\hat V_{\text{est}}e$ with $e=(1,0,0)^\top$ for $\beta_0$, with $e=(0,1,0)^\top$ for $\beta_A$ and $e=(0,0,1)^\top$ for $\beta_1$.
    The sample size of trial data is $n=1000$. 
    The x-axis represents the sample size of historical data $\tilde n$.
    The y-axis represents the coverage probability which is the proportion of 1000 simulations in which the 95\% CI using each variance estimator includes the true value.}
\end{figure}

\clearpage
\subsection{Scenario A-8}

\begin{figure}[h]
    \centering
    \includegraphics[width=\linewidth]{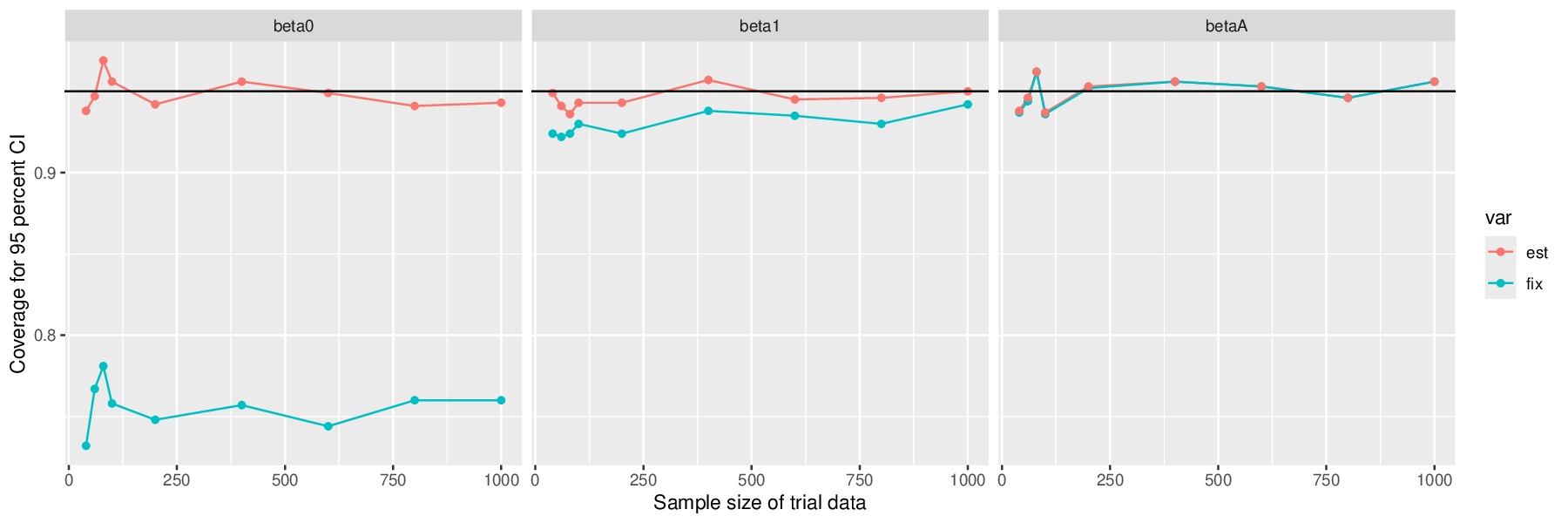}
    \caption{Plots of the coverage probability of 95\% CI over 1000 simulations for Scenario A-8.
    ``beta0'', ``betaA'', ``beta1'' represent the intercept $\beta_0$, the coefficient for the treatment assignment $\beta_A$, the coefficient for the prognostic score $\beta_1$ in the PROCOVA model \eqref{eq: PROCOVA linear}, respectively.
     ``fix'' represents $e^\top\hat V_{\text{fix}}e$ with $e=(1,0,0)^\top$ for $\beta_0$, with $e=(0,1,0)^\top$ for $\beta_A$ and $e=(0,0,1)^\top$ for $\beta_1$.
    ``est'' represents $e^\top\hat V_{\text{est}}e$ with $e=(1,0,0)^\top$ for $\beta_0$, with $e=(0,1,0)^\top$ for $\beta_A$ and $e=(0,0,1)^\top$ for $\beta_1$.
    The x-axis represents the sample size of trial data $n$.
    The sample size of historical data is $\tilde n = 10n$. 
    The y-axis represents the coverage probability which is the proportion of 1000 simulations in which the 95\% CI using each variance estimator includes the true value.
    }
\end{figure}

\clearpage
\begin{figure}[h]
    \centering
    \includegraphics[width=0.4\linewidth]{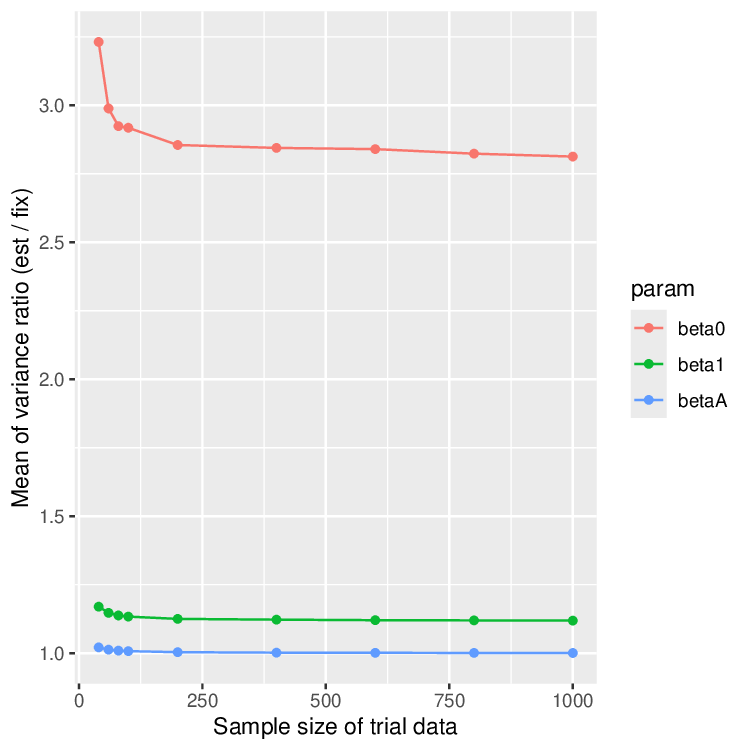}
    \caption{Plots of the mean of the ratio of two variance estimators over 1000 simulations for Scenario A-8.
    ``beta0'', ``betaA'', ``beta1'' represent the intercept $\beta_0$, the coefficient for the treatment assignment $\beta_A$, the coefficient for the prognostic score $\beta_1$ in the PROCOVA model \eqref{eq: PROCOVA linear}, respectively.
    The x-axis represents the sample size of trial data $n$.
    The sample size of historical data is $\tilde n = 10n$. 
    The y-axis represents the mean of the ratio of two variance estimators, i.e., $e^\top\hat V_{\text{est}}e/e^\top\hat V_{\text{fix}}e$ with $e=(1,0,0)^\top$ for $\beta_0$, with $e=(0,1,0)^\top$ for $\beta_A$ and $e=(0,0,1)^\top$ for $\beta_1$, over 1000 simulations.}
\end{figure}
\clearpage
\begin{figure}[h]
    \centering
    \includegraphics[width=0.4\linewidth]{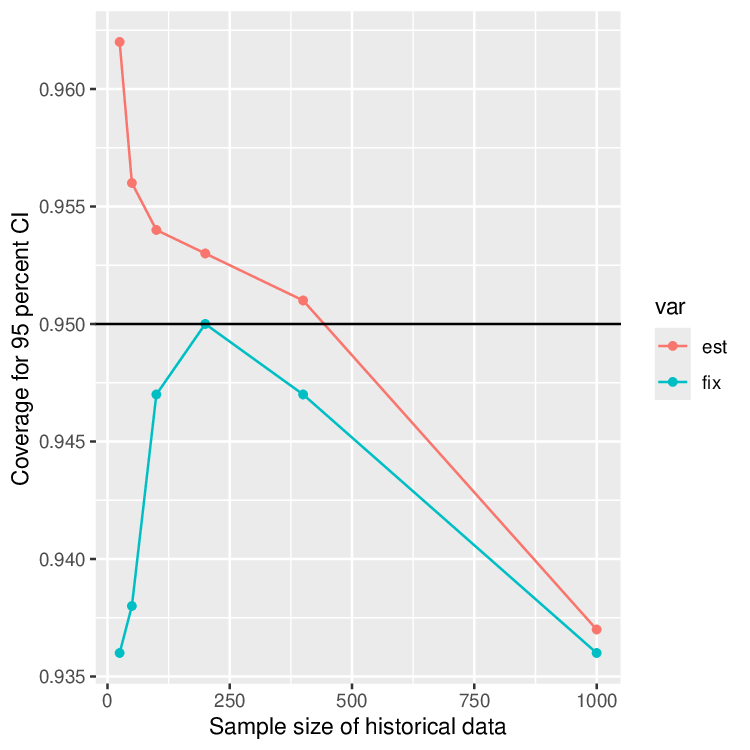}
    \caption{Plots of the coverage probability of the 95\% CI for $\beta_A$ in the PROCOVA model \eqref{eq: PROCOVA linear} over 1000 simulations for Scenario A-8.
     ``fix'' represents $e^\top\hat V_{\text{fix}}e$ with $e=(1,0,0)^\top$ for $\beta_0$, with $e=(0,1,0)^\top$ for $\beta_A$ and $e=(0,0,1)^\top$ for $\beta_1$.
    ``est'' represents $e^\top\hat V_{\text{est}}e$ with $e=(1,0,0)^\top$ for $\beta_0$, with $e=(0,1,0)^\top$ for $\beta_A$ and $e=(0,0,1)^\top$ for $\beta_1$.
    The sample size of trial data is $n=100$. 
    The x-axis represents the sample size of historical data $\tilde n$.
    The y-axis represents the coverage probability which is the proportion of 1000 simulations in which the 95\% CI using each variance estimator includes the true value.}
\end{figure}
\clearpage
\begin{figure}[h]
    \centering
    \includegraphics[width=\linewidth]{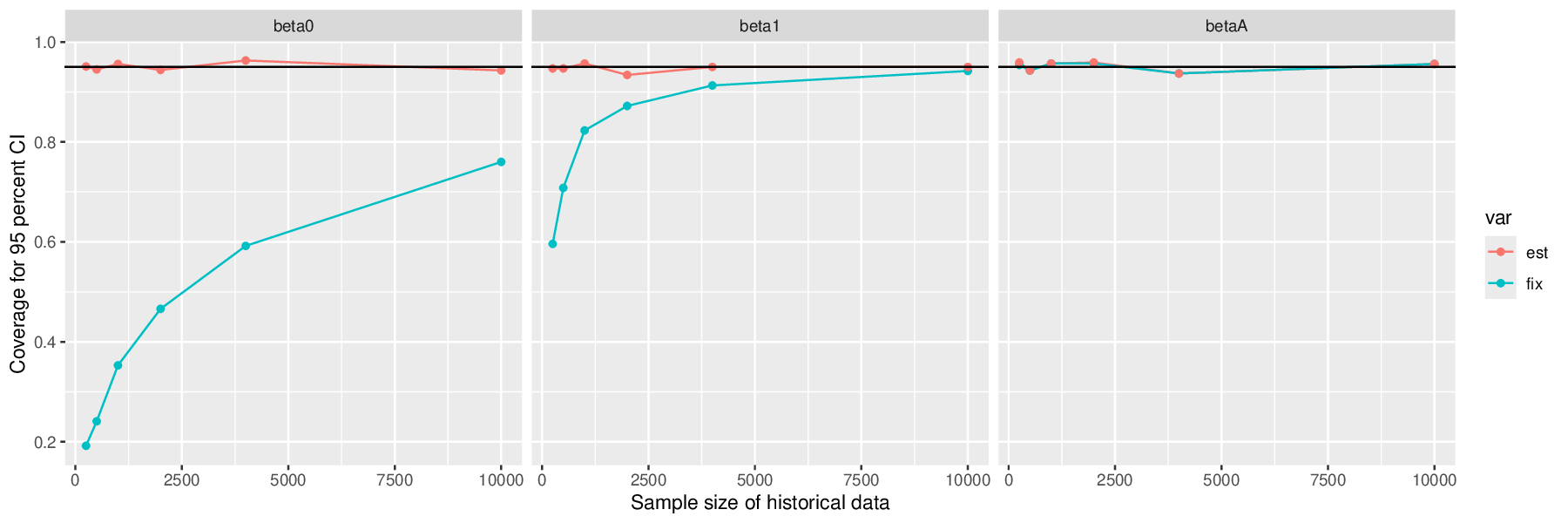}
    \caption{Plots of the coverage probability of 95\% CI over 1000 simulations for Scenario A-8.
    ``beta0'', ``betaA'', ``beta1'' represent the intercept $\beta_0$, the coefficient for the treatment assignment $\beta_A$, the coefficient for the prognostic score $\beta_1$ in the PROCOVA model \eqref{eq: PROCOVA linear}, respectively.
     ``fix'' represents $e^\top\hat V_{\text{fix}}e$ with $e=(1,0,0)^\top$ for $\beta_0$, with $e=(0,1,0)^\top$ for $\beta_A$ and $e=(0,0,1)^\top$ for $\beta_1$.
    ``est'' represents $e^\top\hat V_{\text{est}}e$ with $e=(1,0,0)^\top$ for $\beta_0$, with $e=(0,1,0)^\top$ for $\beta_A$ and $e=(0,0,1)^\top$ for $\beta_1$.
    The sample size of trial data is $n=1000$. 
    The x-axis represents the sample size of historical data $\tilde n$.
    The y-axis represents the coverage probability which is the proportion of 1000 simulations in which the 95\% CI using each variance estimator includes the true value.}
\end{figure}

\clearpage
\subsection{Scenario A-9}

\begin{figure}[h]
    \centering
    \includegraphics[width=\linewidth]{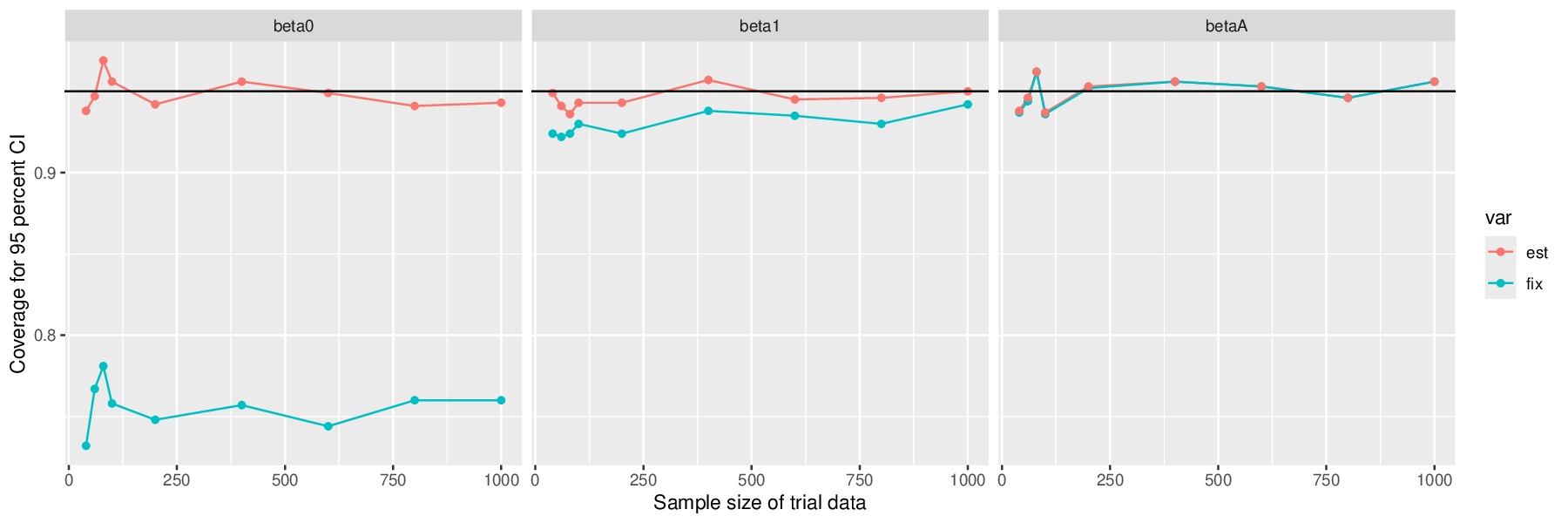}
    \caption{Plots of the coverage probability of 95\% CI over 1000 simulations for Scenario A-9.
    ``beta0'', ``betaA'', ``beta1'' represent the intercept $\beta_0$, the coefficient for the treatment assignment $\beta_A$, the coefficient for the prognostic score $\beta_1$ in the PROCOVA model \eqref{eq: PROCOVA linear}, respectively.
     ``fix'' represents $e^\top\hat V_{\text{fix}}e$ with $e=(1,0,0)^\top$ for $\beta_0$, with $e=(0,1,0)^\top$ for $\beta_A$ and $e=(0,0,1)^\top$ for $\beta_1$.
    ``est'' represents $e^\top\hat V_{\text{est}}e$ with $e=(1,0,0)^\top$ for $\beta_0$, with $e=(0,1,0)^\top$ for $\beta_A$ and $e=(0,0,1)^\top$ for $\beta_1$.
    The x-axis represents the sample size of trial data $n$.
    The sample size of historical data is $\tilde n = 10n$. 
    The y-axis represents the coverage probability which is the proportion of 1000 simulations in which the 95\% CI using each variance estimator includes the true value.
    }
\end{figure}

\clearpage
\begin{figure}[h]
    \centering
    \includegraphics[width=0.4\linewidth]{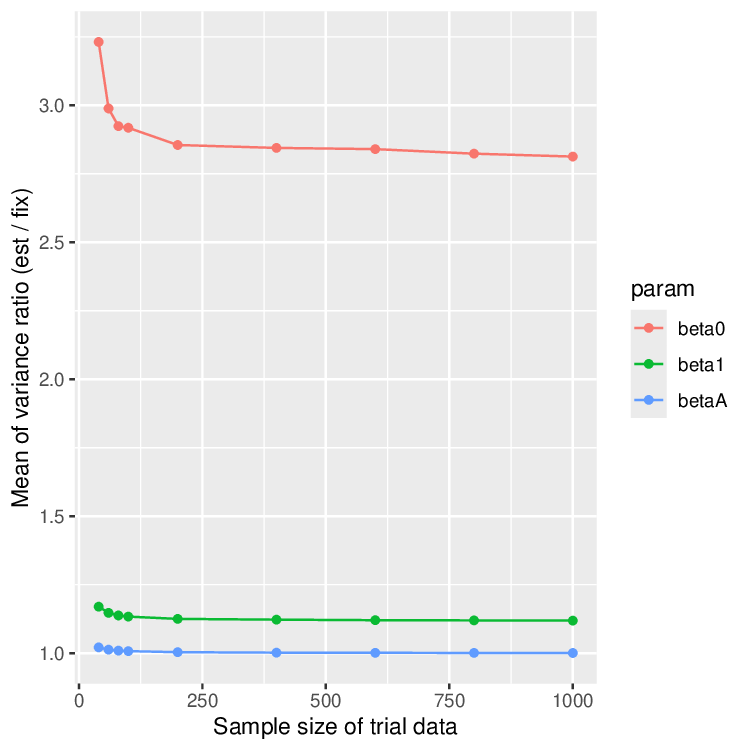}
    \caption{Plots of the mean of the ratio of two variance estimators over 1000 simulations for Scenario A-9.
    ``beta0'', ``betaA'', ``beta1'' represent the intercept $\beta_0$, the coefficient for the treatment assignment $\beta_A$, the coefficient for the prognostic score $\beta_1$ in the PROCOVA model \eqref{eq: PROCOVA linear}, respectively.
    The x-axis represents the sample size of trial data $n$.
    The sample size of historical data is $\tilde n = 10n$. 
    The y-axis represents the mean of the ratio of two variance estimators, i.e., $e^\top\hat V_{\text{est}}e/e^\top\hat V_{\text{fix}}e$ with $e=(1,0,0)^\top$ for $\beta_0$, with $e=(0,1,0)^\top$ for $\beta_A$ and $e=(0,0,1)^\top$ for $\beta_1$, over 1000 simulations.}
\end{figure}
\clearpage
\begin{figure}[h]
    \centering
    \includegraphics[width=0.4\linewidth]{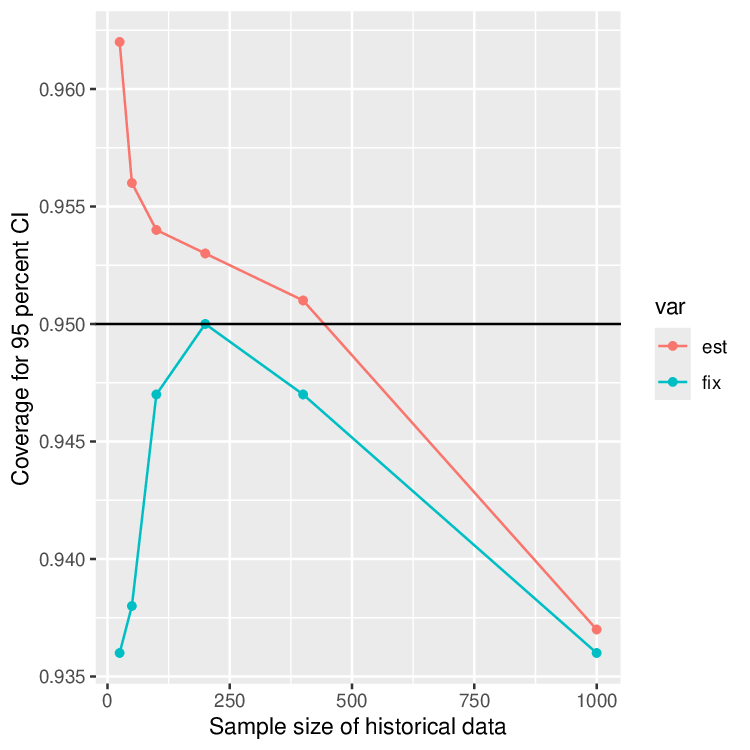}
    \caption{Plots of the coverage probability of the 95\% CI for $\beta_A$ in the PROCOVA model \eqref{eq: PROCOVA linear} over 1000 simulations for Scenario A-9.
     ``fix'' represents $e^\top\hat V_{\text{fix}}e$ with $e=(1,0,0)^\top$ for $\beta_0$, with $e=(0,1,0)^\top$ for $\beta_A$ and $e=(0,0,1)^\top$ for $\beta_1$.
    ``est'' represents $e^\top\hat V_{\text{est}}e$ with $e=(1,0,0)^\top$ for $\beta_0$, with $e=(0,1,0)^\top$ for $\beta_A$ and $e=(0,0,1)^\top$ for $\beta_1$.
    The sample size of trial data is $n=100$. 
    The x-axis represents the sample size of historical data $\tilde n$.
    The y-axis represents the coverage probability which is the proportion of 1000 simulations in which the 95\% CI using each variance estimator includes the true value.}
\end{figure}
\clearpage
\begin{figure}[h]
    \centering
    \includegraphics[width=\linewidth]{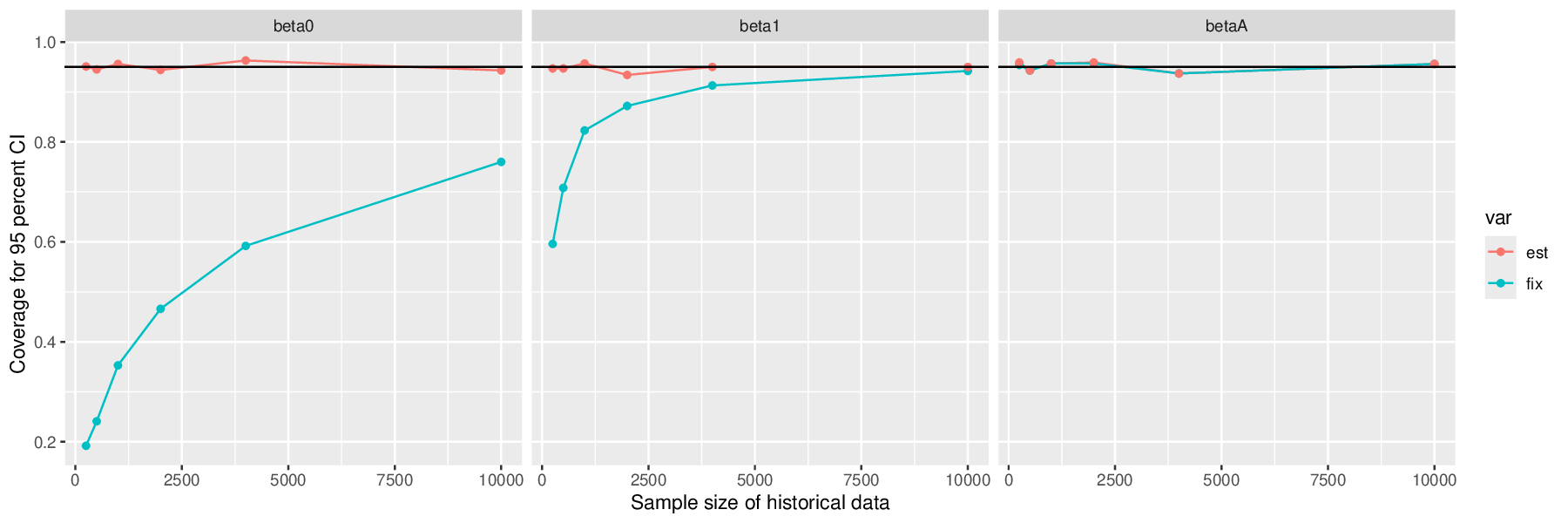}
    \caption{Plots of the coverage probability of 95\% CI over 1000 simulations for Scenario A-9.
    ``beta0'', ``betaA'', ``beta1'' represent the intercept $\beta_0$, the coefficient for the treatment assignment $\beta_A$, the coefficient for the prognostic score $\beta_1$ in the PROCOVA model \eqref{eq: PROCOVA linear}, respectively.
     ``fix'' represents $e^\top\hat V_{\text{fix}}e$ with $e=(1,0,0)^\top$ for $\beta_0$, with $e=(0,1,0)^\top$ for $\beta_A$ and $e=(0,0,1)^\top$ for $\beta_1$.
    ``est'' represents $e^\top\hat V_{\text{est}}e$ with $e=(1,0,0)^\top$ for $\beta_0$, with $e=(0,1,0)^\top$ for $\beta_A$ and $e=(0,0,1)^\top$ for $\beta_1$.
    The sample size of trial data is $n=1000$. 
    The x-axis represents the sample size of historical data $\tilde n$.
    The y-axis represents the coverage probability which is the proportion of 1000 simulations in which the 95\% CI using each variance estimator includes the true value.}
\end{figure}

\clearpage
\subsection{Scenario B-1}

\begin{figure}[h]
    \centering
    \includegraphics[width=\linewidth]{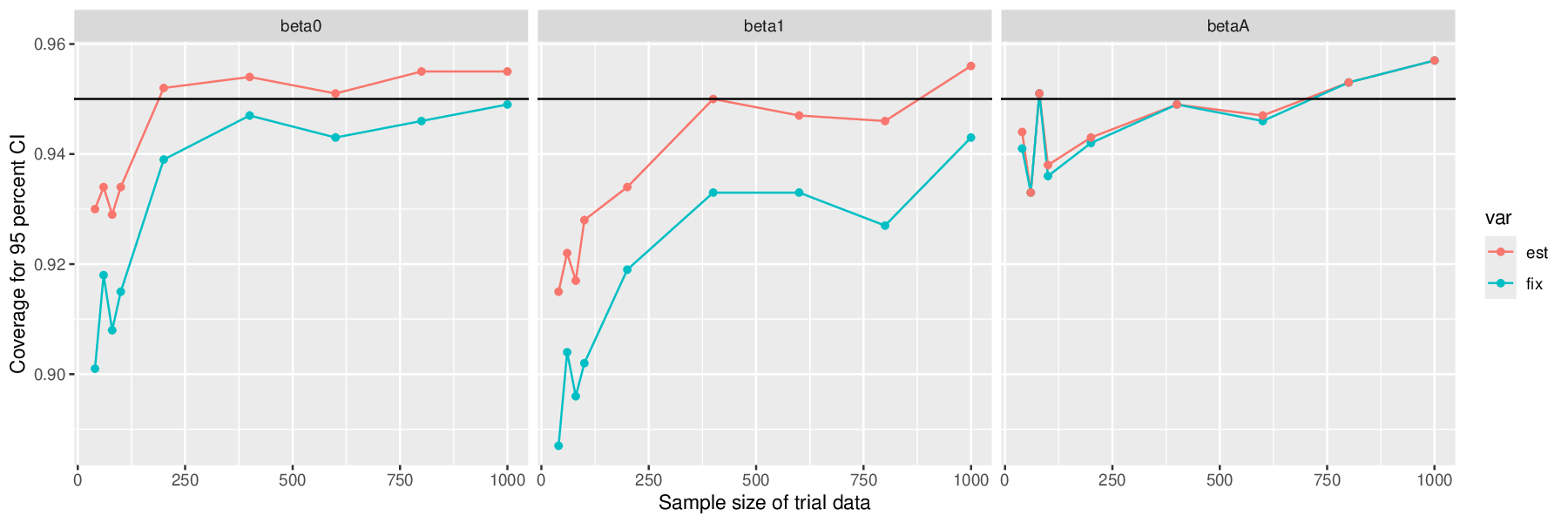}
    \caption{Plots of the coverage probability of 95\% CI over 1000 simulations for Scenario B-1.
    ``beta0'', ``betaA'', ``beta1'' represent the intercept $\beta_0$, the coefficient for the treatment assignment $\beta_A$, the coefficient for the prognostic score $\beta_1$ in the PROCOVA model \eqref{eq: PROCOVA linear}, respectively.
     ``fix'' represents $e^\top\hat V_{\text{fix}}e$ with $e=(1,0,0)^\top$ for $\beta_0$, with $e=(0,1,0)^\top$ for $\beta_A$ and $e=(0,0,1)^\top$ for $\beta_1$.
    ``est'' represents $e^\top\hat V_{\text{est}}e$ with $e=(1,0,0)^\top$ for $\beta_0$, with $e=(0,1,0)^\top$ for $\beta_A$ and $e=(0,0,1)^\top$ for $\beta_1$.
    The x-axis represents the sample size of trial data $n$.
    The sample size of historical data is $\tilde n = 10n$. 
    The y-axis represents the coverage probability which is the proportion of 1000 simulations in which the 95\% CI using each variance estimator includes the true value.
    }
\end{figure}

\clearpage
\begin{figure}[h]
    \centering
    \includegraphics[width=0.4\linewidth]{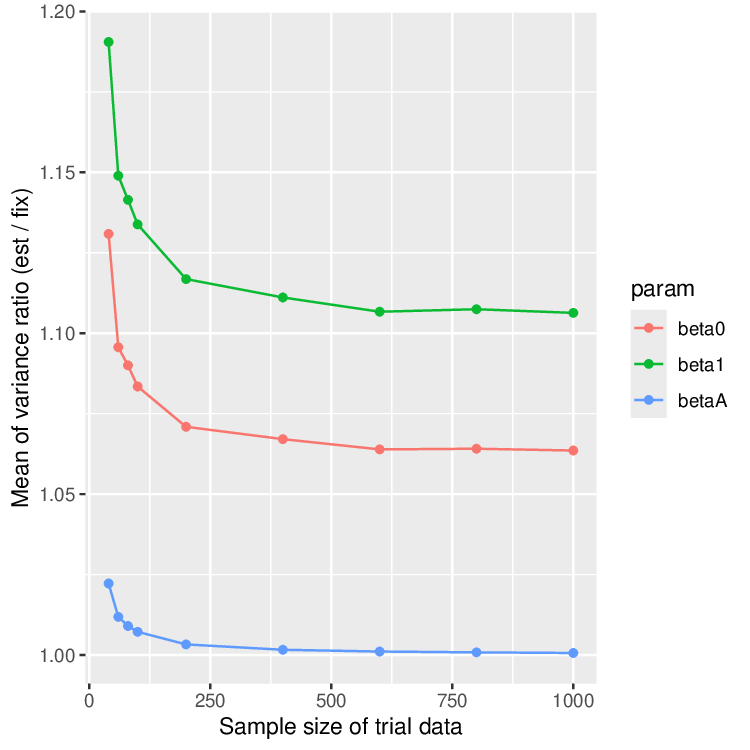}
    \caption{Plots of the mean of the ratio of two variance estimators over 1000 simulations for Scenario B-1.
    ``beta0'', ``betaA'', ``beta1'' represent the intercept $\beta_0$, the coefficient for the treatment assignment $\beta_A$, the coefficient for the prognostic score $\beta_1$ in the PROCOVA model \eqref{eq: PROCOVA linear}, respectively.
    The x-axis represents the sample size of trial data $n$.
    The sample size of historical data is $\tilde n = 10n$. 
    The y-axis represents the mean of the ratio of two variance estimators, i.e., $e^\top\hat V_{\text{est}}e/e^\top\hat V_{\text{fix}}e$ with $e=(1,0,0)^\top$ for $\beta_0$, with $e=(0,1,0)^\top$ for $\beta_A$ and $e=(0,0,1)^\top$ for $\beta_1$, over 1000 simulations.}
\end{figure}
\clearpage
\begin{figure}[h]
    \centering
    \includegraphics[width=0.4\linewidth]{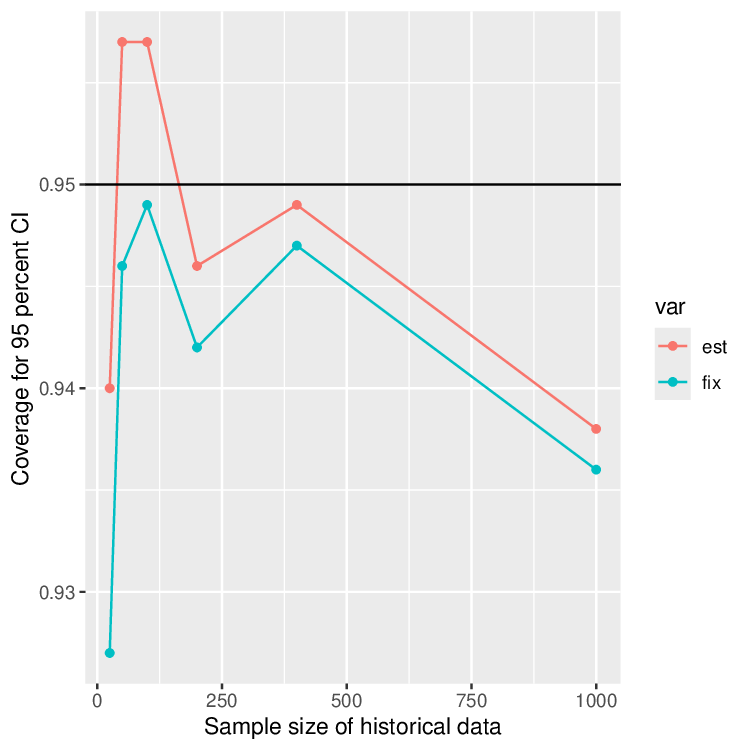}
    \caption{Plots of the coverage probability of the 95\% CI for $\beta_A$ in the PROCOVA model \eqref{eq: PROCOVA linear} over 1000 simulations for Scenario B-1.
     ``fix'' represents $e^\top\hat V_{\text{fix}}e$ with $e=(1,0,0)^\top$ for $\beta_0$, with $e=(0,1,0)^\top$ for $\beta_A$ and $e=(0,0,1)^\top$ for $\beta_1$.
    ``est'' represents $e^\top\hat V_{\text{est}}e$ with $e=(1,0,0)^\top$ for $\beta_0$, with $e=(0,1,0)^\top$ for $\beta_A$ and $e=(0,0,1)^\top$ for $\beta_1$.
    The sample size of trial data is $n=100$. 
    The x-axis represents the sample size of historical data $\tilde n$.
    The y-axis represents the coverage probability which is the proportion of 1000 simulations in which the 95\% CI using each variance estimator includes the true value.}
\end{figure}
\clearpage
\begin{figure}[h]
    \centering
    \includegraphics[width=\linewidth]{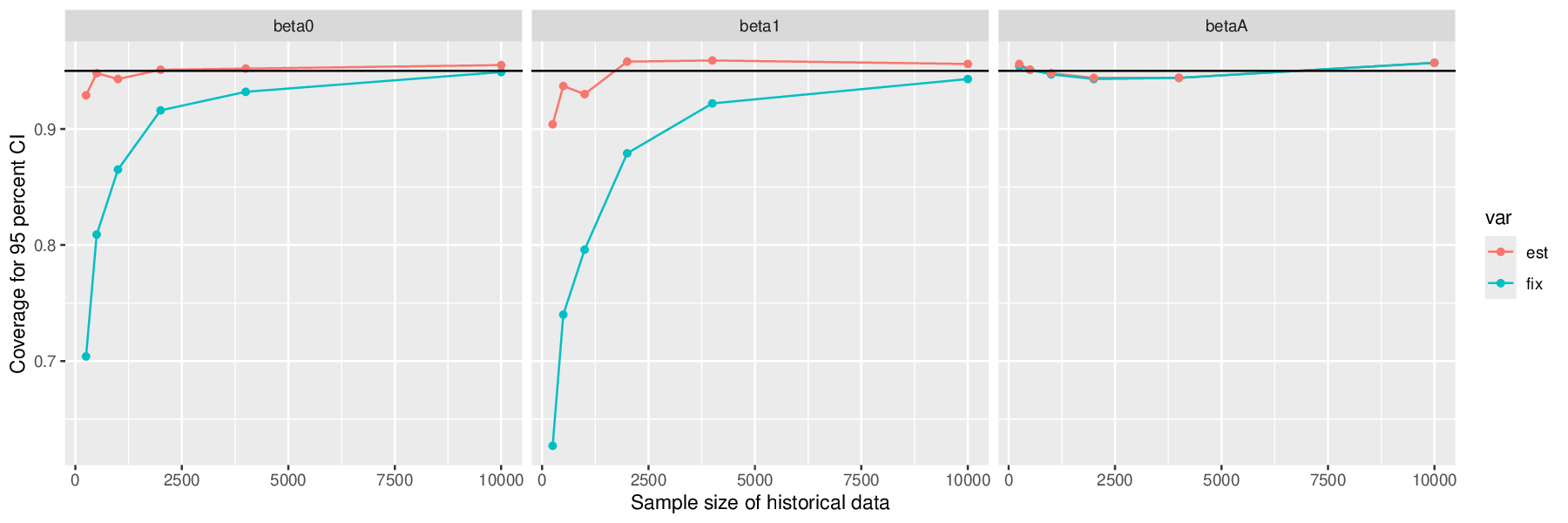}
    \caption{Plots of the coverage probability of 95\% CI over 1000 simulations for Scenario B-1.
    ``beta0'', ``betaA'', ``beta1'' represent the intercept $\beta_0$, the coefficient for the treatment assignment $\beta_A$, the coefficient for the prognostic score $\beta_1$ in the PROCOVA model \eqref{eq: PROCOVA linear}, respectively.
     ``fix'' represents $e^\top\hat V_{\text{fix}}e$ with $e=(1,0,0)^\top$ for $\beta_0$, with $e=(0,1,0)^\top$ for $\beta_A$ and $e=(0,0,1)^\top$ for $\beta_1$.
    ``est'' represents $e^\top\hat V_{\text{est}}e$ with $e=(1,0,0)^\top$ for $\beta_0$, with $e=(0,1,0)^\top$ for $\beta_A$ and $e=(0,0,1)^\top$ for $\beta_1$.
    The sample size of trial data is $n=1000$. 
    The x-axis represents the sample size of historical data $\tilde n$.
    The y-axis represents the coverage probability which is the proportion of 1000 simulations in which the 95\% CI using each variance estimator includes the true value.}
\end{figure}

\clearpage
\subsection{Scenario B-2}

\begin{figure}[h]
    \centering
    \includegraphics[width=\linewidth]{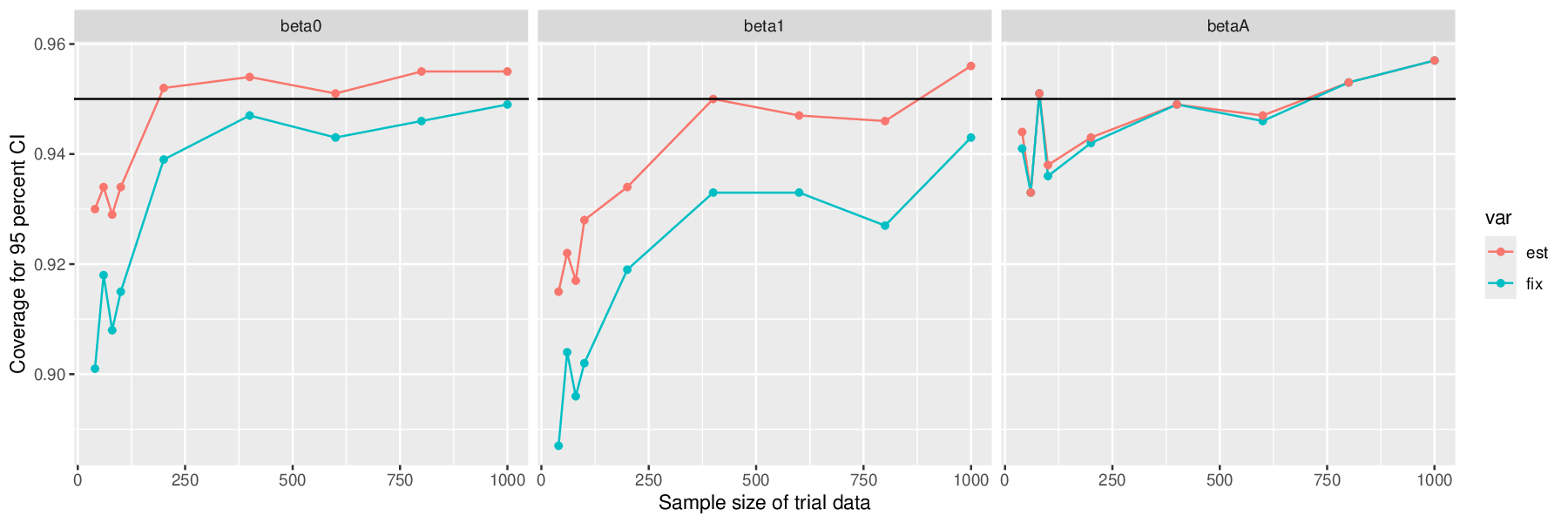}
    \caption{Plots of the coverage probability of 95\% CI over 1000 simulations for Scenario B-2.
    ``beta0'', ``betaA'', ``beta1'' represent the intercept $\beta_0$, the coefficient for the treatment assignment $\beta_A$, the coefficient for the prognostic score $\beta_1$ in the PROCOVA model \eqref{eq: PROCOVA linear}, respectively.
     ``fix'' represents $e^\top\hat V_{\text{fix}}e$ with $e=(1,0,0)^\top$ for $\beta_0$, with $e=(0,1,0)^\top$ for $\beta_A$ and $e=(0,0,1)^\top$ for $\beta_1$.
    ``est'' represents $e^\top\hat V_{\text{est}}e$ with $e=(1,0,0)^\top$ for $\beta_0$, with $e=(0,1,0)^\top$ for $\beta_A$ and $e=(0,0,1)^\top$ for $\beta_1$.
    The x-axis represents the sample size of trial data $n$.
    The sample size of historical data is $\tilde n = 10n$. 
    The y-axis represents the coverage probability which is the proportion of 1000 simulations in which the 95\% CI using each variance estimator includes the true value.
    }
\end{figure}

\clearpage
\begin{figure}[h]
    \centering
    \includegraphics[width=0.4\linewidth]{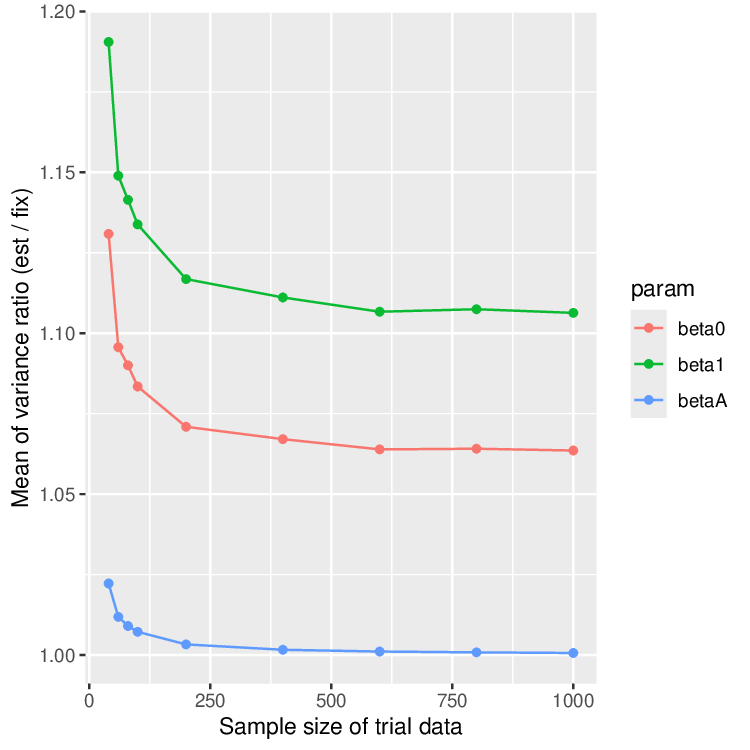}
    \caption{Plots of the mean of the ratio of two variance estimators over 1000 simulations for Scenario B-2.
    ``beta0'', ``betaA'', ``beta1'' represent the intercept $\beta_0$, the coefficient for the treatment assignment $\beta_A$, the coefficient for the prognostic score $\beta_1$ in the PROCOVA model \eqref{eq: PROCOVA linear}, respectively.
    The x-axis represents the sample size of trial data $n$.
    The sample size of historical data is $\tilde n = 10n$. 
    The y-axis represents the mean of the ratio of two variance estimators, i.e., $e^\top\hat V_{\text{est}}e/e^\top\hat V_{\text{fix}}e$ with $e=(1,0,0)^\top$ for $\beta_0$, with $e=(0,1,0)^\top$ for $\beta_A$ and $e=(0,0,1)^\top$ for $\beta_1$, over 1000 simulations.}
\end{figure}
\clearpage
\begin{figure}[h]
    \centering
    \includegraphics[width=0.4\linewidth]{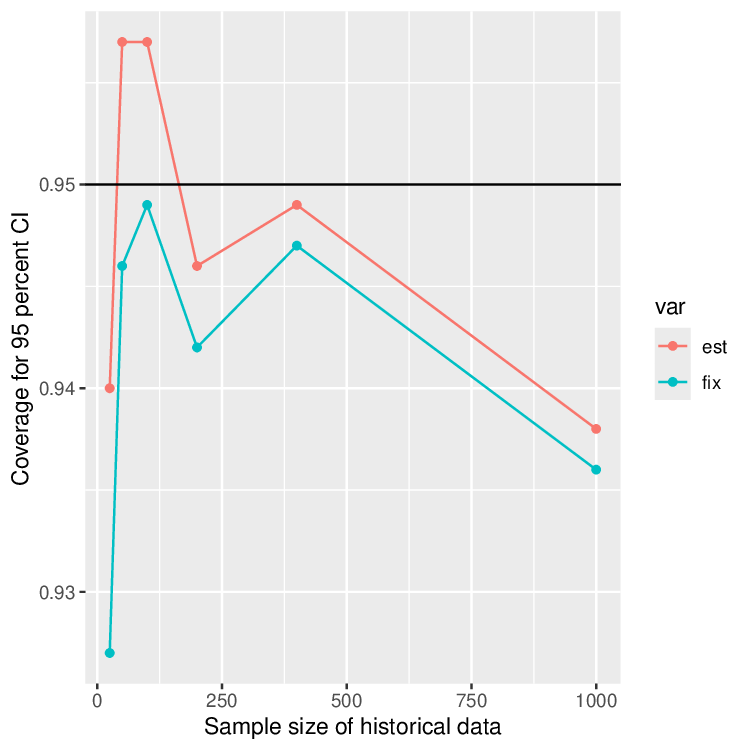}
    \caption{Plots of the coverage probability of the 95\% CI for $\beta_A$ in the PROCOVA model \eqref{eq: PROCOVA linear} over 1000 simulations for Scenario B-2.
     ``fix'' represents $e^\top\hat V_{\text{fix}}e$ with $e=(1,0,0)^\top$ for $\beta_0$, with $e=(0,1,0)^\top$ for $\beta_A$ and $e=(0,0,1)^\top$ for $\beta_1$.
    ``est'' represents $e^\top\hat V_{\text{est}}e$ with $e=(1,0,0)^\top$ for $\beta_0$, with $e=(0,1,0)^\top$ for $\beta_A$ and $e=(0,0,1)^\top$ for $\beta_1$.
    The sample size of trial data is $n=100$. 
    The x-axis represents the sample size of historical data $\tilde n$.
    The y-axis represents the coverage probability which is the proportion of 1000 simulations in which the 95\% CI using each variance estimator includes the true value.}
\end{figure}
\clearpage
\begin{figure}[h]
    \centering
    \includegraphics[width=\linewidth]{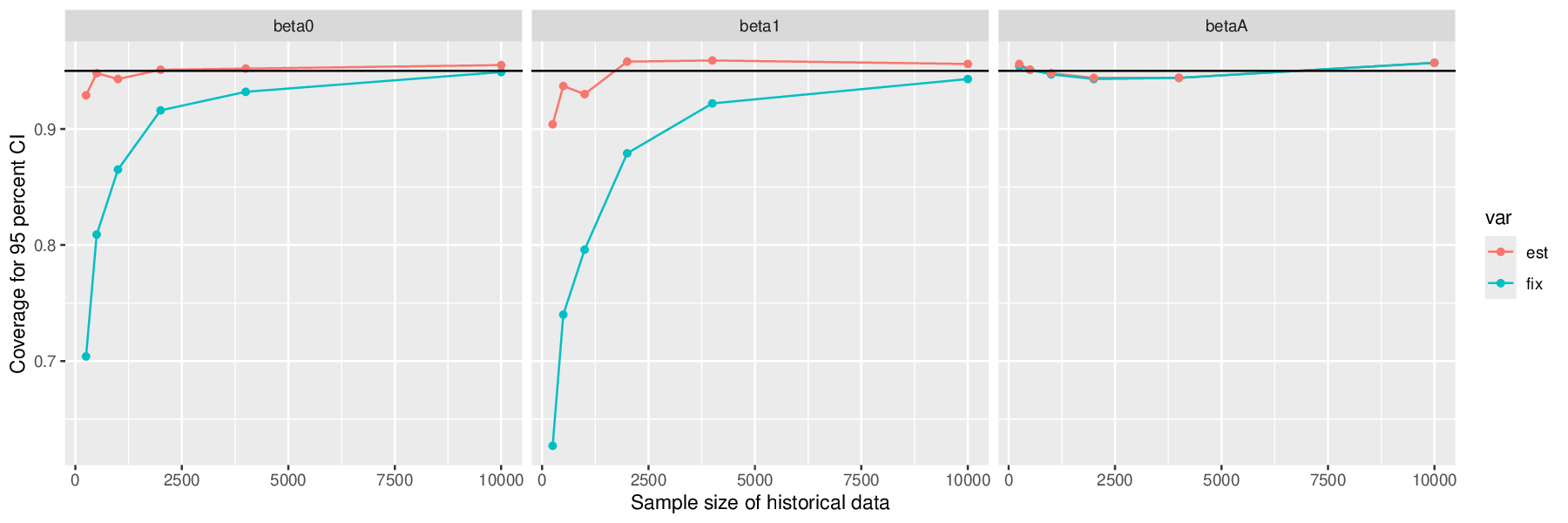}
    \caption{Plots of the coverage probability of 95\% CI over 1000 simulations for Scenario B-2.
    ``beta0'', ``betaA'', ``beta1'' represent the intercept $\beta_0$, the coefficient for the treatment assignment $\beta_A$, the coefficient for the prognostic score $\beta_1$ in the PROCOVA model \eqref{eq: PROCOVA linear}, respectively.
     ``fix'' represents $e^\top\hat V_{\text{fix}}e$ with $e=(1,0,0)^\top$ for $\beta_0$, with $e=(0,1,0)^\top$ for $\beta_A$ and $e=(0,0,1)^\top$ for $\beta_1$.
    ``est'' represents $e^\top\hat V_{\text{est}}e$ with $e=(1,0,0)^\top$ for $\beta_0$, with $e=(0,1,0)^\top$ for $\beta_A$ and $e=(0,0,1)^\top$ for $\beta_1$.
    The sample size of trial data is $n=1000$. 
    The x-axis represents the sample size of historical data $\tilde n$.
    The y-axis represents the coverage probability which is the proportion of 1000 simulations in which the 95\% CI using each variance estimator includes the true value.}
\end{figure}

\clearpage
\subsection{Scenario B-3}

\begin{figure}[h]
    \centering
    \includegraphics[width=\linewidth]{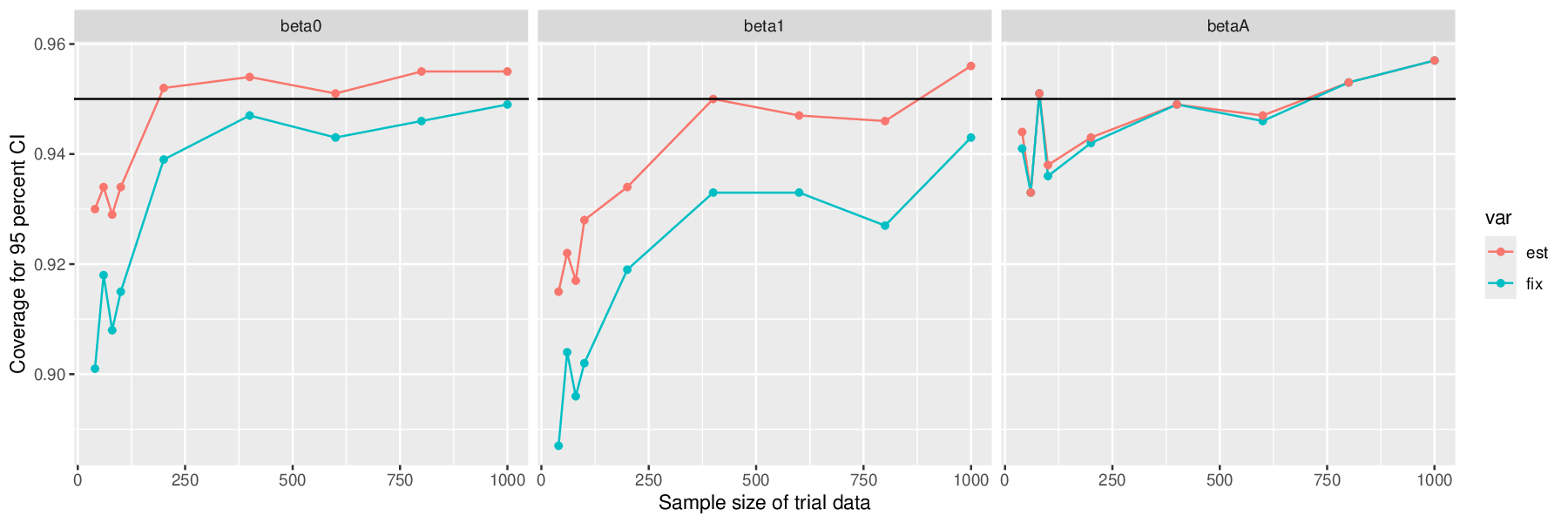}
    \caption{Plots of the coverage probability of 95\% CI over 1000 simulations for Scenario B-3.
    ``beta0'', ``betaA'', ``beta1'' represent the intercept $\beta_0$, the coefficient for the treatment assignment $\beta_A$, the coefficient for the prognostic score $\beta_1$ in the PROCOVA model \eqref{eq: PROCOVA linear}, respectively.
     ``fix'' represents $e^\top\hat V_{\text{fix}}e$ with $e=(1,0,0)^\top$ for $\beta_0$, with $e=(0,1,0)^\top$ for $\beta_A$ and $e=(0,0,1)^\top$ for $\beta_1$.
    ``est'' represents $e^\top\hat V_{\text{est}}e$ with $e=(1,0,0)^\top$ for $\beta_0$, with $e=(0,1,0)^\top$ for $\beta_A$ and $e=(0,0,1)^\top$ for $\beta_1$.
    The x-axis represents the sample size of trial data $n$.
    The sample size of historical data is $\tilde n = 10n$. 
    The y-axis represents the coverage probability which is the proportion of 1000 simulations in which the 95\% CI using each variance estimator includes the true value.
    }
\end{figure}

\clearpage
\begin{figure}[h]
    \centering
    \includegraphics[width=0.4\linewidth]{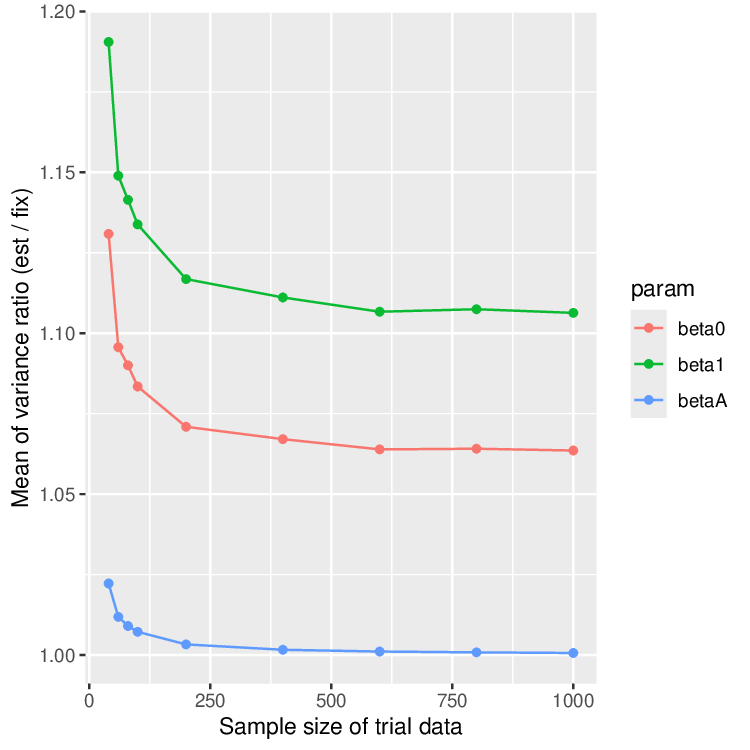}
    \caption{Plots of the mean of the ratio of two variance estimators over 1000 simulations for Scenario B-3.
    ``beta0'', ``betaA'', ``beta1'' represent the intercept $\beta_0$, the coefficient for the treatment assignment $\beta_A$, the coefficient for the prognostic score $\beta_1$ in the PROCOVA model \eqref{eq: PROCOVA linear}, respectively.
    The x-axis represents the sample size of trial data $n$.
    The sample size of historical data is $\tilde n = 10n$. 
    The y-axis represents the mean of the ratio of two variance estimators, i.e., $e^\top\hat V_{\text{est}}e/e^\top\hat V_{\text{fix}}e$ with $e=(1,0,0)^\top$ for $\beta_0$, with $e=(0,1,0)^\top$ for $\beta_A$ and $e=(0,0,1)^\top$ for $\beta_1$, over 1000 simulations.}
\end{figure}
\clearpage
\begin{figure}[h]
    \centering
    \includegraphics[width=0.4\linewidth]{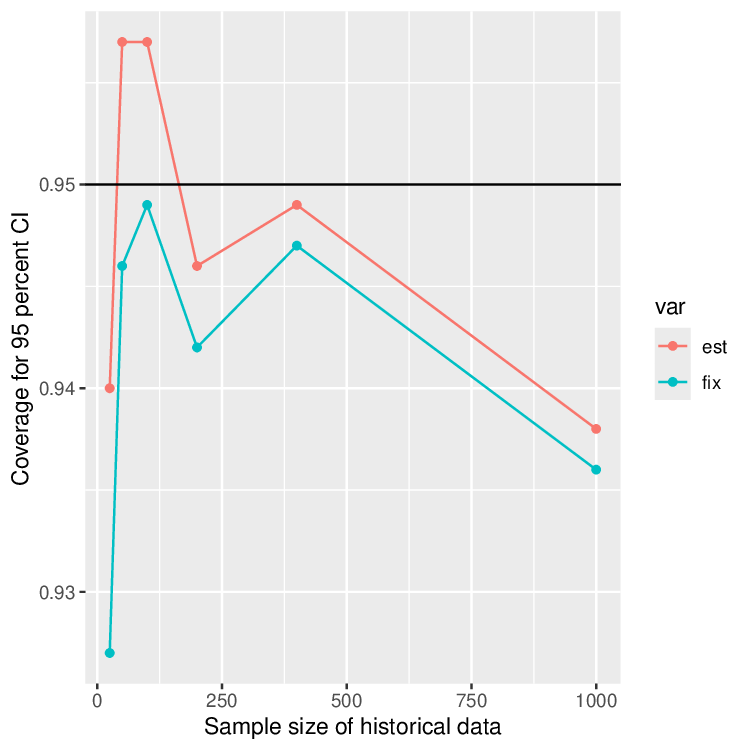}
    \caption{Plots of the coverage probability of the 95\% CI for $\beta_A$ in the PROCOVA model \eqref{eq: PROCOVA linear} over 1000 simulations for Scenario B-3.
     ``fix'' represents $e^\top\hat V_{\text{fix}}e$ with $e=(1,0,0)^\top$ for $\beta_0$, with $e=(0,1,0)^\top$ for $\beta_A$ and $e=(0,0,1)^\top$ for $\beta_1$.
    ``est'' represents $e^\top\hat V_{\text{est}}e$ with $e=(1,0,0)^\top$ for $\beta_0$, with $e=(0,1,0)^\top$ for $\beta_A$ and $e=(0,0,1)^\top$ for $\beta_1$.
    The sample size of trial data is $n=100$. 
    The x-axis represents the sample size of historical data $\tilde n$.
    The y-axis represents the coverage probability which is the proportion of 1000 simulations in which the 95\% CI using each variance estimator includes the true value.}
\end{figure}
\clearpage
\begin{figure}[h]
    \centering
    \includegraphics[width=\linewidth]{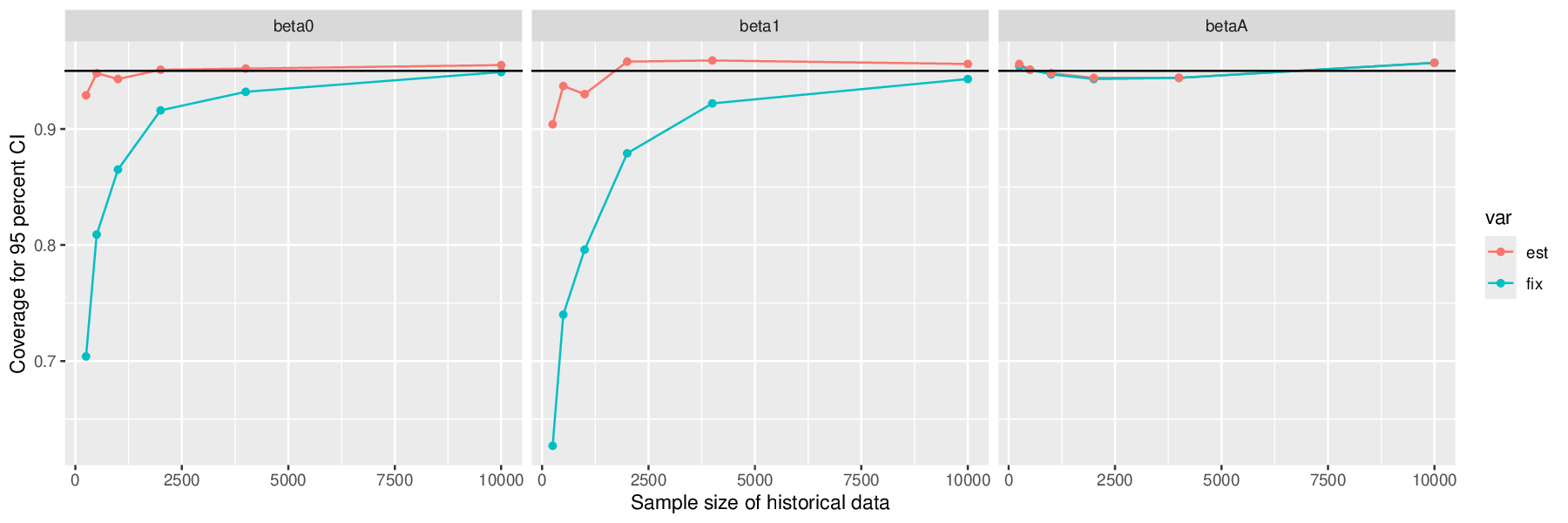}
    \caption{Plots of the coverage probability of 95\% CI over 1000 simulations for Scenario B-3.
    ``beta0'', ``betaA'', ``beta1'' represent the intercept $\beta_0$, the coefficient for the treatment assignment $\beta_A$, the coefficient for the prognostic score $\beta_1$ in the PROCOVA model \eqref{eq: PROCOVA linear}, respectively.
     ``fix'' represents $e^\top\hat V_{\text{fix}}e$ with $e=(1,0,0)^\top$ for $\beta_0$, with $e=(0,1,0)^\top$ for $\beta_A$ and $e=(0,0,1)^\top$ for $\beta_1$.
    ``est'' represents $e^\top\hat V_{\text{est}}e$ with $e=(1,0,0)^\top$ for $\beta_0$, with $e=(0,1,0)^\top$ for $\beta_A$ and $e=(0,0,1)^\top$ for $\beta_1$.
    The sample size of trial data is $n=1000$. 
    The x-axis represents the sample size of historical data $\tilde n$.
    The y-axis represents the coverage probability which is the proportion of 1000 simulations in which the 95\% CI using each variance estimator includes the true value.}
\end{figure}

\clearpage
\subsection{Scenario B-4}

\begin{figure}[h]
    \centering
    \includegraphics[width=\linewidth]{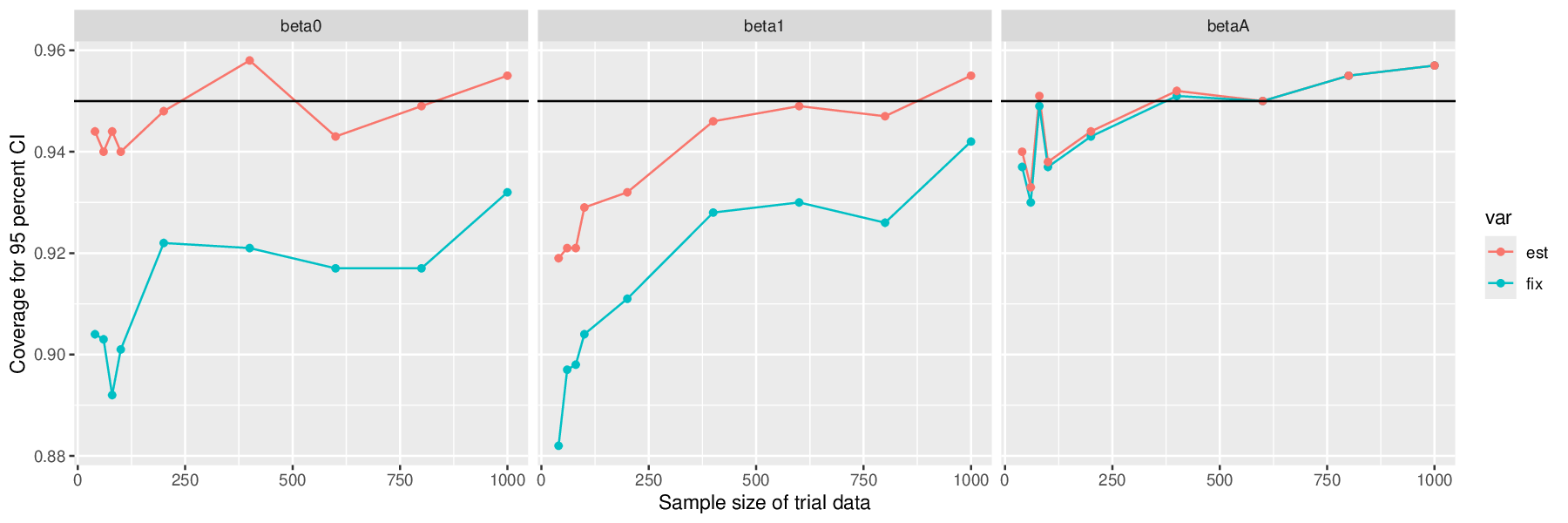}
    \caption{Plots of the coverage probability of 95\% CI over 1000 simulations for Scenario B-4.
    ``beta0'', ``betaA'', ``beta1'' represent the intercept $\beta_0$, the coefficient for the treatment assignment $\beta_A$, the coefficient for the prognostic score $\beta_1$ in the PROCOVA model \eqref{eq: PROCOVA linear}, respectively.
     ``fix'' represents $e^\top\hat V_{\text{fix}}e$ with $e=(1,0,0)^\top$ for $\beta_0$, with $e=(0,1,0)^\top$ for $\beta_A$ and $e=(0,0,1)^\top$ for $\beta_1$.
    ``est'' represents $e^\top\hat V_{\text{est}}e$ with $e=(1,0,0)^\top$ for $\beta_0$, with $e=(0,1,0)^\top$ for $\beta_A$ and $e=(0,0,1)^\top$ for $\beta_1$.
    The x-axis represents the sample size of trial data $n$.
    The sample size of historical data is $\tilde n = 10n$. 
    The y-axis represents the coverage probability which is the proportion of 1000 simulations in which the 95\% CI using each variance estimator includes the true value.
    }
\end{figure}

\clearpage
\begin{figure}[h]
    \centering
    \includegraphics[width=0.4\linewidth]{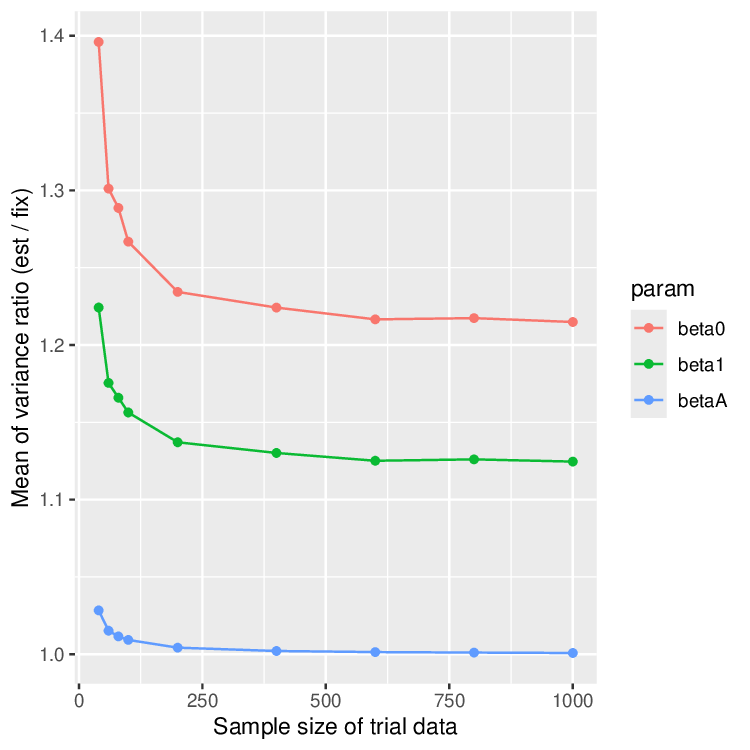}
    \caption{Plots of the mean of the ratio of two variance estimators over 1000 simulations for Scenario B-4.
    ``beta0'', ``betaA'', ``beta1'' represent the intercept $\beta_0$, the coefficient for the treatment assignment $\beta_A$, the coefficient for the prognostic score $\beta_1$ in the PROCOVA model \eqref{eq: PROCOVA linear}, respectively.
    The x-axis represents the sample size of trial data $n$.
    The sample size of historical data is $\tilde n = 10n$. 
    The y-axis represents the mean of the ratio of two variance estimators, i.e., $e^\top\hat V_{\text{est}}e/e^\top\hat V_{\text{fix}}e$ with $e=(1,0,0)^\top$ for $\beta_0$, with $e=(0,1,0)^\top$ for $\beta_A$ and $e=(0,0,1)^\top$ for $\beta_1$, over 1000 simulations.}
\end{figure}

\clearpage
\begin{figure}[h]
    \centering
    \includegraphics[width=0.4\linewidth]{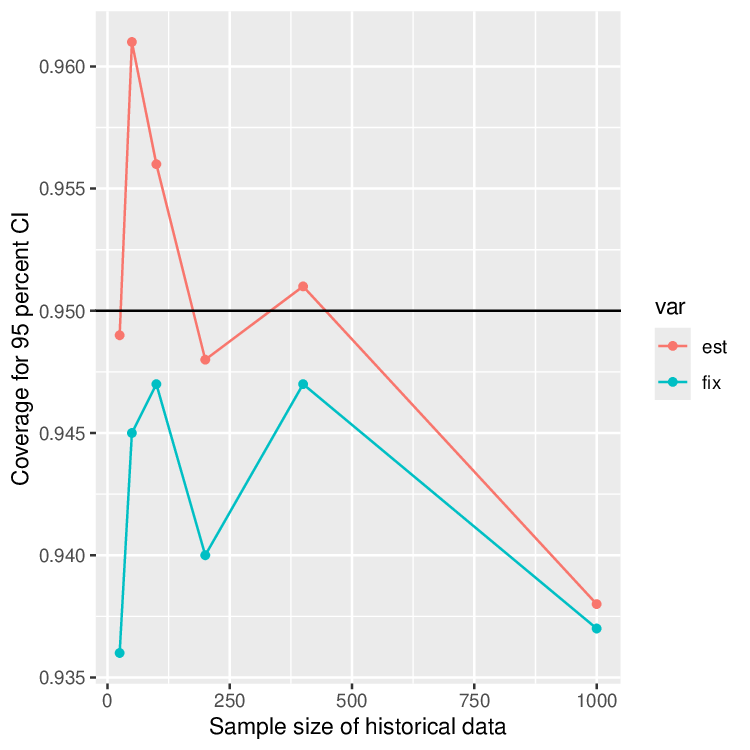}
    \caption{Plots of the coverage probability of the 95\% CI for $\beta_A$ in the PROCOVA model \eqref{eq: PROCOVA linear} over 1000 simulations for Scenario B-4.
     ``fix'' represents $e^\top\hat V_{\text{fix}}e$ with $e=(1,0,0)^\top$ for $\beta_0$, with $e=(0,1,0)^\top$ for $\beta_A$ and $e=(0,0,1)^\top$ for $\beta_1$.
    ``est'' represents $e^\top\hat V_{\text{est}}e$ with $e=(1,0,0)^\top$ for $\beta_0$, with $e=(0,1,0)^\top$ for $\beta_A$ and $e=(0,0,1)^\top$ for $\beta_1$.
    The sample size of trial data is $n=100$. 
    The x-axis represents the sample size of historical data $\tilde n$.
    The y-axis represents the coverage probability which is the proportion of 1000 simulations in which the 95\% CI using each variance estimator includes the true value.}
\end{figure}

\clearpage
\begin{figure}[h]
    \centering
    \includegraphics[width=\linewidth]{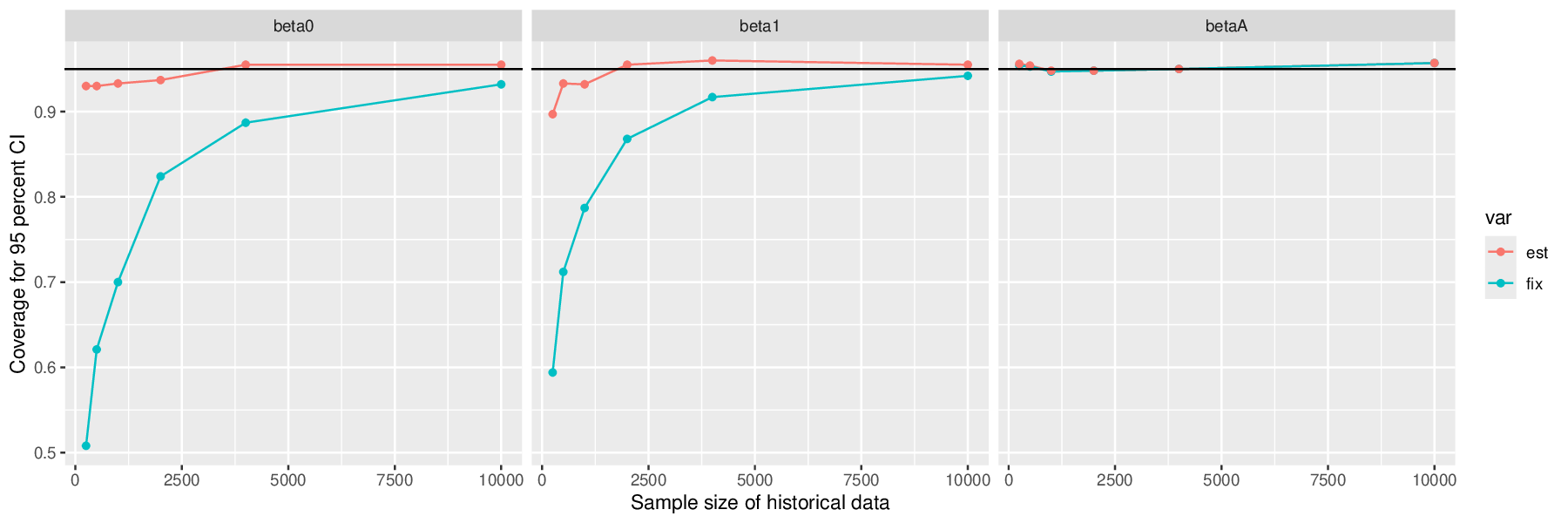}
    \caption{Plots of the coverage probability of 95\% CI over 1000 simulations for Scenario B-4.
    ``beta0'', ``betaA'', ``beta1'' represent the intercept $\beta_0$, the coefficient for the treatment assignment $\beta_A$, the coefficient for the prognostic score $\beta_1$ in the PROCOVA model \eqref{eq: PROCOVA linear}, respectively.
     ``fix'' represents $e^\top\hat V_{\text{fix}}e$ with $e=(1,0,0)^\top$ for $\beta_0$, with $e=(0,1,0)^\top$ for $\beta_A$ and $e=(0,0,1)^\top$ for $\beta_1$.
    ``est'' represents $e^\top\hat V_{\text{est}}e$ with $e=(1,0,0)^\top$ for $\beta_0$, with $e=(0,1,0)^\top$ for $\beta_A$ and $e=(0,0,1)^\top$ for $\beta_1$.
    The sample size of trial data is $n=1000$. 
    The x-axis represents the sample size of historical data $\tilde n$.
    The y-axis represents the coverage probability which is the proportion of 1000 simulations in which the 95\% CI using each variance estimator includes the true value.}
\end{figure}

\clearpage
\subsection{Scenario B-5}

\begin{figure}[h]
    \centering
    \includegraphics[width=\linewidth]{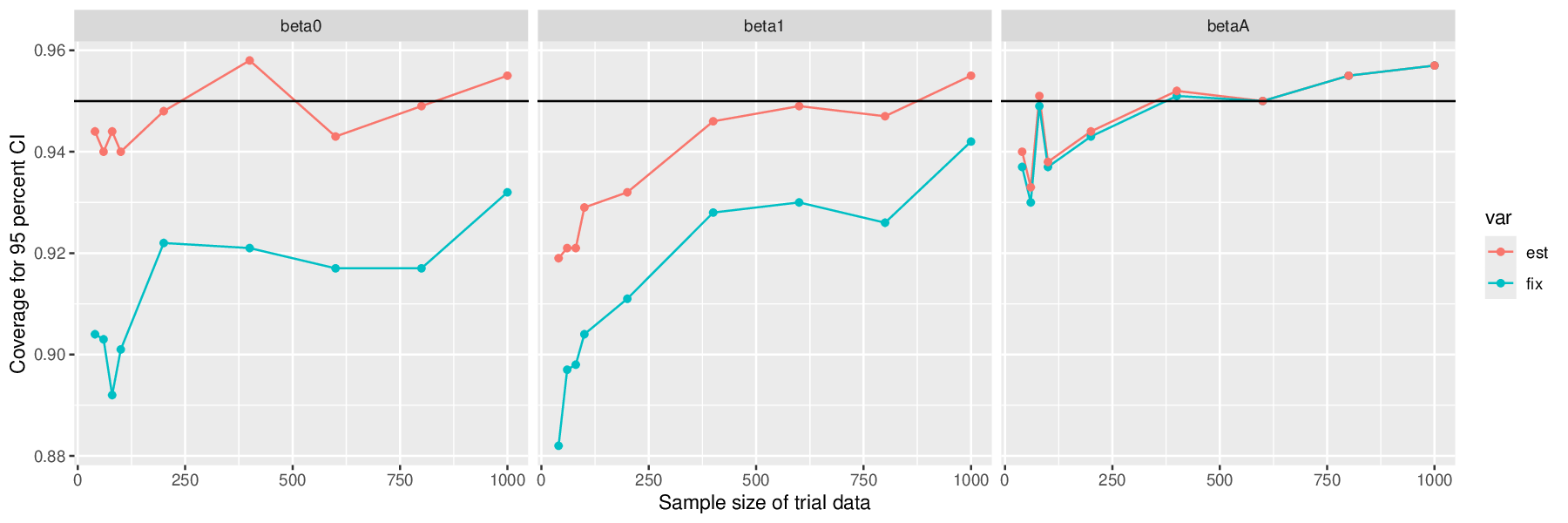}
    \caption{Plots of the coverage probability of 95\% CI over 1000 simulations for Scenario B-5.
    ``beta0'', ``betaA'', ``beta1'' represent the intercept $\beta_0$, the coefficient for the treatment assignment $\beta_A$, the coefficient for the prognostic score $\beta_1$ in the PROCOVA model \eqref{eq: PROCOVA linear}, respectively.
     ``fix'' represents $e^\top\hat V_{\text{fix}}e$ with $e=(1,0,0)^\top$ for $\beta_0$, with $e=(0,1,0)^\top$ for $\beta_A$ and $e=(0,0,1)^\top$ for $\beta_1$.
    ``est'' represents $e^\top\hat V_{\text{est}}e$ with $e=(1,0,0)^\top$ for $\beta_0$, with $e=(0,1,0)^\top$ for $\beta_A$ and $e=(0,0,1)^\top$ for $\beta_1$.
    The x-axis represents the sample size of trial data $n$.
    The sample size of historical data is $\tilde n = 10n$. 
    The y-axis represents the coverage probability which is the proportion of 1000 simulations in which the 95\% CI using each variance estimator includes the true value.
    }
\end{figure}

\clearpage
\begin{figure}[h]
    \centering
    \includegraphics[width=0.4\linewidth]{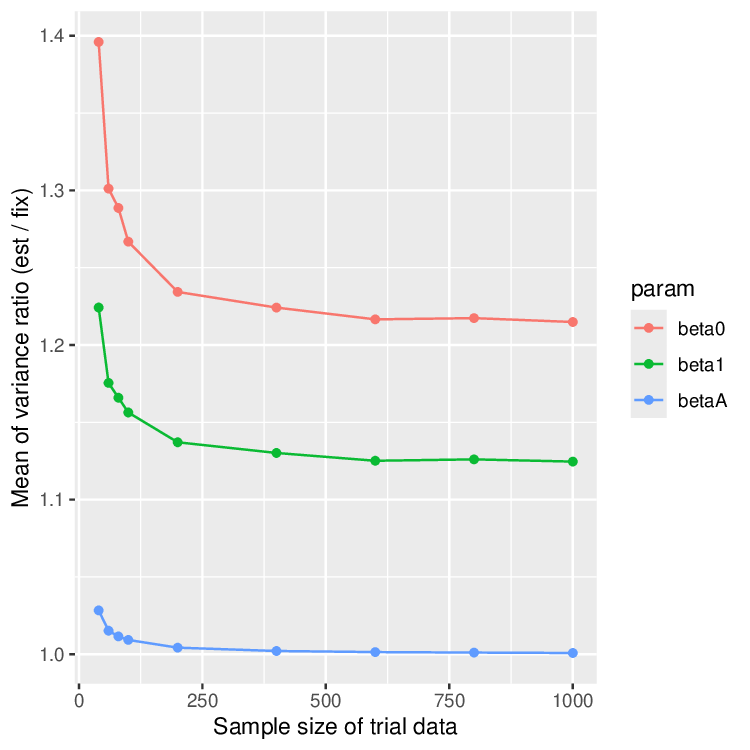}
    \caption{Plots of the mean of the ratio of two variance estimators over 1000 simulations for Scenario B-5.
    ``beta0'', ``betaA'', ``beta1'' represent the intercept $\beta_0$, the coefficient for the treatment assignment $\beta_A$, the coefficient for the prognostic score $\beta_1$ in the PROCOVA model \eqref{eq: PROCOVA linear}, respectively.
    The x-axis represents the sample size of trial data $n$.
    The sample size of historical data is $\tilde n = 10n$. 
    The y-axis represents the mean of the ratio of two variance estimators, i.e., $e^\top\hat V_{\text{est}}e/e^\top\hat V_{\text{fix}}e$ with $e=(1,0,0)^\top$ for $\beta_0$, with $e=(0,1,0)^\top$ for $\beta_A$ and $e=(0,0,1)^\top$ for $\beta_1$, over 1000 simulations.}
\end{figure}

\clearpage
\begin{figure}[h]
    \centering
    \includegraphics[width=0.4\linewidth]{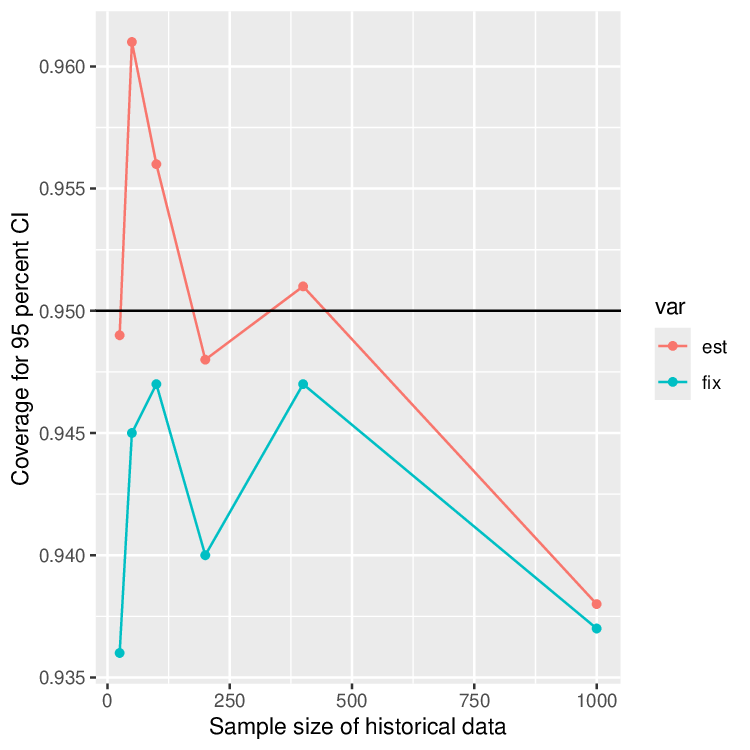}
    \caption{Plots of the coverage probability of the 95\% CI for $\beta_A$ in the PROCOVA model \eqref{eq: PROCOVA linear} over 1000 simulations for Scenario B-5.
     ``fix'' represents $e^\top\hat V_{\text{fix}}e$ with $e=(1,0,0)^\top$ for $\beta_0$, with $e=(0,1,0)^\top$ for $\beta_A$ and $e=(0,0,1)^\top$ for $\beta_1$.
    ``est'' represents $e^\top\hat V_{\text{est}}e$ with $e=(1,0,0)^\top$ for $\beta_0$, with $e=(0,1,0)^\top$ for $\beta_A$ and $e=(0,0,1)^\top$ for $\beta_1$.
    The sample size of trial data is $n=100$. 
    The x-axis represents the sample size of historical data $\tilde n$.
    The y-axis represents the coverage probability which is the proportion of 1000 simulations in which the 95\% CI using each variance estimator includes the true value.}
\end{figure}

\clearpage
\begin{figure}[h]
    \centering
    \includegraphics[width=\linewidth]{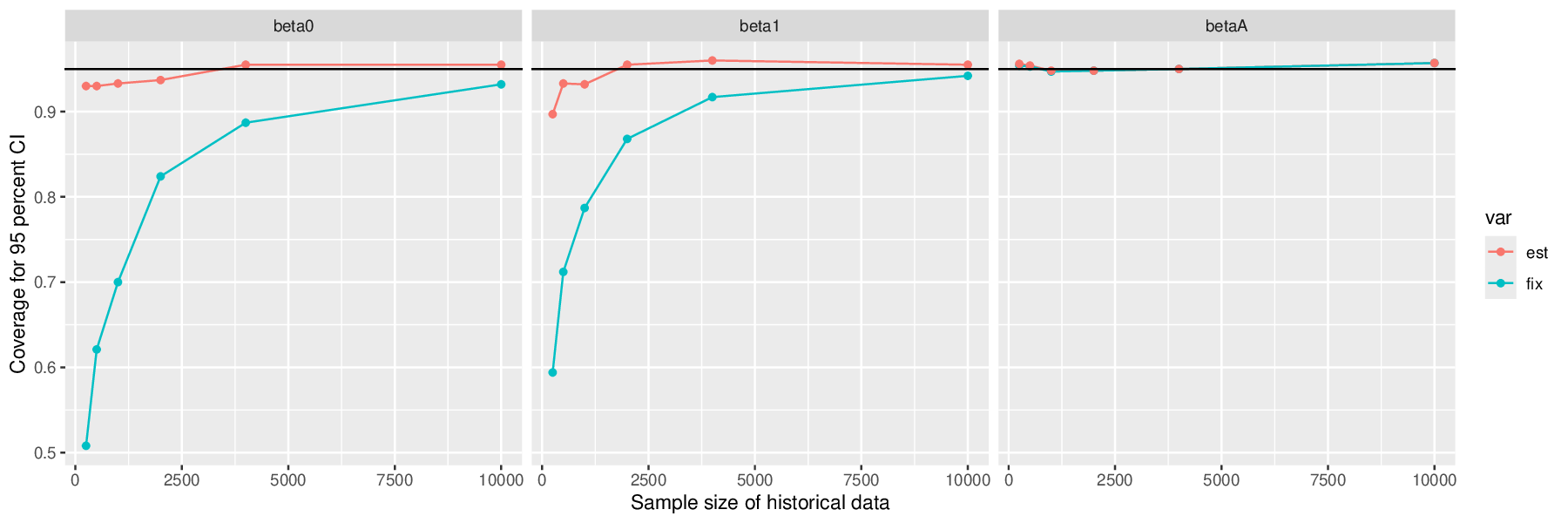}
    \caption{Plots of the coverage probability of 95\% CI over 1000 simulations for Scenario B-5.
    ``beta0'', ``betaA'', ``beta1'' represent the intercept $\beta_0$, the coefficient for the treatment assignment $\beta_A$, the coefficient for the prognostic score $\beta_1$ in the PROCOVA model \eqref{eq: PROCOVA linear}, respectively.
     ``fix'' represents $e^\top\hat V_{\text{fix}}e$ with $e=(1,0,0)^\top$ for $\beta_0$, with $e=(0,1,0)^\top$ for $\beta_A$ and $e=(0,0,1)^\top$ for $\beta_1$.
    ``est'' represents $e^\top\hat V_{\text{est}}e$ with $e=(1,0,0)^\top$ for $\beta_0$, with $e=(0,1,0)^\top$ for $\beta_A$ and $e=(0,0,1)^\top$ for $\beta_1$.
    The sample size of trial data is $n=1000$. 
    The x-axis represents the sample size of historical data $\tilde n$.
    The y-axis represents the coverage probability which is the proportion of 1000 simulations in which the 95\% CI using each variance estimator includes the true value.}
\end{figure}

\clearpage
\subsection{Scenario B-6}

\begin{figure}[h]
    \centering
    \includegraphics[width=\linewidth]{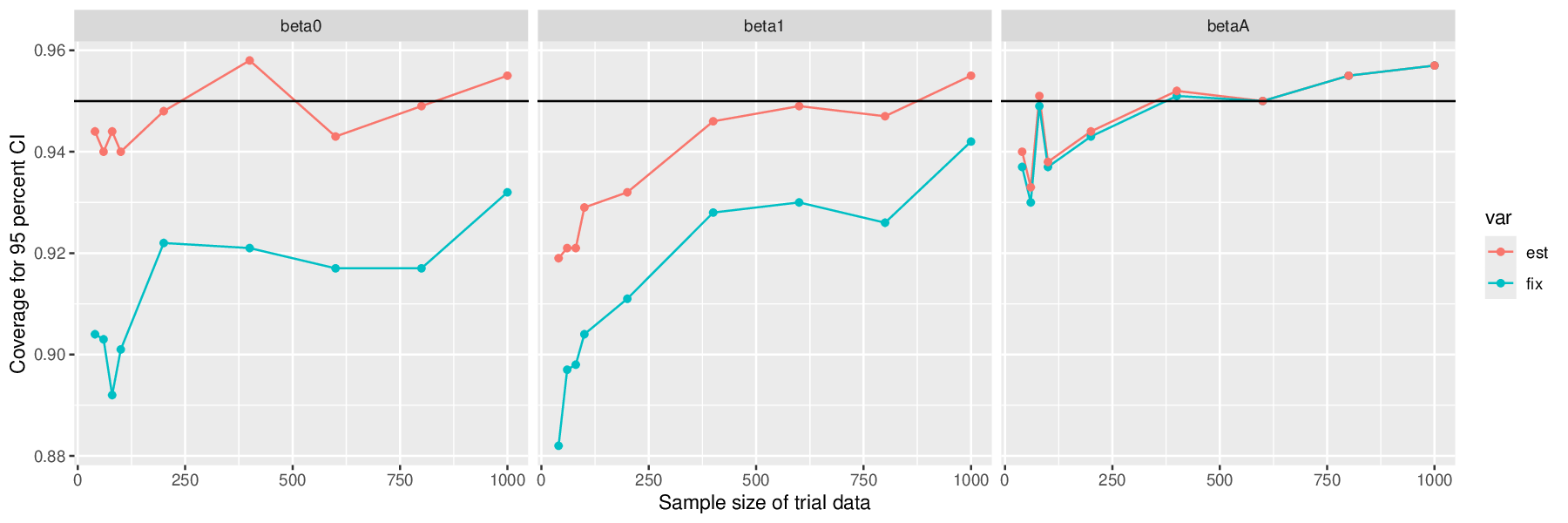}
    \caption{Plots of the coverage probability of 95\% CI over 1000 simulations for Scenario B-6.
    ``beta0'', ``betaA'', ``beta1'' represent the intercept $\beta_0$, the coefficient for the treatment assignment $\beta_A$, the coefficient for the prognostic score $\beta_1$ in the PROCOVA model \eqref{eq: PROCOVA linear}, respectively.
     ``fix'' represents $e^\top\hat V_{\text{fix}}e$ with $e=(1,0,0)^\top$ for $\beta_0$, with $e=(0,1,0)^\top$ for $\beta_A$ and $e=(0,0,1)^\top$ for $\beta_1$.
    ``est'' represents $e^\top\hat V_{\text{est}}e$ with $e=(1,0,0)^\top$ for $\beta_0$, with $e=(0,1,0)^\top$ for $\beta_A$ and $e=(0,0,1)^\top$ for $\beta_1$.
    The x-axis represents the sample size of trial data $n$.
    The sample size of historical data is $\tilde n = 10n$. 
    The y-axis represents the coverage probability which is the proportion of 1000 simulations in which the 95\% CI using each variance estimator includes the true value.
    }
\end{figure}

\clearpage
\begin{figure}[h]
    \centering
    \includegraphics[width=0.4\linewidth]{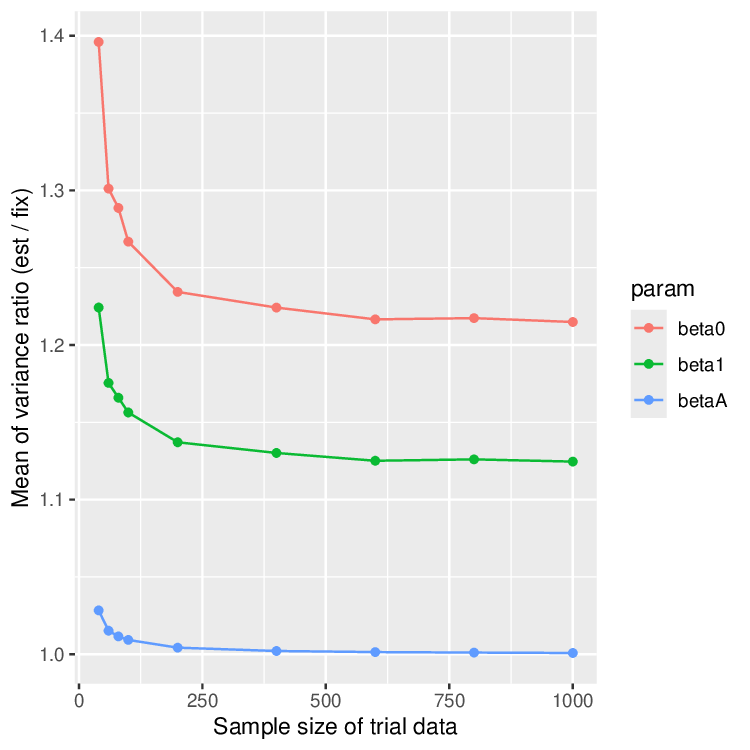}
    \caption{Plots of the mean of the ratio of two variance estimators over 1000 simulations for Scenario B-6.
    ``beta0'', ``betaA'', ``beta1'' represent the intercept $\beta_0$, the coefficient for the treatment assignment $\beta_A$, the coefficient for the prognostic score $\beta_1$ in the PROCOVA model \eqref{eq: PROCOVA linear}, respectively.
    The x-axis represents the sample size of trial data $n$.
    The sample size of historical data is $\tilde n = 10n$. 
    The y-axis represents the mean of the ratio of two variance estimators, i.e., $e^\top\hat V_{\text{est}}e/e^\top\hat V_{\text{fix}}e$ with $e=(1,0,0)^\top$ for $\beta_0$, with $e=(0,1,0)^\top$ for $\beta_A$ and $e=(0,0,1)^\top$ for $\beta_1$, over 1000 simulations.}
\end{figure}

\clearpage
\begin{figure}[h]
    \centering
    \includegraphics[width=0.4\linewidth]{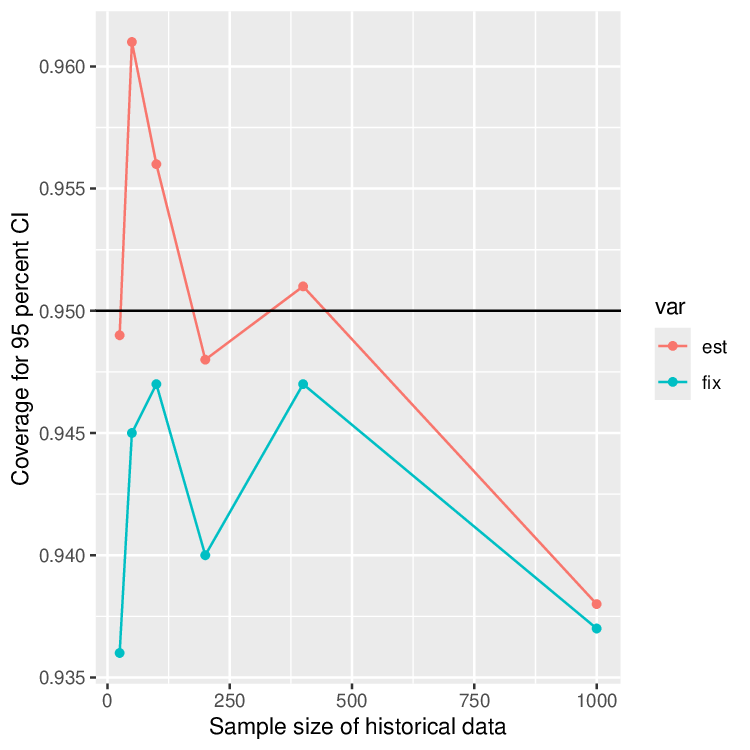}
    \caption{Plots of the coverage probability of the 95\% CI for $\beta_A$ in the PROCOVA model \eqref{eq: PROCOVA linear} over 1000 simulations for Scenario B-6.
     ``fix'' represents $e^\top\hat V_{\text{fix}}e$ with $e=(1,0,0)^\top$ for $\beta_0$, with $e=(0,1,0)^\top$ for $\beta_A$ and $e=(0,0,1)^\top$ for $\beta_1$.
    ``est'' represents $e^\top\hat V_{\text{est}}e$ with $e=(1,0,0)^\top$ for $\beta_0$, with $e=(0,1,0)^\top$ for $\beta_A$ and $e=(0,0,1)^\top$ for $\beta_1$.
    The sample size of trial data is $n=100$. 
    The x-axis represents the sample size of historical data $\tilde n$.
    The y-axis represents the coverage probability which is the proportion of 1000 simulations in which the 95\% CI using each variance estimator includes the true value.}
\end{figure}

\clearpage
\begin{figure}[h]
    \centering
    \includegraphics[width=\linewidth]{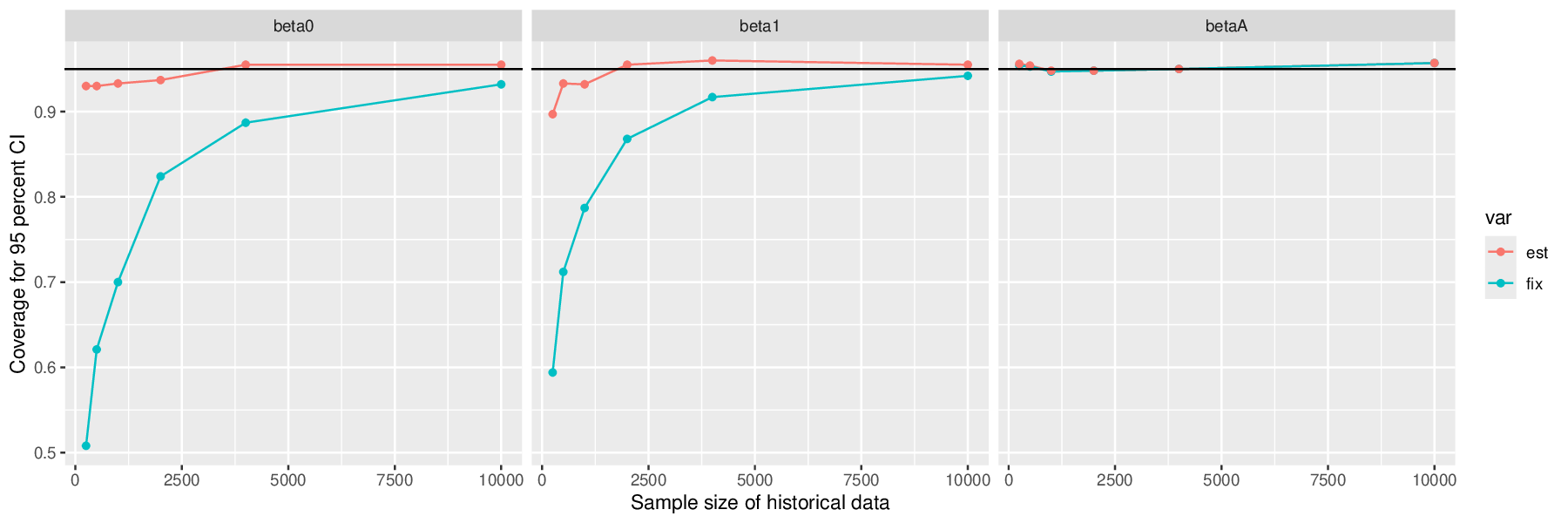}
    \caption{Plots of the coverage probability of 95\% CI over 1000 simulations for Scenario B-6.
    ``beta0'', ``betaA'', ``beta1'' represent the intercept $\beta_0$, the coefficient for the treatment assignment $\beta_A$, the coefficient for the prognostic score $\beta_1$ in the PROCOVA model \eqref{eq: PROCOVA linear}, respectively.
     ``fix'' represents $e^\top\hat V_{\text{fix}}e$ with $e=(1,0,0)^\top$ for $\beta_0$, with $e=(0,1,0)^\top$ for $\beta_A$ and $e=(0,0,1)^\top$ for $\beta_1$.
    ``est'' represents $e^\top\hat V_{\text{est}}e$ with $e=(1,0,0)^\top$ for $\beta_0$, with $e=(0,1,0)^\top$ for $\beta_A$ and $e=(0,0,1)^\top$ for $\beta_1$.
    The sample size of trial data is $n=1000$. 
    The x-axis represents the sample size of historical data $\tilde n$.
    The y-axis represents the coverage probability which is the proportion of 1000 simulations in which the 95\% CI using each variance estimator includes the true value.}
\end{figure}

\clearpage
\subsection{Scenario B-7}

\begin{figure}[h]
    \centering
    \includegraphics[width=\linewidth]{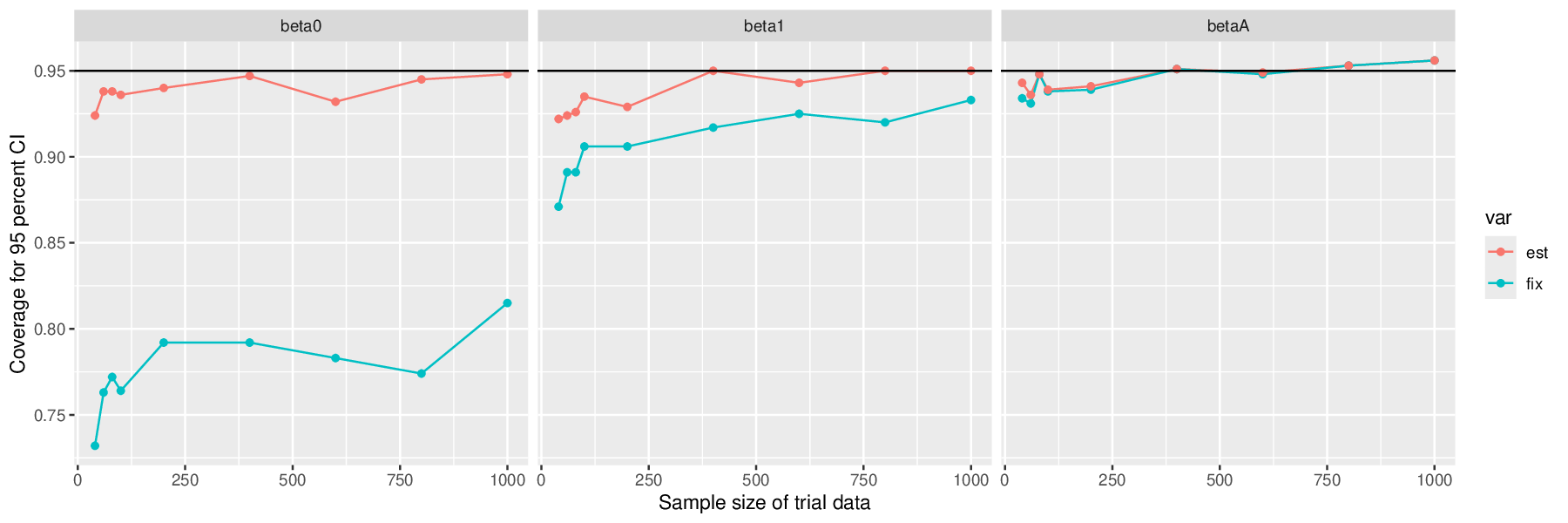}
    \caption{Plots of the coverage probability of 95\% CI over 1000 simulations for Scenario B-7.
    ``beta0'', ``betaA'', ``beta1'' represent the intercept $\beta_0$, the coefficient for the treatment assignment $\beta_A$, the coefficient for the prognostic score $\beta_1$ in the PROCOVA model \eqref{eq: PROCOVA linear}, respectively.
     ``fix'' represents $e^\top\hat V_{\text{fix}}e$ with $e=(1,0,0)^\top$ for $\beta_0$, with $e=(0,1,0)^\top$ for $\beta_A$ and $e=(0,0,1)^\top$ for $\beta_1$.
    ``est'' represents $e^\top\hat V_{\text{est}}e$ with $e=(1,0,0)^\top$ for $\beta_0$, with $e=(0,1,0)^\top$ for $\beta_A$ and $e=(0,0,1)^\top$ for $\beta_1$.
    The x-axis represents the sample size of trial data $n$.
    The sample size of historical data is $\tilde n = 10n$. 
    The y-axis represents the coverage probability which is the proportion of 1000 simulations in which the 95\% CI using each variance estimator includes the true value.
    }
\end{figure}

\clearpage
\begin{figure}[h]
    \centering
    \includegraphics[width=0.4\linewidth]{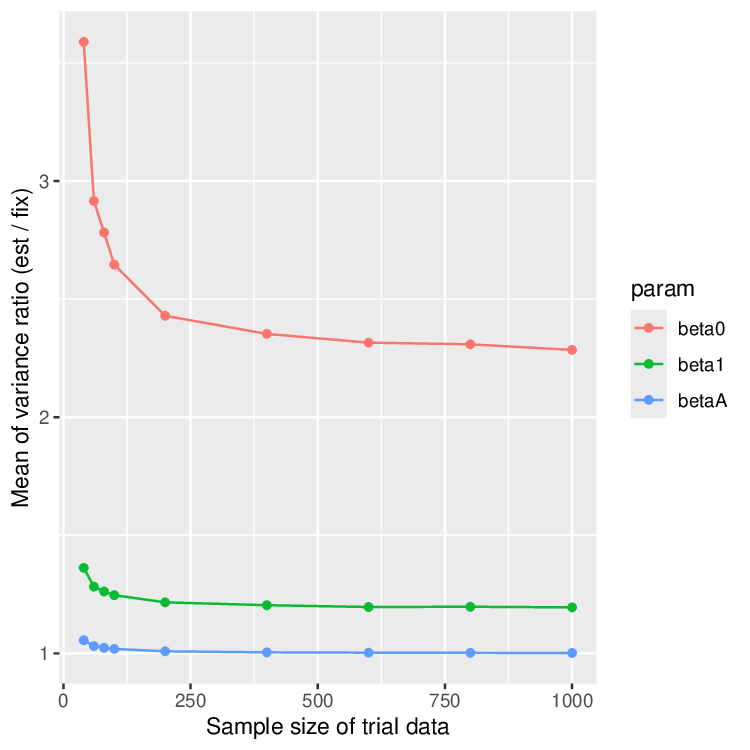}
    \caption{Plots of the mean of the ratio of two variance estimators over 1000 simulations for Scenario B-7.
    ``beta0'', ``betaA'', ``beta1'' represent the intercept $\beta_0$, the coefficient for the treatment assignment $\beta_A$, the coefficient for the prognostic score $\beta_1$ in the PROCOVA model \eqref{eq: PROCOVA linear}, respectively.
    The x-axis represents the sample size of trial data $n$.
    The sample size of historical data is $\tilde n = 10n$. 
    The y-axis represents the mean of the ratio of two variance estimators, i.e., $e^\top\hat V_{\text{est}}e/e^\top\hat V_{\text{fix}}e$ with $e=(1,0,0)^\top$ for $\beta_0$, with $e=(0,1,0)^\top$ for $\beta_A$ and $e=(0,0,1)^\top$ for $\beta_1$, over 1000 simulations.}
\end{figure}
\clearpage
\begin{figure}[h]
    \centering
    \includegraphics[width=0.4\linewidth]{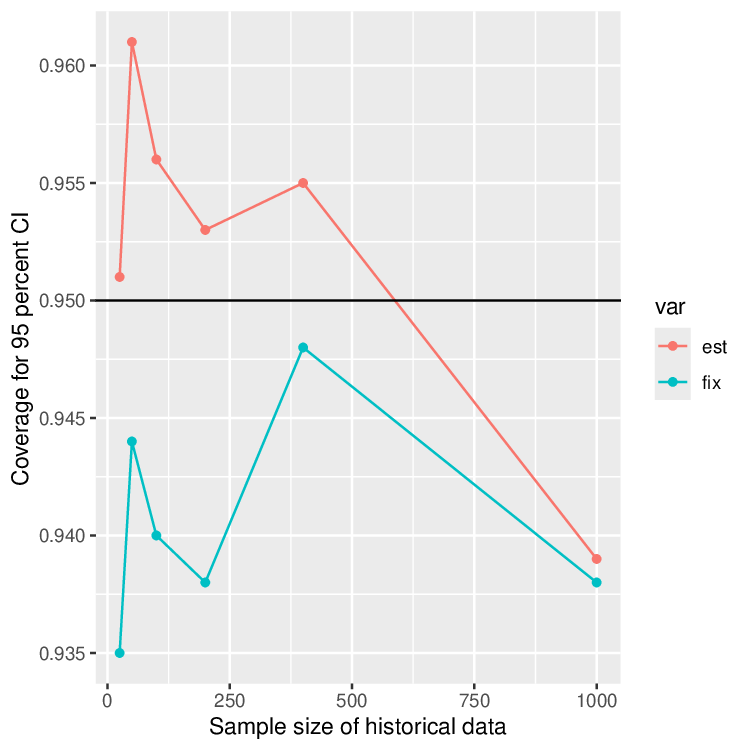}
    \caption{Plots of the coverage probability of the 95\% CI for $\beta_A$ in the PROCOVA model \eqref{eq: PROCOVA linear} over 1000 simulations for Scenario B-7.
     ``fix'' represents $e^\top\hat V_{\text{fix}}e$ with $e=(1,0,0)^\top$ for $\beta_0$, with $e=(0,1,0)^\top$ for $\beta_A$ and $e=(0,0,1)^\top$ for $\beta_1$.
    ``est'' represents $e^\top\hat V_{\text{est}}e$ with $e=(1,0,0)^\top$ for $\beta_0$, with $e=(0,1,0)^\top$ for $\beta_A$ and $e=(0,0,1)^\top$ for $\beta_1$.
    The sample size of trial data is $n=100$. 
    The x-axis represents the sample size of historical data $\tilde n$.
    The y-axis represents the coverage probability which is the proportion of 1000 simulations in which the 95\% CI using each variance estimator includes the true value.}
\end{figure}
\clearpage
\begin{figure}[h]
    \centering
    \includegraphics[width=\linewidth]{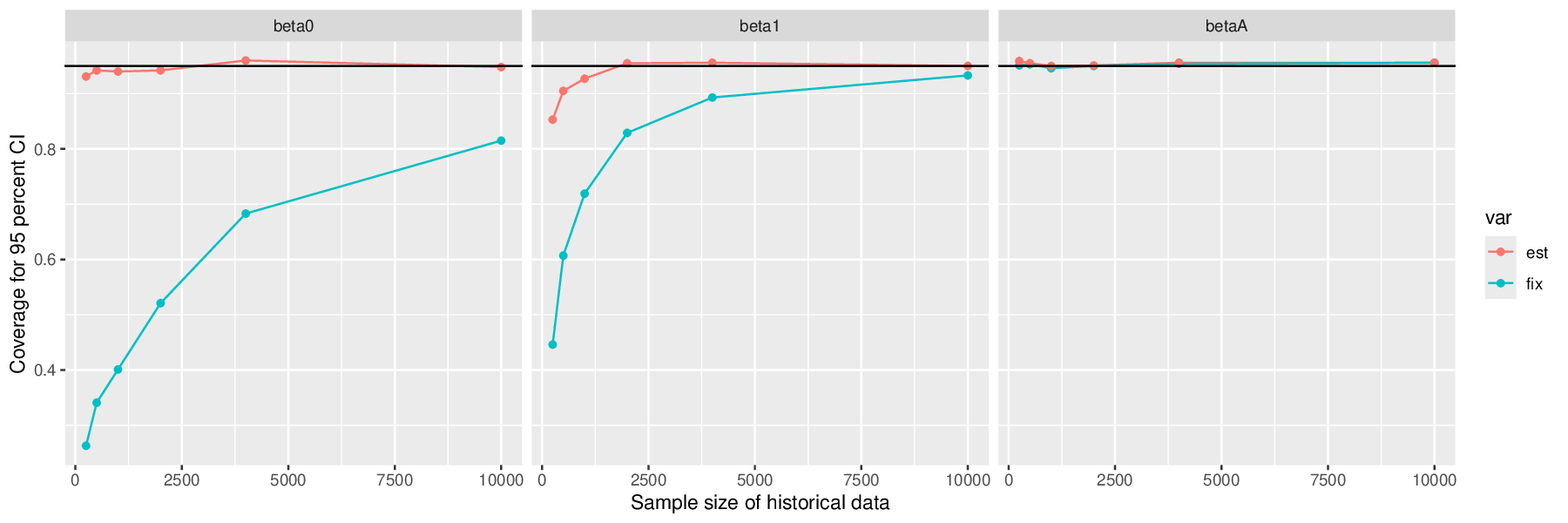}
    \caption{Plots of the coverage probability of 95\% CI over 1000 simulations for Scenario B-7.
    ``beta0'', ``betaA'', ``beta1'' represent the intercept $\beta_0$, the coefficient for the treatment assignment $\beta_A$, the coefficient for the prognostic score $\beta_1$ in the PROCOVA model \eqref{eq: PROCOVA linear}, respectively.
     ``fix'' represents $e^\top\hat V_{\text{fix}}e$ with $e=(1,0,0)^\top$ for $\beta_0$, with $e=(0,1,0)^\top$ for $\beta_A$ and $e=(0,0,1)^\top$ for $\beta_1$.
    ``est'' represents $e^\top\hat V_{\text{est}}e$ with $e=(1,0,0)^\top$ for $\beta_0$, with $e=(0,1,0)^\top$ for $\beta_A$ and $e=(0,0,1)^\top$ for $\beta_1$.
    The sample size of trial data is $n=1000$. 
    The x-axis represents the sample size of historical data $\tilde n$.
    The y-axis represents the coverage probability which is the proportion of 1000 simulations in which the 95\% CI using each variance estimator includes the true value.}
\end{figure}

\clearpage
\subsection{Scenario B-8}

\begin{figure}[h]
    \centering
    \includegraphics[width=\linewidth]{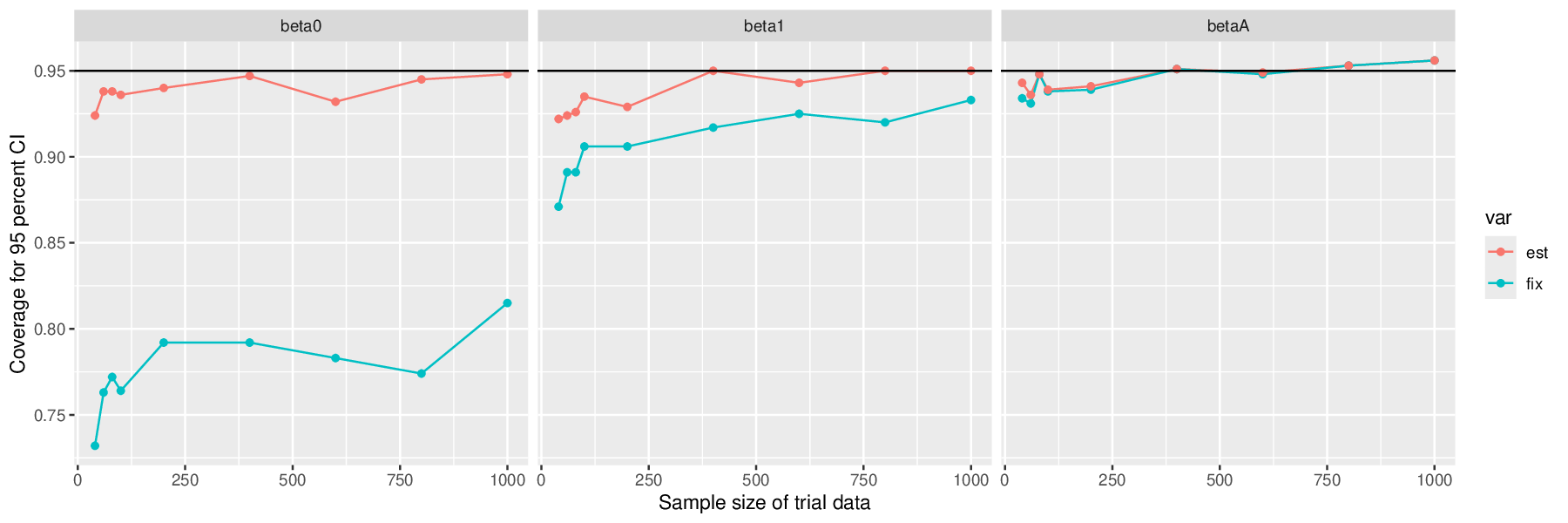}
    \caption{Plots of the coverage probability of 95\% CI over 1000 simulations for Scenario B-8.
    ``beta0'', ``betaA'', ``beta1'' represent the intercept $\beta_0$, the coefficient for the treatment assignment $\beta_A$, the coefficient for the prognostic score $\beta_1$ in the PROCOVA model \eqref{eq: PROCOVA linear}, respectively.
     ``fix'' represents $e^\top\hat V_{\text{fix}}e$ with $e=(1,0,0)^\top$ for $\beta_0$, with $e=(0,1,0)^\top$ for $\beta_A$ and $e=(0,0,1)^\top$ for $\beta_1$.
    ``est'' represents $e^\top\hat V_{\text{est}}e$ with $e=(1,0,0)^\top$ for $\beta_0$, with $e=(0,1,0)^\top$ for $\beta_A$ and $e=(0,0,1)^\top$ for $\beta_1$.
    The x-axis represents the sample size of trial data $n$.
    The sample size of historical data is $\tilde n = 10n$. 
    The y-axis represents the coverage probability which is the proportion of 1000 simulations in which the 95\% CI using each variance estimator includes the true value.
    }
\end{figure}

\clearpage
\begin{figure}[h]
    \centering
    \includegraphics[width=0.4\linewidth]{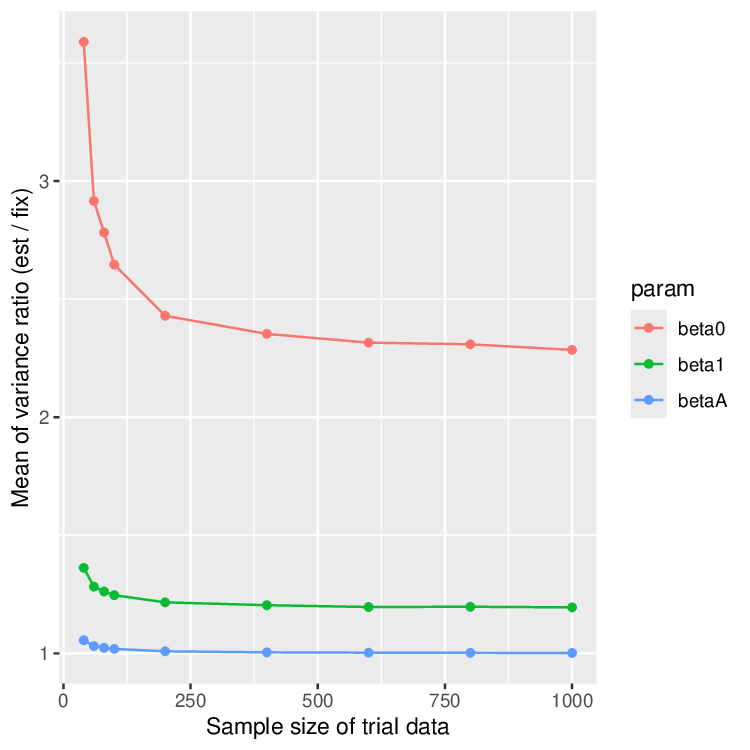}
    \caption{Plots of the mean of the ratio of two variance estimators over 1000 simulations for Scenario B-8.
    ``beta0'', ``betaA'', ``beta1'' represent the intercept $\beta_0$, the coefficient for the treatment assignment $\beta_A$, the coefficient for the prognostic score $\beta_1$ in the PROCOVA model \eqref{eq: PROCOVA linear}, respectively.
    The x-axis represents the sample size of trial data $n$.
    The sample size of historical data is $\tilde n = 10n$. 
    The y-axis represents the mean of the ratio of two variance estimators, i.e., $e^\top\hat V_{\text{est}}e/e^\top\hat V_{\text{fix}}e$ with $e=(1,0,0)^\top$ for $\beta_0$, with $e=(0,1,0)^\top$ for $\beta_A$ and $e=(0,0,1)^\top$ for $\beta_1$, over 1000 simulations.}
\end{figure}
\clearpage
\begin{figure}[h]
    \centering
    \includegraphics[width=0.4\linewidth]{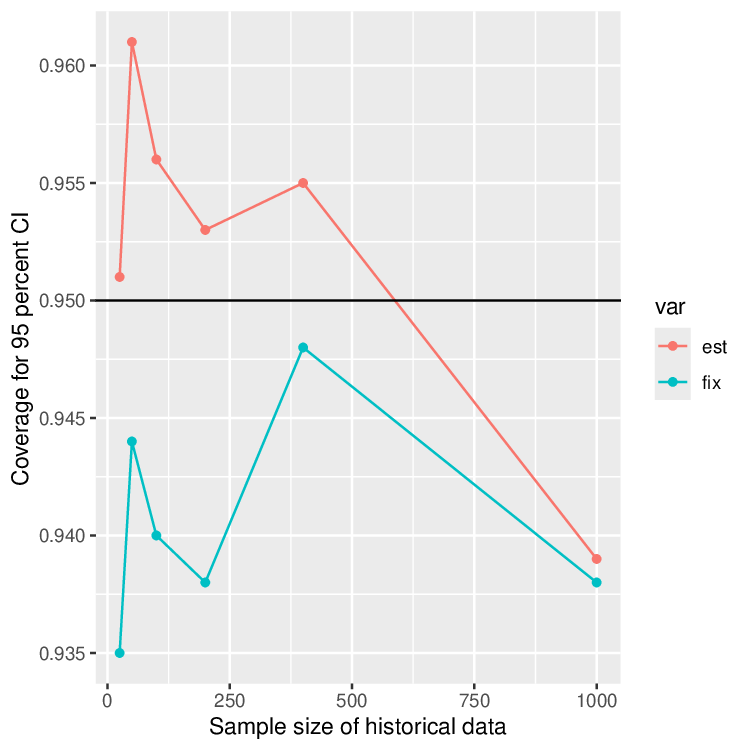}
    \caption{Plots of the coverage probability of the 95\% CI for $\beta_A$ in the PROCOVA model \eqref{eq: PROCOVA linear} over 1000 simulations for Scenario B-8.
     ``fix'' represents $e^\top\hat V_{\text{fix}}e$ with $e=(1,0,0)^\top$ for $\beta_0$, with $e=(0,1,0)^\top$ for $\beta_A$ and $e=(0,0,1)^\top$ for $\beta_1$.
    ``est'' represents $e^\top\hat V_{\text{est}}e$ with $e=(1,0,0)^\top$ for $\beta_0$, with $e=(0,1,0)^\top$ for $\beta_A$ and $e=(0,0,1)^\top$ for $\beta_1$.
    The sample size of trial data is $n=100$. 
    The x-axis represents the sample size of historical data $\tilde n$.
    The y-axis represents the coverage probability which is the proportion of 1000 simulations in which the 95\% CI using each variance estimator includes the true value.}
\end{figure}
\clearpage
\begin{figure}[h]
    \centering
    \includegraphics[width=\linewidth]{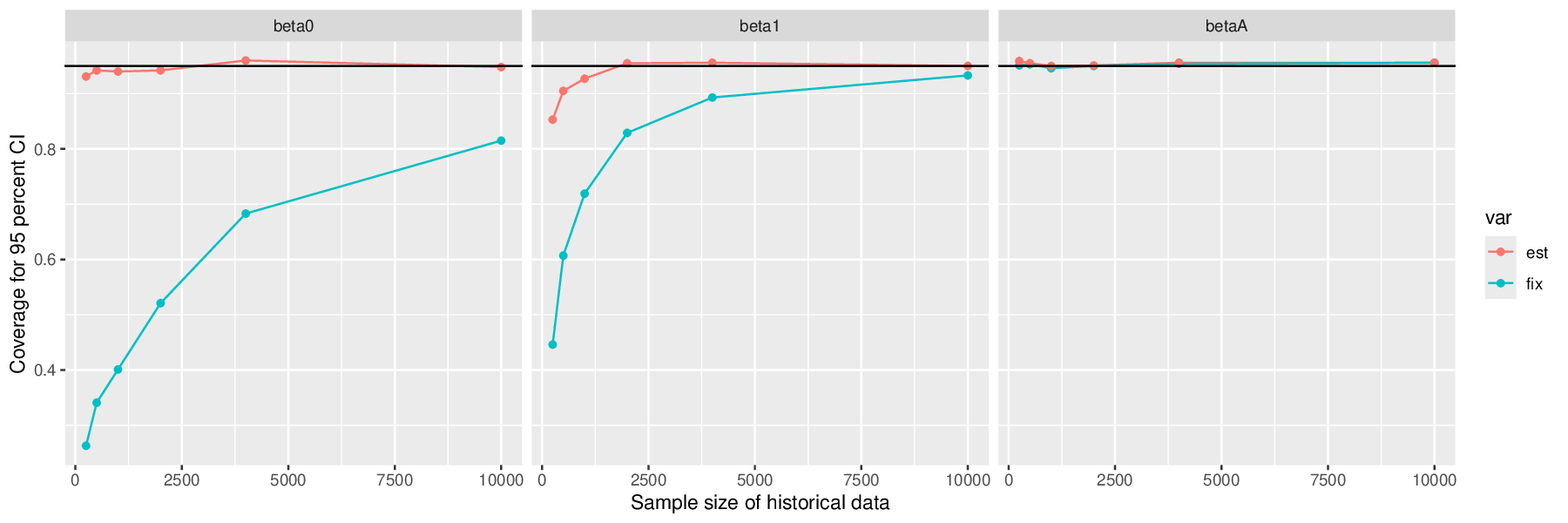}
    \caption{Plots of the coverage probability of 95\% CI over 1000 simulations for Scenario B-8.
    ``beta0'', ``betaA'', ``beta1'' represent the intercept $\beta_0$, the coefficient for the treatment assignment $\beta_A$, the coefficient for the prognostic score $\beta_1$ in the PROCOVA model \eqref{eq: PROCOVA linear}, respectively.
     ``fix'' represents $e^\top\hat V_{\text{fix}}e$ with $e=(1,0,0)^\top$ for $\beta_0$, with $e=(0,1,0)^\top$ for $\beta_A$ and $e=(0,0,1)^\top$ for $\beta_1$.
    ``est'' represents $e^\top\hat V_{\text{est}}e$ with $e=(1,0,0)^\top$ for $\beta_0$, with $e=(0,1,0)^\top$ for $\beta_A$ and $e=(0,0,1)^\top$ for $\beta_1$.
    The sample size of trial data is $n=1000$. 
    The x-axis represents the sample size of historical data $\tilde n$.
    The y-axis represents the coverage probability which is the proportion of 1000 simulations in which the 95\% CI using each variance estimator includes the true value.}
\end{figure}

\clearpage
\subsection{Scenario B-9}

\begin{figure}[h]
    \centering
    \includegraphics[width=\linewidth]{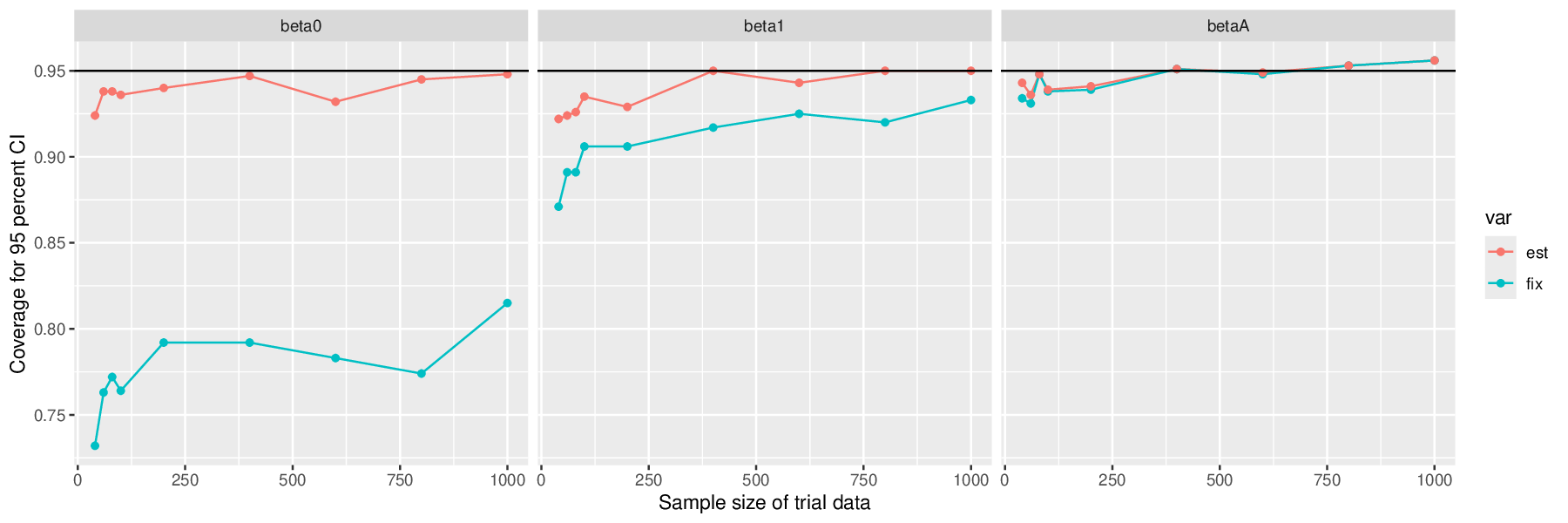}
    \caption{Plots of the coverage probability of 95\% CI over 1000 simulations for Scenario B-9.
    ``beta0'', ``betaA'', ``beta1'' represent the intercept $\beta_0$, the coefficient for the treatment assignment $\beta_A$, the coefficient for the prognostic score $\beta_1$ in the PROCOVA model \eqref{eq: PROCOVA linear}, respectively.
     ``fix'' represents $e^\top\hat V_{\text{fix}}e$ with $e=(1,0,0)^\top$ for $\beta_0$, with $e=(0,1,0)^\top$ for $\beta_A$ and $e=(0,0,1)^\top$ for $\beta_1$.
    ``est'' represents $e^\top\hat V_{\text{est}}e$ with $e=(1,0,0)^\top$ for $\beta_0$, with $e=(0,1,0)^\top$ for $\beta_A$ and $e=(0,0,1)^\top$ for $\beta_1$.
    The x-axis represents the sample size of trial data $n$.
    The sample size of historical data is $\tilde n = 10n$. 
    The y-axis represents the coverage probability which is the proportion of 1000 simulations in which the 95\% CI using each variance estimator includes the true value.
    }
\end{figure}

\clearpage
\begin{figure}[h]
    \centering
    \includegraphics[width=0.4\linewidth]{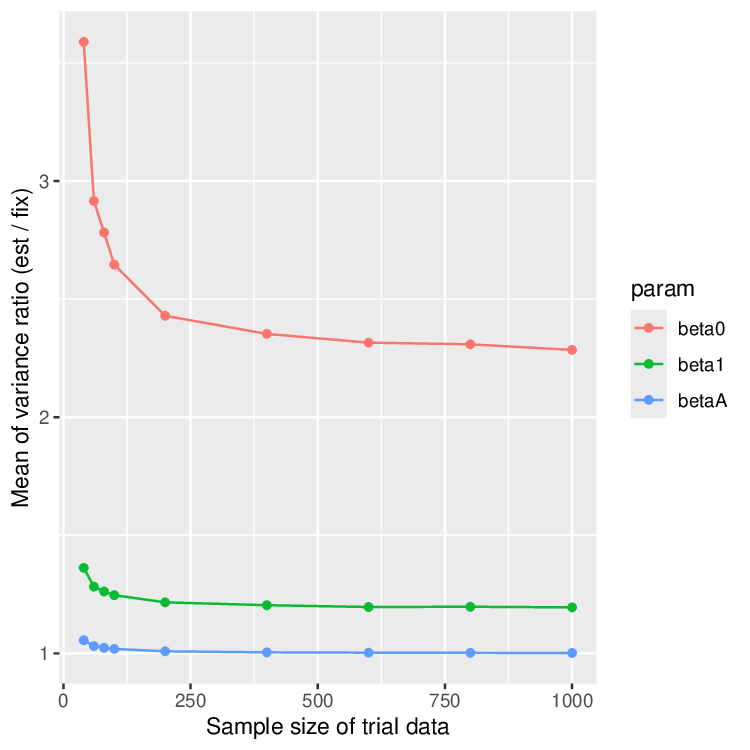}
    \caption{Plots of the mean of the ratio of two variance estimators over 1000 simulations for Scenario B-9.
    ``beta0'', ``betaA'', ``beta1'' represent the intercept $\beta_0$, the coefficient for the treatment assignment $\beta_A$, the coefficient for the prognostic score $\beta_1$ in the PROCOVA model \eqref{eq: PROCOVA linear}, respectively.
    The x-axis represents the sample size of trial data $n$.
    The sample size of historical data is $\tilde n = 10n$. 
    The y-axis represents the mean of the ratio of two variance estimators, i.e., $e^\top\hat V_{\text{est}}e/e^\top\hat V_{\text{fix}}e$ with $e=(1,0,0)^\top$ for $\beta_0$, with $e=(0,1,0)^\top$ for $\beta_A$ and $e=(0,0,1)^\top$ for $\beta_1$, over 1000 simulations.}
\end{figure}
\clearpage
\begin{figure}[h]
    \centering
    \includegraphics[width=0.4\linewidth]{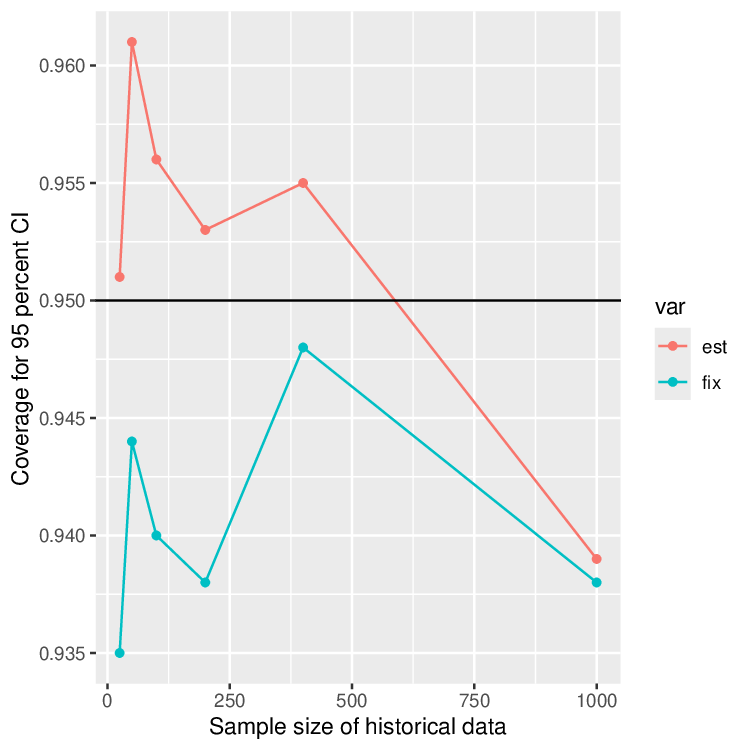}
    \caption{Plots of the coverage probability of the 95\% CI for $\beta_A$ in the PROCOVA model \eqref{eq: PROCOVA linear} over 1000 simulations for Scenario B-9.
     ``fix'' represents $e^\top\hat V_{\text{fix}}e$ with $e=(1,0,0)^\top$ for $\beta_0$, with $e=(0,1,0)^\top$ for $\beta_A$ and $e=(0,0,1)^\top$ for $\beta_1$.
    ``est'' represents $e^\top\hat V_{\text{est}}e$ with $e=(1,0,0)^\top$ for $\beta_0$, with $e=(0,1,0)^\top$ for $\beta_A$ and $e=(0,0,1)^\top$ for $\beta_1$.
    The sample size of trial data is $n=100$. 
    The x-axis represents the sample size of historical data $\tilde n$.
    The y-axis represents the coverage probability which is the proportion of 1000 simulations in which the 95\% CI using each variance estimator includes the true value.}
\end{figure}
\clearpage
\begin{figure}[h]
    \centering
    \includegraphics[width=\linewidth]{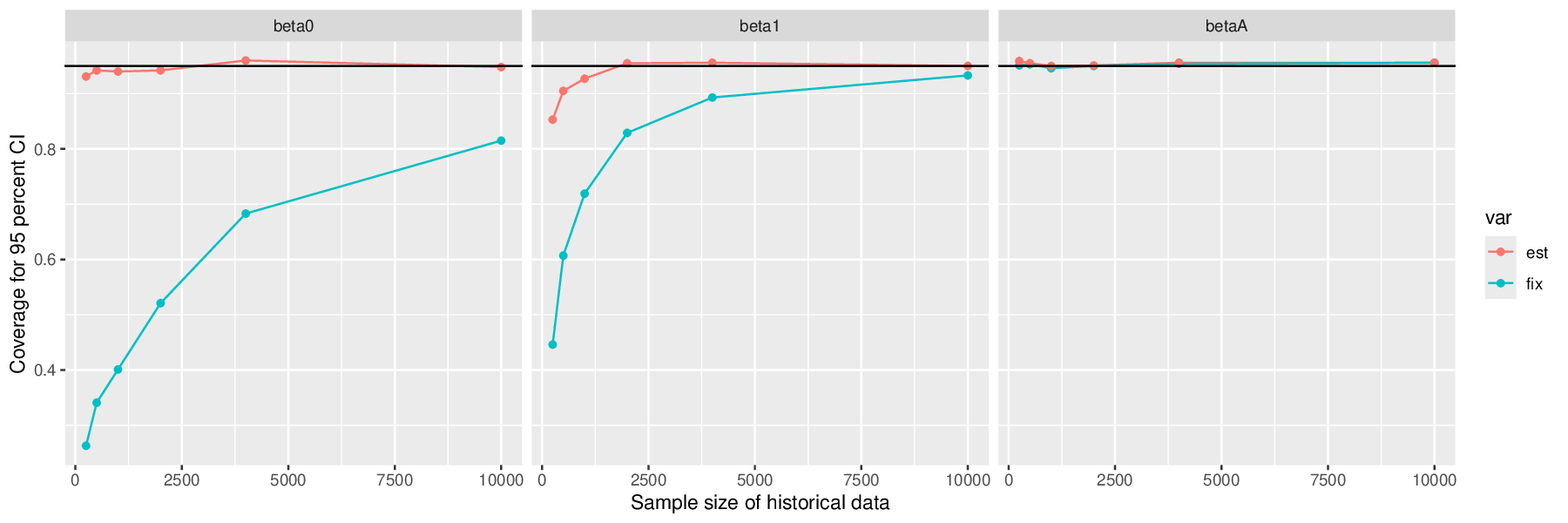}
    \caption{Plots of the coverage probability of 95\% CI over 1000 simulations for Scenario B-9.
    ``beta0'', ``betaA'', ``beta1'' represent the intercept $\beta_0$, the coefficient for the treatment assignment $\beta_A$, the coefficient for the prognostic score $\beta_1$ in the PROCOVA model \eqref{eq: PROCOVA linear}, respectively.
     ``fix'' represents $e^\top\hat V_{\text{fix}}e$ with $e=(1,0,0)^\top$ for $\beta_0$, with $e=(0,1,0)^\top$ for $\beta_A$ and $e=(0,0,1)^\top$ for $\beta_1$.
    ``est'' represents $e^\top\hat V_{\text{est}}e$ with $e=(1,0,0)^\top$ for $\beta_0$, with $e=(0,1,0)^\top$ for $\beta_A$ and $e=(0,0,1)^\top$ for $\beta_1$.
    The sample size of trial data is $n=1000$. 
    The x-axis represents the sample size of historical data $\tilde n$.
    The y-axis represents the coverage probability which is the proportion of 1000 simulations in which the 95\% CI using each variance estimator includes the true value.}
\end{figure}

\clearpage
\subsection{Scenario C-1}

\begin{figure}[h]
    \centering
    \includegraphics[width=\linewidth]{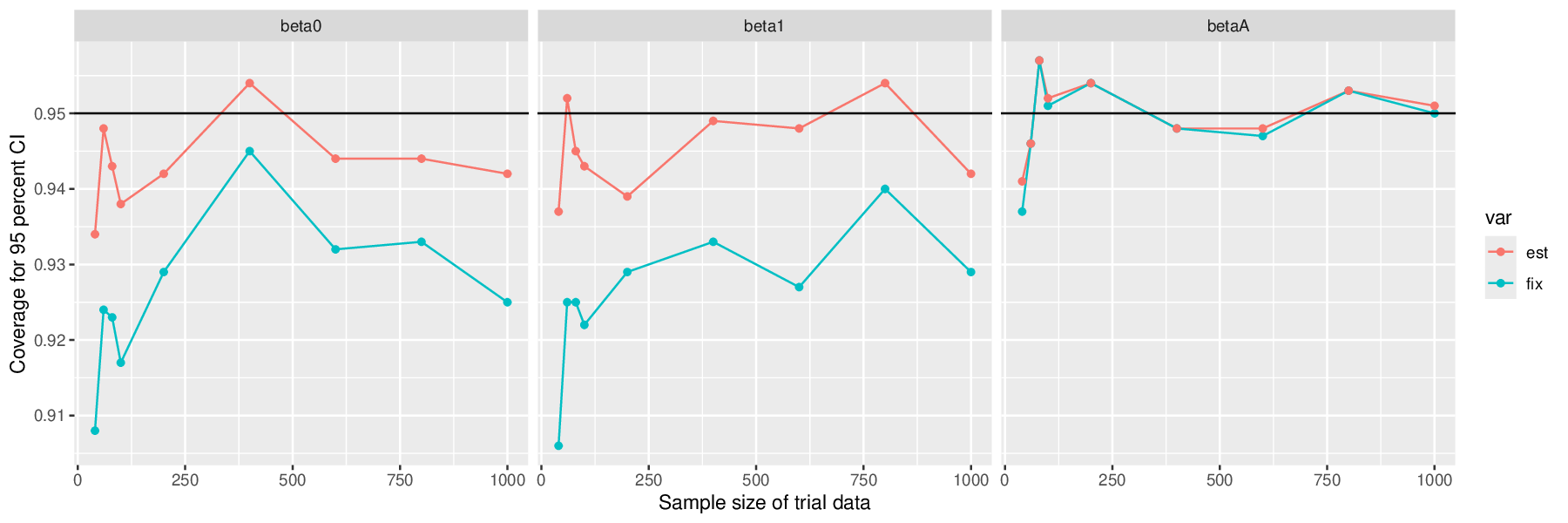}
    \caption{Plots of the coverage probability of 95\% CI over 1000 simulations for Scenario C-1.
    ``beta0'', ``betaA'', ``beta1'' represent the intercept $\beta_0$, the coefficient for the treatment assignment $\beta_A$, the coefficient for the prognostic score $\beta_1$ in the PROCOVA model \eqref{eq: PROCOVA linear}, respectively.
     ``fix'' represents $e^\top\hat V_{\text{fix}}e$ with $e=(1,0,0)^\top$ for $\beta_0$, with $e=(0,1,0)^\top$ for $\beta_A$ and $e=(0,0,1)^\top$ for $\beta_1$.
    ``est'' represents $e^\top\hat V_{\text{est}}e$ with $e=(1,0,0)^\top$ for $\beta_0$, with $e=(0,1,0)^\top$ for $\beta_A$ and $e=(0,0,1)^\top$ for $\beta_1$.
    The x-axis represents the sample size of trial data $n$.
    The sample size of historical data is $\tilde n = 10n$. 
    The y-axis represents the coverage probability which is the proportion of 1000 simulations in which the 95\% CI using each variance estimator includes the true value.
    }
\end{figure}

\clearpage
\begin{figure}[h]
    \centering
    \includegraphics[width=0.4\linewidth]{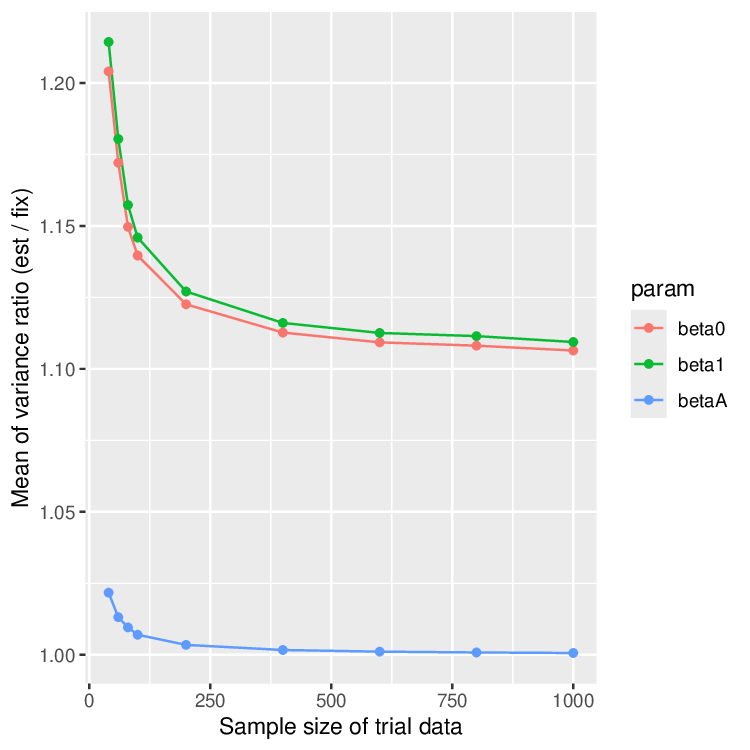}
    \caption{Plots of the mean of the ratio of two variance estimators over 1000 simulations for Scenario C-1.
    ``beta0'', ``betaA'', ``beta1'' represent the intercept $\beta_0$, the coefficient for the treatment assignment $\beta_A$, the coefficient for the prognostic score $\beta_1$ in the PROCOVA model \eqref{eq: PROCOVA linear}, respectively.
    The x-axis represents the sample size of trial data $n$.
    The sample size of historical data is $\tilde n = 10n$. 
    The y-axis represents the mean of the ratio of two variance estimators, i.e., $e^\top\hat V_{\text{est}}e/e^\top\hat V_{\text{fix}}e$ with $e=(1,0,0)^\top$ for $\beta_0$, with $e=(0,1,0)^\top$ for $\beta_A$ and $e=(0,0,1)^\top$ for $\beta_1$, over 1000 simulations.}
\end{figure}
\clearpage
\begin{figure}[h]
    \centering
    \includegraphics[width=0.4\linewidth]{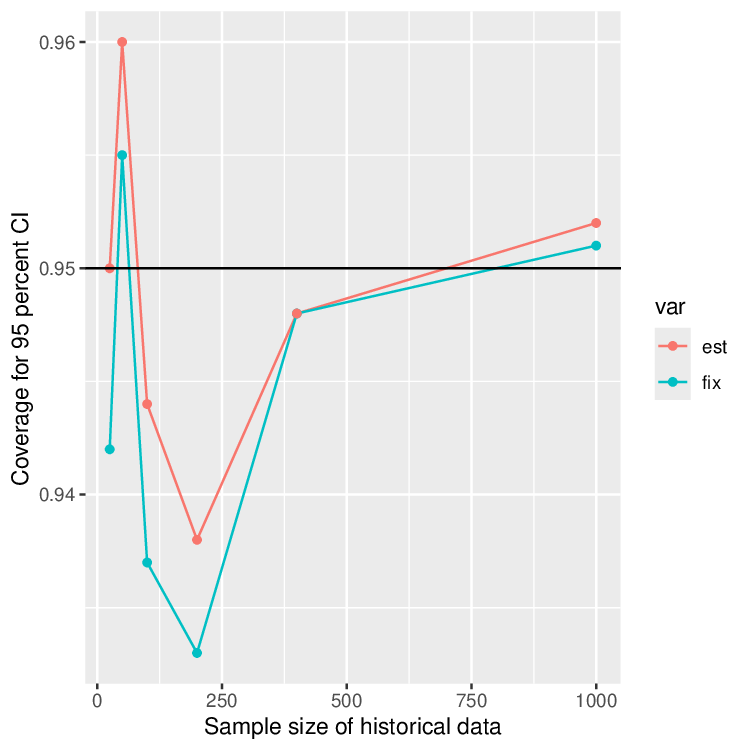}
    \caption{Plots of the coverage probability of the 95\% CI for $\beta_A$ in the PROCOVA model \eqref{eq: PROCOVA linear} over 1000 simulations for Scenario C-1.
     ``fix'' represents $e^\top\hat V_{\text{fix}}e$ with $e=(1,0,0)^\top$ for $\beta_0$, with $e=(0,1,0)^\top$ for $\beta_A$ and $e=(0,0,1)^\top$ for $\beta_1$.
    ``est'' represents $e^\top\hat V_{\text{est}}e$ with $e=(1,0,0)^\top$ for $\beta_0$, with $e=(0,1,0)^\top$ for $\beta_A$ and $e=(0,0,1)^\top$ for $\beta_1$.
    The sample size of trial data is $n=100$. 
    The x-axis represents the sample size of historical data $\tilde n$.
    The y-axis represents the coverage probability which is the proportion of 1000 simulations in which the 95\% CI using each variance estimator includes the true value.}
\end{figure}
\clearpage
\begin{figure}[h]
    \centering
    \includegraphics[width=\linewidth]{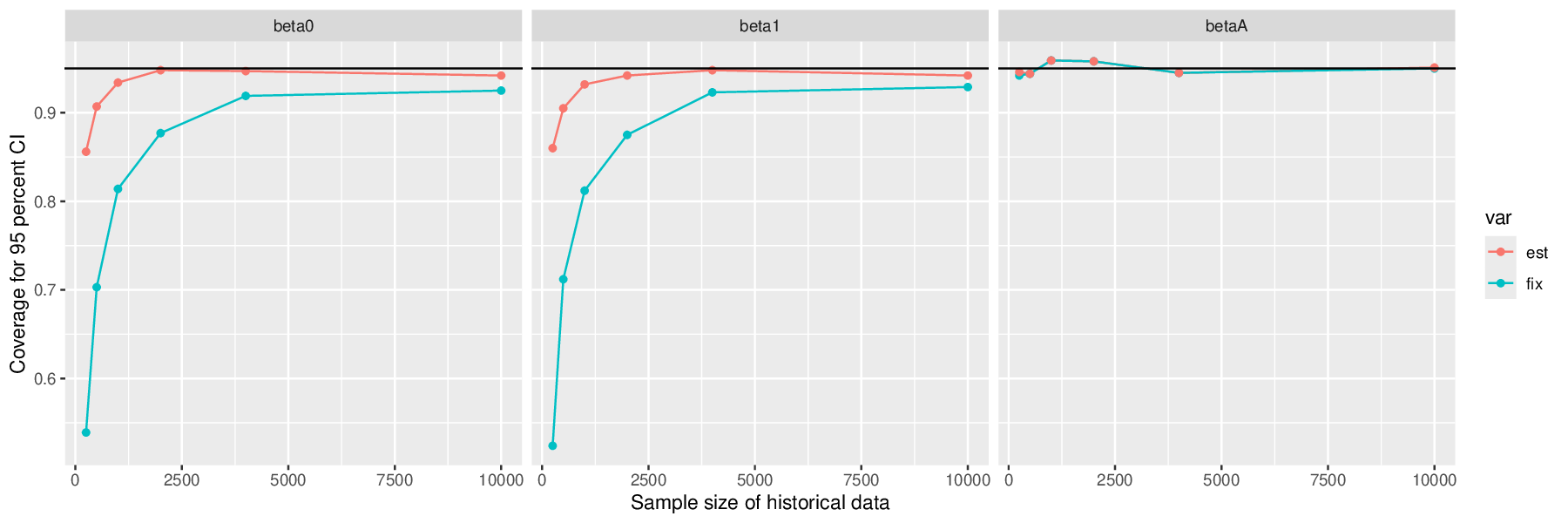}
    \caption{Plots of the coverage probability of 95\% CI over 1000 simulations for Scenario C-1.
    ``beta0'', ``betaA'', ``beta1'' represent the intercept $\beta_0$, the coefficient for the treatment assignment $\beta_A$, the coefficient for the prognostic score $\beta_1$ in the PROCOVA model \eqref{eq: PROCOVA linear}, respectively.
     ``fix'' represents $e^\top\hat V_{\text{fix}}e$ with $e=(1,0,0)^\top$ for $\beta_0$, with $e=(0,1,0)^\top$ for $\beta_A$ and $e=(0,0,1)^\top$ for $\beta_1$.
    ``est'' represents $e^\top\hat V_{\text{est}}e$ with $e=(1,0,0)^\top$ for $\beta_0$, with $e=(0,1,0)^\top$ for $\beta_A$ and $e=(0,0,1)^\top$ for $\beta_1$.
    The sample size of trial data is $n=1000$. 
    The x-axis represents the sample size of historical data $\tilde n$.
    The y-axis represents the coverage probability which is the proportion of 1000 simulations in which the 95\% CI using each variance estimator includes the true value.}
\end{figure}

\clearpage
\subsection{Scenario C-2}

\begin{figure}[h]
    \centering
    \includegraphics[width=\linewidth]{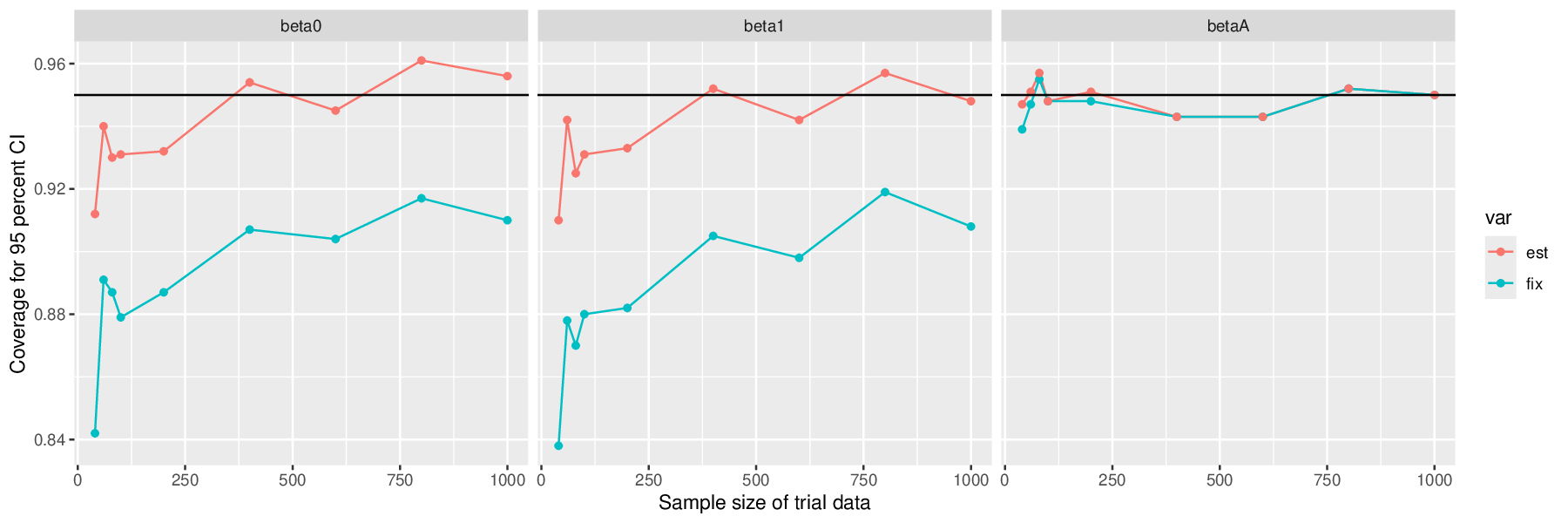}
    \caption{Plots of the coverage probability of 95\% CI over 1000 simulations for Scenario C-2.
    ``beta0'', ``betaA'', ``beta1'' represent the intercept $\beta_0$, the coefficient for the treatment assignment $\beta_A$, the coefficient for the prognostic score $\beta_1$ in the PROCOVA model \eqref{eq: PROCOVA linear}, respectively.
     ``fix'' represents $e^\top\hat V_{\text{fix}}e$ with $e=(1,0,0)^\top$ for $\beta_0$, with $e=(0,1,0)^\top$ for $\beta_A$ and $e=(0,0,1)^\top$ for $\beta_1$.
    ``est'' represents $e^\top\hat V_{\text{est}}e$ with $e=(1,0,0)^\top$ for $\beta_0$, with $e=(0,1,0)^\top$ for $\beta_A$ and $e=(0,0,1)^\top$ for $\beta_1$.
    The x-axis represents the sample size of trial data $n$.
    The sample size of historical data is $\tilde n = 10n$. 
    The y-axis represents the coverage probability which is the proportion of 1000 simulations in which the 95\% CI using each variance estimator includes the true value.
    }
\end{figure}

\clearpage
\begin{figure}[h]
    \centering
    \includegraphics[width=0.4\linewidth]{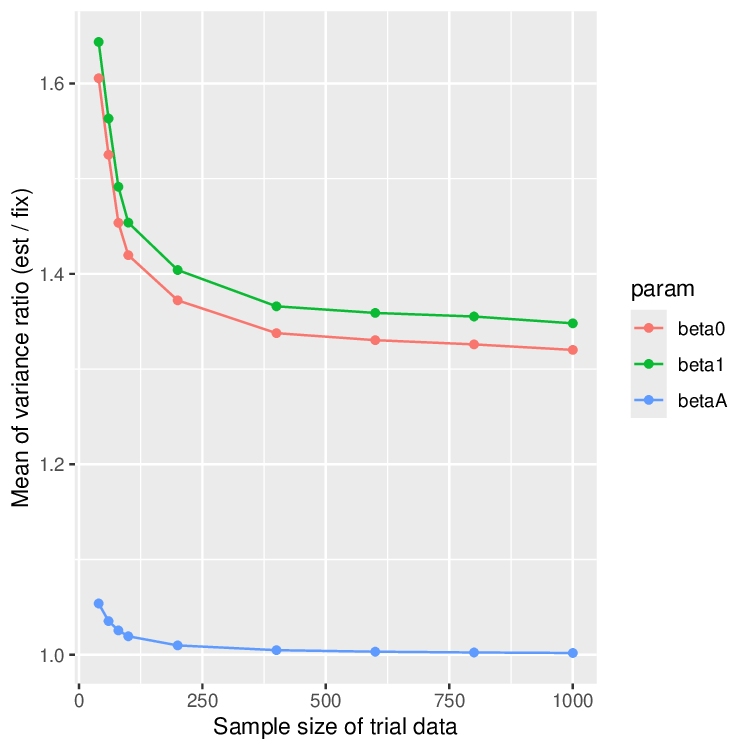}
    \caption{Plots of the mean of the ratio of two variance estimators over 1000 simulations for Scenario C-2.
    ``beta0'', ``betaA'', ``beta1'' represent the intercept $\beta_0$, the coefficient for the treatment assignment $\beta_A$, the coefficient for the prognostic score $\beta_1$ in the PROCOVA model \eqref{eq: PROCOVA linear}, respectively.
    The x-axis represents the sample size of trial data $n$.
    The sample size of historical data is $\tilde n = 10n$. 
    The y-axis represents the mean of the ratio of two variance estimators, i.e., $e^\top\hat V_{\text{est}}e/e^\top\hat V_{\text{fix}}e$ with $e=(1,0,0)^\top$ for $\beta_0$, with $e=(0,1,0)^\top$ for $\beta_A$ and $e=(0,0,1)^\top$ for $\beta_1$, over 1000 simulations.}
\end{figure}
\clearpage
\begin{figure}[h]
    \centering
    \includegraphics[width=0.4\linewidth]{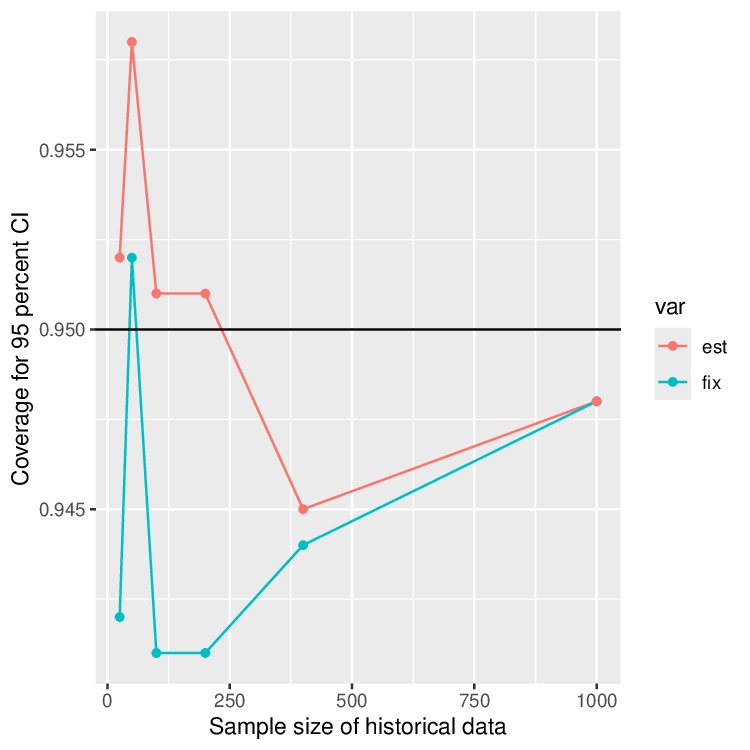}
    \caption{Plots of the coverage probability of the 95\% CI for $\beta_A$ in the PROCOVA model \eqref{eq: PROCOVA linear} over 1000 simulations for Scenario C-2.
     ``fix'' represents $e^\top\hat V_{\text{fix}}e$ with $e=(1,0,0)^\top$ for $\beta_0$, with $e=(0,1,0)^\top$ for $\beta_A$ and $e=(0,0,1)^\top$ for $\beta_1$.
    ``est'' represents $e^\top\hat V_{\text{est}}e$ with $e=(1,0,0)^\top$ for $\beta_0$, with $e=(0,1,0)^\top$ for $\beta_A$ and $e=(0,0,1)^\top$ for $\beta_1$.
    The sample size of trial data is $n=100$. 
    The x-axis represents the sample size of historical data $\tilde n$.
    The y-axis represents the coverage probability which is the proportion of 1000 simulations in which the 95\% CI using each variance estimator includes the true value.}
\end{figure}
\clearpage
\begin{figure}[h]
    \centering
    \includegraphics[width=\linewidth]{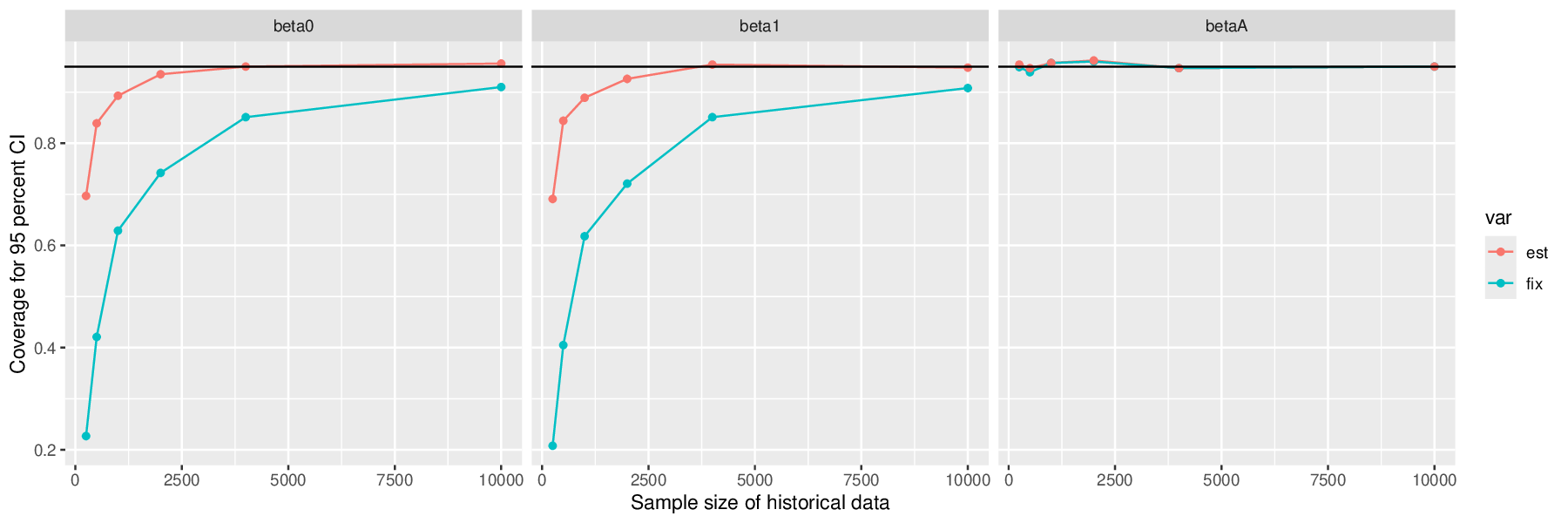}
    \caption{Plots of the coverage probability of 95\% CI over 1000 simulations for Scenario C-2.
    ``beta0'', ``betaA'', ``beta1'' represent the intercept $\beta_0$, the coefficient for the treatment assignment $\beta_A$, the coefficient for the prognostic score $\beta_1$ in the PROCOVA model \eqref{eq: PROCOVA linear}, respectively.
     ``fix'' represents $e^\top\hat V_{\text{fix}}e$ with $e=(1,0,0)^\top$ for $\beta_0$, with $e=(0,1,0)^\top$ for $\beta_A$ and $e=(0,0,1)^\top$ for $\beta_1$.
    ``est'' represents $e^\top\hat V_{\text{est}}e$ with $e=(1,0,0)^\top$ for $\beta_0$, with $e=(0,1,0)^\top$ for $\beta_A$ and $e=(0,0,1)^\top$ for $\beta_1$.
    The sample size of trial data is $n=1000$. 
    The x-axis represents the sample size of historical data $\tilde n$.
    The y-axis represents the coverage probability which is the proportion of 1000 simulations in which the 95\% CI using each variance estimator includes the true value.}
\end{figure}

\clearpage
\subsection{Scenario C-3}

\begin{figure}[h]
    \centering
    \includegraphics[width=\linewidth]{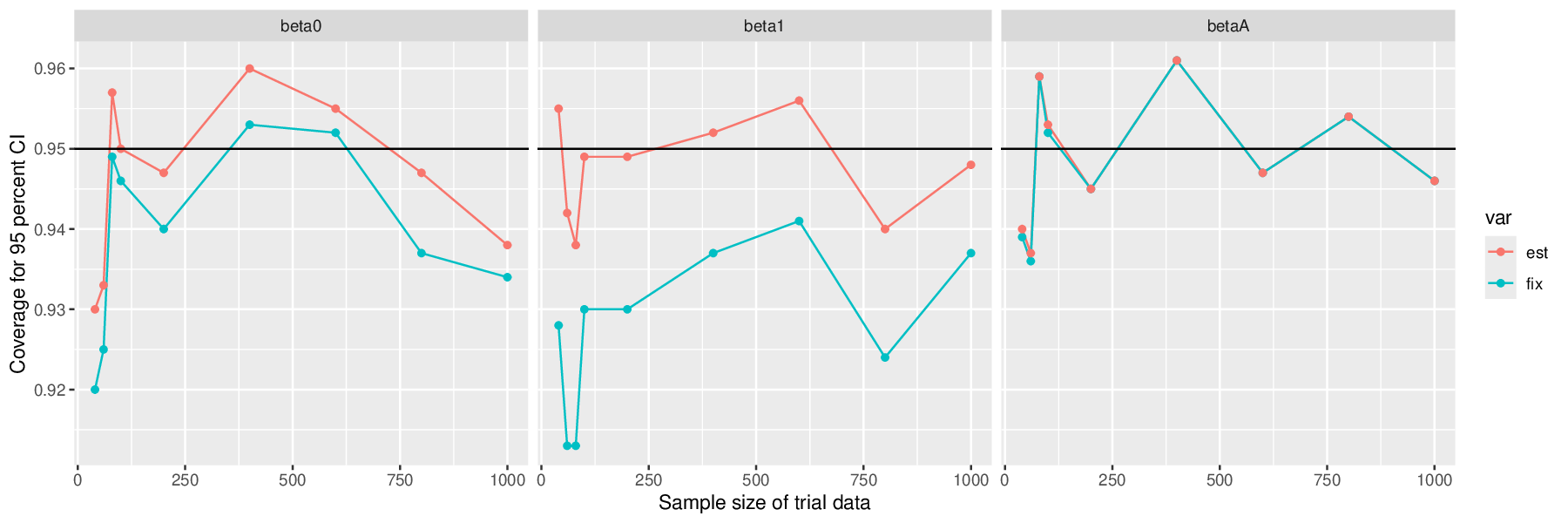}
    \caption{Plots of the coverage probability of 95\% CI over 1000 simulations for Scenario C-3.
    ``beta0'', ``betaA'', ``beta1'' represent the intercept $\beta_0$, the coefficient for the treatment assignment $\beta_A$, the coefficient for the prognostic score $\beta_1$ in the PROCOVA model \eqref{eq: PROCOVA linear}, respectively.
     ``fix'' represents $e^\top\hat V_{\text{fix}}e$ with $e=(1,0,0)^\top$ for $\beta_0$, with $e=(0,1,0)^\top$ for $\beta_A$ and $e=(0,0,1)^\top$ for $\beta_1$.
    ``est'' represents $e^\top\hat V_{\text{est}}e$ with $e=(1,0,0)^\top$ for $\beta_0$, with $e=(0,1,0)^\top$ for $\beta_A$ and $e=(0,0,1)^\top$ for $\beta_1$.
    The x-axis represents the sample size of trial data $n$.
    The sample size of historical data is $\tilde n = 10n$. 
    The y-axis represents the coverage probability which is the proportion of 1000 simulations in which the 95\% CI using each variance estimator includes the true value.
    }
\end{figure}

\clearpage
\begin{figure}[h]
    \centering
    \includegraphics[width=0.4\linewidth]{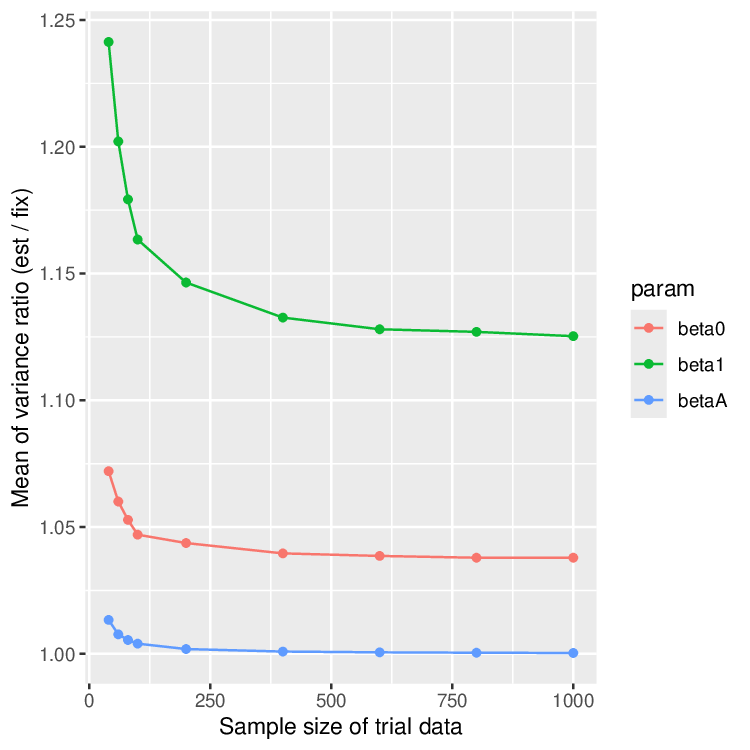}
    \caption{Plots of the mean of the ratio of two variance estimators over 1000 simulations for Scenario C-3.
    ``beta0'', ``betaA'', ``beta1'' represent the intercept $\beta_0$, the coefficient for the treatment assignment $\beta_A$, the coefficient for the prognostic score $\beta_1$ in the PROCOVA model \eqref{eq: PROCOVA linear}, respectively.
    The x-axis represents the sample size of trial data $n$.
    The sample size of historical data is $\tilde n = 10n$. 
    The y-axis represents the mean of the ratio of two variance estimators, i.e., $e^\top\hat V_{\text{est}}e/e^\top\hat V_{\text{fix}}e$ with $e=(1,0,0)^\top$ for $\beta_0$, with $e=(0,1,0)^\top$ for $\beta_A$ and $e=(0,0,1)^\top$ for $\beta_1$, over 1000 simulations.}
\end{figure}
\clearpage
\begin{figure}[h]
    \centering
    \includegraphics[width=0.4\linewidth]{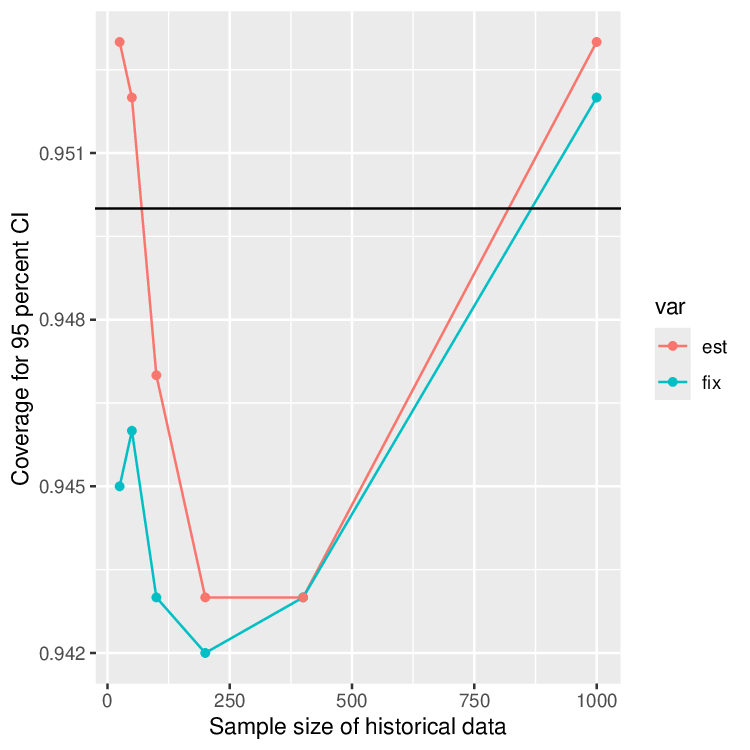}
    \caption{Plots of the coverage probability of the 95\% CI for $\beta_A$ in the PROCOVA model \eqref{eq: PROCOVA linear} over 1000 simulations for Scenario C-3.
     ``fix'' represents $e^\top\hat V_{\text{fix}}e$ with $e=(1,0,0)^\top$ for $\beta_0$, with $e=(0,1,0)^\top$ for $\beta_A$ and $e=(0,0,1)^\top$ for $\beta_1$.
    ``est'' represents $e^\top\hat V_{\text{est}}e$ with $e=(1,0,0)^\top$ for $\beta_0$, with $e=(0,1,0)^\top$ for $\beta_A$ and $e=(0,0,1)^\top$ for $\beta_1$.
    The sample size of trial data is $n=100$. 
    The x-axis represents the sample size of historical data $\tilde n$.
    The y-axis represents the coverage probability which is the proportion of 1000 simulations in which the 95\% CI using each variance estimator includes the true value.}
\end{figure}
\clearpage
\begin{figure}[h]
    \centering
    \includegraphics[width=\linewidth]{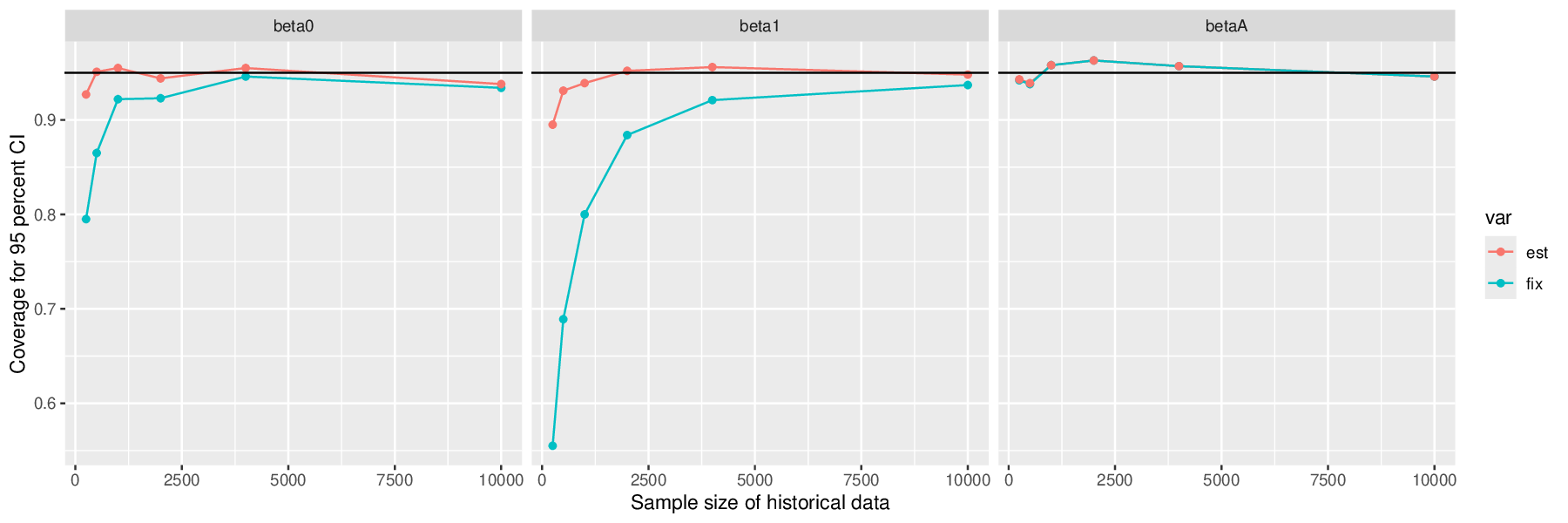}
    \caption{Plots of the coverage probability of 95\% CI over 1000 simulations for Scenario C-3.
    ``beta0'', ``betaA'', ``beta1'' represent the intercept $\beta_0$, the coefficient for the treatment assignment $\beta_A$, the coefficient for the prognostic score $\beta_1$ in the PROCOVA model \eqref{eq: PROCOVA linear}, respectively.
     ``fix'' represents $e^\top\hat V_{\text{fix}}e$ with $e=(1,0,0)^\top$ for $\beta_0$, with $e=(0,1,0)^\top$ for $\beta_A$ and $e=(0,0,1)^\top$ for $\beta_1$.
    ``est'' represents $e^\top\hat V_{\text{est}}e$ with $e=(1,0,0)^\top$ for $\beta_0$, with $e=(0,1,0)^\top$ for $\beta_A$ and $e=(0,0,1)^\top$ for $\beta_1$.
    The sample size of trial data is $n=1000$. 
    The x-axis represents the sample size of historical data $\tilde n$.
    The y-axis represents the coverage probability which is the proportion of 1000 simulations in which the 95\% CI using each variance estimator includes the true value.}
\end{figure}

\clearpage
\subsection{Scenario C-4}

\begin{figure}[h]
    \centering
    \includegraphics[width=\linewidth]{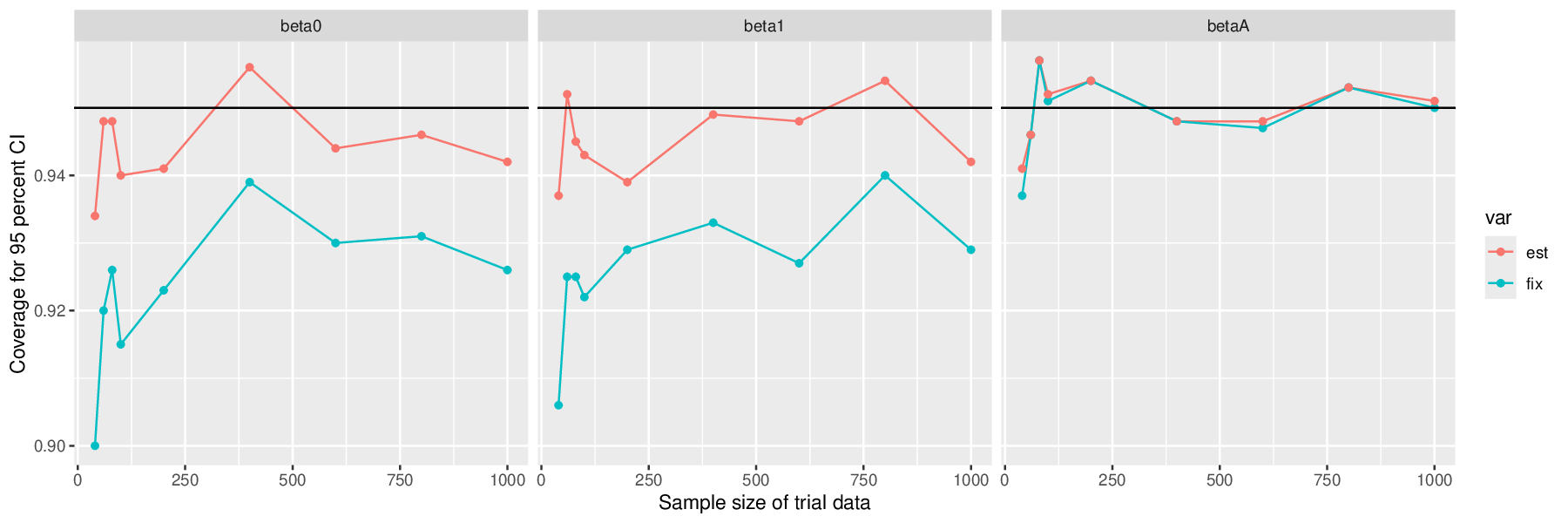}
    \caption{Plots of the coverage probability of 95\% CI over 1000 simulations for Scenario C-4.
    ``beta0'', ``betaA'', ``beta1'' represent the intercept $\beta_0$, the coefficient for the treatment assignment $\beta_A$, the coefficient for the prognostic score $\beta_1$ in the PROCOVA model \eqref{eq: PROCOVA linear}, respectively.
     ``fix'' represents $e^\top\hat V_{\text{fix}}e$ with $e=(1,0,0)^\top$ for $\beta_0$, with $e=(0,1,0)^\top$ for $\beta_A$ and $e=(0,0,1)^\top$ for $\beta_1$.
    ``est'' represents $e^\top\hat V_{\text{est}}e$ with $e=(1,0,0)^\top$ for $\beta_0$, with $e=(0,1,0)^\top$ for $\beta_A$ and $e=(0,0,1)^\top$ for $\beta_1$.
    The x-axis represents the sample size of trial data $n$.
    The sample size of historical data is $\tilde n = 10n$. 
    The y-axis represents the coverage probability which is the proportion of 1000 simulations in which the 95\% CI using each variance estimator includes the true value.
    }
\end{figure}

\clearpage
\begin{figure}[h]
    \centering
    \includegraphics[width=0.4\linewidth]{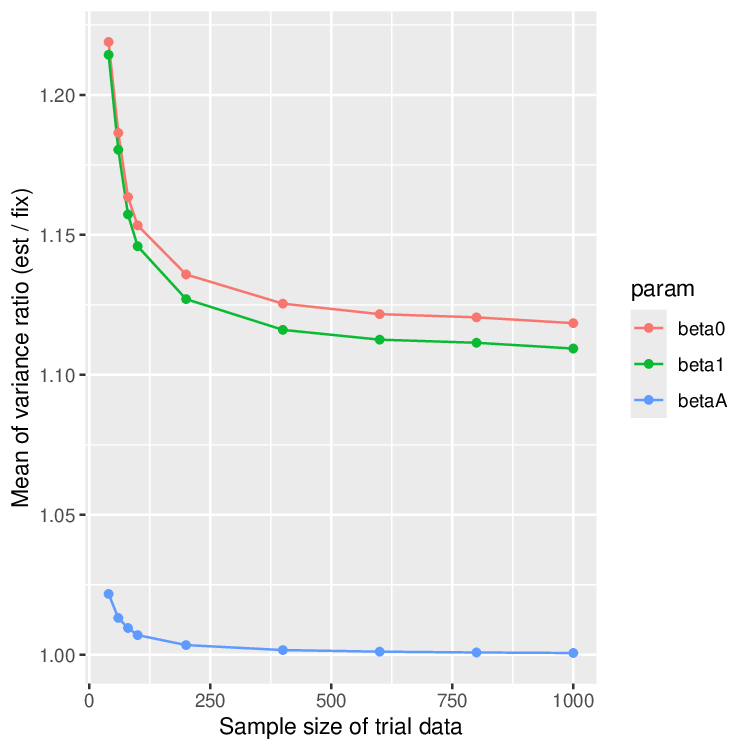}
    \caption{Plots of the mean of the ratio of two variance estimators over 1000 simulations for Scenario C-4.
    ``beta0'', ``betaA'', ``beta1'' represent the intercept $\beta_0$, the coefficient for the treatment assignment $\beta_A$, the coefficient for the prognostic score $\beta_1$ in the PROCOVA model \eqref{eq: PROCOVA linear}, respectively.
    The x-axis represents the sample size of trial data $n$.
    The sample size of historical data is $\tilde n = 10n$. 
    The y-axis represents the mean of the ratio of two variance estimators, i.e., $e^\top\hat V_{\text{est}}e/e^\top\hat V_{\text{fix}}e$ with $e=(1,0,0)^\top$ for $\beta_0$, with $e=(0,1,0)^\top$ for $\beta_A$ and $e=(0,0,1)^\top$ for $\beta_1$, over 1000 simulations.}
\end{figure}

\clearpage
\begin{figure}[h]
    \centering
    \includegraphics[width=0.4\linewidth]{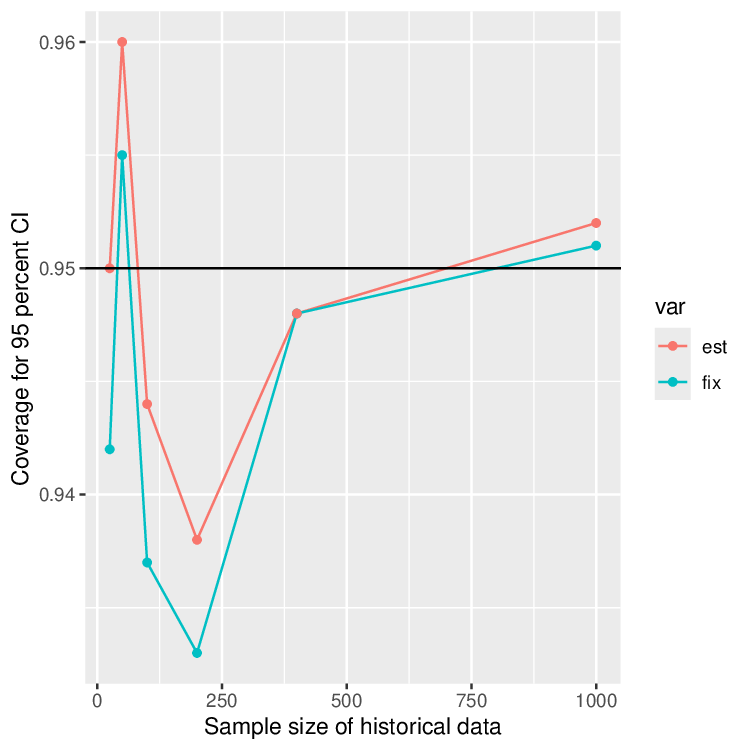}
    \caption{Plots of the coverage probability of the 95\% CI for $\beta_A$ in the PROCOVA model \eqref{eq: PROCOVA linear} over 1000 simulations for Scenario C-4.
     ``fix'' represents $e^\top\hat V_{\text{fix}}e$ with $e=(1,0,0)^\top$ for $\beta_0$, with $e=(0,1,0)^\top$ for $\beta_A$ and $e=(0,0,1)^\top$ for $\beta_1$.
    ``est'' represents $e^\top\hat V_{\text{est}}e$ with $e=(1,0,0)^\top$ for $\beta_0$, with $e=(0,1,0)^\top$ for $\beta_A$ and $e=(0,0,1)^\top$ for $\beta_1$.
    The sample size of trial data is $n=100$. 
    The x-axis represents the sample size of historical data $\tilde n$.
    The y-axis represents the coverage probability which is the proportion of 1000 simulations in which the 95\% CI using each variance estimator includes the true value.}
\end{figure}

\clearpage
\begin{figure}[h]
    \centering
    \includegraphics[width=\linewidth]{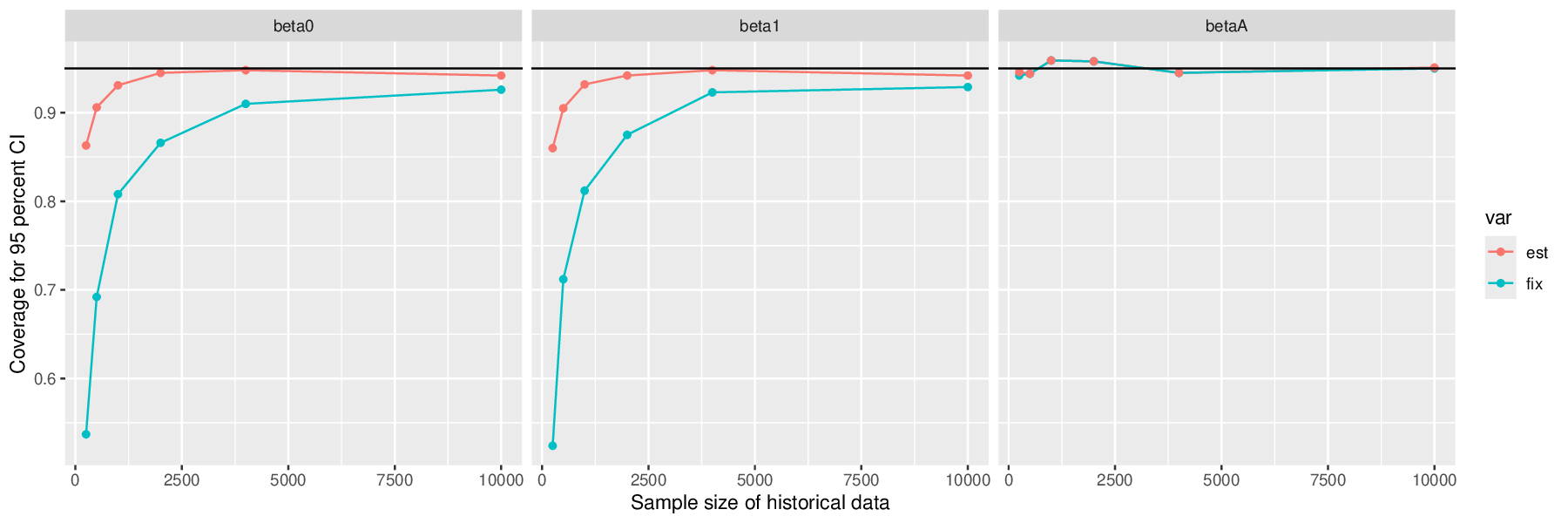}
    \caption{Plots of the coverage probability of 95\% CI over 1000 simulations for Scenario C-4.
    ``beta0'', ``betaA'', ``beta1'' represent the intercept $\beta_0$, the coefficient for the treatment assignment $\beta_A$, the coefficient for the prognostic score $\beta_1$ in the PROCOVA model \eqref{eq: PROCOVA linear}, respectively.
     ``fix'' represents $e^\top\hat V_{\text{fix}}e$ with $e=(1,0,0)^\top$ for $\beta_0$, with $e=(0,1,0)^\top$ for $\beta_A$ and $e=(0,0,1)^\top$ for $\beta_1$.
    ``est'' represents $e^\top\hat V_{\text{est}}e$ with $e=(1,0,0)^\top$ for $\beta_0$, with $e=(0,1,0)^\top$ for $\beta_A$ and $e=(0,0,1)^\top$ for $\beta_1$.
    The sample size of trial data is $n=1000$. 
    The x-axis represents the sample size of historical data $\tilde n$.
    The y-axis represents the coverage probability which is the proportion of 1000 simulations in which the 95\% CI using each variance estimator includes the true value.}
\end{figure}

\clearpage
\subsection{Scenario C-5}

\begin{figure}[h]
    \centering
    \includegraphics[width=\linewidth]{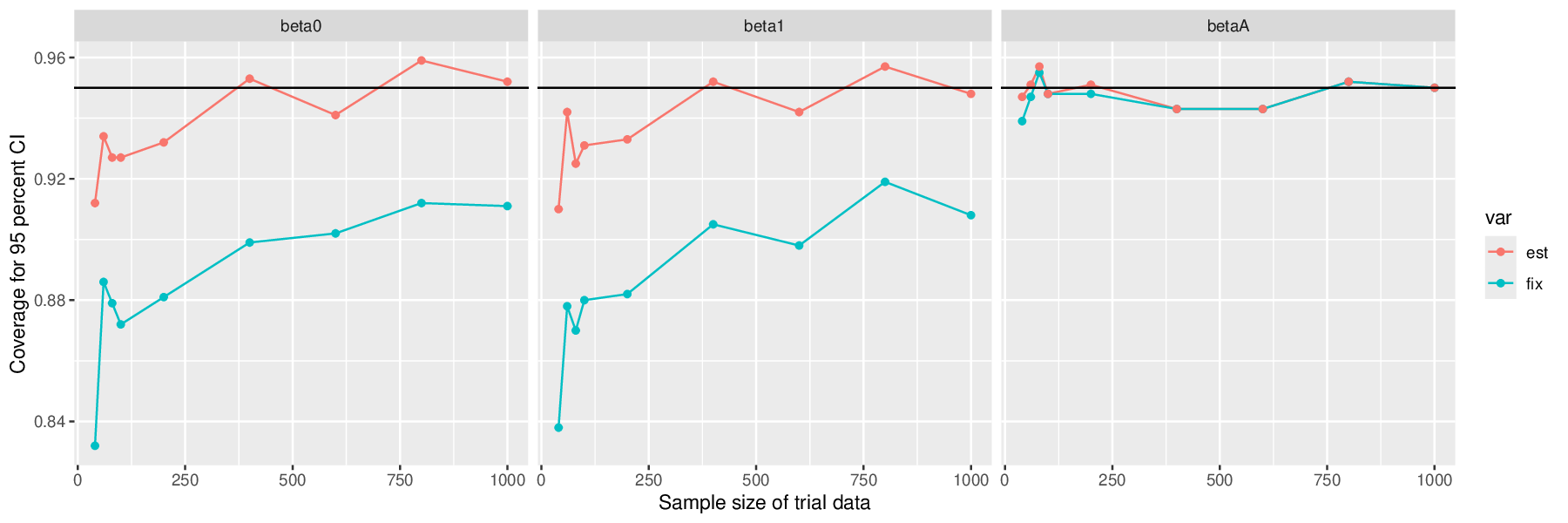}
    \caption{Plots of the coverage probability of 95\% CI over 1000 simulations for Scenario C-5.
    ``beta0'', ``betaA'', ``beta1'' represent the intercept $\beta_0$, the coefficient for the treatment assignment $\beta_A$, the coefficient for the prognostic score $\beta_1$ in the PROCOVA model \eqref{eq: PROCOVA linear}, respectively.
     ``fix'' represents $e^\top\hat V_{\text{fix}}e$ with $e=(1,0,0)^\top$ for $\beta_0$, with $e=(0,1,0)^\top$ for $\beta_A$ and $e=(0,0,1)^\top$ for $\beta_1$.
    ``est'' represents $e^\top\hat V_{\text{est}}e$ with $e=(1,0,0)^\top$ for $\beta_0$, with $e=(0,1,0)^\top$ for $\beta_A$ and $e=(0,0,1)^\top$ for $\beta_1$.
    The x-axis represents the sample size of trial data $n$.
    The sample size of historical data is $\tilde n = 10n$. 
    The y-axis represents the coverage probability which is the proportion of 1000 simulations in which the 95\% CI using each variance estimator includes the true value.
    }
\end{figure}

\clearpage
\begin{figure}[h]
    \centering
    \includegraphics[width=0.4\linewidth]{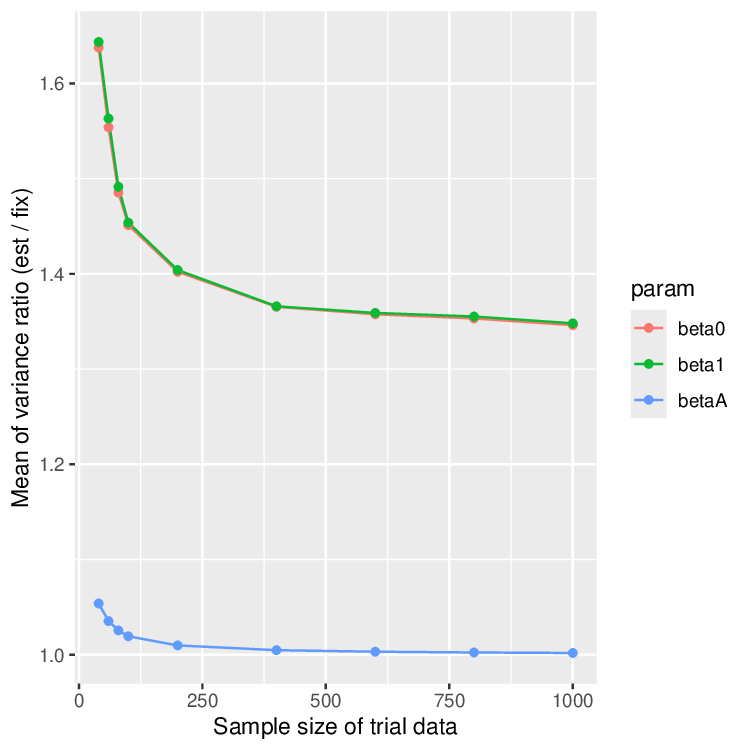}
    \caption{Plots of the mean of the ratio of two variance estimators over 1000 simulations for Scenario C-5.
    ``beta0'', ``betaA'', ``beta1'' represent the intercept $\beta_0$, the coefficient for the treatment assignment $\beta_A$, the coefficient for the prognostic score $\beta_1$ in the PROCOVA model \eqref{eq: PROCOVA linear}, respectively.
    The x-axis represents the sample size of trial data $n$.
    The sample size of historical data is $\tilde n = 10n$. 
    The y-axis represents the mean of the ratio of two variance estimators, i.e., $e^\top\hat V_{\text{est}}e/e^\top\hat V_{\text{fix}}e$ with $e=(1,0,0)^\top$ for $\beta_0$, with $e=(0,1,0)^\top$ for $\beta_A$ and $e=(0,0,1)^\top$ for $\beta_1$, over 1000 simulations.}
\end{figure}

\clearpage
\begin{figure}[h]
    \centering
    \includegraphics[width=0.4\linewidth]{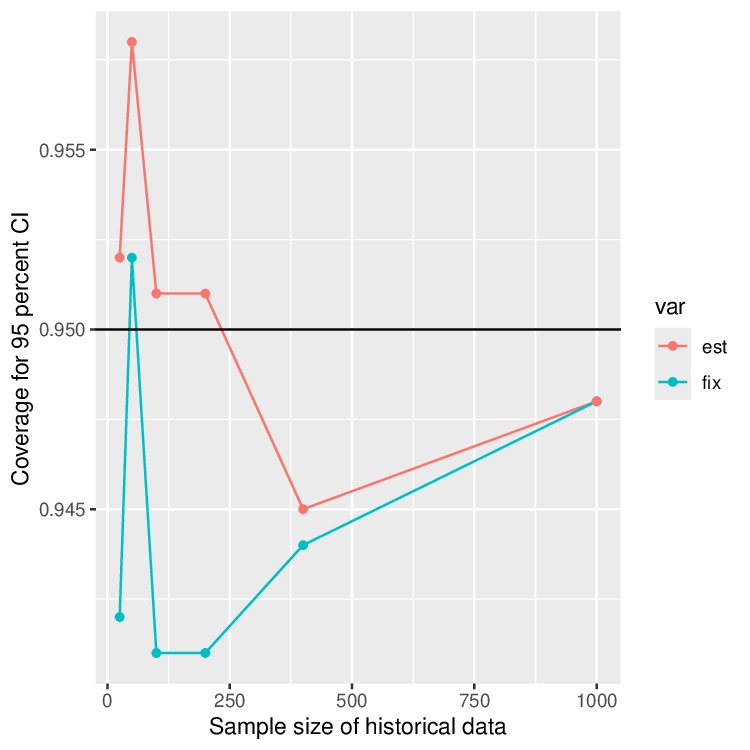}
    \caption{Plots of the coverage probability of the 95\% CI for $\beta_A$ in the PROCOVA model \eqref{eq: PROCOVA linear} over 1000 simulations for Scenario C-5.
     ``fix'' represents $e^\top\hat V_{\text{fix}}e$ with $e=(1,0,0)^\top$ for $\beta_0$, with $e=(0,1,0)^\top$ for $\beta_A$ and $e=(0,0,1)^\top$ for $\beta_1$.
    ``est'' represents $e^\top\hat V_{\text{est}}e$ with $e=(1,0,0)^\top$ for $\beta_0$, with $e=(0,1,0)^\top$ for $\beta_A$ and $e=(0,0,1)^\top$ for $\beta_1$.
    The sample size of trial data is $n=100$. 
    The x-axis represents the sample size of historical data $\tilde n$.
    The y-axis represents the coverage probability which is the proportion of 1000 simulations in which the 95\% CI using each variance estimator includes the true value.}
\end{figure}

\clearpage
\begin{figure}[h]
    \centering
    \includegraphics[width=\linewidth]{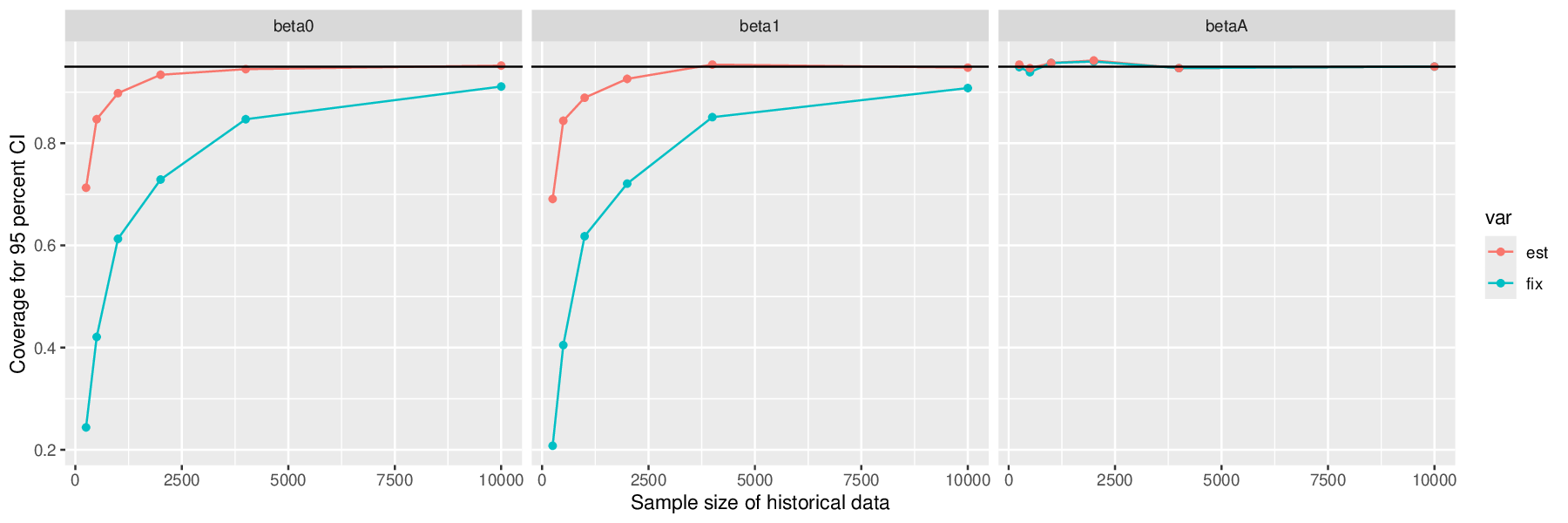}
    \caption{Plots of the coverage probability of 95\% CI over 1000 simulations for Scenario C-5.
    ``beta0'', ``betaA'', ``beta1'' represent the intercept $\beta_0$, the coefficient for the treatment assignment $\beta_A$, the coefficient for the prognostic score $\beta_1$ in the PROCOVA model \eqref{eq: PROCOVA linear}, respectively.
     ``fix'' represents $e^\top\hat V_{\text{fix}}e$ with $e=(1,0,0)^\top$ for $\beta_0$, with $e=(0,1,0)^\top$ for $\beta_A$ and $e=(0,0,1)^\top$ for $\beta_1$.
    ``est'' represents $e^\top\hat V_{\text{est}}e$ with $e=(1,0,0)^\top$ for $\beta_0$, with $e=(0,1,0)^\top$ for $\beta_A$ and $e=(0,0,1)^\top$ for $\beta_1$.
    The sample size of trial data is $n=1000$. 
    The x-axis represents the sample size of historical data $\tilde n$.
    The y-axis represents the coverage probability which is the proportion of 1000 simulations in which the 95\% CI using each variance estimator includes the true value.}
\end{figure}

\clearpage
\subsection{Scenario C-6}

\begin{figure}[h]
    \centering
    \includegraphics[width=\linewidth]{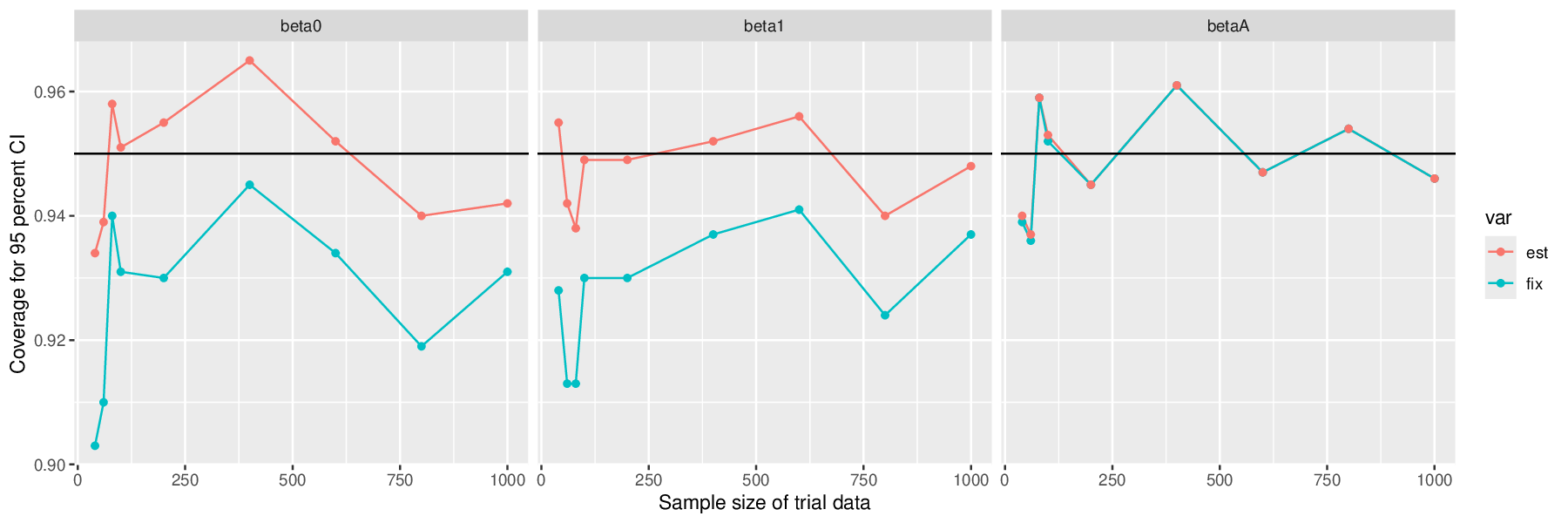}
    \caption{Plots of the coverage probability of 95\% CI over 1000 simulations for Scenario C-6.
    ``beta0'', ``betaA'', ``beta1'' represent the intercept $\beta_0$, the coefficient for the treatment assignment $\beta_A$, the coefficient for the prognostic score $\beta_1$ in the PROCOVA model \eqref{eq: PROCOVA linear}, respectively.
     ``fix'' represents $e^\top\hat V_{\text{fix}}e$ with $e=(1,0,0)^\top$ for $\beta_0$, with $e=(0,1,0)^\top$ for $\beta_A$ and $e=(0,0,1)^\top$ for $\beta_1$.
    ``est'' represents $e^\top\hat V_{\text{est}}e$ with $e=(1,0,0)^\top$ for $\beta_0$, with $e=(0,1,0)^\top$ for $\beta_A$ and $e=(0,0,1)^\top$ for $\beta_1$.
    The x-axis represents the sample size of trial data $n$.
    The sample size of historical data is $\tilde n = 10n$. 
    The y-axis represents the coverage probability which is the proportion of 1000 simulations in which the 95\% CI using each variance estimator includes the true value.
    }
\end{figure}

\clearpage
\begin{figure}[h]
    \centering
    \includegraphics[width=0.4\linewidth]{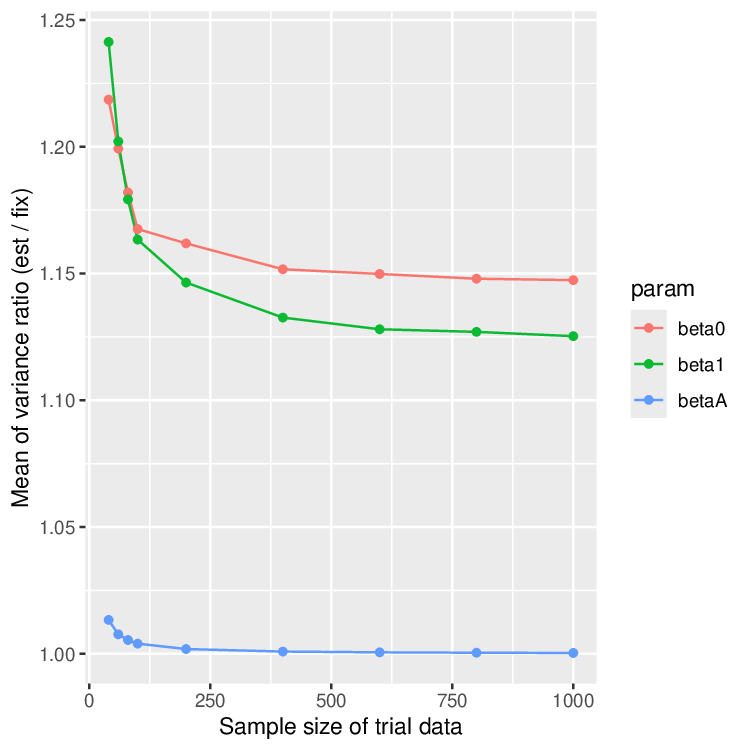}
    \caption{Plots of the mean of the ratio of two variance estimators over 1000 simulations for Scenario C-6.
    ``beta0'', ``betaA'', ``beta1'' represent the intercept $\beta_0$, the coefficient for the treatment assignment $\beta_A$, the coefficient for the prognostic score $\beta_1$ in the PROCOVA model \eqref{eq: PROCOVA linear}, respectively.
    The x-axis represents the sample size of trial data $n$.
    The sample size of historical data is $\tilde n = 10n$. 
    The y-axis represents the mean of the ratio of two variance estimators, i.e., $e^\top\hat V_{\text{est}}e/e^\top\hat V_{\text{fix}}e$ with $e=(1,0,0)^\top$ for $\beta_0$, with $e=(0,1,0)^\top$ for $\beta_A$ and $e=(0,0,1)^\top$ for $\beta_1$, over 1000 simulations.}
\end{figure}

\clearpage
\begin{figure}[h]
    \centering
    \includegraphics[width=0.4\linewidth]{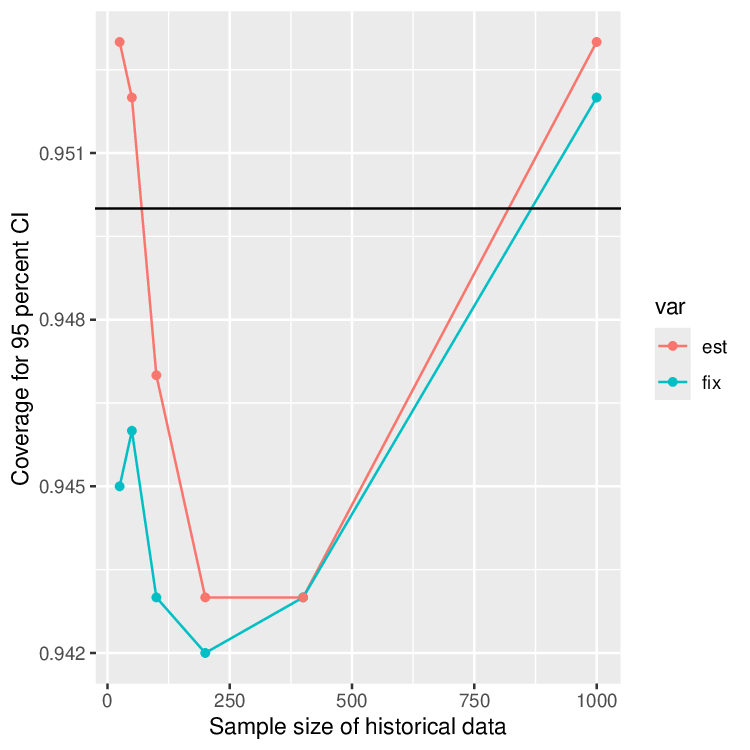}
    \caption{Plots of the coverage probability of the 95\% CI for $\beta_A$ in the PROCOVA model \eqref{eq: PROCOVA linear} over 1000 simulations for Scenario C-6.
     ``fix'' represents $e^\top\hat V_{\text{fix}}e$ with $e=(1,0,0)^\top$ for $\beta_0$, with $e=(0,1,0)^\top$ for $\beta_A$ and $e=(0,0,1)^\top$ for $\beta_1$.
    ``est'' represents $e^\top\hat V_{\text{est}}e$ with $e=(1,0,0)^\top$ for $\beta_0$, with $e=(0,1,0)^\top$ for $\beta_A$ and $e=(0,0,1)^\top$ for $\beta_1$.
    The sample size of trial data is $n=100$. 
    The x-axis represents the sample size of historical data $\tilde n$.
    The y-axis represents the coverage probability which is the proportion of 1000 simulations in which the 95\% CI using each variance estimator includes the true value.}
\end{figure}

\clearpage
\begin{figure}[h]
    \centering
    \includegraphics[width=\linewidth]{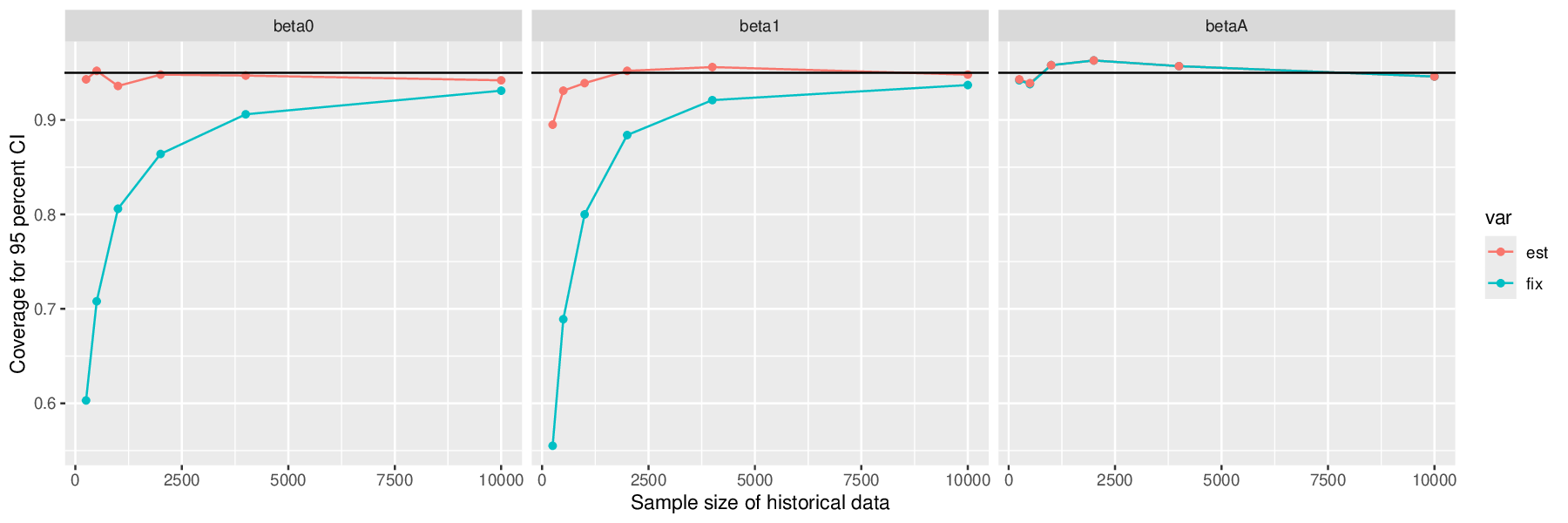}
    \caption{Plots of the coverage probability of 95\% CI over 1000 simulations for Scenario C-6.
    ``beta0'', ``betaA'', ``beta1'' represent the intercept $\beta_0$, the coefficient for the treatment assignment $\beta_A$, the coefficient for the prognostic score $\beta_1$ in the PROCOVA model \eqref{eq: PROCOVA linear}, respectively.
     ``fix'' represents $e^\top\hat V_{\text{fix}}e$ with $e=(1,0,0)^\top$ for $\beta_0$, with $e=(0,1,0)^\top$ for $\beta_A$ and $e=(0,0,1)^\top$ for $\beta_1$.
    ``est'' represents $e^\top\hat V_{\text{est}}e$ with $e=(1,0,0)^\top$ for $\beta_0$, with $e=(0,1,0)^\top$ for $\beta_A$ and $e=(0,0,1)^\top$ for $\beta_1$.
    The sample size of trial data is $n=1000$. 
    The x-axis represents the sample size of historical data $\tilde n$.
    The y-axis represents the coverage probability which is the proportion of 1000 simulations in which the 95\% CI using each variance estimator includes the true value.}
\end{figure}

\clearpage
\subsection{Scenario C-7}

\begin{figure}[h]
    \centering
    \includegraphics[width=\linewidth]{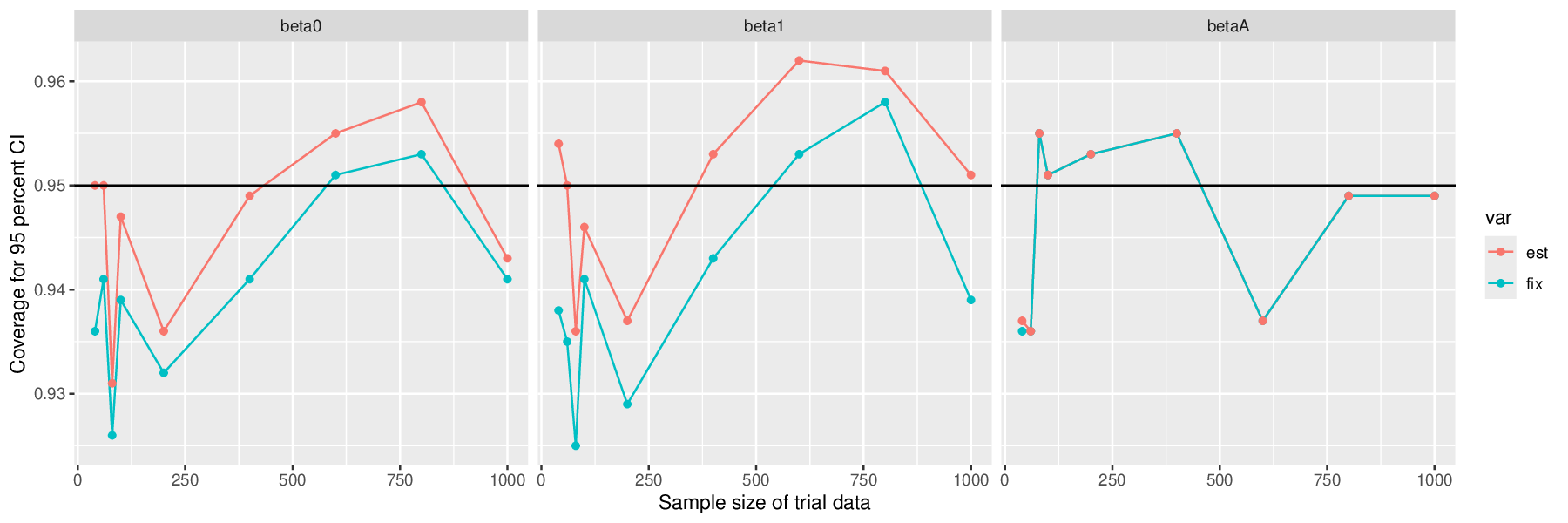}
    \caption{Plots of the coverage probability of 95\% CI over 1000 simulations for Scenario C-7.
    ``beta0'', ``betaA'', ``beta1'' represent the intercept $\beta_0$, the coefficient for the treatment assignment $\beta_A$, the coefficient for the prognostic score $\beta_1$ in the PROCOVA model \eqref{eq: PROCOVA linear}, respectively.
     ``fix'' represents $e^\top\hat V_{\text{fix}}e$ with $e=(1,0,0)^\top$ for $\beta_0$, with $e=(0,1,0)^\top$ for $\beta_A$ and $e=(0,0,1)^\top$ for $\beta_1$.
    ``est'' represents $e^\top\hat V_{\text{est}}e$ with $e=(1,0,0)^\top$ for $\beta_0$, with $e=(0,1,0)^\top$ for $\beta_A$ and $e=(0,0,1)^\top$ for $\beta_1$.
    The x-axis represents the sample size of trial data $n$.
    The sample size of historical data is $\tilde n = 10n$. 
    The y-axis represents the coverage probability which is the proportion of 1000 simulations in which the 95\% CI using each variance estimator includes the true value.
    }
\end{figure}

\clearpage
\begin{figure}[h]
    \centering
    \includegraphics[width=0.4\linewidth]{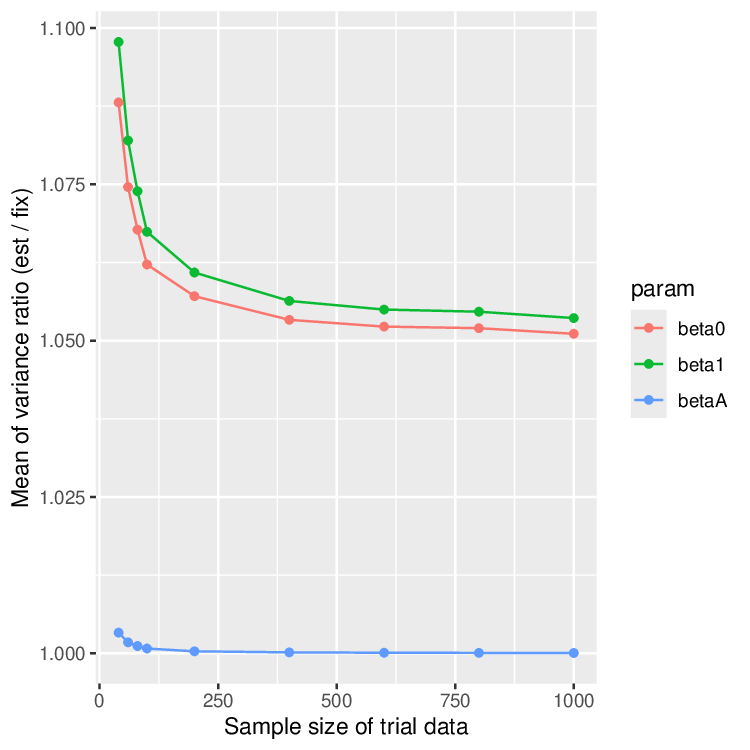}
    \caption{Plots of the mean of the ratio of two variance estimators over 1000 simulations for Scenario C-7.
    ``beta0'', ``betaA'', ``beta1'' represent the intercept $\beta_0$, the coefficient for the treatment assignment $\beta_A$, the coefficient for the prognostic score $\beta_1$ in the PROCOVA model \eqref{eq: PROCOVA linear}, respectively.
    The x-axis represents the sample size of trial data $n$.
    The sample size of historical data is $\tilde n = 10n$. 
    The y-axis represents the mean of the ratio of two variance estimators, i.e., $e^\top\hat V_{\text{est}}e/e^\top\hat V_{\text{fix}}e$ with $e=(1,0,0)^\top$ for $\beta_0$, with $e=(0,1,0)^\top$ for $\beta_A$ and $e=(0,0,1)^\top$ for $\beta_1$, over 1000 simulations.}
\end{figure}
\clearpage
\begin{figure}[h]
    \centering
    \includegraphics[width=0.4\linewidth]{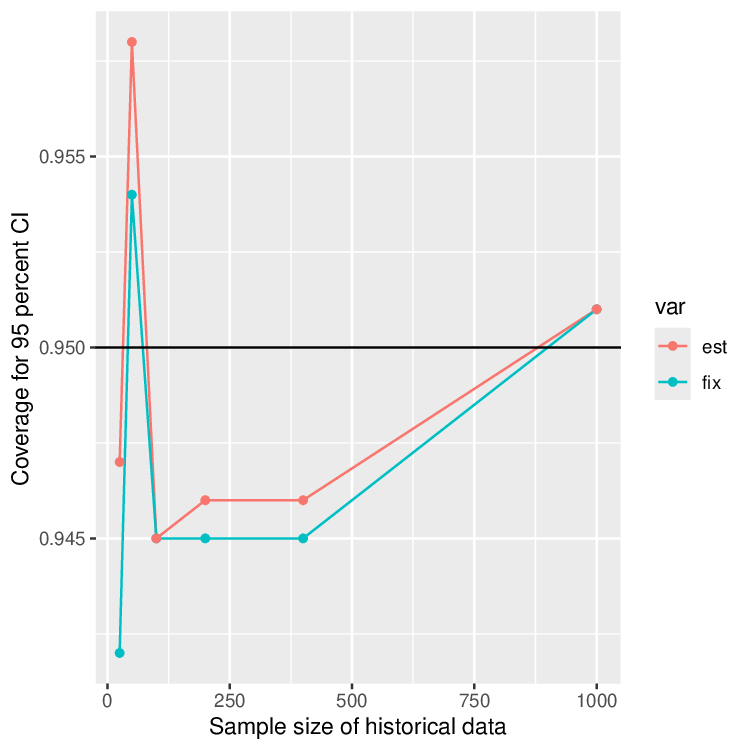}
    \caption{Plots of the coverage probability of the 95\% CI for $\beta_A$ in the PROCOVA model \eqref{eq: PROCOVA linear} over 1000 simulations for Scenario C-7.
     ``fix'' represents $e^\top\hat V_{\text{fix}}e$ with $e=(1,0,0)^\top$ for $\beta_0$, with $e=(0,1,0)^\top$ for $\beta_A$ and $e=(0,0,1)^\top$ for $\beta_1$.
    ``est'' represents $e^\top\hat V_{\text{est}}e$ with $e=(1,0,0)^\top$ for $\beta_0$, with $e=(0,1,0)^\top$ for $\beta_A$ and $e=(0,0,1)^\top$ for $\beta_1$.
    The sample size of trial data is $n=100$. 
    The x-axis represents the sample size of historical data $\tilde n$.
    The y-axis represents the coverage probability which is the proportion of 1000 simulations in which the 95\% CI using each variance estimator includes the true value.}
\end{figure}
\clearpage
\begin{figure}[h]
    \centering
    \includegraphics[width=\linewidth]{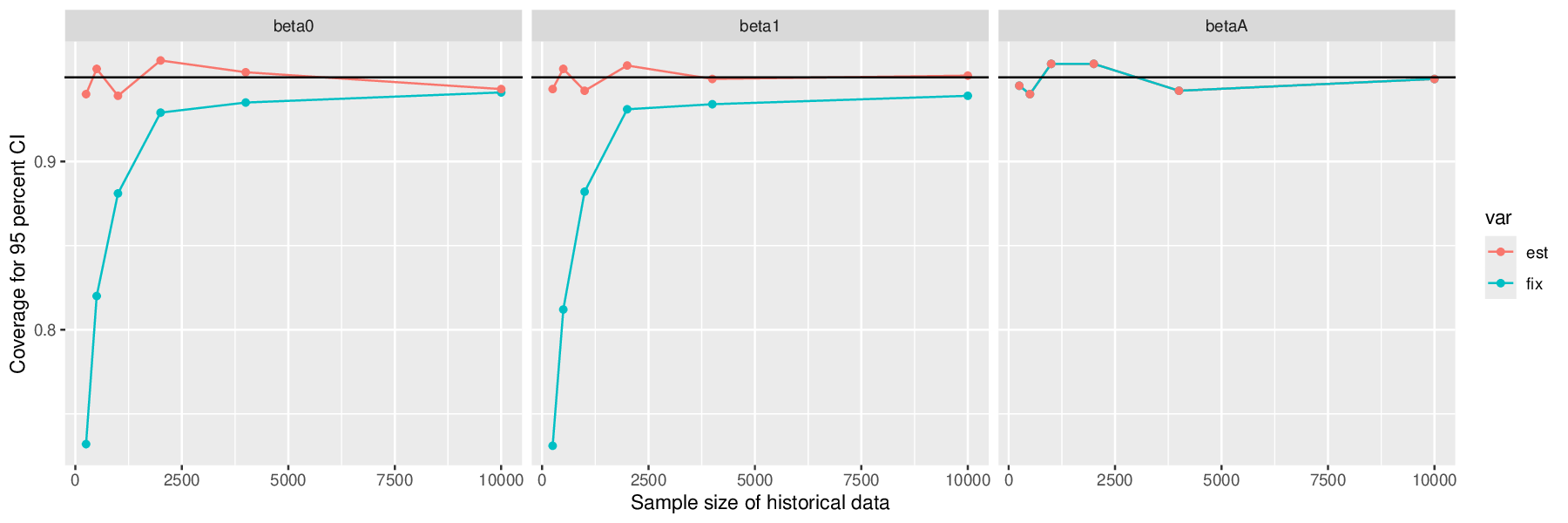}
    \caption{Plots of the coverage probability of 95\% CI over 1000 simulations for Scenario C-7.
    ``beta0'', ``betaA'', ``beta1'' represent the intercept $\beta_0$, the coefficient for the treatment assignment $\beta_A$, the coefficient for the prognostic score $\beta_1$ in the PROCOVA model \eqref{eq: PROCOVA linear}, respectively.
     ``fix'' represents $e^\top\hat V_{\text{fix}}e$ with $e=(1,0,0)^\top$ for $\beta_0$, with $e=(0,1,0)^\top$ for $\beta_A$ and $e=(0,0,1)^\top$ for $\beta_1$.
    ``est'' represents $e^\top\hat V_{\text{est}}e$ with $e=(1,0,0)^\top$ for $\beta_0$, with $e=(0,1,0)^\top$ for $\beta_A$ and $e=(0,0,1)^\top$ for $\beta_1$.
    The sample size of trial data is $n=1000$. 
    The x-axis represents the sample size of historical data $\tilde n$.
    The y-axis represents the coverage probability which is the proportion of 1000 simulations in which the 95\% CI using each variance estimator includes the true value.}
\end{figure}

\clearpage
\subsection{Scenario C-8}

\begin{figure}[h]
    \centering
    \includegraphics[width=\linewidth]{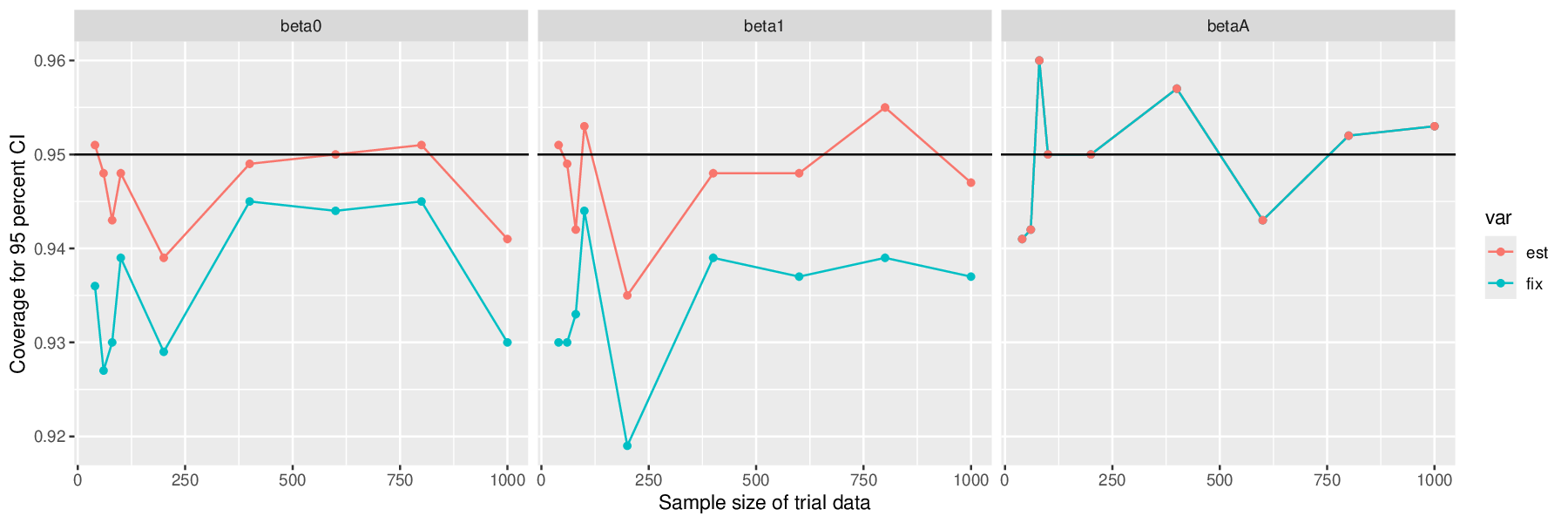}
    \caption{Plots of the coverage probability of 95\% CI over 1000 simulations for Scenario C-8.
    ``beta0'', ``betaA'', ``beta1'' represent the intercept $\beta_0$, the coefficient for the treatment assignment $\beta_A$, the coefficient for the prognostic score $\beta_1$ in the PROCOVA model \eqref{eq: PROCOVA linear}, respectively.
     ``fix'' represents $e^\top\hat V_{\text{fix}}e$ with $e=(1,0,0)^\top$ for $\beta_0$, with $e=(0,1,0)^\top$ for $\beta_A$ and $e=(0,0,1)^\top$ for $\beta_1$.
    ``est'' represents $e^\top\hat V_{\text{est}}e$ with $e=(1,0,0)^\top$ for $\beta_0$, with $e=(0,1,0)^\top$ for $\beta_A$ and $e=(0,0,1)^\top$ for $\beta_1$.
    The x-axis represents the sample size of trial data $n$.
    The sample size of historical data is $\tilde n = 10n$. 
    The y-axis represents the coverage probability which is the proportion of 1000 simulations in which the 95\% CI using each variance estimator includes the true value.
    }
\end{figure}

\clearpage
\begin{figure}[h]
    \centering
    \includegraphics[width=0.4\linewidth]{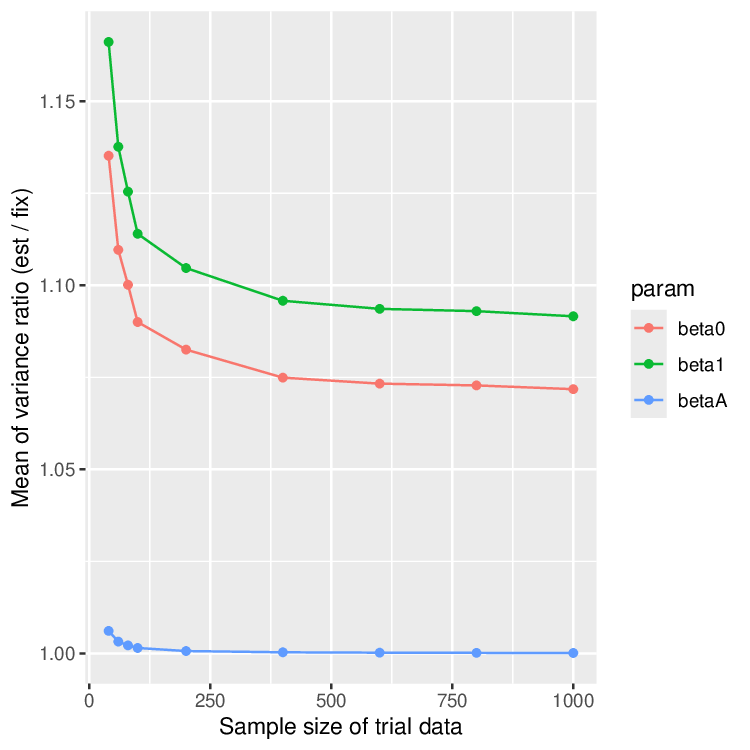}
    \caption{Plots of the mean of the ratio of two variance estimators over 1000 simulations for Scenario C-8.
    ``beta0'', ``betaA'', ``beta1'' represent the intercept $\beta_0$, the coefficient for the treatment assignment $\beta_A$, the coefficient for the prognostic score $\beta_1$ in the PROCOVA model \eqref{eq: PROCOVA linear}, respectively.
    The x-axis represents the sample size of trial data $n$.
    The sample size of historical data is $\tilde n = 10n$. 
    The y-axis represents the mean of the ratio of two variance estimators, i.e., $e^\top\hat V_{\text{est}}e/e^\top\hat V_{\text{fix}}e$ with $e=(1,0,0)^\top$ for $\beta_0$, with $e=(0,1,0)^\top$ for $\beta_A$ and $e=(0,0,1)^\top$ for $\beta_1$, over 1000 simulations.}
\end{figure}
\clearpage
\begin{figure}[h]
    \centering
    \includegraphics[width=0.4\linewidth]{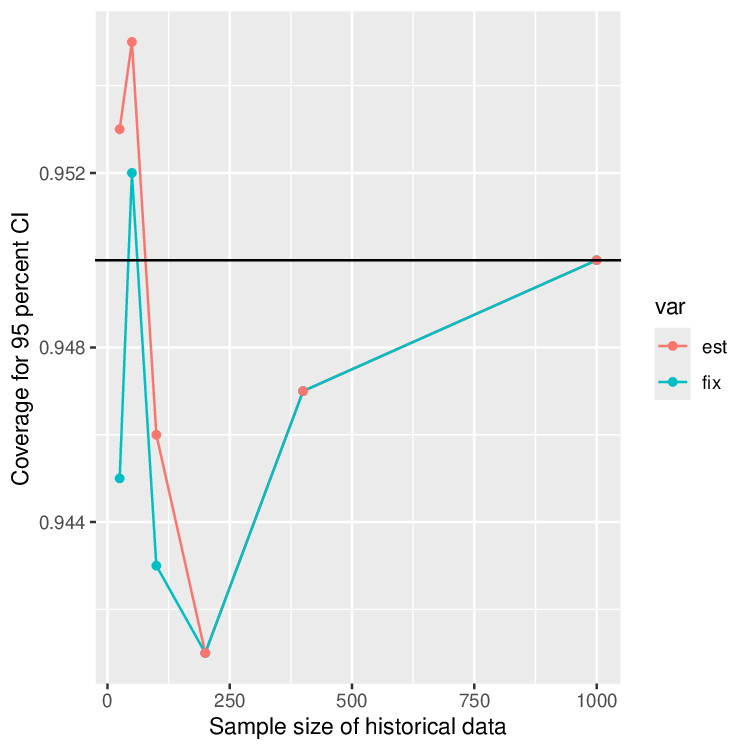}
    \caption{Plots of the coverage probability of the 95\% CI for $\beta_A$ in the PROCOVA model \eqref{eq: PROCOVA linear} over 1000 simulations for Scenario C-8.
     ``fix'' represents $e^\top\hat V_{\text{fix}}e$ with $e=(1,0,0)^\top$ for $\beta_0$, with $e=(0,1,0)^\top$ for $\beta_A$ and $e=(0,0,1)^\top$ for $\beta_1$.
    ``est'' represents $e^\top\hat V_{\text{est}}e$ with $e=(1,0,0)^\top$ for $\beta_0$, with $e=(0,1,0)^\top$ for $\beta_A$ and $e=(0,0,1)^\top$ for $\beta_1$.
    The sample size of trial data is $n=100$. 
    The x-axis represents the sample size of historical data $\tilde n$.
    The y-axis represents the coverage probability which is the proportion of 1000 simulations in which the 95\% CI using each variance estimator includes the true value.}
\end{figure}
\clearpage
\begin{figure}[h]
    \centering
    \includegraphics[width=\linewidth]{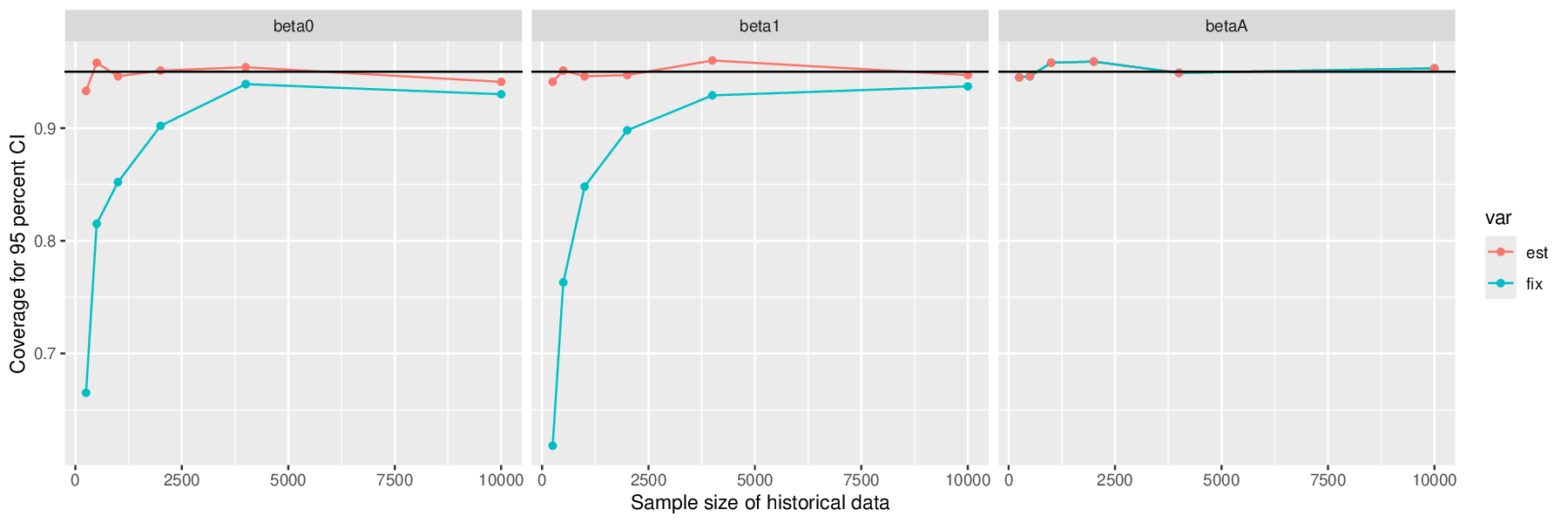}
    \caption{Plots of the coverage probability of 95\% CI over 1000 simulations for Scenario C-8.
    ``beta0'', ``betaA'', ``beta1'' represent the intercept $\beta_0$, the coefficient for the treatment assignment $\beta_A$, the coefficient for the prognostic score $\beta_1$ in the PROCOVA model \eqref{eq: PROCOVA linear}, respectively.
     ``fix'' represents $e^\top\hat V_{\text{fix}}e$ with $e=(1,0,0)^\top$ for $\beta_0$, with $e=(0,1,0)^\top$ for $\beta_A$ and $e=(0,0,1)^\top$ for $\beta_1$.
    ``est'' represents $e^\top\hat V_{\text{est}}e$ with $e=(1,0,0)^\top$ for $\beta_0$, with $e=(0,1,0)^\top$ for $\beta_A$ and $e=(0,0,1)^\top$ for $\beta_1$.
    The sample size of trial data is $n=1000$. 
    The x-axis represents the sample size of historical data $\tilde n$.
    The y-axis represents the coverage probability which is the proportion of 1000 simulations in which the 95\% CI using each variance estimator includes the true value.}
\end{figure}

\clearpage
\subsection{Scenario C-9}

\begin{figure}[h]
    \centering
    \includegraphics[width=\linewidth]{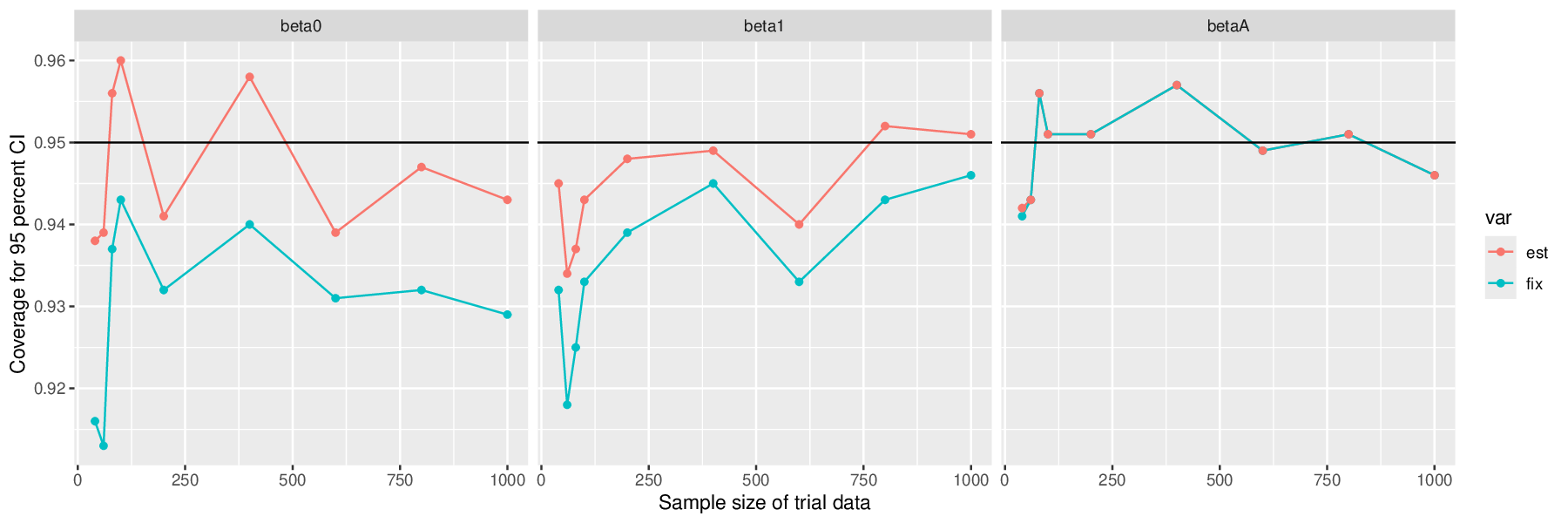}
    \caption{Plots of the coverage probability of 95\% CI over 1000 simulations for Scenario C-9.
    ``beta0'', ``betaA'', ``beta1'' represent the intercept $\beta_0$, the coefficient for the treatment assignment $\beta_A$, the coefficient for the prognostic score $\beta_1$ in the PROCOVA model \eqref{eq: PROCOVA linear}, respectively.
     ``fix'' represents $e^\top\hat V_{\text{fix}}e$ with $e=(1,0,0)^\top$ for $\beta_0$, with $e=(0,1,0)^\top$ for $\beta_A$ and $e=(0,0,1)^\top$ for $\beta_1$.
    ``est'' represents $e^\top\hat V_{\text{est}}e$ with $e=(1,0,0)^\top$ for $\beta_0$, with $e=(0,1,0)^\top$ for $\beta_A$ and $e=(0,0,1)^\top$ for $\beta_1$.
    The x-axis represents the sample size of trial data $n$.
    The sample size of historical data is $\tilde n = 10n$. 
    The y-axis represents the coverage probability which is the proportion of 1000 simulations in which the 95\% CI using each variance estimator includes the true value.
    }
\end{figure}

\clearpage
\begin{figure}[h]
    \centering
    \includegraphics[width=0.4\linewidth]{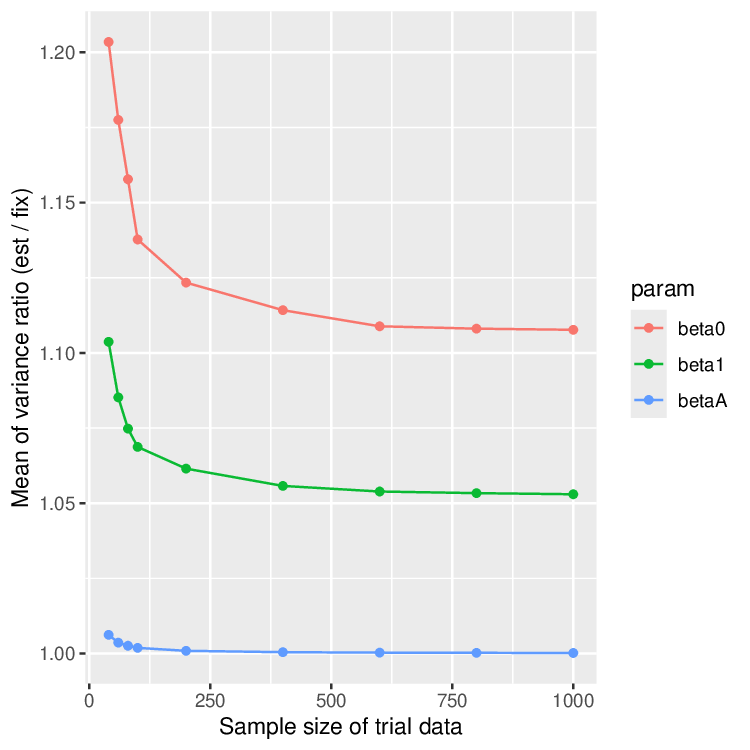}
    \caption{Plots of the mean of the ratio of two variance estimators over 1000 simulations for Scenario C-9.
    ``beta0'', ``betaA'', ``beta1'' represent the intercept $\beta_0$, the coefficient for the treatment assignment $\beta_A$, the coefficient for the prognostic score $\beta_1$ in the PROCOVA model \eqref{eq: PROCOVA linear}, respectively.
    The x-axis represents the sample size of trial data $n$.
    The sample size of historical data is $\tilde n = 10n$. 
    The y-axis represents the mean of the ratio of two variance estimators, i.e., $e^\top\hat V_{\text{est}}e/e^\top\hat V_{\text{fix}}e$ with $e=(1,0,0)^\top$ for $\beta_0$, with $e=(0,1,0)^\top$ for $\beta_A$ and $e=(0,0,1)^\top$ for $\beta_1$, over 1000 simulations.}
\end{figure}
\clearpage
\begin{figure}[h]
    \centering
    \includegraphics[width=0.4\linewidth]{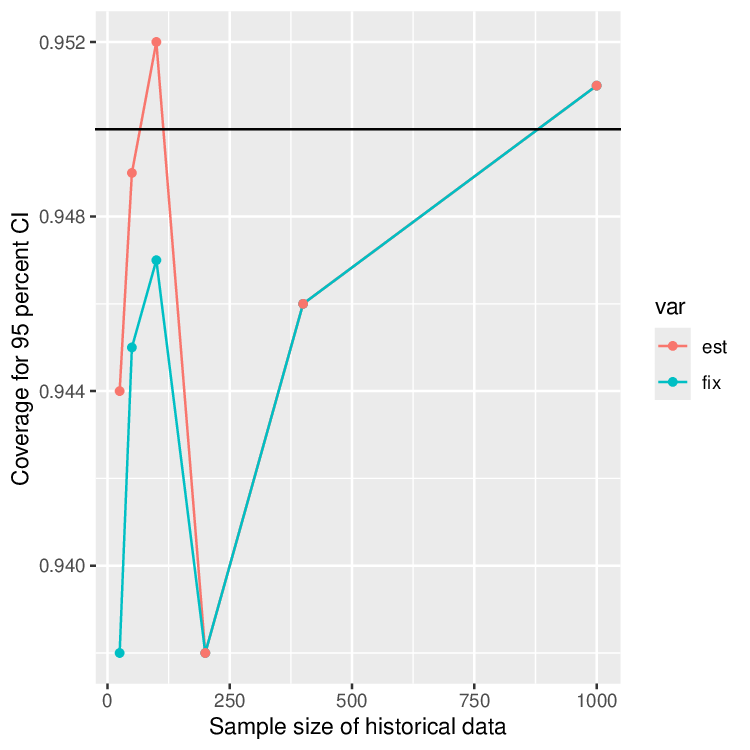}
    \caption{Plots of the coverage probability of the 95\% CI for $\beta_A$ in the PROCOVA model \eqref{eq: PROCOVA linear} over 1000 simulations for Scenario C-9.
     ``fix'' represents $e^\top\hat V_{\text{fix}}e$ with $e=(1,0,0)^\top$ for $\beta_0$, with $e=(0,1,0)^\top$ for $\beta_A$ and $e=(0,0,1)^\top$ for $\beta_1$.
    ``est'' represents $e^\top\hat V_{\text{est}}e$ with $e=(1,0,0)^\top$ for $\beta_0$, with $e=(0,1,0)^\top$ for $\beta_A$ and $e=(0,0,1)^\top$ for $\beta_1$.
    The sample size of trial data is $n=100$. 
    The x-axis represents the sample size of historical data $\tilde n$.
    The y-axis represents the coverage probability which is the proportion of 1000 simulations in which the 95\% CI using each variance estimator includes the true value.}
\end{figure}
\clearpage
\begin{figure}[h]
    \centering
    \includegraphics[width=\linewidth]{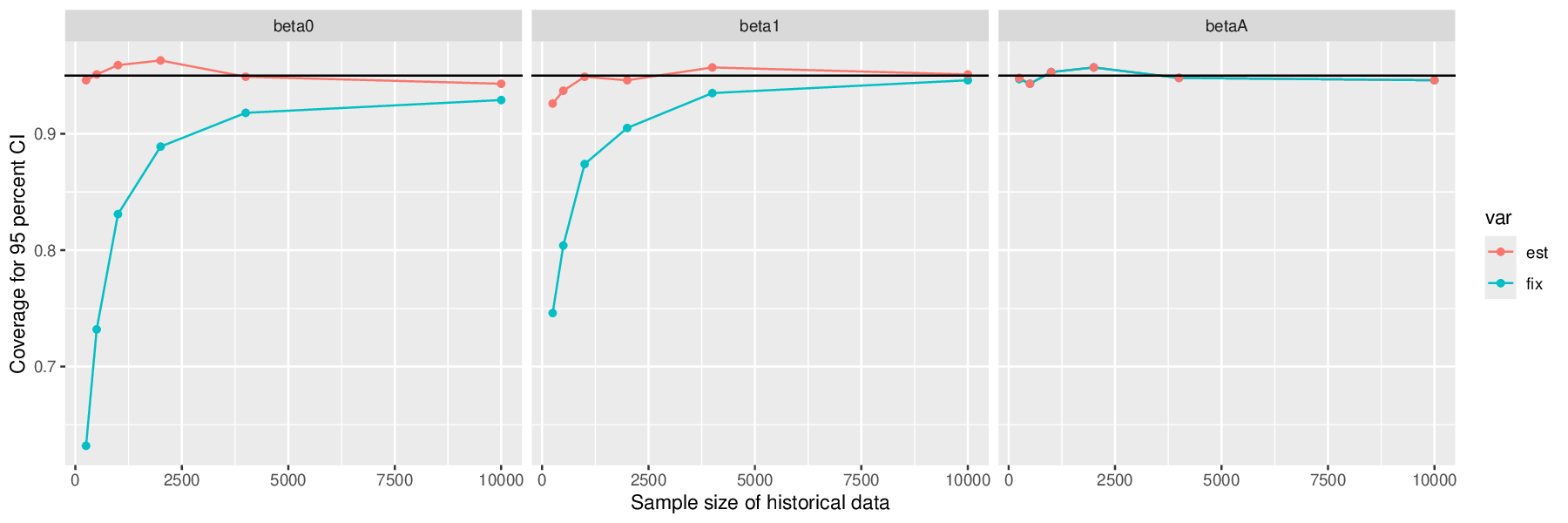}
    \caption{Plots of the coverage probability of 95\% CI over 1000 simulations for Scenario C-9.
    ``beta0'', ``betaA'', ``beta1'' represent the intercept $\beta_0$, the coefficient for the treatment assignment $\beta_A$, the coefficient for the prognostic score $\beta_1$ in the PROCOVA model \eqref{eq: PROCOVA linear}, respectively.
     ``fix'' represents $e^\top\hat V_{\text{fix}}e$ with $e=(1,0,0)^\top$ for $\beta_0$, with $e=(0,1,0)^\top$ for $\beta_A$ and $e=(0,0,1)^\top$ for $\beta_1$.
    ``est'' represents $e^\top\hat V_{\text{est}}e$ with $e=(1,0,0)^\top$ for $\beta_0$, with $e=(0,1,0)^\top$ for $\beta_A$ and $e=(0,0,1)^\top$ for $\beta_1$.
    The sample size of trial data is $n=1000$. 
    The x-axis represents the sample size of historical data $\tilde n$.
    The y-axis represents the coverage probability which is the proportion of 1000 simulations in which the 95\% CI using each variance estimator includes the true value.}
\end{figure}

\clearpage
\subsection{Scenario D-1}

\begin{figure}[h]
    \centering
    \includegraphics[width=\linewidth]{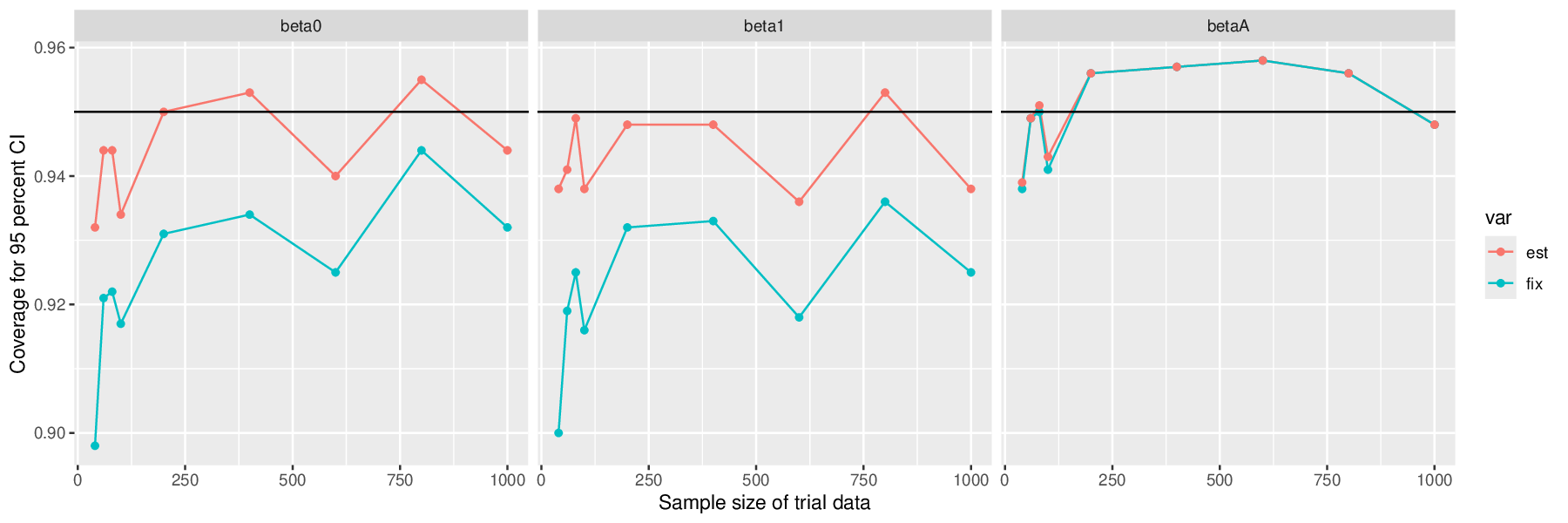}
    \caption{Plots of the coverage probability of 95\% CI over 1000 simulations for Scenario D-1.
    ``beta0'', ``betaA'', ``beta1'' represent the intercept $\beta_0$, the coefficient for the treatment assignment $\beta_A$, the coefficient for the prognostic score $\beta_1$ in the PROCOVA model \eqref{eq: PROCOVA linear}, respectively.
     ``fix'' represents $e^\top\hat V_{\text{fix}}e$ with $e=(1,0,0)^\top$ for $\beta_0$, with $e=(0,1,0)^\top$ for $\beta_A$ and $e=(0,0,1)^\top$ for $\beta_1$.
    ``est'' represents $e^\top\hat V_{\text{est}}e$ with $e=(1,0,0)^\top$ for $\beta_0$, with $e=(0,1,0)^\top$ for $\beta_A$ and $e=(0,0,1)^\top$ for $\beta_1$.
    The x-axis represents the sample size of trial data $n$.
    The sample size of historical data is $\tilde n = 10n$. 
    The y-axis represents the coverage probability which is the proportion of 1000 simulations in which the 95\% CI using each variance estimator includes the true value.
    }
\end{figure}

\clearpage
\begin{figure}[h]
    \centering
    \includegraphics[width=0.4\linewidth]{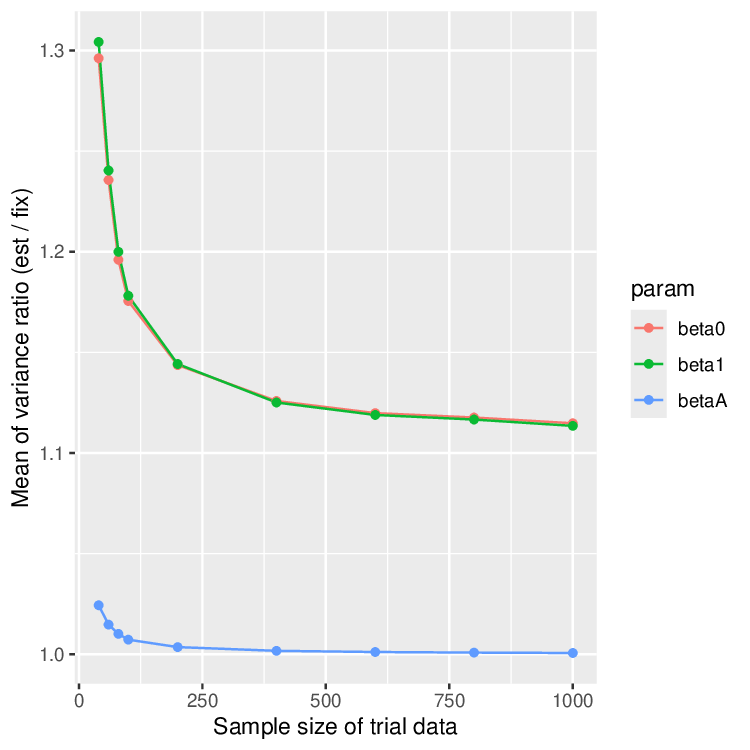}
    \caption{Plots of the mean of the ratio of two variance estimators over 1000 simulations for Scenario D-1.
    ``beta0'', ``betaA'', ``beta1'' represent the intercept $\beta_0$, the coefficient for the treatment assignment $\beta_A$, the coefficient for the prognostic score $\beta_1$ in the PROCOVA model \eqref{eq: PROCOVA linear}, respectively.
    The x-axis represents the sample size of trial data $n$.
    The sample size of historical data is $\tilde n = 10n$. 
    The y-axis represents the mean of the ratio of two variance estimators, i.e., $e^\top\hat V_{\text{est}}e/e^\top\hat V_{\text{fix}}e$ with $e=(1,0,0)^\top$ for $\beta_0$, with $e=(0,1,0)^\top$ for $\beta_A$ and $e=(0,0,1)^\top$ for $\beta_1$, over 1000 simulations.}
\end{figure}
\clearpage
\begin{figure}[h]
    \centering
    \includegraphics[width=0.4\linewidth]{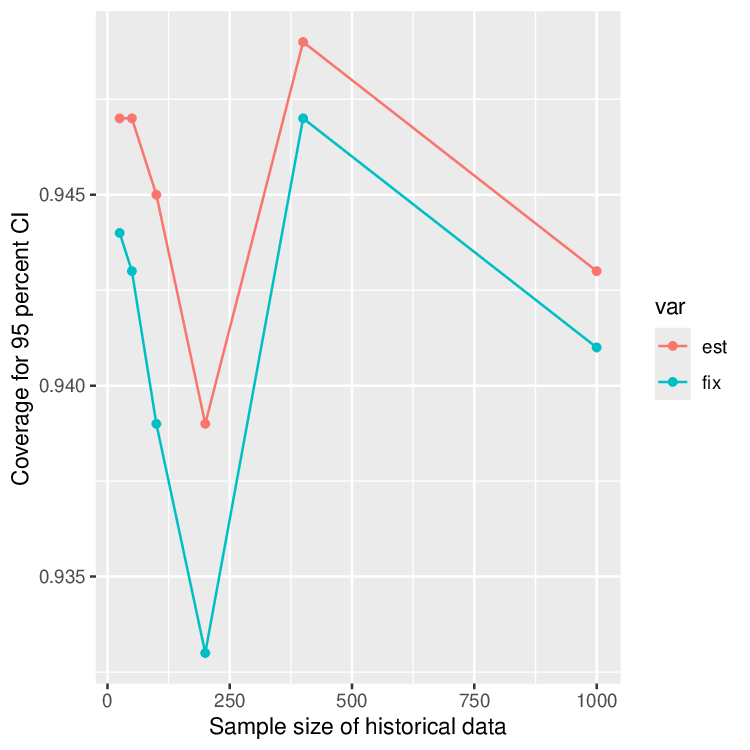}
    \caption{Plots of the coverage probability of the 95\% CI for $\beta_A$ in the PROCOVA model \eqref{eq: PROCOVA linear} over 1000 simulations for Scenario D-1.
     ``fix'' represents $e^\top\hat V_{\text{fix}}e$ with $e=(1,0,0)^\top$ for $\beta_0$, with $e=(0,1,0)^\top$ for $\beta_A$ and $e=(0,0,1)^\top$ for $\beta_1$.
    ``est'' represents $e^\top\hat V_{\text{est}}e$ with $e=(1,0,0)^\top$ for $\beta_0$, with $e=(0,1,0)^\top$ for $\beta_A$ and $e=(0,0,1)^\top$ for $\beta_1$.
    The sample size of trial data is $n=100$. 
    The x-axis represents the sample size of historical data $\tilde n$.
    The y-axis represents the coverage probability which is the proportion of 1000 simulations in which the 95\% CI using each variance estimator includes the true value.}
\end{figure}
\clearpage
\begin{figure}[h]
    \centering
    \includegraphics[width=\linewidth]{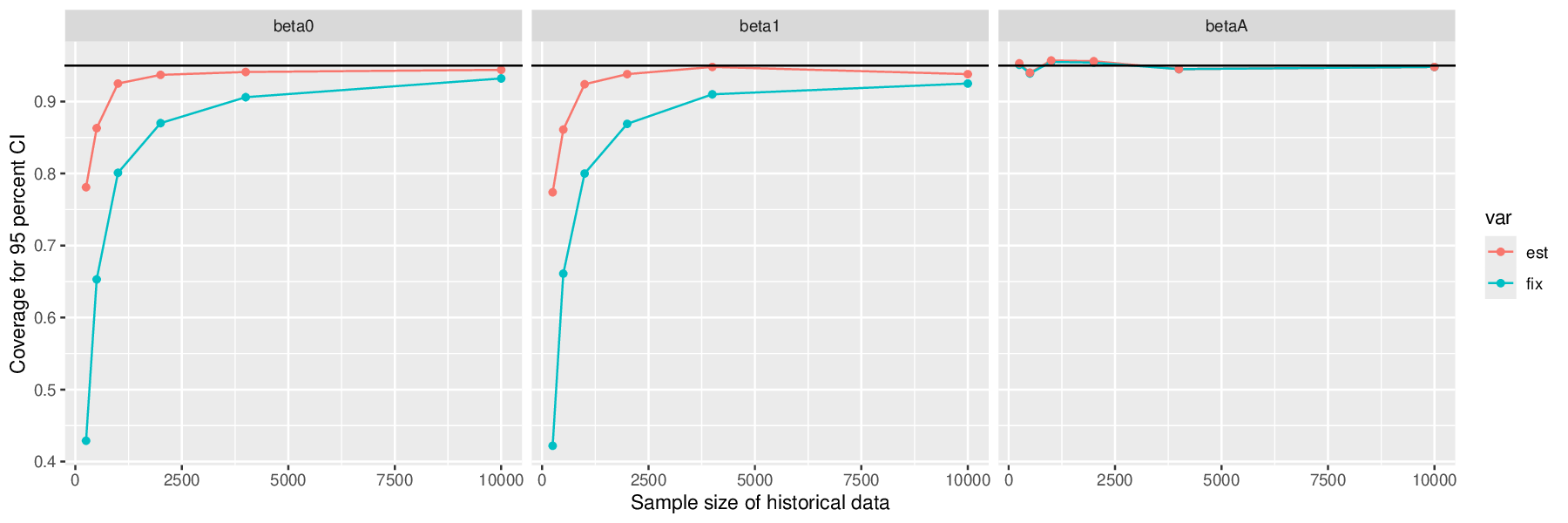}
    \caption{Plots of the coverage probability of 95\% CI over 1000 simulations for Scenario D-1.
    ``beta0'', ``betaA'', ``beta1'' represent the intercept $\beta_0$, the coefficient for the treatment assignment $\beta_A$, the coefficient for the prognostic score $\beta_1$ in the PROCOVA model \eqref{eq: PROCOVA linear}, respectively.
     ``fix'' represents $e^\top\hat V_{\text{fix}}e$ with $e=(1,0,0)^\top$ for $\beta_0$, with $e=(0,1,0)^\top$ for $\beta_A$ and $e=(0,0,1)^\top$ for $\beta_1$.
    ``est'' represents $e^\top\hat V_{\text{est}}e$ with $e=(1,0,0)^\top$ for $\beta_0$, with $e=(0,1,0)^\top$ for $\beta_A$ and $e=(0,0,1)^\top$ for $\beta_1$.
    The sample size of trial data is $n=1000$. 
    The x-axis represents the sample size of historical data $\tilde n$.
    The y-axis represents the coverage probability which is the proportion of 1000 simulations in which the 95\% CI using each variance estimator includes the true value.}
\end{figure}

\clearpage
\subsection{Scenario D-2}

\begin{figure}[h]
    \centering
    \includegraphics[width=\linewidth]{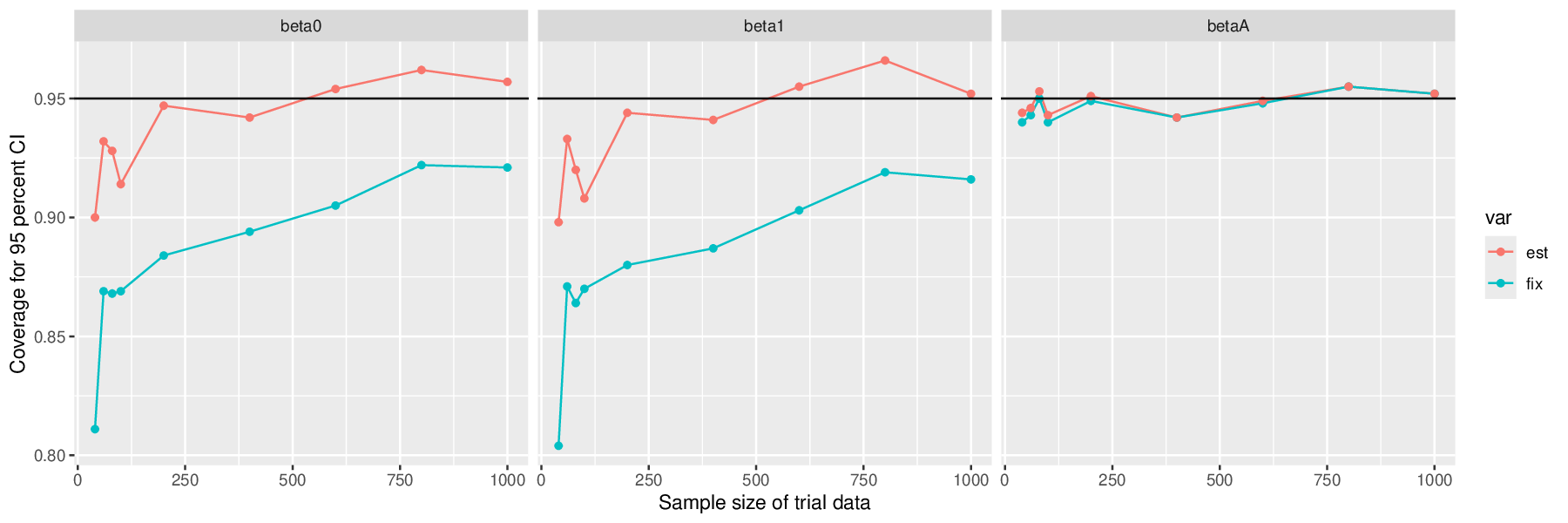}
    \caption{Plots of the coverage probability of 95\% CI over 1000 simulations for Scenario D-2.
    ``beta0'', ``betaA'', ``beta1'' represent the intercept $\beta_0$, the coefficient for the treatment assignment $\beta_A$, the coefficient for the prognostic score $\beta_1$ in the PROCOVA model \eqref{eq: PROCOVA linear}, respectively.
     ``fix'' represents $e^\top\hat V_{\text{fix}}e$ with $e=(1,0,0)^\top$ for $\beta_0$, with $e=(0,1,0)^\top$ for $\beta_A$ and $e=(0,0,1)^\top$ for $\beta_1$.
    ``est'' represents $e^\top\hat V_{\text{est}}e$ with $e=(1,0,0)^\top$ for $\beta_0$, with $e=(0,1,0)^\top$ for $\beta_A$ and $e=(0,0,1)^\top$ for $\beta_1$.
    The x-axis represents the sample size of trial data $n$.
    The sample size of historical data is $\tilde n = 10n$. 
    The y-axis represents the coverage probability which is the proportion of 1000 simulations in which the 95\% CI using each variance estimator includes the true value.
    }
\end{figure}

\clearpage
\begin{figure}[h]
    \centering
    \includegraphics[width=0.4\linewidth]{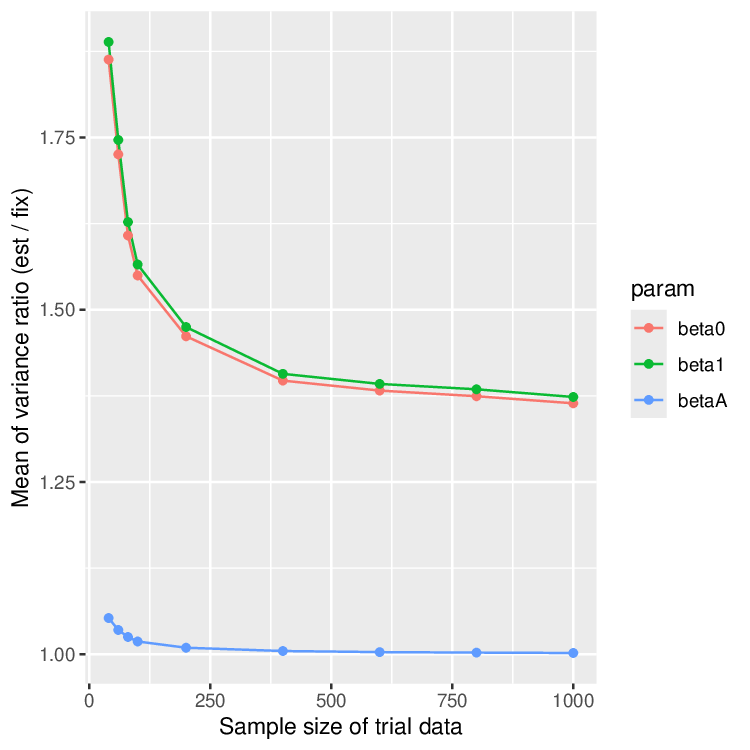}
    \caption{Plots of the mean of the ratio of two variance estimators over 1000 simulations for Scenario D-2.
    ``beta0'', ``betaA'', ``beta1'' represent the intercept $\beta_0$, the coefficient for the treatment assignment $\beta_A$, the coefficient for the prognostic score $\beta_1$ in the PROCOVA model \eqref{eq: PROCOVA linear}, respectively.
    The x-axis represents the sample size of trial data $n$.
    The sample size of historical data is $\tilde n = 10n$. 
    The y-axis represents the mean of the ratio of two variance estimators, i.e., $e^\top\hat V_{\text{est}}e/e^\top\hat V_{\text{fix}}e$ with $e=(1,0,0)^\top$ for $\beta_0$, with $e=(0,1,0)^\top$ for $\beta_A$ and $e=(0,0,1)^\top$ for $\beta_1$, over 1000 simulations.}
\end{figure}
\clearpage
\begin{figure}[h]
    \centering
    \includegraphics[width=0.4\linewidth]{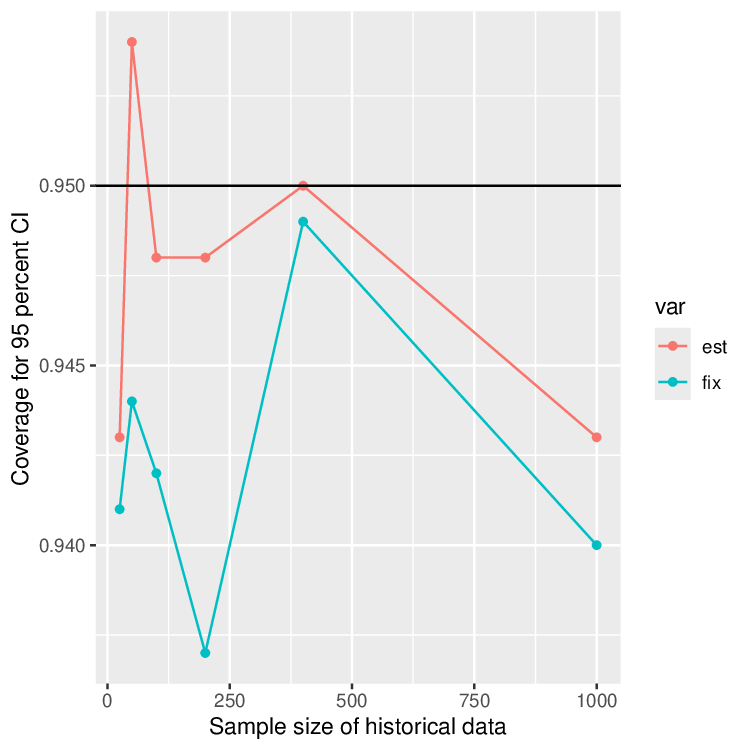}
    \caption{Plots of the coverage probability of the 95\% CI for $\beta_A$ in the PROCOVA model \eqref{eq: PROCOVA linear} over 1000 simulations for Scenario D-2.
     ``fix'' represents $e^\top\hat V_{\text{fix}}e$ with $e=(1,0,0)^\top$ for $\beta_0$, with $e=(0,1,0)^\top$ for $\beta_A$ and $e=(0,0,1)^\top$ for $\beta_1$.
    ``est'' represents $e^\top\hat V_{\text{est}}e$ with $e=(1,0,0)^\top$ for $\beta_0$, with $e=(0,1,0)^\top$ for $\beta_A$ and $e=(0,0,1)^\top$ for $\beta_1$.
    The sample size of trial data is $n=100$. 
    The x-axis represents the sample size of historical data $\tilde n$.
    The y-axis represents the coverage probability which is the proportion of 1000 simulations in which the 95\% CI using each variance estimator includes the true value.}
\end{figure}
\clearpage
\begin{figure}[h]
    \centering
    \includegraphics[width=\linewidth]{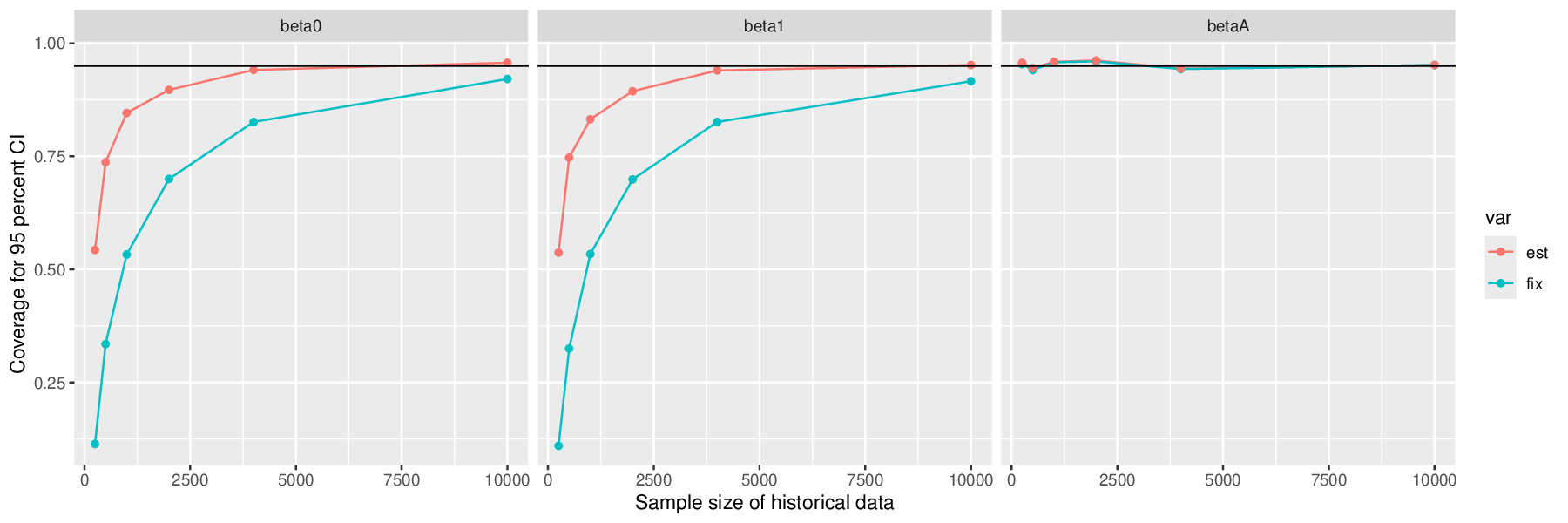}
    \caption{Plots of the coverage probability of 95\% CI over 1000 simulations for Scenario D-2.
    ``beta0'', ``betaA'', ``beta1'' represent the intercept $\beta_0$, the coefficient for the treatment assignment $\beta_A$, the coefficient for the prognostic score $\beta_1$ in the PROCOVA model \eqref{eq: PROCOVA linear}, respectively.
     ``fix'' represents $e^\top\hat V_{\text{fix}}e$ with $e=(1,0,0)^\top$ for $\beta_0$, with $e=(0,1,0)^\top$ for $\beta_A$ and $e=(0,0,1)^\top$ for $\beta_1$.
    ``est'' represents $e^\top\hat V_{\text{est}}e$ with $e=(1,0,0)^\top$ for $\beta_0$, with $e=(0,1,0)^\top$ for $\beta_A$ and $e=(0,0,1)^\top$ for $\beta_1$.
    The sample size of trial data is $n=1000$. 
    The x-axis represents the sample size of historical data $\tilde n$.
    The y-axis represents the coverage probability which is the proportion of 1000 simulations in which the 95\% CI using each variance estimator includes the true value.}
\end{figure}

\clearpage
\subsection{Scenario D-3}

\begin{figure}[h]
    \centering
    \includegraphics[width=\linewidth]{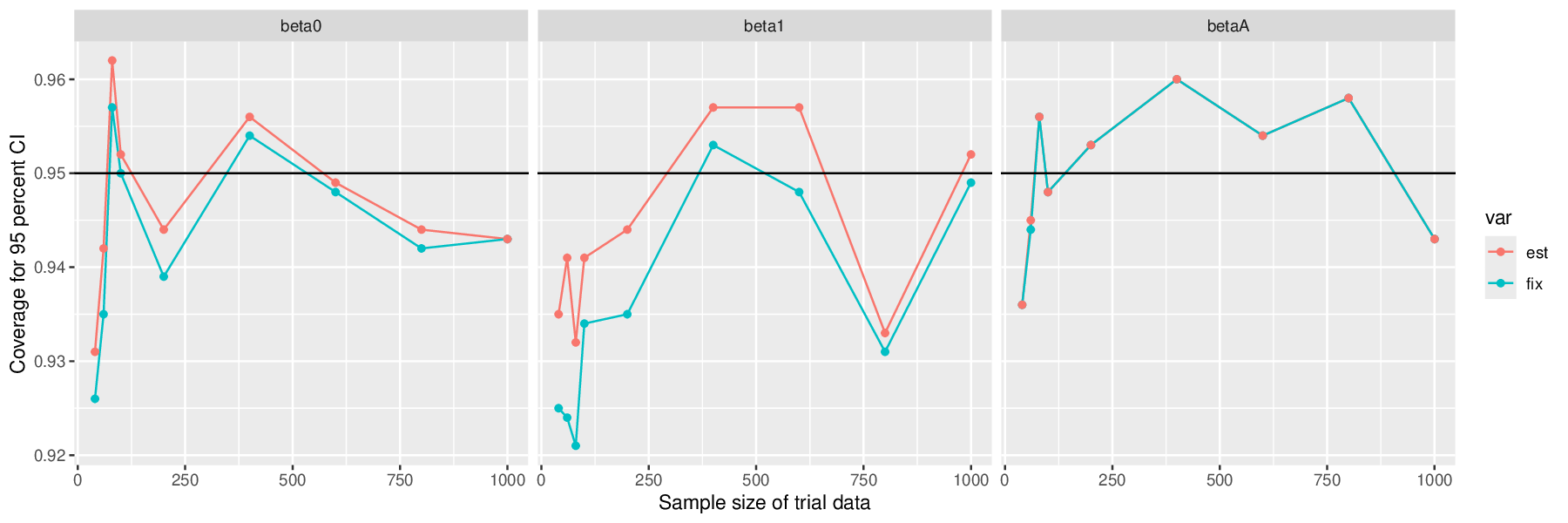}
    \caption{Plots of the coverage probability of 95\% CI over 1000 simulations for Scenario D-3.
    ``beta0'', ``betaA'', ``beta1'' represent the intercept $\beta_0$, the coefficient for the treatment assignment $\beta_A$, the coefficient for the prognostic score $\beta_1$ in the PROCOVA model \eqref{eq: PROCOVA linear}, respectively.
     ``fix'' represents $e^\top\hat V_{\text{fix}}e$ with $e=(1,0,0)^\top$ for $\beta_0$, with $e=(0,1,0)^\top$ for $\beta_A$ and $e=(0,0,1)^\top$ for $\beta_1$.
    ``est'' represents $e^\top\hat V_{\text{est}}e$ with $e=(1,0,0)^\top$ for $\beta_0$, with $e=(0,1,0)^\top$ for $\beta_A$ and $e=(0,0,1)^\top$ for $\beta_1$.
    The x-axis represents the sample size of trial data $n$.
    The sample size of historical data is $\tilde n = 10n$. 
    The y-axis represents the coverage probability which is the proportion of 1000 simulations in which the 95\% CI using each variance estimator includes the true value.
    }
\end{figure}

\clearpage
\begin{figure}[h]
    \centering
    \includegraphics[width=0.4\linewidth]{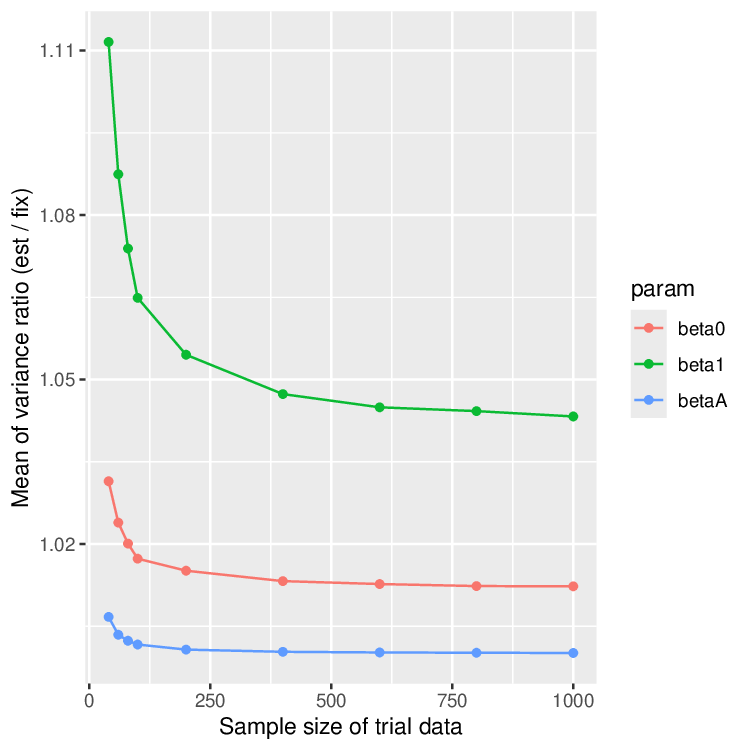}
    \caption{Plots of the mean of the ratio of two variance estimators over 1000 simulations for Scenario D-3.
    ``beta0'', ``betaA'', ``beta1'' represent the intercept $\beta_0$, the coefficient for the treatment assignment $\beta_A$, the coefficient for the prognostic score $\beta_1$ in the PROCOVA model \eqref{eq: PROCOVA linear}, respectively.
    The x-axis represents the sample size of trial data $n$.
    The sample size of historical data is $\tilde n = 10n$. 
    The y-axis represents the mean of the ratio of two variance estimators, i.e., $e^\top\hat V_{\text{est}}e/e^\top\hat V_{\text{fix}}e$ with $e=(1,0,0)^\top$ for $\beta_0$, with $e=(0,1,0)^\top$ for $\beta_A$ and $e=(0,0,1)^\top$ for $\beta_1$, over 1000 simulations.}
\end{figure}
\clearpage
\begin{figure}[h]
    \centering
    \includegraphics[width=0.4\linewidth]{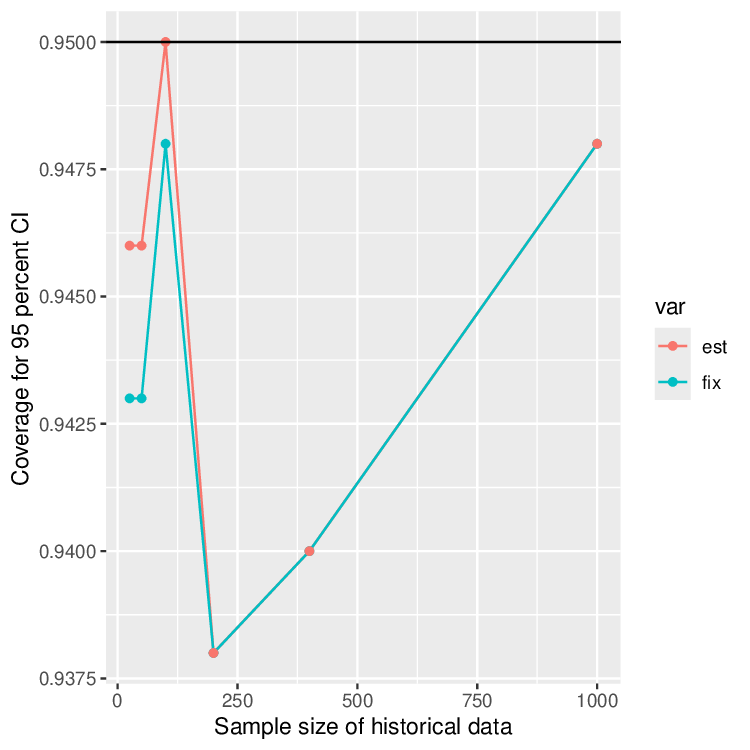}
    \caption{Plots of the coverage probability of the 95\% CI for $\beta_A$ in the PROCOVA model \eqref{eq: PROCOVA linear} over 1000 simulations for Scenario D-3.
     ``fix'' represents $e^\top\hat V_{\text{fix}}e$ with $e=(1,0,0)^\top$ for $\beta_0$, with $e=(0,1,0)^\top$ for $\beta_A$ and $e=(0,0,1)^\top$ for $\beta_1$.
    ``est'' represents $e^\top\hat V_{\text{est}}e$ with $e=(1,0,0)^\top$ for $\beta_0$, with $e=(0,1,0)^\top$ for $\beta_A$ and $e=(0,0,1)^\top$ for $\beta_1$.
    The sample size of trial data is $n=100$. 
    The x-axis represents the sample size of historical data $\tilde n$.
    The y-axis represents the coverage probability which is the proportion of 1000 simulations in which the 95\% CI using each variance estimator includes the true value.}
\end{figure}
\clearpage
\begin{figure}[h]
    \centering
    \includegraphics[width=\linewidth]{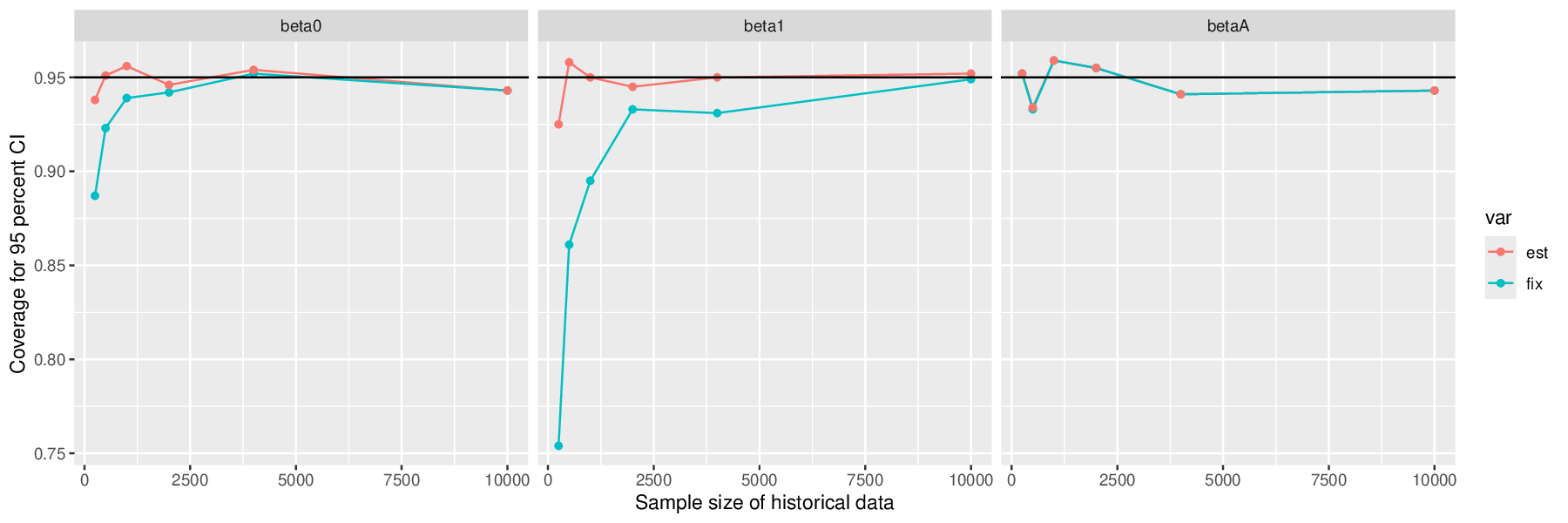}
    \caption{Plots of the coverage probability of 95\% CI over 1000 simulations for Scenario D-3.
    ``beta0'', ``betaA'', ``beta1'' represent the intercept $\beta_0$, the coefficient for the treatment assignment $\beta_A$, the coefficient for the prognostic score $\beta_1$ in the PROCOVA model \eqref{eq: PROCOVA linear}, respectively.
     ``fix'' represents $e^\top\hat V_{\text{fix}}e$ with $e=(1,0,0)^\top$ for $\beta_0$, with $e=(0,1,0)^\top$ for $\beta_A$ and $e=(0,0,1)^\top$ for $\beta_1$.
    ``est'' represents $e^\top\hat V_{\text{est}}e$ with $e=(1,0,0)^\top$ for $\beta_0$, with $e=(0,1,0)^\top$ for $\beta_A$ and $e=(0,0,1)^\top$ for $\beta_1$.
    The sample size of trial data is $n=1000$. 
    The x-axis represents the sample size of historical data $\tilde n$.
    The y-axis represents the coverage probability which is the proportion of 1000 simulations in which the 95\% CI using each variance estimator includes the true value.}
\end{figure}

\clearpage
\subsection{Scenario D-4}

\begin{figure}[h]
    \centering
    \includegraphics[width=\linewidth]{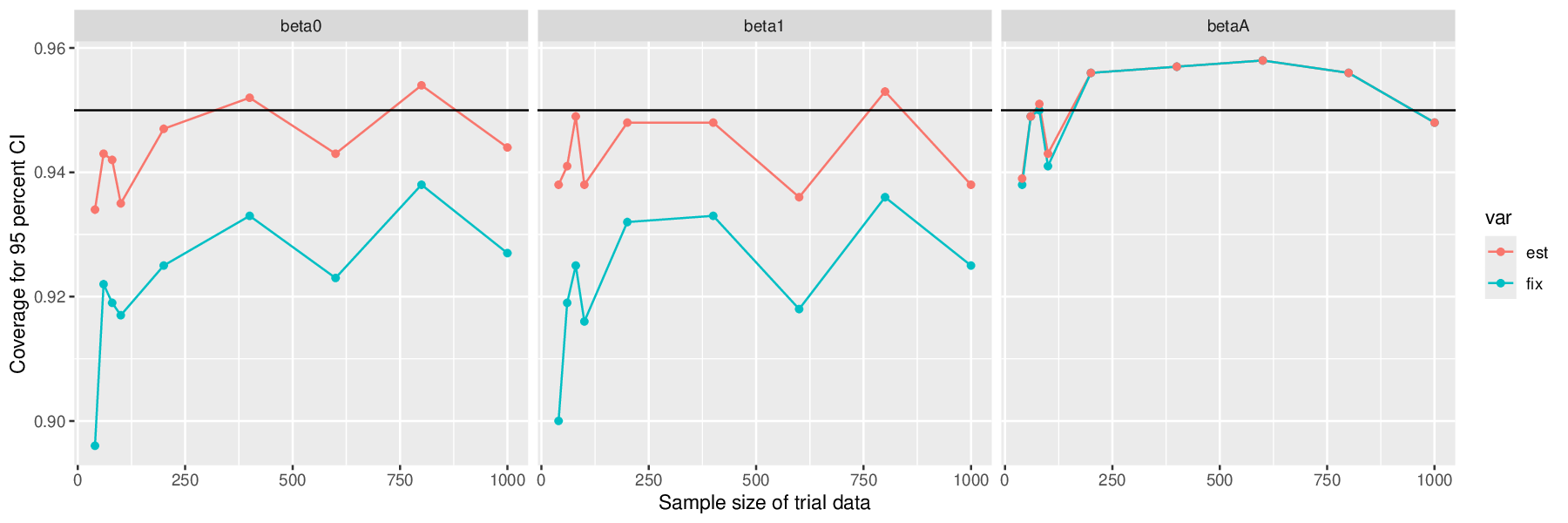}
    \caption{Plots of the coverage probability of 95\% CI over 1000 simulations for Scenario D-4.
    ``beta0'', ``betaA'', ``beta1'' represent the intercept $\beta_0$, the coefficient for the treatment assignment $\beta_A$, the coefficient for the prognostic score $\beta_1$ in the PROCOVA model \eqref{eq: PROCOVA linear}, respectively.
     ``fix'' represents $e^\top\hat V_{\text{fix}}e$ with $e=(1,0,0)^\top$ for $\beta_0$, with $e=(0,1,0)^\top$ for $\beta_A$ and $e=(0,0,1)^\top$ for $\beta_1$.
    ``est'' represents $e^\top\hat V_{\text{est}}e$ with $e=(1,0,0)^\top$ for $\beta_0$, with $e=(0,1,0)^\top$ for $\beta_A$ and $e=(0,0,1)^\top$ for $\beta_1$.
    The x-axis represents the sample size of trial data $n$.
    The sample size of historical data is $\tilde n = 10n$. 
    The y-axis represents the coverage probability which is the proportion of 1000 simulations in which the 95\% CI using each variance estimator includes the true value.
    }
\end{figure}

\clearpage
\begin{figure}[h]
    \centering
    \includegraphics[width=0.4\linewidth]{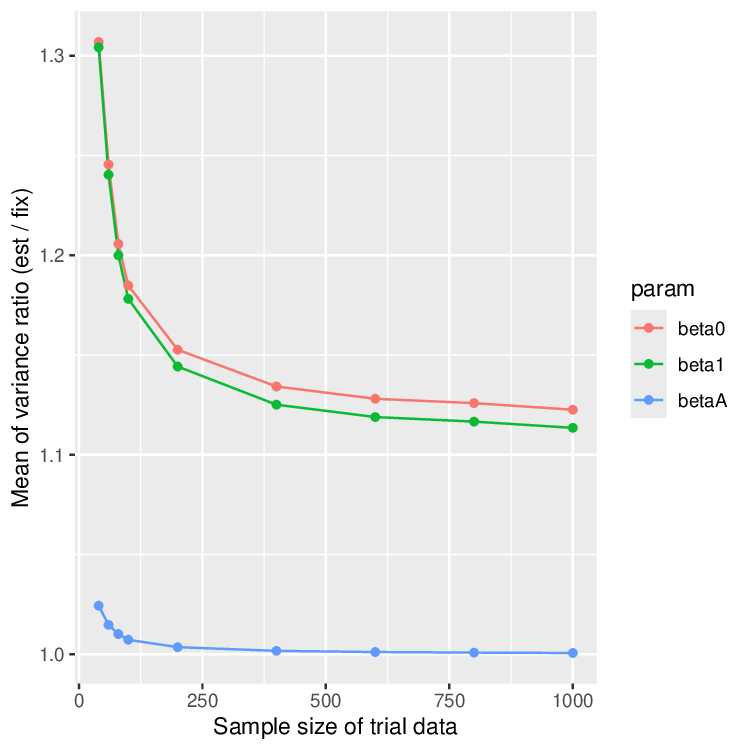}
    \caption{Plots of the mean of the ratio of two variance estimators over 1000 simulations for Scenario D-4.
    ``beta0'', ``betaA'', ``beta1'' represent the intercept $\beta_0$, the coefficient for the treatment assignment $\beta_A$, the coefficient for the prognostic score $\beta_1$ in the PROCOVA model \eqref{eq: PROCOVA linear}, respectively.
    The x-axis represents the sample size of trial data $n$.
    The sample size of historical data is $\tilde n = 10n$. 
    The y-axis represents the mean of the ratio of two variance estimators, i.e., $e^\top\hat V_{\text{est}}e/e^\top\hat V_{\text{fix}}e$ with $e=(1,0,0)^\top$ for $\beta_0$, with $e=(0,1,0)^\top$ for $\beta_A$ and $e=(0,0,1)^\top$ for $\beta_1$, over 1000 simulations.}
\end{figure}

\clearpage
\begin{figure}[h]
    \centering
    \includegraphics[width=0.4\linewidth]{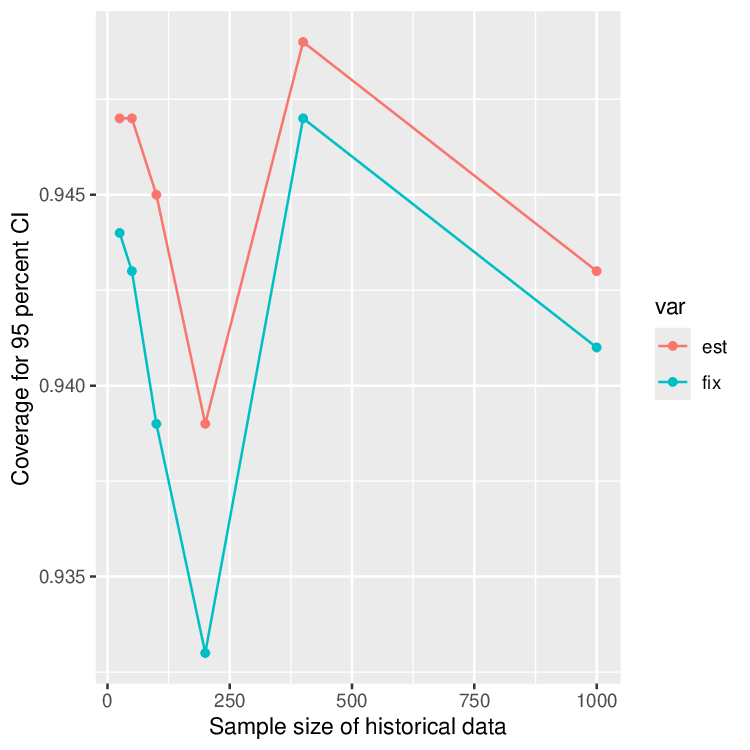}
    \caption{Plots of the coverage probability of the 95\% CI for $\beta_A$ in the PROCOVA model \eqref{eq: PROCOVA linear} over 1000 simulations for Scenario D-4.
     ``fix'' represents $e^\top\hat V_{\text{fix}}e$ with $e=(1,0,0)^\top$ for $\beta_0$, with $e=(0,1,0)^\top$ for $\beta_A$ and $e=(0,0,1)^\top$ for $\beta_1$.
    ``est'' represents $e^\top\hat V_{\text{est}}e$ with $e=(1,0,0)^\top$ for $\beta_0$, with $e=(0,1,0)^\top$ for $\beta_A$ and $e=(0,0,1)^\top$ for $\beta_1$.
    The sample size of trial data is $n=100$. 
    The x-axis represents the sample size of historical data $\tilde n$.
    The y-axis represents the coverage probability which is the proportion of 1000 simulations in which the 95\% CI using each variance estimator includes the true value.}
\end{figure}

\clearpage
\begin{figure}[h]
    \centering
    \includegraphics[width=\linewidth]{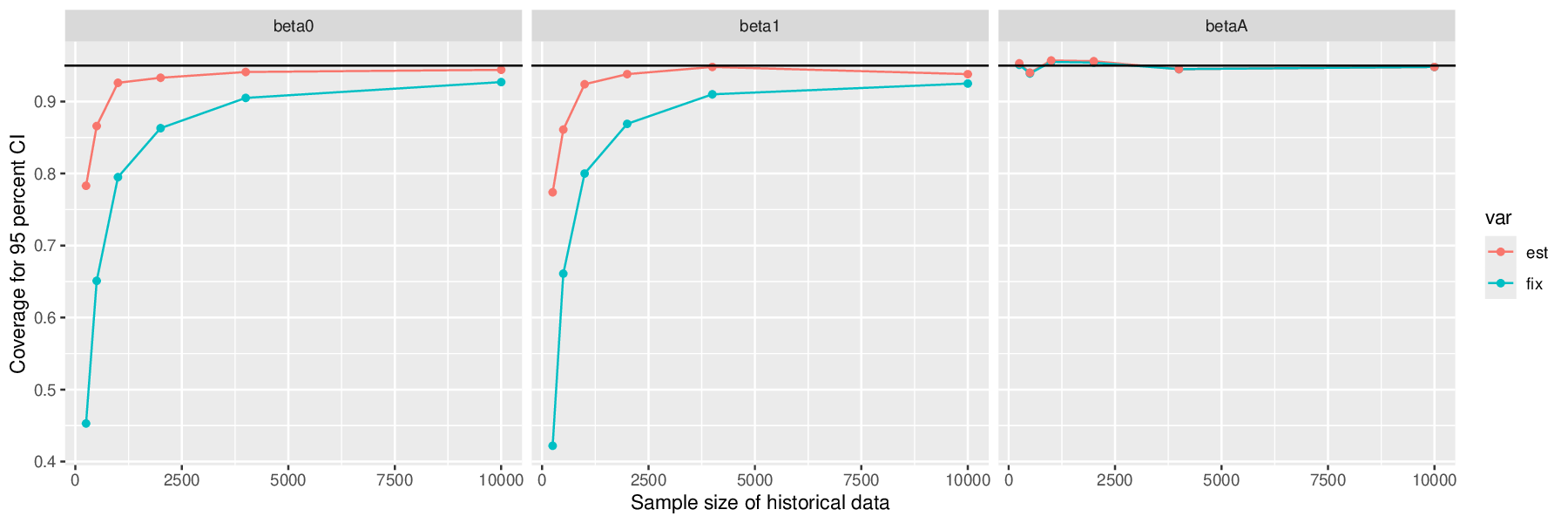}
    \caption{Plots of the coverage probability of 95\% CI over 1000 simulations for Scenario D-4.
    ``beta0'', ``betaA'', ``beta1'' represent the intercept $\beta_0$, the coefficient for the treatment assignment $\beta_A$, the coefficient for the prognostic score $\beta_1$ in the PROCOVA model \eqref{eq: PROCOVA linear}, respectively.
     ``fix'' represents $e^\top\hat V_{\text{fix}}e$ with $e=(1,0,0)^\top$ for $\beta_0$, with $e=(0,1,0)^\top$ for $\beta_A$ and $e=(0,0,1)^\top$ for $\beta_1$.
    ``est'' represents $e^\top\hat V_{\text{est}}e$ with $e=(1,0,0)^\top$ for $\beta_0$, with $e=(0,1,0)^\top$ for $\beta_A$ and $e=(0,0,1)^\top$ for $\beta_1$.
    The sample size of trial data is $n=1000$. 
    The x-axis represents the sample size of historical data $\tilde n$.
    The y-axis represents the coverage probability which is the proportion of 1000 simulations in which the 95\% CI using each variance estimator includes the true value.}
\end{figure}

\clearpage
\subsection{Scenario D-6}

\begin{figure}[h]
    \centering
    \includegraphics[width=\linewidth]{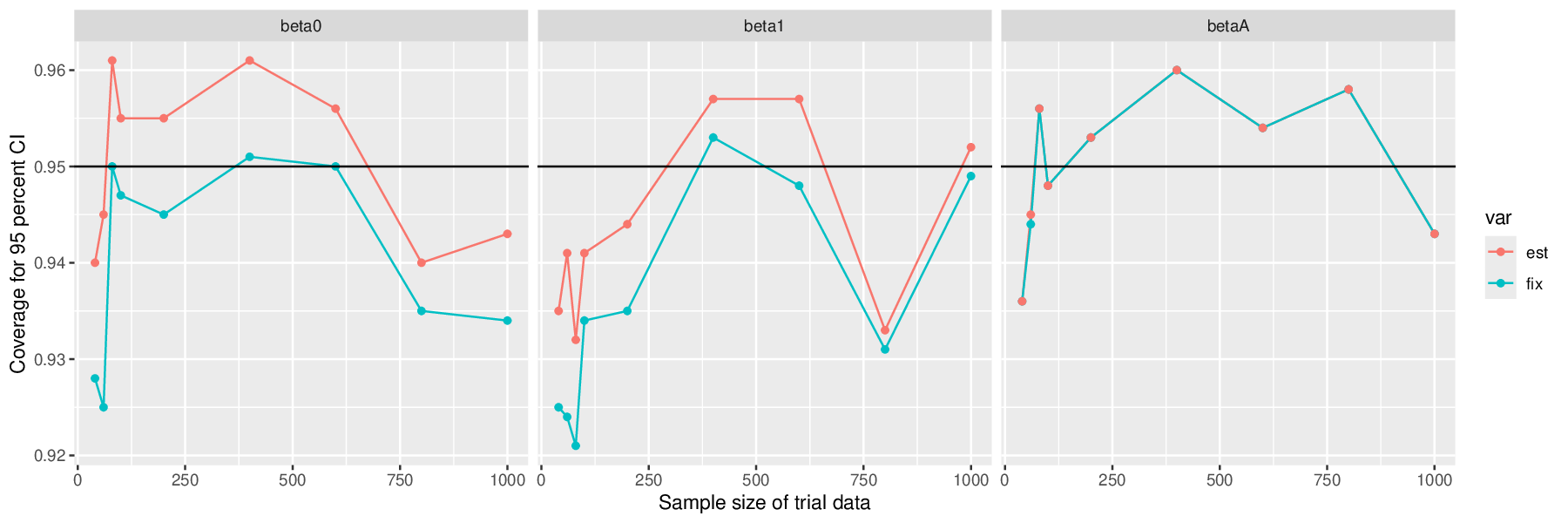}
    \caption{Plots of the coverage probability of 95\% CI over 1000 simulations for Scenario D-6.
    ``beta0'', ``betaA'', ``beta1'' represent the intercept $\beta_0$, the coefficient for the treatment assignment $\beta_A$, the coefficient for the prognostic score $\beta_1$ in the PROCOVA model \eqref{eq: PROCOVA linear}, respectively.
     ``fix'' represents $e^\top\hat V_{\text{fix}}e$ with $e=(1,0,0)^\top$ for $\beta_0$, with $e=(0,1,0)^\top$ for $\beta_A$ and $e=(0,0,1)^\top$ for $\beta_1$.
    ``est'' represents $e^\top\hat V_{\text{est}}e$ with $e=(1,0,0)^\top$ for $\beta_0$, with $e=(0,1,0)^\top$ for $\beta_A$ and $e=(0,0,1)^\top$ for $\beta_1$.
    The x-axis represents the sample size of trial data $n$.
    The sample size of historical data is $\tilde n = 10n$. 
    The y-axis represents the coverage probability which is the proportion of 1000 simulations in which the 95\% CI using each variance estimator includes the true value.
    }
\end{figure}

\clearpage
\begin{figure}[h]
    \centering
    \includegraphics[width=0.4\linewidth]{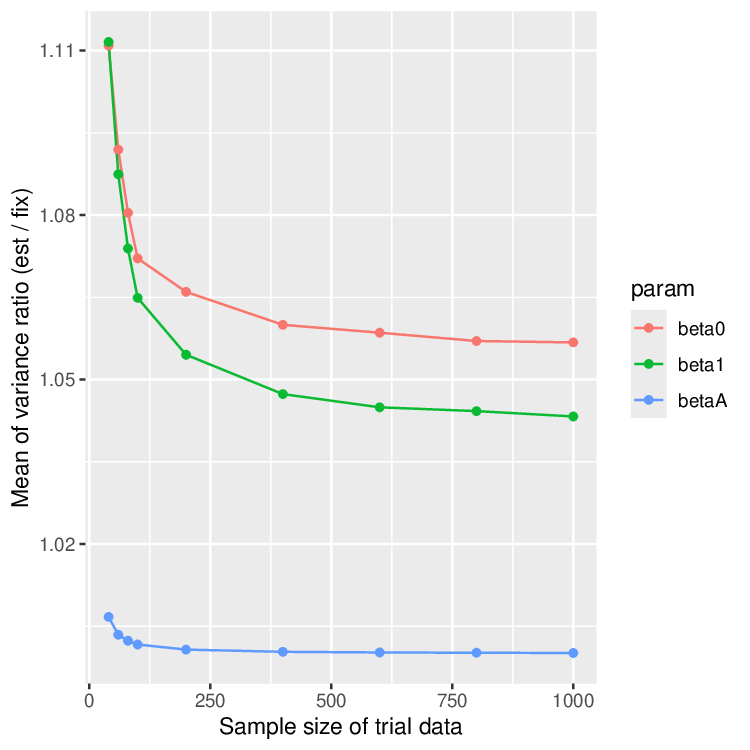}
    \caption{Plots of the mean of the ratio of two variance estimators over 1000 simulations for Scenario D-6.
    ``beta0'', ``betaA'', ``beta1'' represent the intercept $\beta_0$, the coefficient for the treatment assignment $\beta_A$, the coefficient for the prognostic score $\beta_1$ in the PROCOVA model \eqref{eq: PROCOVA linear}, respectively.
    The x-axis represents the sample size of trial data $n$.
    The sample size of historical data is $\tilde n = 10n$. 
    The y-axis represents the mean of the ratio of two variance estimators, i.e., $e^\top\hat V_{\text{est}}e/e^\top\hat V_{\text{fix}}e$ with $e=(1,0,0)^\top$ for $\beta_0$, with $e=(0,1,0)^\top$ for $\beta_A$ and $e=(0,0,1)^\top$ for $\beta_1$, over 1000 simulations.}
\end{figure}

\clearpage
\begin{figure}[h]
    \centering
    \includegraphics[width=0.4\linewidth]{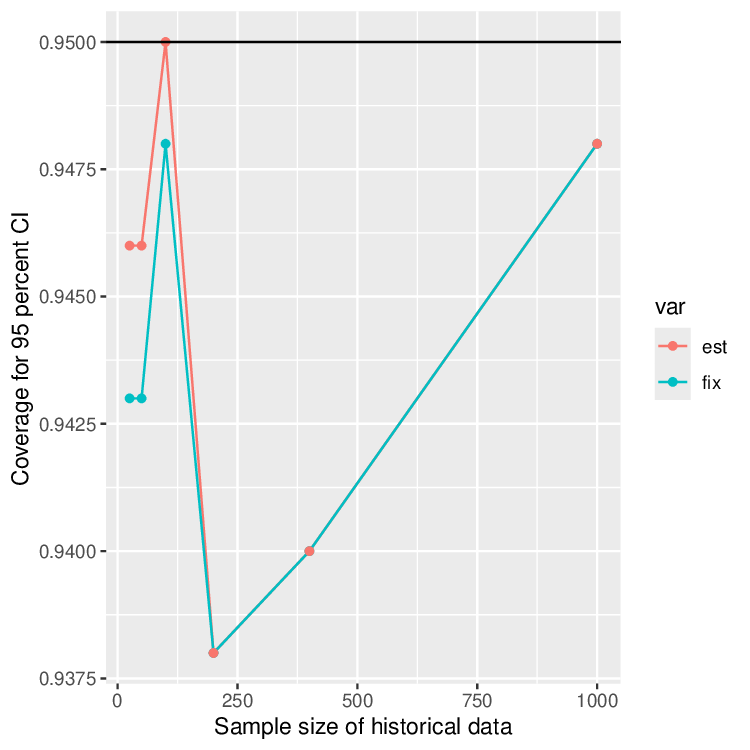}
    \caption{Plots of the coverage probability of the 95\% CI for $\beta_A$ in the PROCOVA model \eqref{eq: PROCOVA linear} over 1000 simulations for Scenario D-6.
     ``fix'' represents $e^\top\hat V_{\text{fix}}e$ with $e=(1,0,0)^\top$ for $\beta_0$, with $e=(0,1,0)^\top$ for $\beta_A$ and $e=(0,0,1)^\top$ for $\beta_1$.
    ``est'' represents $e^\top\hat V_{\text{est}}e$ with $e=(1,0,0)^\top$ for $\beta_0$, with $e=(0,1,0)^\top$ for $\beta_A$ and $e=(0,0,1)^\top$ for $\beta_1$.
    The sample size of trial data is $n=100$. 
    The x-axis represents the sample size of historical data $\tilde n$.
    The y-axis represents the coverage probability which is the proportion of 1000 simulations in which the 95\% CI using each variance estimator includes the true value.}
\end{figure}

\clearpage
\begin{figure}[h]
    \centering
    \includegraphics[width=\linewidth]{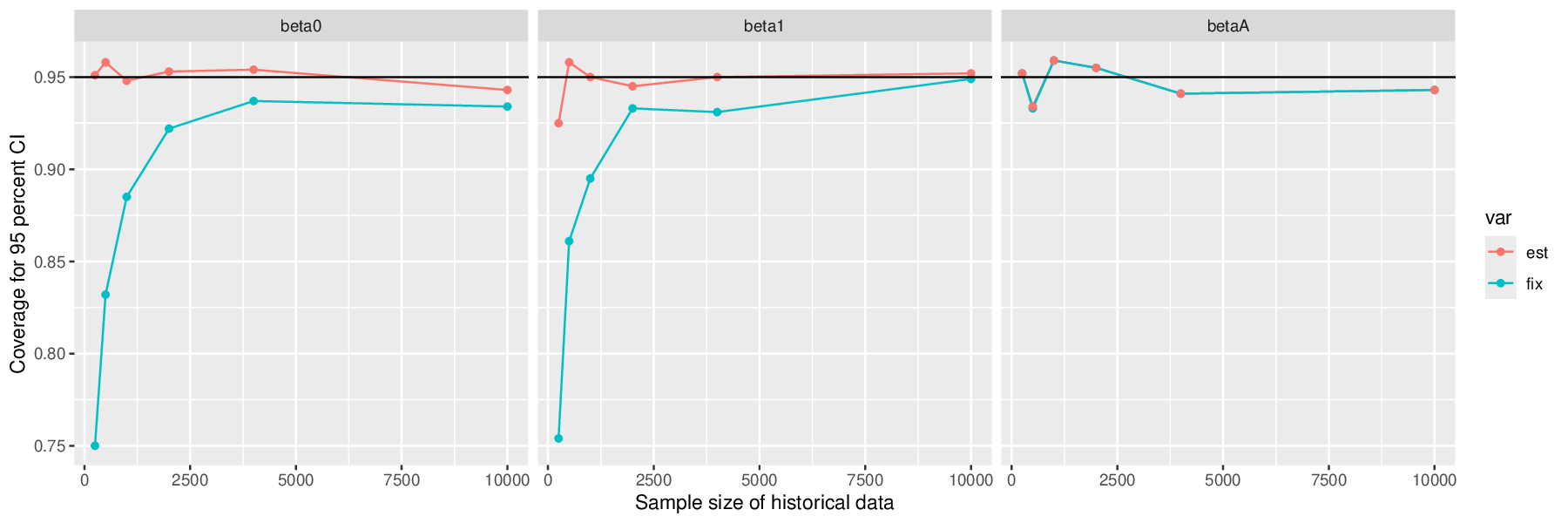}
    \caption{Plots of the coverage probability of 95\% CI over 1000 simulations for Scenario D-6.
    ``beta0'', ``betaA'', ``beta1'' represent the intercept $\beta_0$, the coefficient for the treatment assignment $\beta_A$, the coefficient for the prognostic score $\beta_1$ in the PROCOVA model \eqref{eq: PROCOVA linear}, respectively.
     ``fix'' represents $e^\top\hat V_{\text{fix}}e$ with $e=(1,0,0)^\top$ for $\beta_0$, with $e=(0,1,0)^\top$ for $\beta_A$ and $e=(0,0,1)^\top$ for $\beta_1$.
    ``est'' represents $e^\top\hat V_{\text{est}}e$ with $e=(1,0,0)^\top$ for $\beta_0$, with $e=(0,1,0)^\top$ for $\beta_A$ and $e=(0,0,1)^\top$ for $\beta_1$.
    The sample size of trial data is $n=1000$. 
    The x-axis represents the sample size of historical data $\tilde n$.
    The y-axis represents the coverage probability which is the proportion of 1000 simulations in which the 95\% CI using each variance estimator includes the true value.}
\end{figure}

\clearpage
\subsection{Scenario D-7}

\begin{figure}[h]
    \centering
    \includegraphics[width=\linewidth]{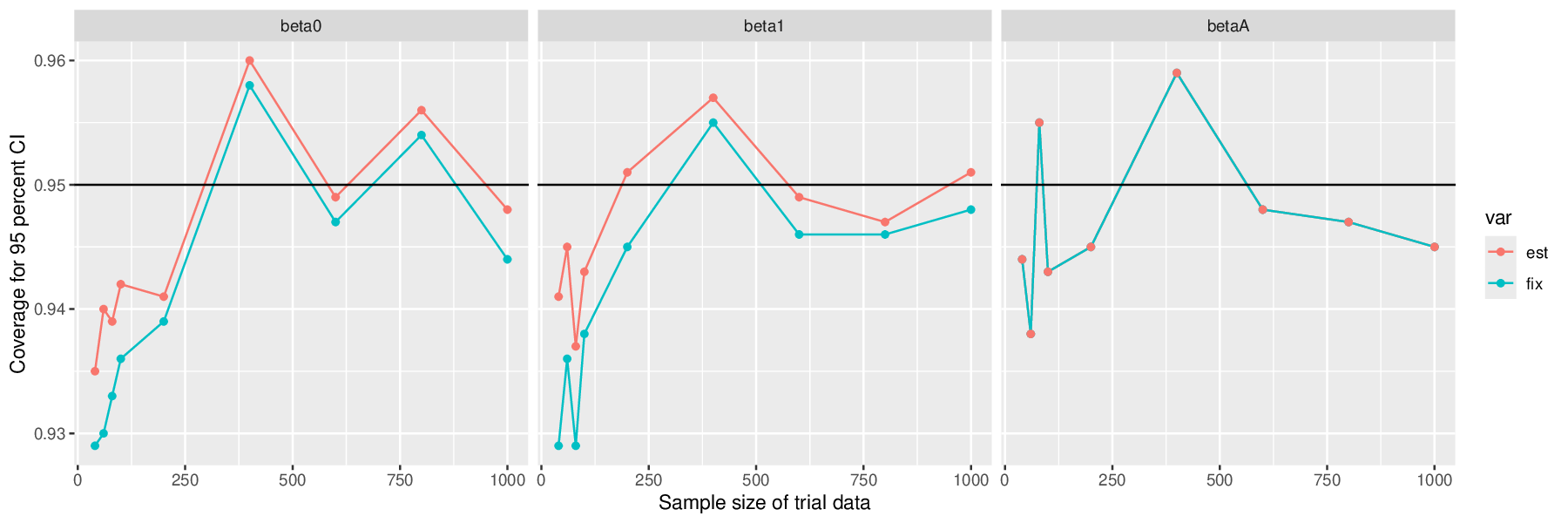}
    \caption{Plots of the coverage probability of 95\% CI over 1000 simulations for Scenario D-7.
    ``beta0'', ``betaA'', ``beta1'' represent the intercept $\beta_0$, the coefficient for the treatment assignment $\beta_A$, the coefficient for the prognostic score $\beta_1$ in the PROCOVA model \eqref{eq: PROCOVA linear}, respectively.
     ``fix'' represents $e^\top\hat V_{\text{fix}}e$ with $e=(1,0,0)^\top$ for $\beta_0$, with $e=(0,1,0)^\top$ for $\beta_A$ and $e=(0,0,1)^\top$ for $\beta_1$.
    ``est'' represents $e^\top\hat V_{\text{est}}e$ with $e=(1,0,0)^\top$ for $\beta_0$, with $e=(0,1,0)^\top$ for $\beta_A$ and $e=(0,0,1)^\top$ for $\beta_1$.
    The x-axis represents the sample size of trial data $n$.
    The sample size of historical data is $\tilde n = 10n$. 
    The y-axis represents the coverage probability which is the proportion of 1000 simulations in which the 95\% CI using each variance estimator includes the true value.
    }
\end{figure}

\clearpage
\begin{figure}[h]
    \centering
    \includegraphics[width=0.4\linewidth]{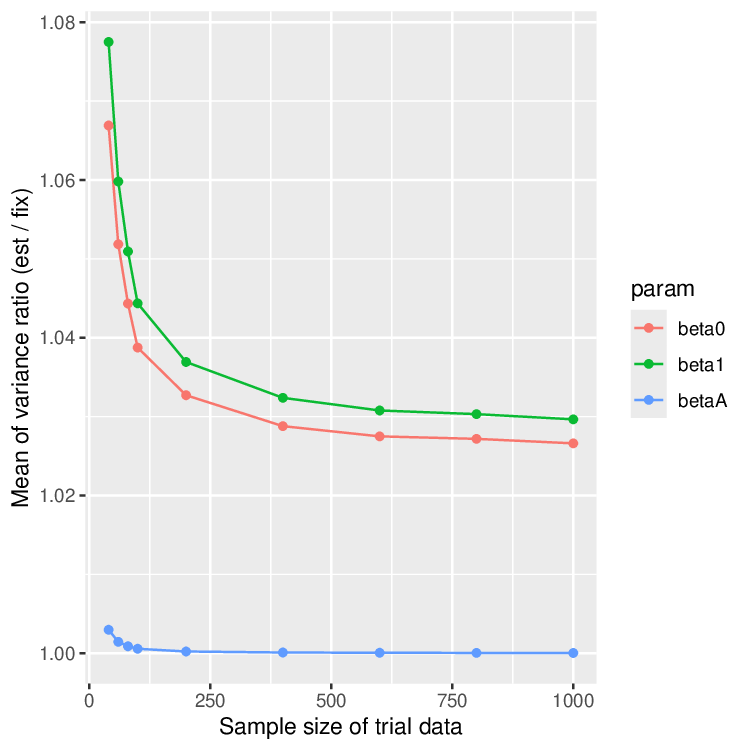}
    \caption{Plots of the mean of the ratio of two variance estimators over 1000 simulations for Scenario D-7.
    ``beta0'', ``betaA'', ``beta1'' represent the intercept $\beta_0$, the coefficient for the treatment assignment $\beta_A$, the coefficient for the prognostic score $\beta_1$ in the PROCOVA model \eqref{eq: PROCOVA linear}, respectively.
    The x-axis represents the sample size of trial data $n$.
    The sample size of historical data is $\tilde n = 10n$. 
    The y-axis represents the mean of the ratio of two variance estimators, i.e., $e^\top\hat V_{\text{est}}e/e^\top\hat V_{\text{fix}}e$ with $e=(1,0,0)^\top$ for $\beta_0$, with $e=(0,1,0)^\top$ for $\beta_A$ and $e=(0,0,1)^\top$ for $\beta_1$, over 1000 simulations.}
\end{figure}
\clearpage
\begin{figure}[h]
    \centering
    \includegraphics[width=0.4\linewidth]{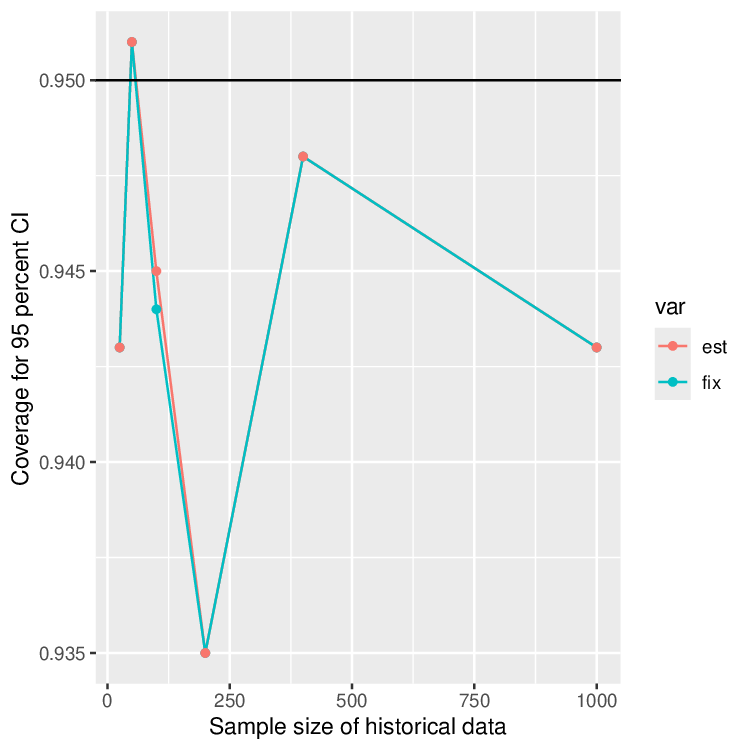}
    \caption{Plots of the coverage probability of the 95\% CI for $\beta_A$ in the PROCOVA model \eqref{eq: PROCOVA linear} over 1000 simulations for Scenario D-7.
     ``fix'' represents $e^\top\hat V_{\text{fix}}e$ with $e=(1,0,0)^\top$ for $\beta_0$, with $e=(0,1,0)^\top$ for $\beta_A$ and $e=(0,0,1)^\top$ for $\beta_1$.
    ``est'' represents $e^\top\hat V_{\text{est}}e$ with $e=(1,0,0)^\top$ for $\beta_0$, with $e=(0,1,0)^\top$ for $\beta_A$ and $e=(0,0,1)^\top$ for $\beta_1$.
    The sample size of trial data is $n=100$. 
    The x-axis represents the sample size of historical data $\tilde n$.
    The y-axis represents the coverage probability which is the proportion of 1000 simulations in which the 95\% CI using each variance estimator includes the true value.}
\end{figure}
\clearpage
\begin{figure}[h]
    \centering
    \includegraphics[width=\linewidth]{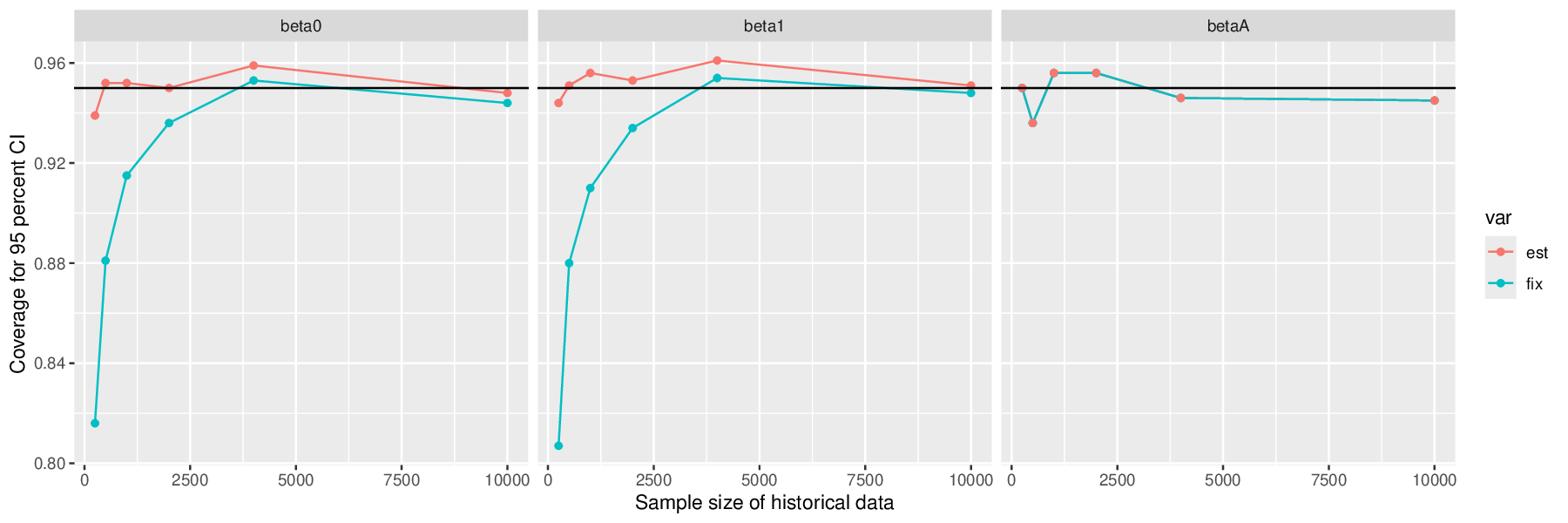}
    \caption{Plots of the coverage probability of 95\% CI over 1000 simulations for Scenario D-7.
    ``beta0'', ``betaA'', ``beta1'' represent the intercept $\beta_0$, the coefficient for the treatment assignment $\beta_A$, the coefficient for the prognostic score $\beta_1$ in the PROCOVA model \eqref{eq: PROCOVA linear}, respectively.
     ``fix'' represents $e^\top\hat V_{\text{fix}}e$ with $e=(1,0,0)^\top$ for $\beta_0$, with $e=(0,1,0)^\top$ for $\beta_A$ and $e=(0,0,1)^\top$ for $\beta_1$.
    ``est'' represents $e^\top\hat V_{\text{est}}e$ with $e=(1,0,0)^\top$ for $\beta_0$, with $e=(0,1,0)^\top$ for $\beta_A$ and $e=(0,0,1)^\top$ for $\beta_1$.
    The sample size of trial data is $n=1000$. 
    The x-axis represents the sample size of historical data $\tilde n$.
    The y-axis represents the coverage probability which is the proportion of 1000 simulations in which the 95\% CI using each variance estimator includes the true value.}
\end{figure}

\clearpage
\subsection{Scenario D-8}

\begin{figure}[h]
    \centering
    \includegraphics[width=\linewidth]{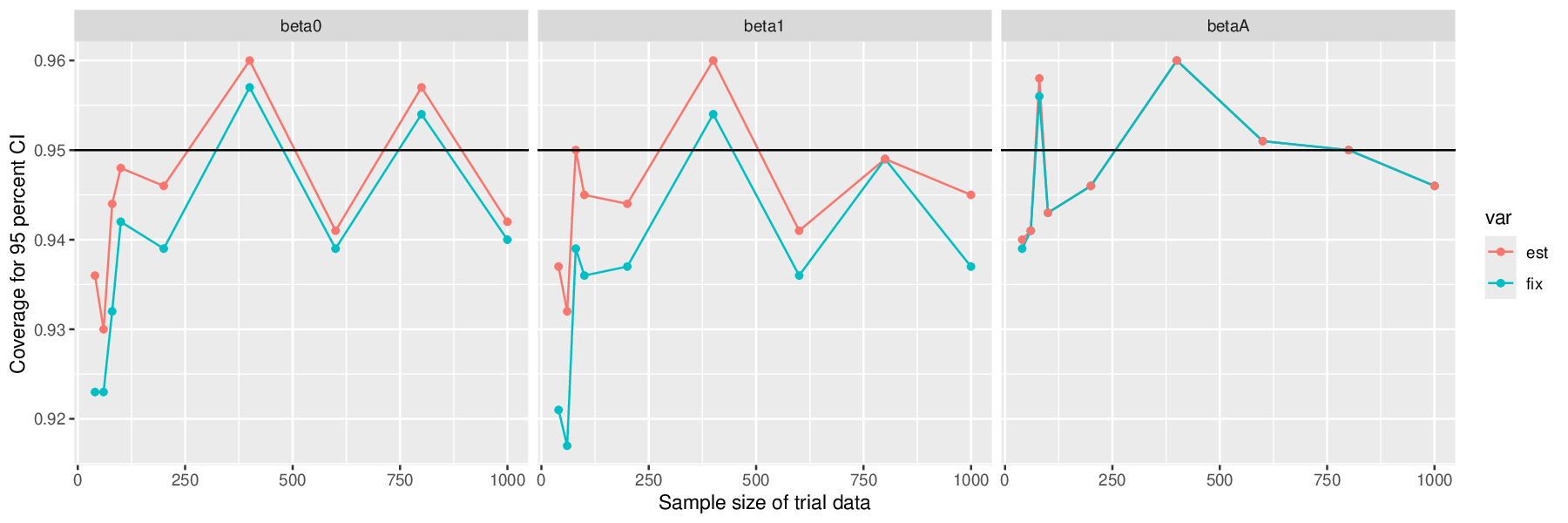}
    \caption{Plots of the coverage probability of 95\% CI over 1000 simulations for Scenario D-8.
    ``beta0'', ``betaA'', ``beta1'' represent the intercept $\beta_0$, the coefficient for the treatment assignment $\beta_A$, the coefficient for the prognostic score $\beta_1$ in the PROCOVA model \eqref{eq: PROCOVA linear}, respectively.
     ``fix'' represents $e^\top\hat V_{\text{fix}}e$ with $e=(1,0,0)^\top$ for $\beta_0$, with $e=(0,1,0)^\top$ for $\beta_A$ and $e=(0,0,1)^\top$ for $\beta_1$.
    ``est'' represents $e^\top\hat V_{\text{est}}e$ with $e=(1,0,0)^\top$ for $\beta_0$, with $e=(0,1,0)^\top$ for $\beta_A$ and $e=(0,0,1)^\top$ for $\beta_1$.
    The x-axis represents the sample size of trial data $n$.
    The sample size of historical data is $\tilde n = 10n$. 
    The y-axis represents the coverage probability which is the proportion of 1000 simulations in which the 95\% CI using each variance estimator includes the true value.
    }
\end{figure}

\clearpage
\begin{figure}[h]
    \centering
    \includegraphics[width=0.4\linewidth]{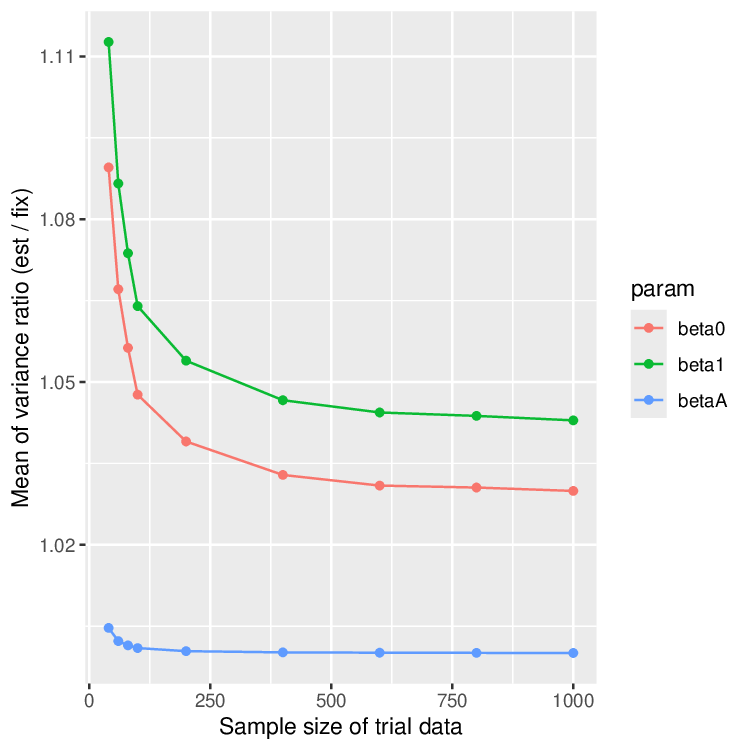}
    \caption{Plots of the mean of the ratio of two variance estimators over 1000 simulations for Scenario D-8.
    ``beta0'', ``betaA'', ``beta1'' represent the intercept $\beta_0$, the coefficient for the treatment assignment $\beta_A$, the coefficient for the prognostic score $\beta_1$ in the PROCOVA model \eqref{eq: PROCOVA linear}, respectively.
    The x-axis represents the sample size of trial data $n$.
    The sample size of historical data is $\tilde n = 10n$. 
    The y-axis represents the mean of the ratio of two variance estimators, i.e., $e^\top\hat V_{\text{est}}e/e^\top\hat V_{\text{fix}}e$ with $e=(1,0,0)^\top$ for $\beta_0$, with $e=(0,1,0)^\top$ for $\beta_A$ and $e=(0,0,1)^\top$ for $\beta_1$, over 1000 simulations.}
\end{figure}
\clearpage
\begin{figure}[h]
    \centering
    \includegraphics[width=0.4\linewidth]{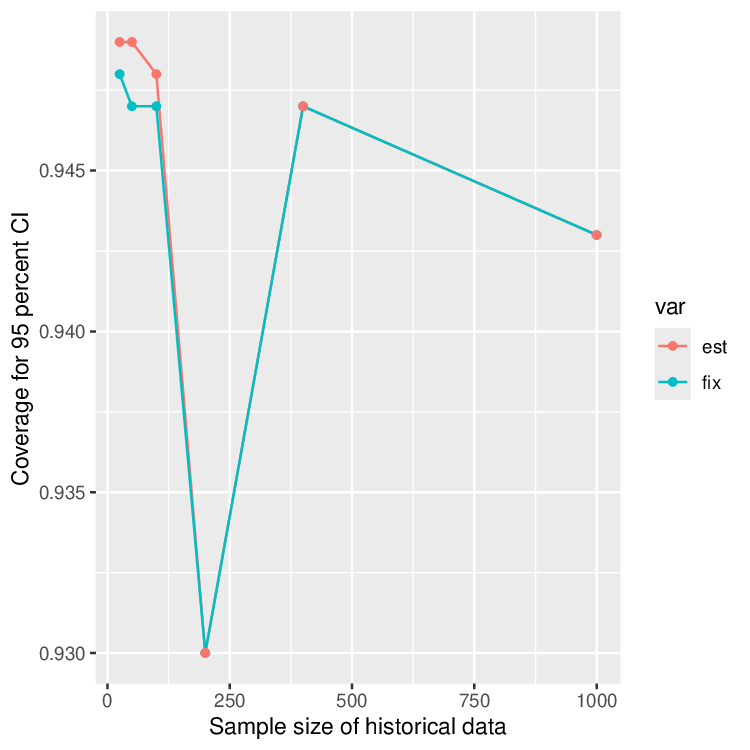}
    \caption{Plots of the coverage probability of the 95\% CI for $\beta_A$ in the PROCOVA model \eqref{eq: PROCOVA linear} over 1000 simulations for Scenario D-8.
     ``fix'' represents $e^\top\hat V_{\text{fix}}e$ with $e=(1,0,0)^\top$ for $\beta_0$, with $e=(0,1,0)^\top$ for $\beta_A$ and $e=(0,0,1)^\top$ for $\beta_1$.
    ``est'' represents $e^\top\hat V_{\text{est}}e$ with $e=(1,0,0)^\top$ for $\beta_0$, with $e=(0,1,0)^\top$ for $\beta_A$ and $e=(0,0,1)^\top$ for $\beta_1$.
    The sample size of trial data is $n=100$. 
    The x-axis represents the sample size of historical data $\tilde n$.
    The y-axis represents the coverage probability which is the proportion of 1000 simulations in which the 95\% CI using each variance estimator includes the true value.}
\end{figure}
\clearpage
\begin{figure}[h]
    \centering
    \includegraphics[width=\linewidth]{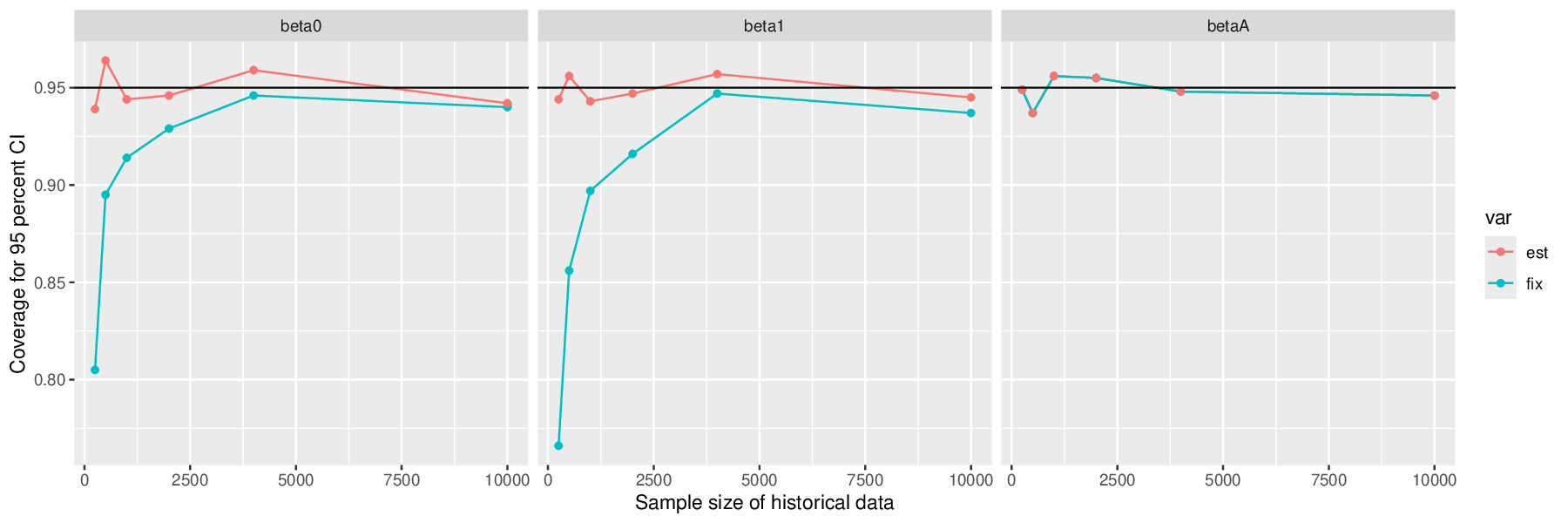}
    \caption{Plots of the coverage probability of 95\% CI over 1000 simulations for Scenario D-8.
    ``beta0'', ``betaA'', ``beta1'' represent the intercept $\beta_0$, the coefficient for the treatment assignment $\beta_A$, the coefficient for the prognostic score $\beta_1$ in the PROCOVA model \eqref{eq: PROCOVA linear}, respectively.
     ``fix'' represents $e^\top\hat V_{\text{fix}}e$ with $e=(1,0,0)^\top$ for $\beta_0$, with $e=(0,1,0)^\top$ for $\beta_A$ and $e=(0,0,1)^\top$ for $\beta_1$.
    ``est'' represents $e^\top\hat V_{\text{est}}e$ with $e=(1,0,0)^\top$ for $\beta_0$, with $e=(0,1,0)^\top$ for $\beta_A$ and $e=(0,0,1)^\top$ for $\beta_1$.
    The sample size of trial data is $n=1000$. 
    The x-axis represents the sample size of historical data $\tilde n$.
    The y-axis represents the coverage probability which is the proportion of 1000 simulations in which the 95\% CI using each variance estimator includes the true value.}
\end{figure}

\clearpage
\subsection{Scenario D-9}

\begin{figure}[h]
    \centering
    \includegraphics[width=\linewidth]{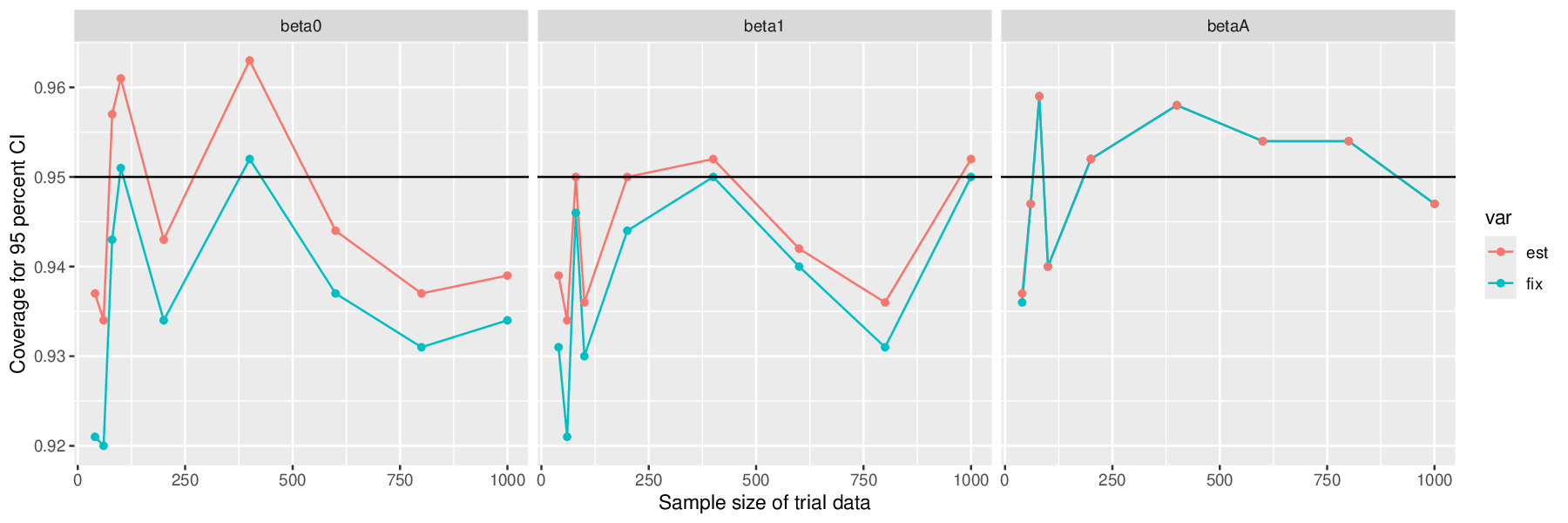}
    \caption{Plots of the coverage probability of 95\% CI over 1000 simulations for Scenario D-9.
    ``beta0'', ``betaA'', ``beta1'' represent the intercept $\beta_0$, the coefficient for the treatment assignment $\beta_A$, the coefficient for the prognostic score $\beta_1$ in the PROCOVA model \eqref{eq: PROCOVA linear}, respectively.
     ``fix'' represents $e^\top\hat V_{\text{fix}}e$ with $e=(1,0,0)^\top$ for $\beta_0$, with $e=(0,1,0)^\top$ for $\beta_A$ and $e=(0,0,1)^\top$ for $\beta_1$.
    ``est'' represents $e^\top\hat V_{\text{est}}e$ with $e=(1,0,0)^\top$ for $\beta_0$, with $e=(0,1,0)^\top$ for $\beta_A$ and $e=(0,0,1)^\top$ for $\beta_1$.
    The x-axis represents the sample size of trial data $n$.
    The sample size of historical data is $\tilde n = 10n$. 
    The y-axis represents the coverage probability which is the proportion of 1000 simulations in which the 95\% CI using each variance estimator includes the true value.
    }
\end{figure}

\clearpage
\begin{figure}[h]
    \centering
    \includegraphics[width=0.4\linewidth]{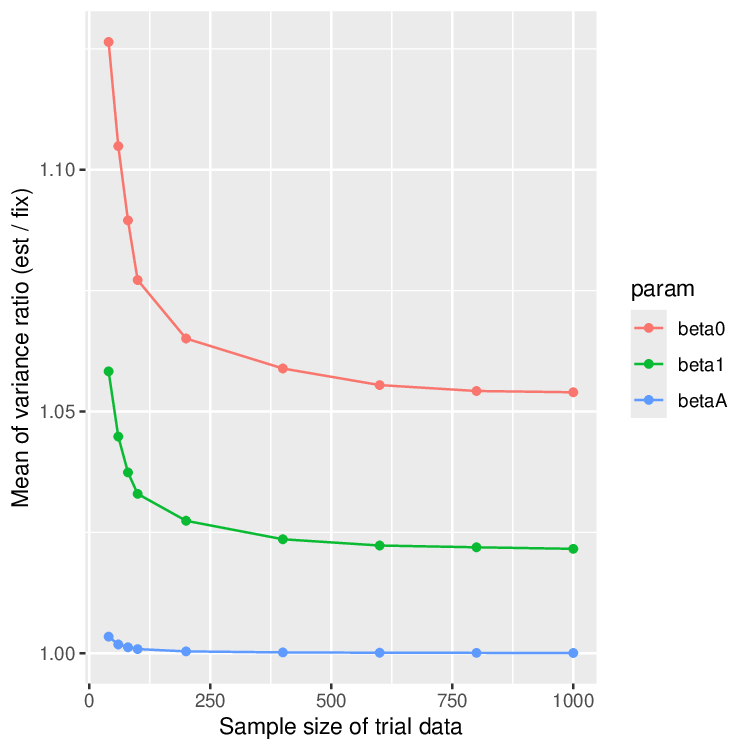}
    \caption{Plots of the mean of the ratio of two variance estimators over 1000 simulations for Scenario D-9.
    ``beta0'', ``betaA'', ``beta1'' represent the intercept $\beta_0$, the coefficient for the treatment assignment $\beta_A$, the coefficient for the prognostic score $\beta_1$ in the PROCOVA model \eqref{eq: PROCOVA linear}, respectively.
    The x-axis represents the sample size of trial data $n$.
    The sample size of historical data is $\tilde n = 10n$. 
    The y-axis represents the mean of the ratio of two variance estimators, i.e., $e^\top\hat V_{\text{est}}e/e^\top\hat V_{\text{fix}}e$ with $e=(1,0,0)^\top$ for $\beta_0$, with $e=(0,1,0)^\top$ for $\beta_A$ and $e=(0,0,1)^\top$ for $\beta_1$, over 1000 simulations.}
\end{figure}
\clearpage
\begin{figure}[h]
    \centering
    \includegraphics[width=0.4\linewidth]{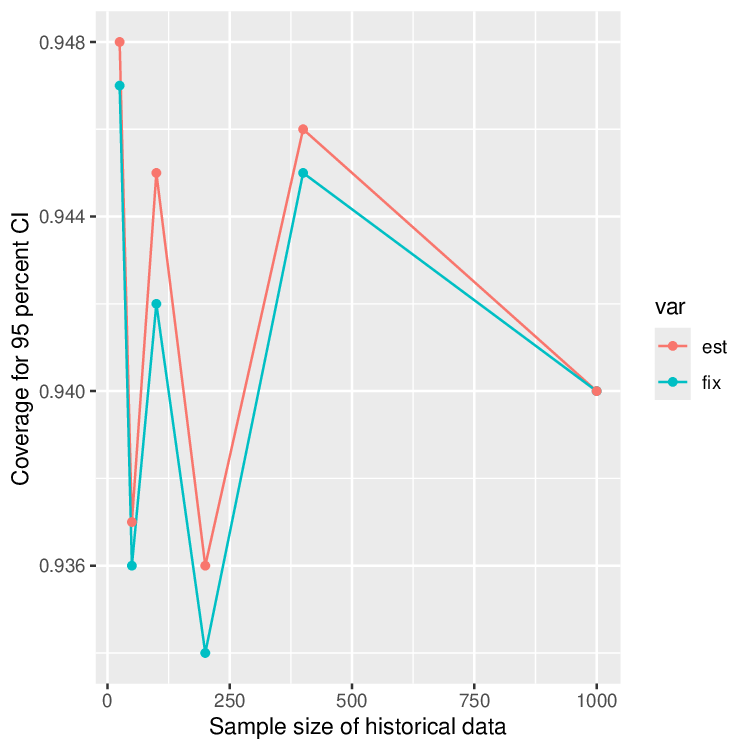}
    \caption{Plots of the coverage probability of the 95\% CI for $\beta_A$ in the PROCOVA model \eqref{eq: PROCOVA linear} over 1000 simulations for Scenario D-9.
     ``fix'' represents $e^\top\hat V_{\text{fix}}e$ with $e=(1,0,0)^\top$ for $\beta_0$, with $e=(0,1,0)^\top$ for $\beta_A$ and $e=(0,0,1)^\top$ for $\beta_1$.
    ``est'' represents $e^\top\hat V_{\text{est}}e$ with $e=(1,0,0)^\top$ for $\beta_0$, with $e=(0,1,0)^\top$ for $\beta_A$ and $e=(0,0,1)^\top$ for $\beta_1$.
    The sample size of trial data is $n=100$. 
    The x-axis represents the sample size of historical data $\tilde n$.
    The y-axis represents the coverage probability which is the proportion of 1000 simulations in which the 95\% CI using each variance estimator includes the true value.}
\end{figure}
\clearpage
\begin{figure}[h]
    \centering
    \includegraphics[width=\linewidth]{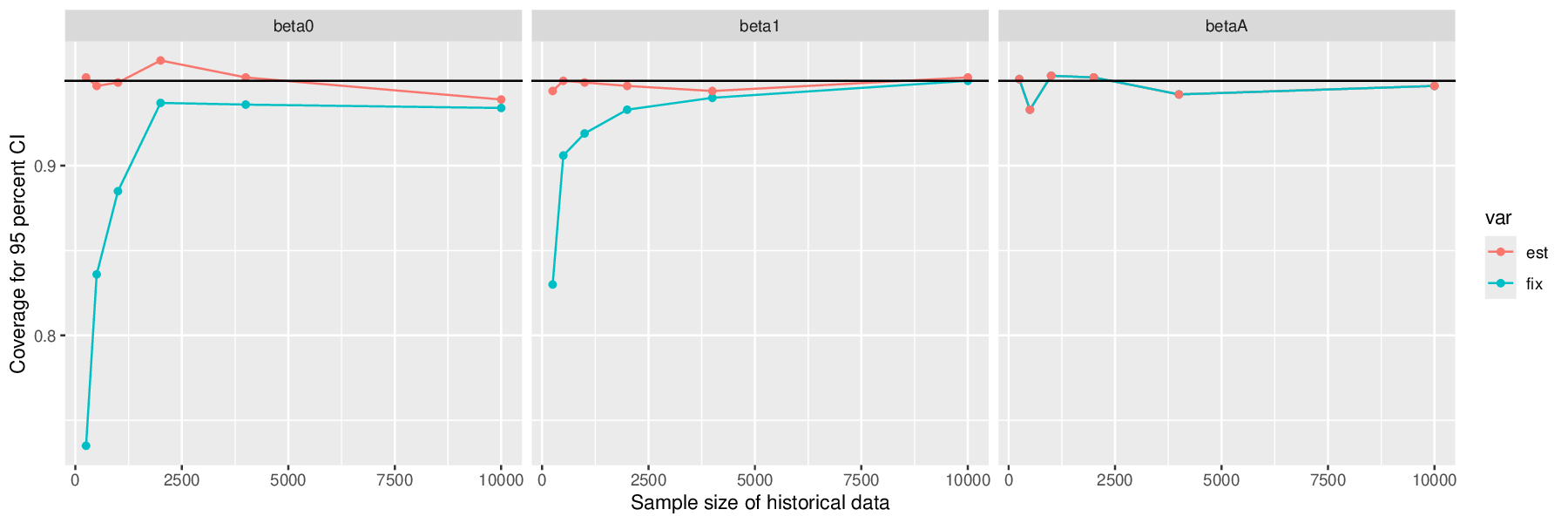}
    \caption{Plots of the coverage probability of 95\% CI over 1000 simulations for Scenario D-9.
    ``beta0'', ``betaA'', ``beta1'' represent the intercept $\beta_0$, the coefficient for the treatment assignment $\beta_A$, the coefficient for the prognostic score $\beta_1$ in the PROCOVA model \eqref{eq: PROCOVA linear}, respectively.
     ``fix'' represents $e^\top\hat V_{\text{fix}}e$ with $e=(1,0,0)^\top$ for $\beta_0$, with $e=(0,1,0)^\top$ for $\beta_A$ and $e=(0,0,1)^\top$ for $\beta_1$.
    ``est'' represents $e^\top\hat V_{\text{est}}e$ with $e=(1,0,0)^\top$ for $\beta_0$, with $e=(0,1,0)^\top$ for $\beta_A$ and $e=(0,0,1)^\top$ for $\beta_1$.
    The sample size of trial data is $n=1000$. 
    The x-axis represents the sample size of historical data $\tilde n$.
    The y-axis represents the coverage probability which is the proportion of 1000 simulations in which the 95\% CI using each variance estimator includes the true value.}
\end{figure}

\end{appendix}

\end{document}